\documentclass[a4paper]{article}
\usepackage{amsmath,amsthm,amssymb}
\usepackage{enumitem}
\usepackage[dvipsnames]{xcolor}
\usepackage{booktabs,cite}
\usepackage{graphicx,cancel}
\usepackage[toc,page]{appendix}
\usepackage[hidelinks]{hyperref}
\usepackage{comment}
\usepackage{mathtools,braket}
\usepackage[top=2.8cm,bottom=2.8cm,left=2.8cm,right=2.8cm]{geometry}
\usepackage{float}
\usepackage{pict2e}
\usepackage{adjustbox}
\allowdisplaybreaks

\usepackage{tikz} 
\usetikzlibrary{matrix,decorations.pathreplacing,calc,fit,backgrounds,shapes.multipart,arrows.meta,positioning,calc}
 
\tikzset{nodino/.style={minimum height=1.2cm, minimum width=3cm}}

\newtheorem{theorem}{Theorem}[section]
\newtheorem*{theorem*}{Theorem}
\newtheorem{proposition}[theorem]{Proposition}
\newtheorem{lemma}[theorem]{Lemma}

\theoremstyle{definition}

\newtheorem{definition}{Definition}[section]
\newtheorem{remark}{Remark}[section]

\definecolor{frenchrose}{rgb}{0.96, 0.29, 0.54}
\definecolor{persianblue}{rgb}{0.11, 0.22, 0.73}
\definecolor{jade}{rgb}{0.0, 0.66, 0.42}
\definecolor{limegreen}{rgb}{0.2, 0.8, 0.2}
\definecolor{electricviolet}{rgb}{0.56, 0.0, 1.0}
\definecolor{electricblue}{RGB}{44, 117, 255}
\definecolor{bluebonnet}{RGB}{51, 38, 246}
\definecolor{strawberrypink}{RGB}{246, 38, 129}

\numberwithin{equation}{section}

\expandafter\def\expandafter\normalsize\expandafter{%
    \normalsize%
    \setlength\abovedisplayskip{7pt}%
    \setlength\belowdisplayskip{7pt}%
    \setlength\abovedisplayshortskip{5pt}%
    \setlength\belowdisplayshortskip{5pt}%
}

\DeclareFontFamily{U}{mathx}{\hyphenchar\font45}
\DeclareFontShape{U}{mathx}{m}{n}{<-> mathx10}{}
\DeclareSymbolFont{mathx}{U}{mathx}{m}{n}
\DeclareMathAccent{\widebar}{0}{mathx}{"73}

\title{\vspace{1.8cm} Matrix Dressing Beyond Pfaff--Toda}

\author{
Sylvain Carpentier%
\thanks{E-mail: \href{mailto:sylcar@snu.ac.kr}{\texttt{sylcar@snu.ac.kr}}}
\and
Marta Dell'Atti%
\thanks{E-mail: \href{mailto:dellattimarta@gmail.com}{\texttt{dellattimarta@gmail.com}}}
}

\date{}

\begin{document}

\maketitle

\begin{center}
\itshape To the memory of Igor Krichever
\end{center}

\begin{abstract}
We develop a matrix pseudodifference operator framework that places the
Adler--Pfaff and Pfaff--Toda hierarchies in a common algebraic setting.
The bi-infinite Adler--Pfaff hierarchy then becomes a
$2\times2$ matrix pseudodifference Lax hierarchy.
We introduce the Matrix Pfaff--Toda hierarchy by dressing the matrix Laurent algebra $M_{2N}\bigl(\mathbb{C}[\mathcal{S},\mathcal{S}^{-1}]\bigr)$ using a Pfaff--Toda splitting of the matrix pseudodifference algebra.
In the one-component case, its diagonal sector recovers the continuous Pfaff--Toda hierarchy, while the off-diagonal directions supply the missing odd flows, and the powers of a single bare operator reconstruct the full Adler--Pfaff hierarchy.
For general~$N$, on the regular factorization locus, the multicomponent Pfaff--Toda tau-functions of Savchenko and Zabrodin realize the commuting diagonal sector, with its dressing equations derived directly from the fermionic bilinear identity. Reductions to the block-diagonal and scalar cases recover multicomponent and scalar $2D$ Toda, identifying the even Pfaff hierarchy with an anti-diagonal Toda subhierarchy, and yielding the Krichever--Zabrodin C--Toda hierarchy.
\end{abstract}

\vspace{1mm}

\tableofcontents

\vspace{.5cm}

\section{Introduction}

The Pfaff lattice was introduced in the work of Adler, Horozov and van Moerbeke~\cite{vanM_Pfaff_skew}, and subsequently developed in several directions \cite{Adler1999ThePL,adler2002,vanM_Pfaff_tau_function,Adler2000_skew}. It is an integrable hierarchy admitting a bi-infinite matrix realization, closely related to skew-orthogonal polynomials and to orthogonal and symplectic random matrix ensembles (see also the survey~\cite{vanMoerbekenotes}, the finite-dimensional geometric formulation~\cite{KodamaPierce,Kodama_2010}, and the semi-infinite restriction in~\cite{benassi2021symmetric,BDM2}). 
The Pfaff lattice is intimately related to the charged BKP and DKP hierarchies. The DKP hierarchy was discovered by Jimbo and Miwa~\cite{JimboMiwa} and rediscovered by Hirota and Ohta as the coupled KP hierarchy~\cite{HirotaOhta}. The DKP hierarchy was later studied by Kac and van de Leur, together with its BKP counterpart, from the viewpoint of the infinite-dimensional Clifford algebra and its spin representations~\cite{KacLeur,KacLeur2}. The closely related so-called large BKP hierarchy is the object of study in~\cite{vdLeurOrlov}. While the small BKP hierarchy is the square root of the KP hierarchy, the large BKP is the square root of the two-component KP hierarchy~\cite{vdLeurOrlov}.

The Lax representation of the Pfaff lattice is
governed by a symplectic splitting of the bi-infinite matrix algebra and
differs in an essential way from the ordinary triangular splitting
underlying the Toda lattice hierarchy~\cite{UT84}. Although the Pfaff and Toda
hierarchies are closely related~\cite{adler2002}, their operator
descriptions are therefore rather different: Toda is conventionally formulated
in terms of difference operators and dressing transformations, whereas the
Pfaff hierarchy is formulated through its bi-infinite matrix realization. 

The Pfaff--Toda hierarchy provides a natural link between the Pfaff and Toda pictures. In Takasaki's formulation \cite{Taka}, it is described by two
opposite dressing operators related by a Pfaff-type constraint. More recently,
Savchenko and Zabrodin have developed a multicomponent Pfaff-type theory in
fermionic, bilinear and dispersionless form \cite{SavchenkoZabrodin,SavZab25DKP, SavZab26DispPfaffToda,SavZab26Elliptic}, including the multicomponent Pfaff--Toda hierarchy of \cite{SavchenkoZabrodin} in the representation-theoretic formalism of Kac and van de Leur. These works suggest that multicomponent Pfaff--Toda should also admit a matrix pseudodifference operator dressing representation.

 The purpose of this paper is to place these hierarchies in a common matrix pseudodifference operator framework. 
Our main results are summarized as follows:
\newpage
\medskip
\hypertarget{res1}{\textbf{1. A matrix pseudodifference Lax realization of the
Adler--Pfaff hierarchy.}}

Our starting point is the bi-infinite Adler--Pfaff hierarchy, whose $2\times2$ block Hessenberg Lax operator is governed by the
Adler--van Moerbeke symplectic splitting~\cite{adler2002,vanM_Pfaff_skew}.
The block index induces a shift $\mathcal S$ on the coefficient
algebra $\mathcal V_{\rm Pf}$ which commutes with the Adler--Pfaff
flows. The corresponding algebra $\mathcal C_{\rm Pf}$ of
translation-covariant bi-infinite matrices is naturally identified
with the algebra of matrix pseudodifference operators $M_2\bigl(\mathcal V_{\rm Pf}((\mathcal S))\bigr)$
through the isomorphism
\begin{equation*}
\Phi(M)=\sum_{k\in\mathbb Z}M_{0,-k}\mathcal S^k.
\end{equation*}
On finite-band matrices, transposition is carried by $\Phi$ to the
formal adjoint. Under this identification, the two summands of the
Adler--van Moerbeke splitting are transported to
\begin{equation*}
\mathfrak g^{\rm Pf}_+=\left\{
X\in M_2\bigl(
\mathcal V_{\rm Pf}[\mathcal S,\mathcal S^{-1}]
\bigr)
\ \middle|\
X^*=J_0XJ_0
\right\}, \qquad 
\mathfrak g^{\rm Pf}_-
=
\mathcal V_{\rm Pf}I_2
+
\mathcal S M_2\bigl(\mathcal V_{\rm Pf}[[\mathcal S]]\bigr),
\end{equation*}
with corresponding projections $P_\pm^{\rm Pf}$. 

\begin{theorem*}[Pseudodifference Adler--Pfaff hierarchy]
The bi-infinite Adler--Pfaff hierarchy is equivalent, under the
identification above, to the $2\times2$ matrix pseudodifference
Lax hierarchy
\begin{equation*}
\partial_{t_k}\mathcal L_{\rm Pf}=\bigl[
P_+^{\rm Pf}(\mathcal L_{\rm Pf}^k),
\mathcal L_{\rm Pf}
\bigr]=-\bigl[
P_-^{\rm Pf}(\mathcal L_{\rm Pf}^k),
\mathcal L_{\rm Pf}
\bigr],
\qquad k\geq1,
\end{equation*}
where
\begin{equation*}
\mathcal L_{\rm Pf}=b_0E_{21}\mathcal S^{-1}+\sum_{j\geq0}A_{0,j}\mathcal S^j
~\in
\mathcal V_{\rm Pf}E_{21}\mathcal S^{-1}+M_2\bigl(\mathcal V_{\rm Pf}[[\mathcal S]]\bigr).
\end{equation*}
\end{theorem*}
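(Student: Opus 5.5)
The proof is a transport of structure along $\Phi$. We first verify that $\Phi$ is an algebra isomorphism from the translation-covariant matrices $\mathcal C_{\rm Pf}$ onto $M_2\bigl(\mathcal V_{\rm Pf}((\mathcal S))\bigr)$. Translation covariance means the block entries satisfy $M_{i+1,j+1}=\mathcal S(M_{i,j})$, with $\mathcal S$ acting on coefficients. Using this, we compute
\begin{equation*}
(MN)_{0,-k}=\sum_l M_{0,l}N_{l,-k}=\sum_l M_{0,l}\,\mathcal S^{l}(N_{0,-k-l}).
\end{equation*}
This is exactly the coefficient of $\mathcal S^k$ in $\Phi(M)\Phi(N)$ under the rule $\mathcal S a=\mathcal S(a)\mathcal S$. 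The Laurent (one-sided) condition on $\mathcal V_{\rm Pf}((\mathcal S))$ corresponds to the band structure, which guarantees that the sums are finite.

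For finite-band $M$, the identity $(M^{T})_{0,-k}=(M_{-k,0})^{T}=\mathcal S^{-k}\bigl(M_{0,k}\bigr)^{T}$ shows that transposition goes to the formal adjoint $\bigl(\sum a_k\mathcal S^k\bigr)^*=\sum \mathcal S^{-k}a_k^{T}$. Because $\mathcal S$ commutes with the flows, $\partial_{t_k}$ is compatible with $\Phi$. With this, $\Phi$ takes the bi-infinite Lax matrix $L$ to $\mathcal L_{\rm Pf}$. The block Hessenberg shape means a single subdiagonal block row, with only the $(2,1)$ entry $b_0$ nonzero, and all upper blocks $A_{0,j}$. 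This gives exactly $b_0E_{21}\mathcal S^{-1}+\sum_{j\ge0}A_{0,j}\mathcal S^j$.

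Next we identify the splitting. The Adler--van Moerbeke decomposition is $\mathfrak g=\mathfrak k\oplus\mathfrak n$. Here $\mathfrak k=\{X: X^{T}J+JX=0\}$ is the symplectic algebra of the block-diagonal form $J=\mathrm{diag}(J_0)$. The summand $\mathfrak n$ consists of lower block-triangular matrices whose diagonal $2\times2$ blocks are scalar multiples of $I_2$. Applying $\Phi$ to $\mathfrak n$, lower triangularity becomes the condition that only nonnegative powers of $\mathcal S$ occur. This holds with the sign convention $\Phi(M)=\sum M_{0,-k}\mathcal S^k$, under which the lower entries $M_{0,-k}$ with $k>0$ attach to positive powers. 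Scalar diagonal blocks then give $\mathcal V_{\rm Pf}I_2+\mathcal S M_2(\mathcal V_{\rm Pf}[[\mathcal S]])$. For $\mathfrak k$, the condition $X^TJ=-JX$ with $J^{-1}=-J$ becomes $X^*=J_0XJ_0$ under the adjoint computation above, applied blockwise with $J_0^{T}=-J_0$.

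The direct sum decomposition is then checked in pseudodifference language. Given $Y=\sum Y_k\mathcal S^k$, the $\mathfrak g_+$ part must match the strictly negative part. It is obtained by symmetrizing the negative powers via $X\mapsto J_0X^*J_0$ to define the positive powers. The zero-order block is split as a $\mathfrak{sp}_2$ component plus a scalar, which works because $M_2=\mathfrak{sp}_2\oplus\mathbb{C}I_2$ in the twisted sense. Since $\Phi$ is an isomorphism intertwining the two decompositions, it intertwines the projections, and $\Phi(\pi_{\mathfrak k}(L^k))=P^{\rm Pf}_+(\mathcal L_{\rm Pf}^k)$.

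The Adler--Pfaff equations $\partial_{t_k}L=[\pi_{\mathfrak k}L^k,L]$ then map to the stated equations. The second equality follows from $P_++P_-=\mathrm{id}$ and $[\mathcal L^k,\mathcal L]=0$. The converse direction holds by the invertibility of $\Phi$.

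The main obstacle is the precise sign and index bookkeeping. We must confirm that $\Phi(\mathfrak k)$ is exactly the stated $\mathfrak g_+^{\rm Pf}$, including whether $J_0XJ_0$ or $-J_0XJ_0$ appears given $J_0^2=-I$. We must also handle the infinite-band issue: $L^k$ is only semi-finite, so adjoint and projections have to be justified on the Laurent side. Both the adjoint and the projections are first defined on finite-band truncations and then extended, using that $P_+$ only involves finitely many coefficients degree by degree. A further point is that $\mathcal L^k$ has a bounded negative order, so $P_+(\mathcal L^k)$ is genuinely Laurent polynomial.
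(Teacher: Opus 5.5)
Your plan follows the paper's own argument: show that $\Phi$ is a difference-algebra isomorphism $\mathcal C_{\mathrm{Pf}}\to M_2(\mathcal V_{\mathrm{Pf}}((\mathcal S)))$ taking transpose to $*$ on finite-band elements, identify $\Phi(\mathfrak t\cap\mathcal C_{\mathrm{Pf}})=\mathfrak g_-^{\mathrm{Pf}}$ and $\Phi(\mathfrak{sp}_\infty\cap\mathcal C_{\mathrm{Pf}})=\mathfrak g_+^{\mathrm{Pf}}$, deduce $\Phi P_{\mathfrak{sp}}=P_+^{\mathrm{Pf}}\Phi$, and apply $\Phi$ to the bi-infinite Lax equation.

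There is, however, a direction error in your covariance rule, and it breaks both of your key computations as written.
\begin{itemize}
\item With $M_{i+1,j+1}=\mathcal S(M_{ij})$ you get $N_{l,-k}=\mathcal S^{l}(N_{0,-k-l})$. But under $(a\mathcal S^i)(b\mathcal S^j)=a\,\mathcal S^i(b)\,\mathcal S^{i+j}$, the entry $M_{0,l}$ multiplies $\mathcal S^{-l}$. So the coefficient of $\mathcal S^k$ in $\Phi(M)\Phi(N)$ is $\sum_l M_{0,l}\,\mathcal S^{-l}(N_{0,-k-l})$, and your claimed match fails.
\item Likewise, the $\mathcal S^k$-coefficient of $\Phi(M)^*$ is $\mathcal S^{k}(M_{0,k}^\top)$, not $\mathcal S^{-k}(M_{0,k})^\top$.
\end{itemize}
The rule forced by $\Phi$ and the product is $M_{i+1,j+1}=\mathcal S^{-1}(M_{ij})$, i.e.\ $M_{ij}=\mathcal S^{-i}(M_{0,j-i})$. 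This is also what $L_{\mathrm{Pf}}$ satisfies with $\mathcal S(b_n)=b_{n-1}$ and $\mathcal S(A_{n,m})=A_{n-1,m}$. With it, both identities hold and the rest of your argument goes through.

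Some smaller points:
\begin{itemize}
\item Your verbal description of the Hessenberg shape is reversed. $b_nE_{21}$ sits on the block superdiagonal $(n,n+1)$ and goes to $\mathcal S^{-1}$, while the $A_{n,m}$ sit on and below the diagonal. Your formula for $\mathcal L_{\mathrm{Pf}}$ is nevertheless right.
\item The degree-zero split is the plain $\mathfrak{gl}_2=\mathfrak{sp}_2\oplus\mathbb C I_2$, which produces the $\tfrac12(A-D)_0$ terms. No twist enters here; the $E_{22}$ twist belongs to the Pfaff--Toda splitting.
\item The truncate-and-extend step is unnecessary. Every element of $\mathfrak{sp}_\infty$ has finite block bandwidth and every element of $\mathfrak g_+^{\mathrm{Pf}}$ is a Laurent polynomial. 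So $\Phi(M^\top)=\Phi(M)^*$ is only ever needed on finite-band matrices, and $P_+^{\mathrm{Pf}}(X)$ depends only on $X_{\le 0}$.
\item You only assert $[\mathcal S,\partial_{t_k}]=0$, yet you need it to pass from the equations for the zeroth block row back to the whole bi-infinite system. It does hold: $[P_{\mathfrak{sp}}(L_{\mathrm{Pf}}^k),L_{\mathrm{Pf}}]$ lies in $\mathcal C_{\mathrm{Pf}}$, which is closed under products and under $P_{\mathfrak t}$ and $P_{\mathfrak{sp}}$. This is exactly the paper's lemma.
\end{itemize}
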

Here and below, $E_{ij}$ denotes the standard unit matrix with $1$ at position $(i,j)$. 
We extend this hierarchy to a slightly larger universal
phase space, whose support is precisely the one needed for the comparison with the symmetrically gauged Pfaff--Toda operator later in the text. 

\medskip 
\hypertarget{res2}{\textbf{2. The Matrix Pfaff--Toda hierarchy.}}

Already for $N=1$, Takasaki's Pfaff--Toda dressing datum is
genuinely $2\times2$ matrix-valued. However, the associated bare
operators are restricted to two commuting diagonal families \cite{Taka}. We define a dressing representation of the full Laurent algebra 
\begin{equation*}
\mathfrak a_N= M_{2N}\bigl(\mathbb C[\mathcal S,\mathcal S^{-1}]\bigr).
\end{equation*}
We call the resulting system the \textit{Matrix Pfaff--Toda hierarchy}.
The dressing space is described by operators of the form
\begin{equation*}
U_N=
\begin{pmatrix}
I_N+a_0+\mathcal O(\mathcal S)
&
b_{-1}\mathcal S^{-1}+\mathcal O(1)
\\[1mm]
c_1\mathcal S+\mathcal O(\mathcal S^2)
&
\Delta+\mathcal O(\mathcal S)
\end{pmatrix},
\end{equation*}
where $a_0,b_{-1},c_1$ are strictly lower triangular and $\Delta$ is
lower triangular and invertible. Here and below, $\mathcal O(\mathcal S^k)$ is understood with respect to the
one-sided completion in use: it denotes terms of degree at least $k$
in an $\mathcal S$-adic expansion and of degree at most $k$ in an
$\mathcal S^{-1}$-adic expansion. We denote by $\mathcal V_N$ the localized
difference algebra generated by the coefficients of $U_N$. For general $N$, we extend Takasaki's splitting to the Lie algebra 
\begin{equation*}
M_{2N}\bigl(\mathcal V_N((\mathcal S))\bigr)
=
\mathfrak g_+\oplus\mathfrak g_-,
\end{equation*}
whose positive and negative parts are respectively
\begin{equation*}
\mathfrak g_+=\left\{
X\in M_{2N}\bigl(\mathcal V_N[\mathcal S,\mathcal S^{-1}]\bigr)
\ \middle|\
X^*=J_NXJ_N
\right\},  \qquad \mathfrak{g}_{-}=
\left\{
\begin{pmatrix}
a+\mathcal O(\mathcal S)&b\mathcal S^{-1}+\mathcal O(1)\\
c\mathcal S+\mathcal O(\mathcal S^2)&d+\mathcal O(\mathcal S)
\end{pmatrix}\right\}
\end{equation*}
with $a,b,c$ strictly lower triangular, $d$ lower triangular, and 
\begin{equation*}
          J_N=
\begin{pmatrix}
0 & \mathcal S^{-1}I_N\\
-\mathcal S I_N & 0
\end{pmatrix}.
\end{equation*}
\begin{theorem*}[The Matrix Pfaff--Toda hierarchy]
Let $U_N$ be the dressing operator above. For
$x\in\mathfrak a_N$, define
\begin{equation*}
L_x:=U_N\,x\,U_N^{-1},
\qquad
P_x^\pm:=\Pi_\pm(L_x),
\end{equation*}
where $\Pi_\pm$ are the projections associated with the preceding splitting,
and let $D_x$ be the evolutionary derivation of $\mathcal V_N$ determined by
\begin{equation*}
D_x(U_N)
=
-P_x^-U_N
=
P_x^+U_N-U_Nx.
\end{equation*}
Then, for every $x,y\in\mathfrak a_N$,
\begin{equation*}
D_x(L_y)
=
[P_x^+,L_y]-L_{[x,y]},
\qquad
[D_x,D_y]
=
-D_{[x,y]},
\end{equation*}
and the auxiliary operators satisfy the zero-curvature identity
\begin{equation*}
D_x(P_y^+)-D_y(P_x^+)
=
[P_x^+,P_y^+]-P_{[x,y]}^+.
\end{equation*}
\end{theorem*}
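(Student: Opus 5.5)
The plan is the standard dressing (Adler--Kostant--Symes) computation. It rests on three structural facts, which I will establish first.

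\textbf{Preliminary facts.}
\begin{enumerate}[label=(\roman*)]
\item The decomposition $M_{2N}(\mathcal V_N((\mathcal S)))=\mathfrak g_+\oplus\mathfrak g_-$ is a direct sum of Lie subalgebras, so $\Pi_\pm$ are well defined and $\Pi_++\Pi_-=\mathrm{id}$. For $\mathfrak g_+$ this holds because $X\mapsto J_NX^*J_N^{-1}$ is an anti-automorphism of order two. For $\mathfrak g_-$ it is a degree/triangularity check on products.
\item The dressing space is a group whose Lie algebra is $\mathfrak g_-$. Hence $D_x(U_N)U_N^{-1}=-P_x^-\in\mathfrak g_-$ is compatible with the shape of $U_N$, and $D_x$ is a well-defined derivation of $\mathcal V_N$ extended coefficientwise. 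It commutes with $\mathcal S$ and with $J_N$, since its entries are constants times shifts.
\item The elements $x\in\mathfrak a_N$ have constant coefficients, so $D_x(x)=0$ and $D_x(y)=0$.
\end{enumerate}
The two expressions for $D_x(U_N)$ in the statement agree because $L_x=P_x^++P_x^-$, so that $P_x^+U_N-U_Nx=P_x^+U_N-L_xU_N=-P_x^-U_N$.

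\textbf{Flow of $L_y$.} Using $D_x(U_N^{-1})=-U_N^{-1}D_x(U_N)U_N^{-1}=U_N^{-1}P_x^-$,
\[
D_x(L_y)=-P_x^-L_y+L_yP_x^-=-[P_x^-,L_y].
\]
Now substitute $P_x^-=L_x-P_x^+$. Since $[L_x,L_y]=U_N[x,y]U_N^{-1}=L_{[x,y]}$, this gives
\[
D_x(L_y)=[P_x^+,L_y]-L_{[x,y]}.
\]

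\textbf{Zero-curvature identity.} Apply $\Pi_+$ to $D_x(L_y)=-[P_x^-,L_y]$, using that $D_x$ commutes with $\Pi_\pm$ because the splitting is coefficientwise. This gives $D_x(P_y^+)=-\Pi_+[P_x^-,L_y]$, and symmetrically $D_y(P_x^+)=-\Pi_+[P_y^-,L_x]$. Expand $L_y=P_y^++P_y^-$ and $L_x=P_x^++P_x^-$:
\begin{itemize}
\item the terms $[P_x^-,P_y^-]$ lie in $\mathfrak g_-$ and are killed by $\Pi_+$;
\item the cross terms combine to $-\Pi_+\bigl([P_x^-,P_y^+]+[P_x^+,P_y^-]\bigr)$.
\end{itemize}
On the other hand,
\[
L_{[x,y]}=[L_x,L_y]=[P_x^+,P_y^+]+[P_x^-,P_y^-]+[P_x^-,P_y^+]+[P_x^+,P_y^-],
\]
so $\Pi_+$ of the cross terms equals $P^+_{[x,y]}-[P_x^+,P_y^+]$. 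Here we use $\Pi_+[P_x^+,P_y^+]=[P_x^+,P_y^+]$ because $\mathfrak g_+$ is a subalgebra. Therefore
\[
D_x(P_y^+)-D_y(P_x^+)=[P_x^+,P_y^+]-P^+_{[x,y]},
\]
which is the claimed identity.

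\textbf{Commutator of flows.} It suffices to evaluate both sides on $U_N$, since $\mathcal V_N$ is generated as a difference algebra by its coefficients and all $D$'s are evolutionary, so they commute with $\mathcal S$. Write $D_y(U_N)=P_y^+U_N-U_Ny$ and differentiate with $D_x$:
\[
D_xD_y(U_N)=D_x(P_y^+)U_N+P_y^+(P_x^+U_N-U_Nx)-(P_x^+U_N-U_Nx)y.
\]
Antisymmetrize in $x,y$. The terms $U_Nxy$ and $U_Nyx$ give $U_N[x,y]$, and the mixed terms $\mp P^+U_N\cdot$ cancel pairwise. What remains is
\[
[D_x,D_y](U_N)=\bigl(D_x(P_y^+)-D_y(P_x^+)-[P_x^+,P_y^+]\bigr)U_N+U_N[x,y].
\]
By zero curvature this equals $-P^+_{[x,y]}U_N+U_N[x,y]=-D_{[x,y]}(U_N)$.

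\textbf{Main obstacle.} The main difficulty is the preliminaries rather than the algebra. One must verify that $\mathfrak g_-$ as defined, with strictly lower triangular $a,b,c$ and lower triangular $d$ together with the $\mathcal S^{\pm1}$ offsets, is closed under commutators. One must also check that it is complementary to the $J_N$-symmetric Laurent part, in the sense that every element of $M_{2N}(\mathcal V_N((\mathcal S)))$ splits uniquely. This requires matching each degree and each triangular position against the constraint $X^*=J_NXJ_N$. Finally, one must confirm that $L_x$ and its projections make sense in the chosen one-sided completion, so that $\Pi_+(L_x)$ is a genuine Laurent polynomial.

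I would first reduce uniqueness to the statement $\mathfrak g_+\cap\mathfrak g_-=0$, plus a degree-by-degree construction of $\Pi_+$. This follows Takasaki's $N=1$ argument, extended blockwise with the lower-triangular normalization fixing the diagonal ambiguity.
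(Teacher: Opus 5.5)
Your proposal is correct and is essentially the paper's argument: the Lax relation comes from $D_x(U_N)U_N^{-1}=-P_x^-$ together with $[L_x,L_y]=L_{[x,y]}$, and the zero-curvature identity comes from applying $\Pi_+$ (which commutes with $D_x$) to $D_x(L_y)=-[P_x^-,L_y]$ and using that $\mathfrak g_\pm$ are subalgebras. The only difference is the order of the last two steps: you derive $[D_x,D_y]=-D_{[x,y]}$ from the zero-curvature identity via the form $D_y(U_N)=P_y^+U_N-U_Ny$, whereas the paper computes $[D_x,D_y](U_N)$ directly from $D_x(P_y^-)=-\Pi_-([P_x^-,L_y])$ and the subalgebra property of $\mathfrak g_-$, and proves zero curvature independently.
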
 
Although the full family of
matrix dressing directions is non-abelian, every commuting subalgebra
of bare operators gives a commuting subhierarchy. For $N=1$ the diagonal
sector recovers Takasaki's continuous Pfaff--Toda hierarchy
\cite{Taka}, while for arbitrary $N$ this is the sector realized by the Savchenko--Zabrodin tau-functions on the regular factorization locus \cite{SavchenkoZabrodin}.

\medskip 
\hypertarget{res3}{\textbf{3. The Adler--Pfaff hierarchy in the odd Pfaff--Toda extension.}}

For $N=1$, setting $U_1=U$, the diagonal bare algebra gives the two commuting families of Takasaki's continuous Pfaff--Toda hierarchy \cite{Taka}. The additional off-diagonal directions combine with them through the powers of the bare operator  
\begin{equation*}
\Lambda=
\begin{pmatrix}
0&1\\
\mathcal S^{-1}&0
\end{pmatrix},
\qquad
\Lambda^2=\mathcal S^{-1}I_2.
\end{equation*}
Indeed, we have 
\begin{equation*}
\Lambda^{2n}=T_n-\widebar T_n, ~ n \geq 1 , \qquad \Lambda^{2m+1}=Q_m^++Q_m^-, ~ m \geq 0,
\end{equation*}
where 
\begin{equation*}
T_n=\mathcal S^{-n}E_{11},
\qquad
\widebar T_n=-\mathcal S^{-n}E_{22},
\qquad
Q_m^+=\mathcal S^{-m-1}E_{21},
\qquad
Q_m^-=\mathcal S^{-m}E_{12}.
\end{equation*}
\begin{theorem*}[$N=1$: Pfaff--Toda, its odd extension, and the Adler--Pfaff hierarchy]
For $N=1$, the dressing flows associated with the two diagonal families
$T_n$ and $\widebar T_n$ coincide, under the identification
\begin{equation*}
\mathcal S=\,\textup{e}^{-\partial_s},
\end{equation*}
with Takasaki's continuous Pfaff--Toda hierarchy \cite{Taka}.
Moreover, after the fixed shift gauge~$Q$ and the diagonal degree-zero normalization~$H$ applied as $G:=HQ$ to $L_\Lambda=U\Lambda\, U^{-1}$, the operator $\mathcal L:=G L_\Lambda G^{-1} $ satisfies the Adler--Pfaff Lax equations on
the extended phase space mentioned at the end of \hyperlink{res1}{Result~1}:
\begin{equation*}
D_{\Lambda^k}\mathcal{L}=
\bigl[P^{\mathrm{Pf}}_+(\mathcal{L}^k),\mathcal{L}\bigr]=
\bigl[-P^{\mathrm{Pf}}_-(\mathcal{L}^k),\mathcal{L}\bigr],
\qquad k\geq1.
\end{equation*}
The commuting family generated by the powers of $\Lambda$
realizes the Adler--Pfaff hierarchy on this extended phase space.
\end{theorem*}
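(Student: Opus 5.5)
The proof has three parts: the diagonal sector, the Lax equation for the gauged operator $\mathcal L$, and the commutativity of the $\Lambda$-flows. Each rests on the general Matrix Pfaff--Toda theorem, so it remains to track the specific bare operators and the gauge.

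\textbf{Diagonal sector.} For $N=1$, I specialize the dressing relation $D_x(U)=-P_x^-U=P_x^+U-Ux$ to $x=T_n$ and $x=\widebar T_n$. I would first write $U=W\cdot(\text{block structure})$ in a form adapted to Takasaki's datum \cite{Taka}: the columns of $U$ acting on the two diagonal idempotents $E_{11},E_{22}$ should give his two opposite dressing operators. One of them is expanded in $\mathcal S^{-1}=\mathrm e^{\partial_s}$ and the other in $\mathcal S$. Under the substitution $\mathcal S=\mathrm e^{-\partial_s}$, Takasaki's Pfaff-type constraint is the condition that $\Pi_+$ projects onto $J_1$-symmetric Laurent operators, $X^*=J_1XJ_1$. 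So I would check three things:
\begin{itemize}[label={}]
\item (i) the form of $\mathfrak g_-$ matches the triangularity and normalization conditions of his two dressing operators;
\item (ii) the involution $X\mapsto J_1X^*J_1$ matches his symplectic constraint;
\item (iii) the equations $D_{T_n}U=P^+_{T_n}U-UT_n$ reproduce his Sato equations, and likewise for $\widebar T_n$, with the sign coming from $\widebar T_n=-\mathcal S^{-n}E_{22}$.
\end{itemize}
This part is bookkeeping.

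\textbf{Lax equation for $\mathcal L$.} Since $\Lambda^k$ commutes with $\Lambda$, the general theorem gives
\[
D_{\Lambda^k}L_\Lambda=[P^+_{\Lambda^k},L_\Lambda],
\qquad
L_{\Lambda^k}=L_\Lambda^k .
\]
Now set $\mathcal L=GL_\Lambda G^{-1}$. The shift gauge $Q$ is constant, but $H$ is a diagonal degree-zero normalization built from the coefficients of $U$ (essentially $\Delta$ and the diagonal constant terms), so it evolves. This gives
\[
D_{\Lambda^k}\mathcal L=\bigl[GP^+_{\Lambda^k}G^{-1}+D_{\Lambda^k}(G)G^{-1},\,\mathcal L\bigr].
\]
The key claim is then
\[
GP^+_{\Lambda^k}G^{-1}+D_{\Lambda^k}(G)G^{-1}=P_+^{\rm Pf}(\mathcal L^k)
\]
up to terms commuting with $\mathcal L$, or else exact equality. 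To prove it I would compare the splittings. Conjugation by $Q$ should carry the Takasaki symmetry $X^*=J_1XJ_1$ to the Adler--van Moerbeke symmetry $X^*=J_0XJ_0$, and carry $\mathfrak g_-$ to a subalgebra differing from $\mathfrak g_-^{\rm Pf}$ only in the diagonal degree-zero part. Taking $D_{\Lambda^k}$ of the normalization condition that defines $H$ should show that $D_{\Lambda^k}(G)G^{-1}$ is exactly the diagonal degree-zero correction moving $Q\Pi_+Q^{-1}$ onto $P^{\rm Pf}_+$. That correction must lie in both $\mathcal V I_2$-type pieces, so the difference of the two $+$-projections is absorbed. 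The second equality, $[P_+^{\rm Pf}(\mathcal L^k),\mathcal L]=-[P_-^{\rm Pf}(\mathcal L^k),\mathcal L]$, is immediate because $\mathcal L^k$ commutes with $\mathcal L$.

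I must also check that $\mathcal L$ lies in the extended phase space, i.e.\ has support $\mathcal V E_{21}\mathcal S^{-1}+M_2(\mathcal V[[\mathcal S]])$ plus the allowed extra terms. This follows from computing $QU\Lambda U^{-1}Q^{-1}$ degree by degree using the leading terms $I+a_0$, $b_{-1}\mathcal S^{-1}$, $c_1\mathcal S$, $\Delta$ of $U$.

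\textbf{Commutativity.} The flows $D_{\Lambda^k}$ commute because $[\Lambda^j,\Lambda^k]=0$ and $[D_x,D_y]=-D_{[x,y]}$. Together with the decomposition of $\Lambda^{2n}$ and $\Lambda^{2m+1}$ stated above, this shows the even powers reproduce the diagonal Pfaff--Toda flows, while the odd powers supply the $Q^\pm_m$ directions. Hence the powers of $\Lambda$ realize the full Adler--Pfaff hierarchy.

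\textbf{Main obstacle.} I expect the hardest step to be the projection comparison: showing that the gauge $G$ intertwines $\Pi_\pm$ with $P^{\rm Pf}_\pm$, modulo exactly the diagonal degree-zero ambiguity compensated by $D_{\Lambda^k}(H)H^{-1}$. I would first verify it on the degree-zero and degree-$\pm1$ components. Then I would use that $\mathfrak g_+$ is closed under the symmetry and that splittings are determined by their intersections with the low-degree components.
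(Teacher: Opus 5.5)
The gap is in your treatment of the diagonal sector. You identify Takasaki's two opposite dressing operators with the two columns of $U$, but for $N=1$ every entry of $U=\begin{pmatrix}1+\mathcal O(\mathcal S)&\mathcal O(1)\\ \mathcal O(\mathcal S^2)&\Delta+\mathcal O(\mathcal S)\end{pmatrix}$ lies in $\mathcal V((\mathcal S))$. So neither column can be his $\mathrm e^{\partial_s}$-adic operator, and the matching in (i)--(iii) cannot be carried out as stated.

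The missing idea is that Takasaki's second operator is not contained in $U$. It is generated from $U$ by the Pfaff involution,
\[
\widebar U:=-J(U^*)^{-1}J\in M_2(\mathcal V((\mathcal S^{-1}))).
\]
Under $\mathcal S=\mathrm e^{-\partial_s}$ this relation is exactly his constraint $U^*=J_s\widebar U^{-1}J_s^{-1}$. It is this relation between the two dressing operators, not only the $J$-symmetry of the image of $\Pi_+$, that (ii) must match.

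You then have to derive the flows of $\widebar U$. Differentiate $\widebar U=-J(U^*)^{-1}J$ and use $P^*=JPJ$ for $P\in\mathfrak g_+$; this gives $D_x\widebar U=\Pi_+(L_x)\widebar U-\widebar U\theta(x)$. Since $\theta(T_n)=-\mathcal S^nE_{22}$ and $\theta(\widebar T_n)=\mathcal S^nE_{11}$, you get $D_{T_n}\widebar U=P_n\widebar U+\widebar U\mathcal S^nE_{22}$ and $D_{\widebar T_n}\widebar U=\widebar P_n\widebar U-\widebar U\mathcal S^nE_{11}$.

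Finally, to match Takasaki's explicit $P_n,\widebar P_n$, use $U^{-1}=-J\widebar U^*J$. This writes $L_{T_n}$ and $L_{\widebar T_n}$ in terms of the blocks $W_i,\widebar V_i,\widebar W_i,V_i$, and only then do you apply the explicit formula for $\Pi_+$. With these three ingredients the term-by-term comparison really is bookkeeping.

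Your second and third parts follow the paper's argument, and the outline is sound.
\begin{itemize}
\item Conjugation by $Q$ gives a projection $\widehat\Pi_+$ with the same image as $P_+^{\rm Pf}$ and kernel $Q\mathfrak g_-Q^{-1}$. That kernel has $\mathcal V E_{22}$ instead of $\mathcal V I_2$ in degree zero, so $\widehat\Pi_+(X)-P_+^{\rm Pf}(X)=\tfrac12\operatorname{tr}(X_0)\,\mathrm{diag}(1,-1)$.
\item The degree-zero $(2,2)$ coefficient of the dressing equation for $QUQ^{-1}=\mathrm{diag}(1,\rho^2)+\mathcal O(\mathcal S)$ gives $D_{\Lambda^k}(\rho^2)=-\operatorname{tr}((\mathcal L^k)_0)\rho^2$. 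Hence $(D_{\Lambda^k}H)H^{-1}$ cancels that difference exactly.
\item Exact equality is what you need, not equality up to terms commuting with $\mathcal L$. The correction is traceless, lies in the common image $\mathfrak g_+^{\rm Pf}$ rather than in a ``$\mathcal V I_2$-type piece'', and does not commute with $\mathcal L$.
\item You should state that $H=\mathrm{diag}(\rho,\rho^{-1})$ with $\rho^2=\mathcal S^{-1}(\Delta)$ satisfies $H^*J_0H=J_0$. This is why conjugation by $H$ preserves both the image and the kernel of $\widehat\Pi_+$, so that $G\,\Pi_+(L_\Lambda^k)G^{-1}=\widehat\Pi_+(\mathcal L^k)$. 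A general diagonal degree-zero gauge would not preserve the image.
\item For $N=1$ the coefficients $a_0,b_{-1},c_1$ vanish. The support check should be done with $QUQ^{-1}$ and $Q\Lambda Q^{-1}=\begin{pmatrix}0&\mathcal S\\ \mathcal S^{-2}&0\end{pmatrix}$.
\end{itemize}
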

At the level of dressing
derivations, the relations between the bare operators imply
\begin{equation*}
D_{\Lambda^{2n}}=D_{T_n}-D_{\widebar T_n},
\qquad
D_{\Lambda^{2m+1}}=D_{Q_m^+}+D_{Q_m^-}.
\end{equation*}
In particular, the even powers of $\Lambda$ generate the even-flow
subhierarchy of Adler--Pfaff. The further reduction to the classical even Pfaff hierarchy~\cite{benassi2021symmetric,BDM2} will appear later.

\medskip 
\hypertarget{res4}{\textbf{4. Savchenko--Zabrodin tau-functions and the diagonal Matrix Pfaff--Toda hierarchy.}}

For general $N$, we compare the commuting diagonal sector of the Matrix Pfaff--Toda hierarchy with the multicomponent Pfaff--Toda hierarchy of Savchenko and Zabrodin \cite{SavchenkoZabrodin,SavZab25DKP,SavZab26DispPfaffToda,SavZab26Elliptic}. The former is defined through matrix pseudodifference dressing, whereas the latter is constructed from a Clifford-group element and a fermionic bilinear identity. We prove that the continuous Savchenko--Zabrodin flows realize the diagonal dressing equations of our matrix hierarchy.

The fermionic bilinear identity produces two opposite raw matrix
pseudodifference dressing operators. They are related by a skew
operator-valued form $K_N$, which need not coincide with the fixed
form $J_N$ of the Matrix Pfaff--Toda hierarchy. To compare the two
constructions, one must simultaneously normalize $K_N$ to $J_N$ and
place one dressing operator in the negative dressing group. This
defines the regular factorization locus, on which the normalization is
unique.

For $\alpha=1,\ldots,N$ and $n\geq1$, we introduce the commuting diagonal bare
operators
\begin{equation*}
T_{\alpha,n}=\mathcal S^{-n}
\begin{pmatrix}
E_{\alpha\alpha}&0\\
0&0
\end{pmatrix},\qquad \widebar T_{\alpha,n}=-\mathcal S^{-n}
\begin{pmatrix}
0&0\\
0&E_{\alpha\alpha}
\end{pmatrix}.
\end{equation*}
They generate the commuting diagonal sector of the Matrix
Pfaff--Toda hierarchy.

\begin{theorem*}[Savchenko--Zabrodin realization]
Let $g$ be a Clifford-group element satisfying the Savchen\-ko--Zabrodin
fermionic bilinear identity of \cite{SavchenkoZabrodin} and belonging to the regular
factorization locus. Let $\mathcal V_g$ be the corresponding coefficient
difference--differential algebra and let $Y_N$ be the normalized matrix
dressing operator associated with the tau-function of $g$.

Then evaluation of the universal dressing coefficients at those of $Y_N$
defines an $\mathcal S$-difference algebra homomorphism
\begin{equation*}
\operatorname{ev}_g:
\mathcal V_N\longrightarrow\mathcal V_g,
\qquad
\mathcal U_N\longmapsto Y_N,
\end{equation*}
which intertwines the diagonal Matrix Pfaff--Toda flows with the continuous
Savchenko--Zabrodin flows:
\begin{equation*}
\operatorname{ev}_g\circ D_{T_{\alpha,n}}
=
\frac{\partial}{\partial t_{\alpha,n}}
\circ\operatorname{ev}_g,
\qquad
\operatorname{ev}_g\circ D_{\widebar T_{\alpha,n}}
=
\frac{\partial}{\partial\widebar t_{\alpha,n}}
\circ\operatorname{ev}_g,
\end{equation*}
for every $\alpha=1,\ldots,N$ and $n\geq1$.
\end{theorem*}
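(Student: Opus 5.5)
The plan is to derive, directly from the fermionic bilinear identity, the dressing equations $\partial_{t_{\alpha,n}}Y_N=-\Pi_-(Y_NT_{\alpha,n}Y_N^{-1})Y_N$ and the analogous ones for $\widebar t_{\alpha,n}$. Once these hold, the statement is formal. $\mathcal V_N$ is the localized difference algebra freely generated by the coefficients of $U_N$, so $\operatorname{ev}_g$ is well defined: it maps generators to the corresponding coefficients of $Y_N$. It commutes with $\mathcal S$ because $\mathcal S$ acts on both sides by the discrete index shift. Localization is harmless because the lower triangular $\Delta$ of $Y_N$ is invertible on the regular factorization locus. Two derivations of a free difference algebra that commute with $\mathcal S$ agree once they agree on generators. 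Hence $\operatorname{ev}_g\circ D_{T_{\alpha,n}}=\partial_{t_{\alpha,n}}\circ\operatorname{ev}_g$ reduces to checking $\partial_{t_{\alpha,n}}Y_N=\operatorname{ev}_g(D_{T_{\alpha,n}}U_N)=-\Pi_-(L^{Y}_{T_{\alpha,n}})Y_N$. The same applies to $\widebar T_{\alpha,n}$.

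\textbf{Step 1.} From the bilinear identity of \cite{SavchenkoZabrodin}, written in terms of wave functions, extract the two raw opposite dressing operators $W^{(0)}$ (a series in $\mathcal S$) and $W^{(\infty)}$ (a series in $\mathcal S^{-1}$). Differentiating the wave functions in $t_{\alpha,n}$ gives their evolution: the $t_{\alpha,n}$-flows act on the $e^{\xi}$ factors by multiplication by $T_{\alpha,n}$ on one side only. The bilinear identity, in its pseudodifference form, then becomes the relation $W^{(0)}{}^{*}\,K_N\,W^{(\infty)}$-type skew constraint, with $K_N$ as described before the statement. I would reproduce the standard Takasaki argument. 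First, the residue formulation of the bilinear identity identifies an operator $\Pi_-$-part as zero. Second, the constraint $X^*=J_NXJ_N$ characterizes $\mathfrak g_+$.

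\textbf{Step 2 (normalization).} On the regular factorization locus there is a unique gauge $G$ that carries $K_N$ to $J_N$, i.e. $G^*K_NG=J_N$ up to the appropriate side conventions. The same gauge puts $Y_N:=W G$ into the negative dressing group of the shape of $U_N$. Because of uniqueness, $G$ depends smoothly on the times. Its derivative is then forced: $\partial_t G\,G^{-1}$ splits according to $\mathfrak g_+\oplus\mathfrak g_-$.

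\textbf{Step 3.} Compute $\partial_{t_{\alpha,n}}Y_N\cdot Y_N^{-1}+Y_NT_{\alpha,n}Y_N^{-1}$. Using the opposite operator $\widetilde Y_N$, which carries no $T$-dependence on its bare side, show that this quantity equals $\partial_t\widetilde Y_N\widetilde Y_N^{-1}$ plus something built from the normalized form. Then verify the skew condition $X^*=J_NXJ_N$ and Laurent finiteness: the two expansions agree, so the result is a Laurent polynomial. This places the combination in $\mathfrak g_+$. Meanwhile $\partial_tY_N\,Y_N^{-1}$ lies in $\mathfrak g_-$, since $Y_N$ lies in the negative group and the diagonal/triangular normalizations are preserved. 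Uniqueness of the splitting then yields $\partial_tY_N=-\Pi_-(L_{T_{\alpha,n}})Y_N$.

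I expect the main obstacle to be Step 3, specifically showing that the combination is $J_N$-skew in the sense $X^*=J_NXJ_N$ after the $t$-dependent normalization. This requires differentiating the constraint $Y_N^*\,(\cdot)\,\widetilde Y_N=J_N$ and tracking how $K_N$'s time dependence is absorbed. I would try first to prove that $K_N$ itself is time-independent from the bilinear identity (it encodes $g$ only). If it is, then the differentiated constraint immediately gives skewness.
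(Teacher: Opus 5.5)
Your reduction of the statement to the dressing equation $\partial_xY_N=-\Pi_-(Y_NxY_N^{-1})Y_N$ on generators matches the paper, and so does the final use of uniqueness of $\mathfrak g=\mathfrak g_+\oplus\mathfrak g_-$. The gap is in proving $B_x:=(\partial_xY_N)Y_N^{-1}+Y_NxY_N^{-1}\in\mathfrak g_+$.

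Once $\widebar Y_N=-J_N(Y_N^*)^{-1}J_N$ holds, one automatically has $\theta(B_x)=(\partial_x\widebar Y_N)\widebar Y_N^{-1}+\widebar Y_N\theta(x)\widebar Y_N^{-1}=:\widebar B_x$. So skewness and Laurent finiteness both reduce to the single identity $B_x=\widebar B_x$. This is the only place where the bilinear identity must be used beyond the static relation $X_NJ_N\widebar X_N^*=\widebar X_NJ_NX_N^*=K_N$, and your plan never derives it.

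Differentiating the static relation only gives $\partial_xK_N=B^{\rm raw}_xK_N+K_N(\widebar B^{\rm raw}_x)^*$. That identity holds by the very definitions of $B^{\rm raw}_x,\widebar B^{\rm raw}_x$; after normalization it is just the tautology $\widebar B_x=\theta(B_x)$. It relates the two auxiliary operators but does not identify them, so ``$K_N$ time-independent $\Rightarrow$ skewness'' is false. Moreover, $K_N$ is in general not time-independent for $N\geq2$, since its blocks $-V_{1,-1}$ and $W_{2,1}$ are tau-function ratios.

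The missing idea is the paper's lemma $B^{\rm raw}_x=\widebar B^{\rm raw}_x$. Differentiate the matrix residue identity in the unprimed times only, holding the primed copy fixed. Then set $\boldsymbol s'=\boldsymbol s-m\boldsymbol 1$ and apply the residue--coefficient correspondence to get $(\partial_xX_N+X_Nx)J_N\widebar X_N^*=(\partial_x\widebar X_N+\widebar X_N\theta(x))J_NX_N^*$. This says $B^{\rm raw}_xK_N=\widebar B^{\rm raw}_xK_N$, and you cancel $K_N$, which is invertible on the regular locus. The dynamical information comes precisely from the independence of the two copies of variables.

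Two features of your setup would also break the computation.

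First, the normalization must act on the left, by the same operator on both frames: $Y_N=\Gamma_NX_N$ and $\widebar Y_N=\Gamma_N\widebar X_N$. Then $K_N=X_NJ_N\widebar X_N^*$ transforms by congruence, $\Gamma_NK_N\Gamma_N^*=J_N$, and the bare frames and bare operators $T_{\alpha,n}$ are untouched. Also $B_x$ and $\widebar B_x$ acquire the same term $(\partial_x\Gamma_N)\Gamma_N^{-1}$, so $B^{\rm raw}_x=\widebar B^{\rm raw}_x$ passes to $B_x=\widebar B_x$ with no need to control $\partial_x\Gamma_N$. Your $Y_N:=WG$ does not act on $K_N$ by congruence; it only modifies the middle factor $J_N$. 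It also turns $Y_NxY_N^{-1}$ into $X_N(GxG^{-1})X_N^{-1}$, so the flows would no longer be those of $T_{\alpha,n}$.

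Second, the opposite frame is not inert under $t_{\alpha,n}$. Since $\Xi^\lor=\operatorname{diag}(\widebar\chi,\chi^\lor)$ contains $\chi^\lor$, one has $\partial_{t_{\alpha,n}}\Xi^\lor=\theta(T_{\alpha,n})\Xi^\lor$ with $\theta(T_{\alpha,n})=-\mathcal S^{n}\operatorname{diag}(0,E_{\alpha\alpha})\neq0$. The correct opposite expression is therefore $\widebar B_x=(\partial_x\widebar Y_N)\widebar Y_N^{-1}+\widebar Y_N\theta(x)\widebar Y_N^{-1}$. Its second term, $\theta(Y_NxY_N^{-1})$, is exactly what yields $\widebar B_x=\theta(B_x)$. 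The identity you aim for in Step~3, which drops it, is false.
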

Regularity is automatic for $N=1$, while for $N\geq2$ we exhibit an explicit $\binom{N}{2}$-parameter family of genuinely coupled
Clifford-group elements on which it holds.

\medskip 
\hypertarget{res5}{\textbf{5. Toda reductions and the global picture.}}

The block-diagonal dressing reduction of the Matrix Pfaff--Toda hierarchy, followed by the restriction of the bare algebra to its diagonal Laurent subalgebra and a change of dressing presentation,
yields the multicomponent 2D Toda hierarchy~\cite{UT84,MMAF09,TZ25}. For $N=1$ this becomes scalar $2D$ Toda.
On the Pfaff side, the even powers of $\Lambda$ generate the even
subhierarchy of Adler--Pfaff, while after symmetric gauging the anti-diagonal scalar Toda flows give the even Pfaff hierarchy~\cite{benassi2021symmetric,BDM2}.
These relations are summarized in the  diagram below.

A further reduction, which lies outside of the diagram, is the Krichever--Zabrodin constraint
\cite{KriZab} on the symmetrically gauged scalar Toda dressing pair,
$\widebar W^{\rm s}(W^{\rm s})^*=1$.
It is preserved by the anti-diagonal Toda flows and, under the
preceding Pfaff realization, defines an invariant reduction of the
even Pfaff hierarchy. The resulting scalar Lax hierarchy is C--Toda.

The diagram below provides a global summary of the hierarchies studied in this paper and the reductions relating them. The position of each hierarchy within the scheme depends on the nontrivial identifications established in the main body of the paper.
{\small\begin{equation*}
\begin{tikzpicture}[>=Latex,nodino/.style={minimum height=1.0cm,minimum width=3.35cm,rounded corners=3pt,line width=.75pt,align=center,inner sep=4pt},
    arrow/.style={->,draw=black!48,line width=.42pt},lab/.style={font=\scriptsize,text=black!68,fill=white,inner sep=1.5pt},leftbox/.style={nodino,draw=bluebonnet,fill=bluebonnet!11},midbox/.style={nodino,draw=electricviolet,fill=electricviolet!11},rightbox/.style={nodino,draw=strawberrypink,fill=strawberrypink!11},every text node part/.style={align=center},scale=.9]

\node[text=bluebonnet,align=center] at (0,1.55)
    {{\small\bfseries full bare algebra}\\[-1pt]
     {\footnotesize full dressing}};

\node[text=electricviolet,align=center] at (5.2,1.55)
    {{\small\bfseries diagonal bare algebra}\\[-1pt]
     {\footnotesize full dressing}};

\node[text=strawberrypink,align=center] at (10.4,1.55)
    {{\small\bfseries diagonal bare algebra}\\[-1pt]
     {\footnotesize block-diagonal dressing}};

\node[leftbox] (a1) at (0,0)
    {Matrix Pfaff--Toda};

\node[midbox] (b1) at (5.2,0)
    {diagonal sector of \\
Matrix Pfaff--Toda \\
(S--Z realization)};

\node[rightbox] (c1) at (10.4,0)
    {multicomponent\\$2D$ Toda};

\node[leftbox] (a2) at (0,-2.2)
    {Pfaff--Toda\\$+$ odd extension};

\node[midbox] (b2) at (5.2,-2.2)
    {Pfaff--Toda};

\node[rightbox] (c2) at (10.4,-2.2)
    {$2D$ Toda};

\node[leftbox] (a3) at (0,-4.8)
    {Adler--Pfaff};

\node[midbox] (b3) at (5.2,-4.8)
    {even subhierarchy\\of Adler--Pfaff};

\node[rightbox] (c3) at (10.4,-4.8)
    {even Pfaff};

\draw[arrow] (a1) -- (b1);
\draw[arrow] (b1) -- (c1);

\draw[arrow] (a2) -- (b2);
\draw[arrow] (b2) -- (c2);

\draw[arrow] (a3) -- (b3);
\draw[arrow] (b3) -- (c3);

\draw[arrow] (a1) -- (a2) node[midway,lab] {$N=1$};
\draw[arrow] (b1) -- (b2) node[midway,lab] {$N=1$};
\draw[arrow] (c1) -- (c2) node[midway,lab] {$N=1$};

\draw[arrow] (a2) -- (a3) node[midway,lab] {powers of $\Lambda$};
\draw[arrow] (b2) -- (b3) node[midway,lab] {powers of $\Lambda^2$};
\draw[arrow] (c2) -- (c3) node[midway,lab] {anti-diag.\ flows \\ symm.\ gauge};

\end{tikzpicture}
\end{equation*}}
In the diagram, horizontal arrows denote restrictions of the bare algebra and dressing class, while vertical arrows denote the one-component specialization $N=1$ and the subsequent passage to the even or anti-diagonal reductions. All these reductions preserve the dressing equations, so that the integrable structures established for the Matrix Pfaff--Toda hierarchy descend to each node of the scheme.

\medskip

The paper is organized as follows. In Section~\ref{sec:adler-pfaff} we introduce the bi-infinite Adler--Pfaff hierarchy and we derive its pseudodifference realization (\hyperlink{res1}{Result~1}). In Section~\ref{sec:matrix-pfaff-toda-representation} we develop the Matrix Pfaff--Toda dressing representation, including the Pfaff--Toda splitting, the zero-curvature Lax representation of the full Laurent algebra (\hyperlink{res2}{Result~2}), and the Savchenko--Zabrodin realization theorem (\hyperlink{res4}{Result~4}), whose proof is deferred to Appendix~\ref{sec:SZ-identification}. Section~\ref{sec:scalar-extension} specializes to $N=1$, recovering Takasaki's Pfaff--Toda hierarchy and reconstructing Adler--Pfaff from the odd extension via powers of $\Lambda$ (\hyperlink{res3}{Result~3}). Finally, in Section~\ref{sec:toda-reductions} we identify in our setting the Toda reductions, including the multicomponent $2D$ Toda hierarchy, the even Pfaff subhierarchy, and the C--Toda reduction (\hyperlink{res5}{Result~5}).

\paragraph{Acknowledgements.}
SC is supported by BK21 Seoul National University. The authors used OpenAI ChatGPT during the preparation of this manuscript for language and organizational assistance and as an interactive aid for exploratory discussion and checking computations. The mathematical arguments, proofs, and final formulations presented in the manuscript were developed and finalized by the authors, who take full responsibility for the content.

\section{Adler--Pfaff hierarchy in pseudodifference form}
\label{sec:adler-pfaff}
The Pfaff lattice was introduced by Adler, Horozov and van Moerbeke
\cite{vanM_Pfaff_skew,adler2002} in the context
of orthogonal and symplectic random matrix ensembles, where partition
functions are expressed as Pfaffians of skew-symmetric moment matrices.
It admits a natural formulation in terms of bi-infinite matrices, where
the splitting of the algebra is induced by a non-trivial symplectic
involution. This splitting, together with the associated Lax equations,
gives rise to an integrable hierarchy whose phase space is the Hessenberg
orbit of the first sub-diagonal shift.

We formulate the Adler--Pfaff hierarchy on its bi-infinite Hessenberg phase space and identify its translation-covariant matrix realization with a $2\times 2$ matrix pseudodifference algebra. Under this
identification, the Adler--Pfaff splitting and Lax equations are transported directly to the pseudodifference setting. The resulting pseudodifference Lax hierarchy is the natural $2\times2$ matrix extension of the scalar shift-operator formalism for lattice integrable systems developed by Blaszak and Marciniak \cite{BlaMar} and Suris \cite{Suris1996,Suris1999}.

\subsection{The bi-infinite Adler--Pfaff hierarchy}
\label{subsec:biinfinite-pfaff}
We recall the bi-infinite Adler--Pfaff hierarchy on its block-Hessenberg phase space and fix the notation needed for its pseudodifference realization below. 

Let $\mathcal V$ be a commutative $\mathbb C$-algebra. We denote by $\mathfrak{gl}_{\infty,2}^{\textup{\,sup}}(\mathcal V)$ the associative algebra of bi-infinite matrices written in $2\times2$ blocks with finitely many super-diagonals
\begin{equation}\label{eq:gl_sup}
\mathfrak{gl}_{\infty,2}^{\textup{\,sup}}(\mathcal V)
:=\left\{ X=(X_{ij})_{i,j\in\mathbb Z},~X_{ij}\in  M_2(\mathcal V) \mid
\exists N_X\in\mathbb N_0:\ X_{ij}=0\ \text{for}\ j-i>N_X
\right\}.
\end{equation}
The condition that only finitely many block super-diagonals are non-zero ensures that matrix multiplication is well-defined: for fixed $i,j$, only finitely many intermediate indices $k$ contribute to the sum $\sum_k X_{ik}Y_{kj}$. Set 
\begin{equation}
J_\infty=\operatorname{diag}_{n\in\mathbb Z}(J_0), \qquad J_0= \begin{pmatrix}
0&1\\
-1&0
\end{pmatrix}, \qquad J_0^2 = -I_2.
\end{equation}
The condition $X=J_\infty X^\top J_\infty$ forces $X$ to have
finite block bandwidth. We define the Lie subalgebra
\begin{equation}\label{eq:symplectic_subalgebra}
\mathfrak{sp}_\infty(\mathcal V):=\left\{
X\in\mathfrak{gl}^{\,\mathrm{sup}}_{\infty,2}(\mathcal V) \mid  X=J_\infty X^\top J_\infty
\right\}.
\end{equation}
A direct block-wise computation gives the Lie algebra splitting 
\begin{equation}\label{eq:splitting}
\mathfrak{gl}_{\infty,2}^{\,\textup{sup}}(\mathcal V)= \mathfrak t(\mathcal V)\oplus\mathfrak{sp}_\infty(\mathcal V),
\end{equation}
where the complement subalgebra 
 $\mathfrak t(\mathcal V)$ consists of elements that are block lower triangular with scalar diagonal blocks
\begin{equation}\label{eq:triangular_subalgebra}
    \mathfrak t(\mathcal V):= \left\{X\in\mathfrak{gl}_{\infty,2}^{\,\textup{sup}}(\mathcal V) \mid X_{ij}=0\ \text{for }i<j,\quad X_{ii}\in \mathcal V I_2 \right\}.
\end{equation}
The projection onto $\mathfrak t(\mathcal V)$ parallel to $\mathfrak{sp}_\infty(\mathcal V)$ is given explicitly by
\begin{equation}\label{eq:P_triangle}
\bigl(P_{\mathfrak t}(X)\bigr)_{ij}=
\begin{cases}
0, & i<j,\\[.5mm]
\frac12\,\textup{tr}(X_{ii})I_2, & i=j,\\[.5mm]
X_{ij}-J_0\, X_{ji}^\top, J_0, & i>j,
\end{cases}
\end{equation}
We then set 
\begin{equation}\label{eq:P_sp}
    P_\mathfrak{sp}:=1-P_\mathfrak{t} \, . 
\end{equation}
We construct the bi-infinite Adler--Pfaff Hessenberg operator. For $n\in\mathbb Z$ and $m\geq0$, let $A_{n,m}$ be a generic
$2\times2$ matrix and let $b_n$ be a generic scalar.  Let
$\mathcal V_{\mathrm{Pf}}$ be the polynomial algebra over $\mathbb C$
freely generated by their entries. Let $E_{21} \in {M}_2(\mathbb C)$ be the standard matrix unit. The Adler--Pfaff Hessenberg operator $L_{\textup{Pf}}\in \mathfrak{gl}_{\infty,2}^{\,\textup{sup}}(\mathcal V_{\text{Pf}})$ is then determined by 
    \begin{equation}
        (L_{\mathrm{Pf}})_{n,n+1}=b_nE_{21}, \qquad  (L_{\mathrm{Pf}})_{n,n-m}=A_{n,m}, \qquad m\ge 0,
    \end{equation}
    and the remaining blocks equal to zero, i.e.\ 
    \begin{equation}\label{eq:Adler_Pfaff_Lax}
        L_{\mathrm{Pf}}=\begin{pmatrix}
    \ddots & \ddots &         &        &        \\
    \ddots & A_{-1,0} & b_{-1}E_{21} &        &        \\
    \ddots & A_{0,1}& A_{0,0}    & b_0E_{21} &        \\
    \ddots & A_{1,2}& A_{1,1}& A_{1,0}    &b_1 E_{21} \\
       & \ddots & \ddots & \ddots & \ddots
    \end{pmatrix}.
    \end{equation}
     We denote by 
    \begin{equation}
        \mathcal H_{\textup{Pf}} = \bigl\{  L \in \mathfrak{gl}_{\infty,2}^{\,\textup{sup}}(\mathcal V_{\text{Pf}}) \mid L_{ij}=0~\textup{for }j>i+1,~L_{i,i+1} \in \mathcal{V}_{\textup{Pf}} E_{21} \bigr\} 
    \end{equation}
the corresponding block-Hessenberg phase space.  
The Adler--Pfaff hierarchy in the bi-infinite setting is defined by
\begin{equation}
\partial_{t_k} L_{\mathrm{Pf}}
=-\bigl[P_{\mathfrak t}(L_{\mathrm{Pf}}^k),L_{\mathrm{Pf}}\bigr]
= \bigl[P_{\mathfrak{sp}}(L_{\mathrm{Pf}}^k),L_{\mathrm{Pf}}\bigr], \qquad k\ge 1.
\label{adler2}
\end{equation}
Let
$
Q_k:=P_{\mathfrak t}(L_{\mathrm{Pf}}^k).
$
Since $Q_k$ is block lower triangular, the commutator
$[Q_k,L_{\mathrm{Pf}}]$ has no blocks above the first block
super-diagonal. Moreover, the diagonal blocks $(Q_k)_{ii}$ are of the form $q_{k,i}I_2$, hence the $(i,i+1)$-block of $[Q_k,L_{\mathrm{Pf}}]$
is proportional to $E_{21}$ for all $i \in \mathbb Z$, proving that
the vector fields in~\eqref{adler2} preserve
$\mathcal H_{\mathrm{Pf}}$.
 The Lax equation gives the time derivative of every matrix coefficient of $L_{\mathrm{Pf}}$ as a polynomial in the coefficients of $L_{\mathrm{Pf}}$. Since $\mathcal V_{\mathrm{Pf}}$ is the polynomial algebra freely generated by these coefficients, these formulas uniquely define derivations
\begin{equation}
    \partial_{t_k}\colon \mathcal V_{\textup{Pf}}\longrightarrow \mathcal V_{\textup{Pf}}, \qquad k\ge 1.
\end{equation}
These derivations pairwise commute
($[\partial_{t_k},\partial_{t_\ell}]=0$ for $k,\ell\ge1$), which follows from the standard zero-curvature condition for Lax hierarchies.

\begin{remark}[Even Pfaff]
\label{rem:even-pfaff}
Let $\mathcal I_{\mathrm{ev}}\subset \mathcal V_{\mathrm{Pf}}$ be the ideal generated by the diagonal entries of all the matrices $A_{n,r}$. Equivalently, on the quotient $\mathcal V_{\mathrm{Pf}}/\mathcal I_{\mathrm{ev}}$ every $2\times2$ block of $L_{\mathrm{Pf}}$ is anti-diagonal. The even Pfaff hierarchy is the hierarchy induced by the even derivations $\partial_{t_{2k}}$, $k\geq1$, on the quotient
\begin{equation*}
\mathcal V_{\mathrm{Pf}}^{\mathrm{ev}}:=\mathcal V_{\mathrm{Pf}}/\mathcal I_{\mathrm{ev}}.
\end{equation*}
For every $k\ge 1$, $\partial_{t_{2k}}(\mathcal I_{\mathrm{ev}})\subset \mathcal I_{\mathrm{ev}}$, so the even flows are well-defined on the quotient. Indeed, modulo $\mathcal I_{\mathrm{ev}}$ every block of $L_{\mathrm{Pf}}$
is anti-diagonal. Hence every block of $L_{\mathrm{Pf}}^{2k}$ and
 $P_{\mathfrak t}(L_{\mathrm{Pf}}^{2k})$ is diagonal, so their
commutator with $L_{\mathrm{Pf}}$ is block anti-diagonal. The semi-infinite version of this system was introduced in~\cite{benassi2021symmetric,BDM2} as the Pfaffian analogue of the Volterra hierarchy. We return to the even Pfaff reduction in Section~\ref{sec:toda-reductions}.
\end{remark}

\subsection{Translation-covariant matrices and pseudodifference operators}
\label{subsec:pfaff-isomorphism}
The Adler--Pfaff Lax operator has a built-in translation covariance: shifting the block indices by one is equivalent to applying a shift $\mathcal S$ to the coefficients. Starting from the bi-infinite algebra introduced in Section~\ref{subsec:biinfinite-pfaff}, here we construct an explicit isomorphism $\Phi$ with the algebra of $2\times 2$ matrix pseudodifference operators.

We encode translation of the block index as a shift on the
coefficient algebra. Define the automorphism $\mathcal S$ of $\mathcal V_{\textup{Pf}}$ whose action on its generators is given by 
\begin{equation}
    \mathcal{S}(b_n) = b_{n-1}, \qquad \mathcal{S}(A_{n,m}) = A_{n-1,m}, \qquad n \in \mathbb Z, \, m \ge 0.
\end{equation}
We extend $\mathcal S$ coefficient-wise to
$\mathfrak{gl}^{\,\mathrm{sup}}_{\infty,2}(\mathcal V_{\mathrm{Pf}})$,
\begin{equation*}
(\mathcal S M)_{ij}:=\mathcal S(M_{ij}), ~ i,j \in \mathbb Z.
\end{equation*}
The Lax operator~\eqref{eq:Adler_Pfaff_Lax} for the Adler--Pfaff hierarchy satisfies the covariance relation
\begin{equation}
(L_{\textup{Pf}})_{i+1,j+1} = \mathcal S^{-1}((L_{\textup{Pf}})_{ij}), ~ i,j \in \mathbb Z.
\label{Liscovariant}
\end{equation}
\begin{definition}
We introduce the algebra of translation-covariant bi-infinite block matrices: 
\begin{equation}
    \mathcal C_{\textup{Pf}} = \left\{ M \in  \mathfrak{gl}_{\infty,2}^{\textup{\,sup}}(\mathcal V_{\textup{Pf}}) \mid  M_{i+1,j+1} = \mathcal{S}^{-1}(M_{ij}), ~ i,j \in \mathbb Z
    \right\}. 
\end{equation}
\end{definition}
It is preserved
by matrix multiplication,
$P_{\mathfrak t}$ and $P_{\mathfrak{sp}}$.
 \begin{lemma}
The shift automorphism $\mathcal{S}$ commutes with the derivations $(\partial_{t_k})_{k \geq 1}$.
\end{lemma}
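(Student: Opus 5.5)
Since $\mathcal V_{\mathrm{Pf}}$ is freely generated by the entries of the $b_n$ and $A_{n,m}$, and both $\mathcal S\circ\partial_{t_k}$ and $\partial_{t_k}\circ\mathcal S$ are $\mathcal S$-twisted derivations (i.e.\ $D(fg)=D(f)\mathcal S(g)+\mathcal S(f)D(g)$), it suffices to check $\mathcal S(\partial_{t_k}x)=\partial_{t_k}(\mathcal S x)$ on each generator $x$. In matrix form, writing $\partial_{t_k}L_{\mathrm{Pf}}$ blockwise, the generators are exactly the blocks $(L_{\mathrm{Pf}})_{n,n+1}=b_nE_{21}$ and $(L_{\mathrm{Pf}})_{n,n-m}=A_{n,m}$. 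The covariance relation \eqref{Liscovariant}, read as $\mathcal S\bigl((L_{\mathrm{Pf}})_{i+1,j+1}\bigr)=(L_{\mathrm{Pf}})_{ij}$, says that $\mathcal S$ applied to a generator is again a generator, obtained by shifting both block indices down by one. Hence the claim on generators is equivalent to the statement that the matrix $\partial_{t_k}L_{\mathrm{Pf}}$ also satisfies the covariance relation:
\begin{equation*}
\mathcal S\bigl((\partial_{t_k}L_{\mathrm{Pf}})_{i+1,j+1}\bigr)=(\partial_{t_k}L_{\mathrm{Pf}})_{ij}
=\partial_{t_k}\bigl(\mathcal S((L_{\mathrm{Pf}})_{i+1,j+1})\bigr),
\end{equation*}
that is, $\partial_{t_k}L_{\mathrm{Pf}}\in\mathcal C_{\mathrm{Pf}}$.

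To prove this, I will use the facts recorded just before the statement: $\mathcal C_{\mathrm{Pf}}$ is closed under multiplication and under $P_{\mathfrak t}$ and $P_{\mathfrak{sp}}$. Multiplicativity follows because $\mathcal S^{-1}$ is an algebra automorphism applied entrywise and the index shift $k\mapsto k+1$ in $\sum_k M_{ik}N_{kj}$ is a bijection. Closure under $P_{\mathfrak t}$ is read off from \eqref{eq:P_triangle}: each case depends only on $i-j$, and the formula is built from $X_{ij}$, $X_{ji}$, trace and transpose, all of which commute with the entrywise automorphism $\mathcal S^{-1}$.

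The conclusion then follows directly. Since $L_{\mathrm{Pf}}\in\mathcal C_{\mathrm{Pf}}$, we get $L_{\mathrm{Pf}}^k\in\mathcal C_{\mathrm{Pf}}$ and $Q_k=P_{\mathfrak t}(L_{\mathrm{Pf}}^k)\in\mathcal C_{\mathrm{Pf}}$. Therefore $\partial_{t_k}L_{\mathrm{Pf}}=-[Q_k,L_{\mathrm{Pf}}]\in\mathcal C_{\mathrm{Pf}}$, which gives the displayed identity and hence $[\mathcal S,\partial_{t_k}]=0$ on generators, and so on all of $\mathcal V_{\mathrm{Pf}}$.

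The main point needing care is the bookkeeping for the $b_n$ generators. One must check that the $(i,i+1)$ block of $[Q_k,L_{\mathrm{Pf}}]$ is a multiple of $E_{21}$, which was shown above, so that the covariance of this block determines $\partial_{t_k}b_n$ consistently. One should also confirm that the entrywise action of $\mathcal S$ commutes with transposition and with conjugation by $J_0$, which holds because $J_0$ has constant entries.
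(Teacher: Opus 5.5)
Your proof is correct and follows the paper's argument. Both show that $\partial_{t_k}L_{\mathrm{Pf}}=-[P_{\mathfrak t}(L_{\mathrm{Pf}}^k),L_{\mathrm{Pf}}]$ lies in $\mathcal C_{\mathrm{Pf}}$, since $\mathcal C_{\mathrm{Pf}}$ is closed under products and the Pfaff projections, and then combine this covariance with that of $L_{\mathrm{Pf}}$ to get commutation on the generators. Your remark that $\mathcal S\circ\partial_{t_k}$ and $\partial_{t_k}\circ\mathcal S$ are both $\mathcal S$-twisted derivations spells out the passage from generators to all of $\mathcal V_{\mathrm{Pf}}$, which the paper leaves implicit.
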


\begin{proof}
  Let us denote the RHS of~\eqref{adler2} by $F_k$. For all $i, j \in \mathbb Z$ we have 
\begin{equation}
\partial_{t_k}
(L_{\mathrm{Pf}})_{ij}
=
(F_k)_{i,j}
\label{auxeqq}
\end{equation}
By the preceding observation, $\mathcal C_{\mathrm{Pf}}$ is a subalgebra of $\mathfrak{gl}_{\infty,2}^{\textup{\,sup}}(\mathcal V_{\textup{Pf}})$ preserved by $P_{\mathfrak t}$ and $P_{\mathfrak sp}$. Hence $F_k$ belongs to $\mathcal C_{\mathrm{Pf}}$:
\begin{equation}
(F_k)_{i+1,j+1}=\mathcal{S}^{-1}((F_k)_{ij}), ~ i,j \in \mathbb Z.
\label{Fiscovariant}
\end{equation}
Combining~\eqref{Liscovariant},~\eqref{auxeqq}, and~\eqref{Fiscovariant}, we deduce that
\begin{equation*}
\partial_{t_k}\mathcal S^{-1}
\bigl((L_{\mathrm{Pf}})_{ij}\bigr)
=
\mathcal S^{-1}\partial_{t_k}
\bigl((L_{\mathrm{Pf}})_{ij}\bigr), ~ i,j \in \mathbb Z.
\end{equation*}
As the coefficients of $L_{\mathrm{Pf}}$ generate
$\mathcal V_{\mathrm{Pf}}$, the claim follows.
\end{proof}

\begin{definition}
 The space of $\mathcal S$-adic pseudodifference operators on $\mathcal V_{\mathrm{Pf}}$ is 
\begin{equation}
\mathcal V_{\mathrm{Pf}}((\mathcal S)) = \Big\{\sum_{k=-n}^{\infty} a_k \mathcal S^k \mid a_k\in\mathcal V_{\mathrm{Pf}},\ n \in \mathbb N\Big\}.
\end{equation}
It is an associative algebra for the product 
\begin{equation}\label{eq:associative_shift}
    (a\mathcal S^k)(b \mathcal{S}^\ell) = a \, \mathcal{S}^k(b)\,\mathcal{S}^{k+\ell},  ~ a, b \in \mathcal V_{\mathrm{Pf}}, ~ k, \ell \in \mathbb Z.
\end{equation}
The algebra $\mathcal V_{\mathrm{Pf}}((\mathcal S))$ is a difference algebra for the coefficient-wise action of the shift $\mathcal S$, and the subalgebra of Laurent polynomials in $\mathcal S$ is denoted by
\begin{equation}
    \mathcal V_{\mathrm{Pf}}[\mathcal S,\mathcal S^{-1}] = \Big\{\sum_{k=-n}^{m} a_k \mathcal S^k \mid a_k\in\mathcal V_{\mathrm{Pf}},~ n,m \in \mathbb N \Big\}.
\end{equation}
Finally, for $P \in \mathcal V_{\rm Pf}((\mathcal S))$, we define the triangular splittings
\begin{equation}\label{eq:splitting_triangular}
P_{>0}= \sum_{k=1}^{\infty} a_k \mathcal S^k, \qquad P_0=a_0, \qquad P_{<0}=\sum_{k=-n}^{-1} a_k \mathcal S^k.
\end{equation}
\end{definition}
Because of the covariance condition, every $M\in\mathcal C_{\rm Pf}$ is determined by its zeroth block column. The shift on the coefficient algebra records translation of the block index, so it is natural to encode the block $M_{0,-k}$ as the coefficient of $\mathcal S^k$. This leads to the map $\Phi:\mathcal C_{\mathrm{Pf}}\to
M_2\bigl(\mathcal V_{\mathrm{Pf}}((\mathcal S))\bigr) $
\begin{equation}\label{eq:Phi}
\Phi(M):=
\sum_{k\in\mathbb Z}M_{0,-k}\mathcal S^k.
\end{equation}
\begin{proposition} $\Phi$ is an isomorphism of difference algebras.
\end{proposition}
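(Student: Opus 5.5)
The plan is to write down an explicit inverse and then check multiplicativity and compatibility with the shift directly from the covariance relation $M_{i+1,j+1}=\mathcal S^{-1}(M_{ij})$.

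\emph{Well-definedness and bijectivity.} Iterating the covariance relation gives
\begin{equation*}
M_{ij}=\mathcal S^{-i}\bigl(M_{0,j-i}\bigr),\qquad i,j\in\mathbb Z .
\end{equation*}
So $M\in\mathcal C_{\rm Pf}$ is determined by its zeroth block row $(M_{0,-k})_{k\in\mathbb Z}$, which is exactly the coefficient data of $\Phi(M)$; hence $\Phi$ is injective.

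Since $M$ has finitely many nonzero block super-diagonals, $M_{0,-k}=0$ for $k<-N_M$. Therefore $\Phi(M)$ is bounded below in powers of $\mathcal S$ and lies in $M_2\bigl(\mathcal V_{\rm Pf}((\mathcal S))\bigr)$.

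For surjectivity, take $A=\sum_{k\ge -n}a_k\mathcal S^k$ with $a_k\in M_2(\mathcal V_{\rm Pf})$ and set $M_{ij}:=\mathcal S^{-i}(a_{i-j})$. Then:
\begin{itemize}[leftmargin=*]
\item $M_{0,-k}=a_k$, so $\Phi(M)=A$;
\item $M_{i+1,j+1}=\mathcal S^{-i-1}(a_{i-j})=\mathcal S^{-1}(M_{ij})$, so $M$ is translation-covariant;
\item $M_{ij}=0$ whenever $j-i>n$, so $M$ has finitely many super-diagonals and lies in $\mathcal C_{\rm Pf}$.
\end{itemize}
Linearity of $\Phi$ is clear.

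\emph{Multiplicativity.} This is the one step that needs care, mainly with the sign conventions of the indices. For $M,N\in\mathcal C_{\rm Pf}$, substitute $l=-m$ and use covariance of $N$:
\begin{equation*}
(MN)_{0,-k}=\sum_{l}M_{0,l}N_{l,-k}=\sum_{m}M_{0,-m}\,\mathcal S^{m}\bigl(N_{0,-(k-m)}\bigr).
\end{equation*}
The sum is finite, because $m\ge -N_M$ and $k-m\ge -N_N$. On the other side, the product rule \eqref{eq:associative_shift}, applied entrywise to $2\times2$ matrices, gives
\begin{equation*}
\bigl(M_{0,-m}\mathcal S^m\bigr)\bigl(N_{0,-(k-m)}\mathcal S^{k-m}\bigr)=M_{0,-m}\,\mathcal S^m\bigl(N_{0,-(k-m)}\bigr)\,\mathcal S^{k}.
\end{equation*}
Collecting the coefficient of $\mathcal S^k$ in $\Phi(M)\Phi(N)$ reproduces exactly $(MN)_{0,-k}$. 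Hence $\Phi(MN)=\Phi(M)\Phi(N)$. The identity matrix is covariant and is sent to $I_2$, so $\Phi$ is unital.

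\emph{Compatibility with the shift.} First, $\mathcal C_{\rm Pf}$ is stable under the coefficient-wise shift: since $\mathcal S$ and $\mathcal S^{-1}$ commute,
\begin{equation*}
(\mathcal SM)_{i+1,j+1}=\mathcal S\mathcal S^{-1}(M_{ij})=\mathcal S^{-1}\bigl((\mathcal SM)_{ij}\bigr).
\end{equation*}
Second, because the difference structure on $\mathcal V_{\rm Pf}((\mathcal S))$ also acts coefficient-wise,
\begin{equation*}
\Phi(\mathcal SM)=\sum_k \mathcal S(M_{0,-k})\,\mathcal S^k=\mathcal S\bigl(\Phi(M)\bigr).
\end{equation*}
Together with the previous steps, this shows that $\Phi$ is an isomorphism of difference algebras.
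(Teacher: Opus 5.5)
Your proposal is correct and follows essentially the same route as the paper: the same explicit inverse $M_{ij}=\mathcal S^{-i}(a_{i-j})$, the same multiplicativity computation using covariance of the second factor, and coefficient-wise compatibility with $\mathcal S$. You also make explicit several points the paper leaves implicit: injectivity via $M_{ij}=\mathcal S^{-i}(M_{0,j-i})$, finiteness of the convolution sum, and stability of $\mathcal C_{\rm Pf}$ under the shift.
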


\begin{proof}
    We consider two elements $M,N$ in $\mathcal C_{\mathrm{Pf}}$, and write
    \begin{equation*}
        \Phi(M)=\sum_{i\in \mathbb Z}A_i \mathcal{S}^i, \qquad \Phi(N)=\sum_{j \in \mathbb Z}B_j \mathcal{S}^j. 
    \end{equation*}
    Since $N$ is $\mathcal{S}$-covariant we have for all $k \in \mathbb Z$
    \begin{equation*}
        (MN)_{0,-k} = \sum_{r \in \mathbb Z} M_{0,-r}\,N_{-r,-k} = \sum_{r \in \mathbb Z} A_r\,N_{-r,-k}=\sum_{r \in \mathbb Z} A_r\,\mathcal{S}^r(B_{k-r}).
    \end{equation*}
    On the other hand, the product of $\Phi(M)$ and $\Phi(N)$ using~\eqref{eq:associative_shift} is 
    \begin{equation*}
        \Phi(M)\Phi(N) = \Bigl( \sum_{i\in \mathbb Z}A_i \mathcal{S}^i\Bigr) \Bigl( \sum_{j \in \mathbb Z}B_j \mathcal{S}^j \Bigr) = \sum_{k \in \mathbb Z} \Bigl( \sum_{i\in \mathbb Z}A_i \mathcal{S}^i(B_{k-i} )\Bigr) \mathcal{S}^k, 
    \end{equation*}
    therefore $\Phi(MN)=\Phi(M)\Phi(N)$. 

Conversely, let
$
A=\sum_{k\geq -n}A_k\mathcal S^k
\in M_2\bigl(\mathcal V_{\mathrm{Pf}}((\mathcal S))\bigr)
$
and define
\begin{equation*}
M_{ij}:=\mathcal S^{-i}(A_{i-j}).
\end{equation*}
Since $A_k=0$ for $k<-n$, we have $M_{ij}=0$ whenever
$j-i>n$, so
$
M\in
\mathfrak{gl}^{\,\mathrm{sup}}_{\infty,2}(\mathcal V_{\mathrm{Pf}}).
$
Moreover,
\begin{equation*}
M_{i+1,j+1}
=
\mathcal S^{-i-1}(A_{i-j})
=
\mathcal S^{-1}(M_{ij}),
\end{equation*}
hence $
M\in\mathcal C_{\mathrm{Pf}}.
$
Thus $\Phi(M)=A$, so the map $\Phi$ is bijective. Finally, by construction,
\begin{equation*}
\Phi(\mathcal S(M))=\mathcal S(\Phi(M)),
\end{equation*}
so $\Phi$ is an isomorphism of difference algebras.
\end{proof}
The formal adjoint is the anti-isomorphism
\begin{equation*}
*:M_2(\mathcal V_{\mathrm{Pf}}((\mathcal S)))
\longrightarrow
M_2(\mathcal V_{\mathrm{Pf}}((\mathcal S^{-1})))
\end{equation*}
defined by
\begin{equation*}
\Bigl(\sum_k A_k \mathcal S^k\Bigr)^{\!*}=\sum_k \mathcal S^{-k}A_k^\top .
\end{equation*}
On the Laurent subalgebra
$M_2(\mathcal V_{\mathrm{Pf}}[\mathcal S,\mathcal S^{-1}])$ this is an involutive
anti-automorphism.
On the finite-band subalgebra of $\mathcal C_{\mathrm{Pf}}$, one has
\begin{equation*}
\Phi(M^\top)=\Phi(M)^*,
\qquad
\Phi(J_\infty)=J_0.
\end{equation*}
More generally, matrix transpose exchanges the two one-sided
translation-covariant matrix completions, just as the formal adjoint
exchanges $
M_2\bigl(\mathcal V_{\mathrm{Pf}}((\mathcal S))\bigr)$ and $
M_2\bigl(\mathcal V_{\mathrm{Pf}}((\mathcal S^{-1}))\bigr).
$

\subsection{Pseudodifference realization of the Adler--Pfaff hierarchy}
\label{subsec:pfaff-pseudodifference}
Let $\mathcal L_{\rm{Pf}}:=\Phi(L_{\rm{Pf}})$. By construction of the map $\Phi$~\eqref{eq:Phi}, 
\begin{equation}
    \mathcal L_{\rm{Pf}} = b_0E_{21} \mathcal{S}^{-1}+\sum_{k\ge 0} A_{0,k} \,\mathcal{S}^{k}. 
\end{equation}
The pseudodifference image of the universal Hessenberg phase space is 
\begin{equation}\label{eq:phase_space_pseudodiff}
    \mathcal M_{\rm Pf}:=\Phi(\mathcal{H}_{\textup{Pf}} \cap \mathcal C_{\textup{Pf}}) = \mathcal V_{\textup{Pf}}\, E_{21}\mathcal S^{-1}+ M_2(\mathcal V_{\textup{Pf}}[[\mathcal S]] ). 
\end{equation}
We then transport the Pfaff splitting~\eqref{eq:splitting} to the Lie algebra 
\begin{equation}\label{eq:g_M2}
    \mathfrak g := M_2(\mathcal V_{\textup{Pf}}((\mathcal{S})) ),
\end{equation}
and define the following subspaces of $\mathfrak g$
\begin{align}
\label{eq:gPfaff_plus}
    \mathfrak g^{\textup{Pf}}_+ &:= \left\{ X\in M_2(\mathcal V_{\mathrm{Pf}}[\mathcal S,\mathcal S^{-1}]) \mid X^*=J_0XJ_0 \right\}, \\[.5mm]
\label{eq:gPfaff_minus}    
    \mathfrak g^{\textup{Pf}}_- &:= \mathcal V_{\textup{Pf}}\,I_2 + \mathcal S\,M_2\!\left(\mathcal V_{\textup{Pf}}[[\mathcal S]]\right). 
\end{align}

\begin{proposition}
Under the isomorphism $\Phi$~\eqref{eq:Phi},
\begin{equation*}
    \Phi\bigl(\mathfrak t(\mathcal V_{\textup{Pf}}) \cap \mathcal C_{\mathrm{Pf}} \bigr)=\mathfrak g^{\mathrm{Pf}}_-, \qquad \Phi\bigl(\mathfrak{sp}_\infty(\mathcal V_{\textup{Pf}}) \cap \mathcal C_{\mathrm{Pf}} \bigr) =\mathfrak g^{\textup{Pf}}_+.
\end{equation*}
Consequently, we have the direct sum decomposition 
\begin{equation}
    \mathfrak g = \mathfrak g^{\textup{Pf}}_+ \oplus \mathfrak g^{\textup{Pf}}_-.
    \label{decomp}
\end{equation} 
\end{proposition}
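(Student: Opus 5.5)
We will verify the two image identities separately, working with the explicit description of $\Phi$ and its inverse $M_{ij}=\mathcal S^{-i}(A_{i-j})$ from the preceding proposition. The decomposition~\eqref{decomp} then follows at once: $\Phi$ is an isomorphism, $\mathcal C_{\rm Pf}$ is preserved by $P_{\mathfrak t}$ and $P_{\mathfrak{sp}}$ (as observed after the definition of $\mathcal C_{\rm Pf}$), so the splitting~\eqref{eq:splitting} restricts to $\mathcal C_{\rm Pf}=(\mathfrak t\cap\mathcal C_{\rm Pf})\oplus(\mathfrak{sp}_\infty\cap\mathcal C_{\rm Pf})$, and transporting by $\Phi$ gives $\mathfrak g=\mathfrak g^{\rm Pf}_+\oplus\mathfrak g^{\rm Pf}_-$.

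\textbf{The triangular part.} For $M\in\mathcal C_{\rm Pf}$, the block lower triangular condition $M_{ij}=0$ for $i<j$ is, by covariance, equivalent to $M_{0,-k}=0$ for $k<0$. Hence $\Phi(M)$ contains only nonnegative powers of $\mathcal S$. The condition $M_{ii}\in\mathcal V_{\rm Pf}I_2$ reduces by covariance to $M_{00}\in\mathcal V_{\rm Pf}I_2$, since $\mathcal S^{-i}$ preserves scalar matrices. Thus $\Phi(M)\in\mathcal V_{\rm Pf}I_2+\mathcal S M_2(\mathcal V_{\rm Pf}[[\mathcal S]])$. Conversely, the inverse formula shows that every element of $\mathfrak g^{\rm Pf}_-$ comes from such an $M$; note that all of $\mathcal V_{\rm Pf}((\mathcal S))$ is admissible here, since lower triangular matrices automatically lie in $\mathfrak{gl}^{\rm sup}$.

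\textbf{The symplectic part.} This is the main step. First, $X=J_\infty X^\top J_\infty$ forces finite bandwidth, because $X^\top$ has finitely many block subdiagonals. So $M\in\mathfrak{sp}_\infty\cap\mathcal C_{\rm Pf}$ lies in the finite-band subalgebra and $\Phi(M)\in M_2(\mathcal V_{\rm Pf}[\mathcal S,\mathcal S^{-1}])$. On the finite-band part we apply the stated identities $\Phi(M^\top)=\Phi(M)^*$ and $\Phi(J_\infty)=J_0$, together with multiplicativity of $\Phi$. These give $\Phi(J_\infty M^\top J_\infty)=J_0\Phi(M)^*J_0$, so $M=J_\infty M^\top J_\infty$ if and only if $\Phi(M)=J_0\Phi(M)^*J_0$. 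Multiplying both sides by $J_0$ and using $J_0^2=-I_2$, this is equivalent to $\Phi(M)^*=J_0\Phi(M)J_0$, which is the defining condition of $\mathfrak g^{\rm Pf}_+$.

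\textbf{Expected obstacle.} The delicate point is that $M^\top$ need not lie in $\mathcal C_{\rm Pf}$ as literally defined, since the covariance of the transpose runs in the same direction but the support flips to the other one-sided completion. For finite-band $M$ both completions coincide, so the transposition identity holds there. I would double-check it directly: $(M^\top)_{0,-k}=M_{-k,0}^\top=\mathcal S^{k}(M_{0,k})^\top$, which is the coefficient of $\mathcal S^k$ in $\sum_j\mathcal S^{-j}A_j^\top$, matching the definition of $*$. Surjectivity onto $\mathfrak g^{\rm Pf}_+$ again uses the inverse map, which sends Laurent polynomials to finite-band covariant matrices.
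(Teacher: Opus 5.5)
Your proposal is correct and matches the paper's proof step for step. The triangular identity comes from the support and scalar-diagonal conditions under covariance. The symplectic identity comes from finite bandwidth together with $\Phi(M^\top)=\Phi(M)^*$ and $\Phi(J_\infty)=J_0$. The decomposition is obtained by transporting the restricted splitting of $\mathcal C_{\mathrm{Pf}}$ through $\Phi$. You also verify the transposition identity explicitly and prove surjectivity via the inverse map, which fills in details the paper leaves implicit.
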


\begin{proof}
The first identity follows directly from the lower triangular support
condition and the scalar diagonal blocks. For the second identity, every element of
$\mathfrak{sp}_\infty(\mathcal V_{\mathrm{Pf}})$~\eqref{eq:symplectic_subalgebra} has finite block
bandwidth, and hence
\begin{equation*}
\Phi(M^\top)=\Phi(M)^*,
\qquad
\Phi(J_\infty)=J_0.
\end{equation*}
Thus, for $M\in\mathcal C_{\mathrm{Pf}}$ of finite block
bandwidth,
\begin{equation*}
M=J_\infty M^\top J_\infty
\quad\Longleftrightarrow\quad
\Phi(M)^*=J_0\Phi(M)J_0,
\end{equation*}
which proves the second identity. Since $P_{\mathfrak t}$ and
$P_{\mathfrak{sp}}$ preserve $\mathcal C_{\mathrm{Pf}}$, applying
$\Phi$ to
\begin{equation*}
\mathcal C_{\mathrm{Pf}}
=
\bigl(\mathfrak t(\mathcal V_{\mathrm{Pf}})
\cap\mathcal C_{\mathrm{Pf}}\bigr)
\oplus
\bigl(\mathfrak{sp}_\infty(\mathcal V_{\mathrm{Pf}})
\cap\mathcal C_{\mathrm{Pf}}\bigr)
\end{equation*}
gives the decomposition~\eqref{decomp}.
\end{proof}

Let $
P^{\mathrm{Pf}}_+:\mathfrak g\to\mathfrak g^{\mathrm{Pf}}_+
$
be the projection along $\mathfrak g^{\mathrm{Pf}}_-$, and set
$P^{\mathrm{Pf}}_-:=1-P^{\mathrm{Pf}}_+$. For
\begin{equation*}
X=
\begin{pmatrix}
A&B\\
C&D
\end{pmatrix}\in\mathfrak g,
\end{equation*}
one has
\begin{equation}\label{plussplittngpfaffpseudo}
P^{\mathrm{Pf}}_+(X)
=\begin{pmatrix}
A_{<0}+\frac12(A-D)_0-(D^*)_{>0}&B_{\leq0}+(B^*)_{>0}
\\[2mm]
C_{\leq0}+(C^*)_{>0}&D_{<0}+\frac12(D-A)_0-(A^*)_{>0}
\end{pmatrix},
\end{equation}
 in terms of the triangular splittings~\eqref{eq:splitting_triangular}. 
Indeed, if $Y=P_+^{\mathrm{Pf}}(X)$, then
\begin{equation*}
Y^*=J_0YJ_0,
\qquad
X-Y\in\mathfrak g_-^{\mathrm{Pf}}.
\end{equation*}
The second condition fixes the negative diagonal parts and the
non-positive off-diagonal parts of $Y$, while the scalar degree-zero
condition determines its diagonal zero mode. The relation
$Y^*=J_0YJ_0$ then determines the remaining positive parts, yielding
\eqref{plussplittngpfaffpseudo}.
\begin{theorem}[Pseudodifference Adler--Pfaff hierarchy] 
\label{thm:pseudodiff_AdlerPfaff}
Let $\mathcal L_{\rm{Pf}}=\Phi(L_{\mathrm{Pf}})$.
Under the isomorphism~$\Phi$ in~\eqref{eq:Phi}, the Adler--Pfaff hierarchy
\eqref{adler2} becomes
\begin{equation*}
\partial_{t_k} \mathcal L_{\rm{Pf}}=\bigl[P^{\mathrm{Pf}}_+(\mathcal L_{\rm{Pf}}^k),\mathcal L_{\rm{Pf}}\bigr]=-\bigl[P^{\mathrm{Pf}}_-(\mathcal L_{\rm{Pf}}^k),\mathcal L_{\rm{Pf}}\bigr],
\qquad k\geq1.
\end{equation*} 
\end{theorem}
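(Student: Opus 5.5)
The argument is a transport of structure: I apply the difference algebra isomorphism $\Phi$ to the bi-infinite Lax equation \eqref{adler2}, coefficient by coefficient, and check that each ingredient goes to its pseudodifference counterpart.

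First, I note that $L_{\mathrm{Pf}}\in\mathcal C_{\mathrm{Pf}}$ by the covariance relation \eqref{Liscovariant}. Since $\mathcal C_{\mathrm{Pf}}$ is closed under products, every power $L_{\mathrm{Pf}}^k$ lies in $\mathcal C_{\mathrm{Pf}}$. Because $\Phi$ is an algebra isomorphism, $\Phi(L_{\mathrm{Pf}}^k)=\mathcal L_{\mathrm{Pf}}^k$, and $\Phi$ also sends commutators to commutators.

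Second, I identify the projections. The algebra $\mathcal C_{\mathrm{Pf}}$ is preserved by $P_{\mathfrak t}$ and $P_{\mathfrak{sp}}$, so it splits as the direct sum of $\mathfrak t(\mathcal V_{\mathrm{Pf}})\cap\mathcal C_{\mathrm{Pf}}$ and $\mathfrak{sp}_\infty(\mathcal V_{\mathrm{Pf}})\cap\mathcal C_{\mathrm{Pf}}$. The preceding proposition maps these two summands onto $\mathfrak g^{\mathrm{Pf}}_-$ and $\mathfrak g^{\mathrm{Pf}}_+$ respectively. Projections onto the summands of a direct sum are determined by the summands alone. Hence for $M\in\mathcal C_{\mathrm{Pf}}$,
\begin{equation*}
\Phi\bigl(P_{\mathfrak t}(M)\bigr)=P^{\mathrm{Pf}}_-\bigl(\Phi(M)\bigr),
\qquad
\Phi\bigl(P_{\mathfrak{sp}}(M)\bigr)=P^{\mathrm{Pf}}_+\bigl(\Phi(M)\bigr).
\end{equation*}

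Third, I handle the time derivative. $\partial_{t_k}$ acts coefficient-wise on $\mathcal L_{\mathrm{Pf}}$, and the coefficients of $\mathcal L_{\mathrm{Pf}}$ are exactly the blocks $(L_{\mathrm{Pf}})_{0,-j}$. Therefore $\Phi(\partial_{t_k}L_{\mathrm{Pf}})=\partial_{t_k}\mathcal L_{\mathrm{Pf}}$, where $\partial_{t_k}L_{\mathrm{Pf}}$ is taken entrywise. For this to make sense, $\partial_{t_k}L_{\mathrm{Pf}}$ must itself lie in $\mathcal C_{\mathrm{Pf}}$. That holds by the lemma that $\mathcal S$ commutes with $\partial_{t_k}$, or directly because the right-hand side $F_k$ of \eqref{adler2} lies in $\mathcal C_{\mathrm{Pf}}$.

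Applying $\Phi$ to both forms of \eqref{adler2} then gives
\begin{equation*}
\partial_{t_k}\mathcal L_{\mathrm{Pf}}
=-\bigl[P^{\mathrm{Pf}}_-(\mathcal L_{\mathrm{Pf}}^k),\mathcal L_{\mathrm{Pf}}\bigr]
=\bigl[P^{\mathrm{Pf}}_+(\mathcal L_{\mathrm{Pf}}^k),\mathcal L_{\mathrm{Pf}}\bigr].
\end{equation*}
The two forms agree because $P^{\mathrm{Pf}}_++P^{\mathrm{Pf}}_-=1$ and $\mathcal L_{\mathrm{Pf}}^k$ commutes with $\mathcal L_{\mathrm{Pf}}$.

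For the converse, I use that the generators $b_n$ and $A_{n,m}$ are $\mathcal S$-translates of the coefficients $b_0$ and $A_{0,m}$ of $\mathcal L_{\mathrm{Pf}}$, and that $\partial_{t_k}$ commutes with $\mathcal S$. So the pseudodifference equations determine the same derivations of $\mathcal V_{\mathrm{Pf}}$, which gives the equivalence.

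The main obstacle is a bookkeeping subtlety rather than a computation. The element $\mathcal L_{\mathrm{Pf}}^k$ is not Laurent: it has infinitely many positive $\mathcal S$-powers. So I must check that the preceding proposition's identification of $\mathfrak{sp}_\infty\cap\mathcal C_{\mathrm{Pf}}$ with $\mathfrak g^{\mathrm{Pf}}_+$ really applies here, which depends on $P_{\mathfrak{sp}}(L_{\mathrm{Pf}}^k)$ having finite bandwidth. I would argue this from the definition: elements of $\mathfrak{sp}_\infty$ have finitely many super-diagonals by membership in $\mathfrak{gl}^{\mathrm{sup}}_{\infty,2}$, and the symmetry $X=J_\infty X^\top J_\infty$ then bounds the sub-diagonals too. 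This is the statement that $\Phi$ intertwines transpose with $*$ only on finite-band matrices. I would also cross-check the result against the explicit formula \eqref{plussplittngpfaffpseudo}.
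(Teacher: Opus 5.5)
Your proposal is correct and follows the paper's own argument: apply $\Phi$ to \eqref{adler2}, using that $\Phi$ intertwines $P_{\mathfrak{sp}}$ with $P^{\mathrm{Pf}}_+$ and $P_{\mathfrak t}$ with $P^{\mathrm{Pf}}_-$, which is what the preceding proposition's identification of the two summands gives. Your extra checks make explicit what the paper's one-line proof leaves implicit: that $\partial_{t_k}$ commutes with $\Phi$, that $P_{\mathfrak{sp}}(L_{\mathrm{Pf}}^k)$ has finite bandwidth, and the converse via $\mathcal S$-translates of the generators.
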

\begin{proof}
This follows by applying $\Phi$ to~\eqref{adler2} and using
\begin{equation*}
\Phi P_{\mathfrak{sp}}=P^{\mathrm{Pf}}_+\Phi,
\qquad
\Phi P_{\mathfrak t}=P^{\mathrm{Pf}}_-\Phi.
\end{equation*}  
\end{proof}
Thus the bi-infinite Adler--Pfaff hierarchy is expressed entirely in $2\times2$ matrix pseudodifference form, which is the operator language used below to compare it with the Pfaff--Toda dressing construction.

\begin{remark}[Even Pfaff in pseudodifference form]
Under $\Phi$, the even Pfaff Lax operator of Remark~\ref{rem:even-pfaff} has an anti-diagonal shape:
\begin{equation}
    \mathcal L_{\rm{Pf}}^{\textup{ev}}=\begin{pmatrix}
0 & A\\
B & 0
\end{pmatrix}.
\end{equation}
The even Pfaff hierarchy is therefore the restriction of the even Adler--Pfaff flows to this anti-diagonal pseudodifference shape. This description will reappear, after the symmetric gauge, in Section~\ref{sec:toda-reductions}.
\end{remark}

\subsection{Extended Adler--Pfaff phase space}
\label{subsec:extended-pfaff-space}
The exact pseudodifference phase space $\mathcal M_{\rm Pf}$ in~\eqref{eq:phase_space_pseudodiff} does not contain three boundary coefficients that appear after the symmetric gauge of the one-component Pfaff--Toda operator in Section~\ref{sec:odd_extension}: the $\mathcal S^{-2}$-coefficient of the $(2,1)$-entry and the two diagonal $\mathcal S^{-1}$-coefficients. To construct a universal phase space containing the symmetrically gauged operator, we therefore adjoin three difference indeterminates $x,y,z$, representing these three additional coefficients, and set
\begin{equation*}
\mathcal{V}_{\rm ext}=\mathcal V_{\rm Pf}[\mathcal S^n x,\mathcal S^m y,\mathcal S^k z \, | \,  n,m,k \in \mathbb Z],
\end{equation*}
where all the generators are algebraically independent
over $\mathcal V_{\mathrm{Pf}}$.
Consider the following larger space
\begin{equation}\label{eq:M_ext_Pfaff}
    \mathcal M_{\mathrm{ext}}:=\left\{
L= \begin{pmatrix}
A & B\\
C & D
\end{pmatrix}  =  \begin{pmatrix}
\mathcal O(\mathcal S^{-1}) & \mathcal O(1)\\
\mathcal O(\mathcal S^{-2}) & \mathcal O(\mathcal S^{-1})
\end{pmatrix} \right\},
\end{equation}
or, entrywise, 
\begin{equation}
    A, D \in \mathcal S^{-1} \mathcal V_{\rm ext}[[\mathcal S]], \qquad B \in \mathcal V_{\rm ext}[[\mathcal S]], \qquad C \in \mathcal S^{-2}\mathcal V_{\rm ext}[[\mathcal S]]. 
\end{equation}
We denote by $\mathfrak g^{\rm ext}_+$ and $\mathfrak g^{\rm ext}_-$ the extensions of~\eqref{eq:gPfaff_plus} and~\eqref{eq:gPfaff_minus} to the coefficient algebra $\mathcal{V}_{\rm ext}$, and the corresponding projections by $P^{\rm ext}_+$ and $P^{\rm ext}_-$.
\begin{lemma}
\label{lem:left_right_invariance}
    The space $\mathcal M_{\mathrm{ext}}$ is invariant under the left and right action of $\mathfrak g^{\rm ext}_-$
    \begin{equation}
        \mathfrak g^{\rm ext}_- \mathcal M_{\mathrm{ext}} \subset \mathcal M_{\mathrm{ext}}, \qquad \mathcal M_{\mathrm{ext}}
\mathfrak g^{\rm ext}_- \subset \mathcal M_{\rm ext}.
    \end{equation}
\end{lemma}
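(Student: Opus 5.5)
The approach is a direct degree count, entry by entry in the $2\times2$ block decomposition. Encode membership in $\mathcal M_{\rm ext}$ by a lower bound on the $\mathcal S$-adic order of each entry. For $L=(L_{ij})$, write $\operatorname{ord}(L_{ij})\ge m_{ij}$, where
\begin{equation*}
m=\begin{pmatrix}-1&0\\-2&-1\end{pmatrix}.
\end{equation*}
Similarly, an element $X\in\mathfrak g^{\rm ext}_-$ has $X=aI_2+\mathcal S X'$ with $a\in\mathcal V_{\rm ext}$ and $X'\in M_2(\mathcal V_{\rm ext}[[\mathcal S]])$. So every entry of $X$ has order $\ge0$, and its off-diagonal entries have order $\ge1$. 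The key elementary fact is that in $\mathcal V_{\rm ext}((\mathcal S))$ the order is additive up to inequality: $\operatorname{ord}(PQ)\ge\operatorname{ord}P+\operatorname{ord}Q$. This holds because $(a\mathcal S^k)(b\mathcal S^\ell)=a\,\mathcal S^k(b)\,\mathcal S^{k+\ell}$ and the shift preserves $\mathcal V_{\rm ext}$, which is a difference algebra by construction since it is generated by all shifts of $x,y,z$.

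For the left action, split $XL=aL+\mathcal S X'L$. The first term clearly lies in $\mathcal M_{\rm ext}$, since multiplying by a scalar of order $0$ preserves every entry bound. For the second term, $(\mathcal S X'L)_{ij}=\mathcal S\sum_k X'_{ik}L_{kj}$, whose order is at least $1+\min_k m_{kj}$. Column $1$ has $\min(m_{11},m_{21})=-2$, giving order $\ge-1$. Column $2$ has $\min(m_{12},m_{22})=-1$, giving order $\ge0$. I then compare these with the required bounds $m_{ij}$. The requirements $(1,1)\ge-1$ and $(2,1)\ge-2$ are met by $\ge-1$. The requirements $(1,2)\ge0$ and $(2,2)\ge-1$ are met by $\ge0$. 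So $XL\in\mathcal M_{\rm ext}$.

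For the right action, $LX=aL'+L\,\mathcal S X'$ in the same way. For the first term, $L\,a$ means $\sum L_{ij}$ times $a$ on the right, and $\mathcal S^k a=\mathcal S^k(a)\mathcal S^k$, so orders are unchanged. For the second term, $(L\mathcal S X')_{ij}$ has order at least $\min_k m_{ik}+1$. Row $1$ gives $-1+1=0$ and row $2$ gives $-2+1=-1$. I then check the entries: $(1,1)$ needs $-1$ and has $\ge0$; $(1,2)$ needs $0$ and has $\ge0$; $(2,1)$ needs $-2$ and has $\ge-1$; $(2,2)$ needs $-1$ and has $\ge-1$. All hold.

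There is no serious obstacle. The only points needing care are these:

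1. Products of two series in $\mathcal V_{\rm ext}((\mathcal S))$ are well defined, since each coefficient is a finite sum when the orders are bounded below.
2. The shift $\mathcal S$ acts on $\mathcal V_{\rm ext}$, which holds because of the difference indeterminates $\mathcal S^nx$, etc.

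The tightest cases are the $(1,2)$ entry on the left action and the $(2,2)$ entry on the right action. There the gain of one power of $\mathcal S$ coming from $\mathfrak g^{\rm ext}_-$ is exactly what compensates the $\mathcal S^{-2}$ in the $C$-entry and the $\mathcal S^{-1}$ in the diagonal entries. I would display these two cases explicitly and treat the rest as routine.
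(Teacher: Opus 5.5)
Your proposal is correct and is essentially the paper's proof: split $X=aI_2+X_{\ge1}$ with $X_{\ge1}\in\mathcal S M_2(\mathcal V_{\rm ext}[[\mathcal S]])$, note that the scalar degree-zero part preserves the support, and check entrywise that $X_{\ge1}L$ and $LX_{\ge1}$ have supports $\bigl(\begin{smallmatrix}\mathcal O(\mathcal S^{-1})&\mathcal O(1)\\ \mathcal O(\mathcal S^{-1})&\mathcal O(1)\end{smallmatrix}\bigr)$ and $\bigl(\begin{smallmatrix}\mathcal O(1)&\mathcal O(1)\\ \mathcal O(\mathcal S^{-1})&\mathcal O(\mathcal S^{-1})\end{smallmatrix}\bigr)$ respectively, which are exactly the bounds you derive. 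One small correction to your closing remark: the left $(1,1)$ entry and the right $(1,2)$ entry are also tight, not just the two cases you single out, although your entrywise check already covers them.
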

\begin{proof}
Let $X \in \mathfrak g_-^{\rm ext}$. It can be written as 
\begin{equation*}
    X = a I_2 + X_{\ge 1}, 
\end{equation*}
with $a \in \mathcal V_{\rm ext}$, and $X_{\ge 1} \in \mathcal S M_2(\mathcal V_{\rm ext}[[\mathcal S]])$. Left and right multiplication by the scalar degree-zero term $aI_2$ preserve the support of $\mathcal M_{\mathrm{ext}}$, and by a direct check we have for $L \in \mathcal M_{\mathrm{ext}}$~\eqref{eq:M_ext_Pfaff}
\begin{equation*}
    X_{\ge 1} L = \begin{pmatrix}
        \mathcal O(\mathcal S^{-1}) & \mathcal O(1) \\
        \mathcal O(\mathcal S^{-1}) & \mathcal O(1)
    \end{pmatrix} \in \mathcal M_{\mathrm{ext}}, \qquad 
     L X_{\ge 1} = \begin{pmatrix}
        \mathcal O(1) & \mathcal O(1) \\
        \mathcal O(\mathcal S^{-1}) & \mathcal O(\mathcal S^{-1})
    \end{pmatrix} \in \mathcal M_{\mathrm{ext}}. 
\end{equation*}
\end{proof} 
We define the extended Lax operator as follows
\begin{equation*}
\mathcal L_{\rm ext}= xE_{21}\mathcal{S}^{-2}+(yE_{11}+zE_{22})\mathcal{S}^{-1}+\mathcal{L}_{\rm{Pf}}.
\end{equation*}
The coefficients of $\mathcal L_{\mathrm{ext}}$ freely generate
$\mathcal V_{\mathrm{ext}}$ as a difference algebra.
\begin{proposition}[Invariance of the extended phase space]
\label{prop:ext-invariance}
The hierarchy 
\begin{equation}
\partial_{t_k} \mathcal L_{\rm ext}= \bigl[P^{\mathrm{ext}}_+(\mathcal L_{\rm ext}^k),\mathcal L_{\rm ext}\bigr]=-\bigl[P^{\mathrm{ext}}_-(\mathcal L_{\rm ext}^k),\mathcal L_{\rm ext}\bigr], \qquad k \geq 1
\label{extendedhierarchy}
\end{equation}
extends the pseudodifference Adler--Pfaff hierarchy of
Theorem~\ref{thm:pseudodiff_AdlerPfaff} and defines a family of pairwise commuting derivations of $\mathcal V_{\rm ext}$.
\end{proposition}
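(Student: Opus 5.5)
The plan has four parts: show that the flows are well defined on $\mathcal M_{\rm ext}$, show that they reduce to the hierarchy of Theorem~\ref{thm:pseudodiff_AdlerPfaff}, and then prove commutativity by the standard zero-curvature argument.

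\textbf{Equality of the two right-hand sides.} Since $P^{\rm ext}_++P^{\rm ext}_-=1$ and $[\mathcal L_{\rm ext}^k,\mathcal L_{\rm ext}]=0$, the two expressions in \eqref{extendedhierarchy} coincide. The splitting $\mathfrak g=\mathfrak g^{\rm ext}_+\oplus\mathfrak g^{\rm ext}_-$ holds over $\mathcal V_{\rm ext}$ by the same argument as for \eqref{decomp}: the explicit formula \eqref{plussplittngpfaffpseudo} is valid over any commutative difference algebra. Note that $\mathcal L_{\rm ext}$ is a Laurent series bounded below in $\mathcal S$, so its powers lie in $\mathfrak g$ and the projections make sense.

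\textbf{Preservation of $\mathcal M_{\rm ext}$.} The goal is $[P^{\rm ext}_-(\mathcal L_{\rm ext}^k),\mathcal L_{\rm ext}]\in\mathcal M_{\rm ext}$. This follows at once from Lemma~\ref{lem:left_right_invariance}, since $P^{\rm ext}_-(\mathcal L_{\rm ext}^k)\in\mathfrak g^{\rm ext}_-$ and both $X\mathcal L_{\rm ext}$ and $\mathcal L_{\rm ext}X$ lie in $\mathcal M_{\rm ext}$.

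\textbf{Definition of the derivations.} The coefficients of $\mathcal L_{\rm ext}$ freely generate $\mathcal V_{\rm ext}$ as a difference algebra. So prescribing $\partial_{t_k}$ on each coefficient of $\mathcal L_{\rm ext}$ by the corresponding coefficient of the right-hand side determines a unique derivation, which can be taken to commute with $\mathcal S$. Consistency with the shift is automatic: the projections $P^{\rm ext}_\pm$ commute with coefficientwise $\mathcal S$ by \eqref{plussplittngpfaffpseudo}, and $\mathcal S$ acts on $M_2(\mathcal V_{\rm ext}((\mathcal S)))$ by an algebra automorphism.

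\textbf{Extension property.} Let $\pi:\mathcal V_{\rm ext}\to\mathcal V_{\rm Pf}$ be the difference homomorphism killing all $\mathcal S^n x,\mathcal S^m y,\mathcal S^k z$. Then $\pi$ maps $\mathcal L_{\rm ext}$ to $\mathcal L_{\rm Pf}$ and commutes with the projections, since $\pi(\mathfrak g^{\rm ext}_\pm)\subset\mathfrak g^{\rm Pf}_\pm$. Hence $\pi\circ\partial_{t_k}=\partial_{t_k}\circ\pi$ on generators, and so everywhere. One can say more: the ideal generated by $x,y,z$ and their shifts is preserved. The $\mathcal S^{-2}$ and diagonal $\mathcal S^{-1}$ coefficients of $[P_-(\mathcal L^k),\mathcal L]$ vanish when those of $\mathcal L$ vanish; this is exactly the invariance of $\mathcal M_{\rm Pf}$ established before Theorem~\ref{thm:pseudodiff_AdlerPfaff}. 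In particular, on $\mathcal V_{\rm Pf}\subset\mathcal V_{\rm ext}$ the restricted data agree with Theorem~\ref{thm:pseudodiff_AdlerPfaff}.

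\textbf{Commutativity.} I will run the standard Adler--Kostant--Symes argument. Write $M_k:=P^{\rm ext}_+(\mathcal L^k)$, so that $\partial_{t_j}\mathcal L^k=[M_j,\mathcal L^k]$. I will show the zero-curvature identity
\[
\partial_{t_j}M_k-\partial_{t_k}M_j=[M_j,M_k].
\]
Apply $P_+$ to $\partial_{t_j}\mathcal L^k-\partial_{t_k}\mathcal L^j=[M_j,\mathcal L^k]-[M_k,\mathcal L^j]$. Substitute $\mathcal L^k=M_k+N_k$ with $N_k:=P_-(\mathcal L^k)$, and use $[\mathcal L^j,\mathcal L^k]=0$. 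This yields $[M_j,M_k]-[N_j,N_k]$ plus terms to be sorted by the projections. The key inputs are that $\mathfrak g^{\rm ext}_\pm$ are subalgebras and that $P_\pm$ commute with $\partial_{t_j}$, which holds because $\partial_{t_j}$ acts coefficientwise and the splitting is defined over $\mathbb C$-linear coefficient conditions. The zero-curvature identity then gives $[\partial_{t_j},\partial_{t_k}]\mathcal L=0$ through the Jacobi identity. Because the coefficients of $\mathcal L_{\rm ext}$ generate $\mathcal V_{\rm ext}$ and both derivations commute with $\mathcal S$, this proves $[\partial_{t_j},\partial_{t_k}]=0$.

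\textbf{Main obstacle.} The delicate point is checking that $\mathfrak g^{\rm ext}_+$ is a Lie subalgebra with coefficients in $\mathcal V_{\rm ext}$, and that $*$ behaves well. The adjoint moves shifts onto coefficients, and $J_0$ must be compatible with the twisted product. I expect this to go through because it is the image of $\mathfrak{sp}_\infty$ under $\Phi$, an argument that works verbatim for any difference algebra. The same transport applies to $\mathfrak t$ for $\mathfrak g^{\rm ext}_-$.
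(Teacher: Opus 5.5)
Your proposal is correct and follows the same route as the paper. It gets invariance of $\mathcal M_{\rm ext}$ from Lemma~\ref{lem:left_right_invariance}, well-definedness from the free generation of $\mathcal V_{\rm ext}$ by the coefficients of $\mathcal L_{\rm ext}$, and commutativity from the standard zero-curvature argument for $\mathfrak g^{\rm ext}_+\oplus\mathfrak g^{\rm ext}_-$, which you carry out more explicitly than the paper does.

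Your specialization $\pi$ killing $x,y,z$ makes precise the ``extends'' claim that the paper leaves implicit, with two small points to fix. First, checking $\pi\circ\partial_{t_k}=\partial_{t_k}\circ\pi$ on the generators $x,y,z$ already needs the invariance of $\mathcal M_{\rm Pf}$, which you only cite afterwards as ``one can say more''. Second, the extended flows do not preserve the subalgebra $\mathcal V_{\rm Pf}\subset\mathcal V_{\rm ext}$, so the comparison goes through the quotient by the ideal of $x,y,z$, not by restriction to $\mathcal V_{\rm Pf}$.
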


\begin{proof}
    By Lemma~\ref{lem:left_right_invariance}, the RHS of~\eqref{extendedhierarchy} lies in $\mathcal M_{\rm ext}$. Since the difference algebra $\mathcal V_{\rm ext}$ is generated by the coefficients of $\mathcal L_{\rm ext}$,~\eqref{extendedhierarchy} indeed provides a well-defined evolutionary derivation of $\mathcal V_{\rm ext}$ for all $k \geq 1$. The fact that these derivations pairwise commute follows from the
standard zero-curvature argument associated with the Lie algebra
splitting
\begin{equation*}
\mathfrak g^{\mathrm{ext}}
=
\mathfrak g_+^{\mathrm{ext}}
\oplus
\mathfrak g_-^{\mathrm{ext}},
\end{equation*}
using the commutativity of the powers of
$\mathcal L_{\mathrm{ext}}$.
\end{proof}

\begin{remark}
The Adler--Pfaff hierarchy in its pseudodifference form can be defined on other invariant subspaces $\mathcal M$. For the sake of clarity we have singled out $\mathcal{M}_{\rm Pf}$ in~\eqref{eq:phase_space_pseudodiff} as the transport of the classical shape of the bi-infinite formulation, and $\mathcal{M}_{\rm ext}$ in~\eqref{eq:M_ext_Pfaff} as the extension of $\mathcal{M}_{\rm Pf}$  accommodating the embedding of Adler--Pfaff
into Pfaff--Toda described in Section~\ref{sec:odd_extension}. We leave a more systematic study of the Pfaff invariant subspaces
and their reductions for future work.
\end{remark}

\section{Matrix Pfaff--Toda dressing representation}
\label{sec:matrix-pfaff-toda-representation}
Section 2 recasts the Adler--Pfaff hierarchy in the language of $2\times2$ matrix pseudodifference operators. We now turn to the dressing side of the picture. The Pfaff--Toda hierarchy provides the natural starting point: its dressing datum is genuinely matrix-valued, while its continuous flows are generated by commuting diagonal families of bare operators. From the viewpoint of the dressing operator itself, however, there is no intrinsic reason to restrict to these diagonal directions. This motivates dressing the full matrix Laurent algebra.

The construction extends from $2\times2$ to $2N\times2N$ matrices and leads to the Matrix Pfaff--Toda hierarchy developed below. For general $N$, the multicomponent Pfaff--Toda hierarchy of Savchenko and Zabrodin \cite{SavchenkoZabrodin} will be shown in Section~\ref{sec:SZ-realization} to realize its commuting diagonal sector. The fermionic proof is given in Appendix~\ref{sec:SZ-identification}.

\subsection{Lax representations of associative algebras}
\label{sec:lax_representations}

We begin by introducing an abstract Lax formalism for an arbitrary associative algebra $\mathfrak a$, extending the usual Lax--Sato setting based on a finite family of commuting operators, to the case of a full associative algebra of bare operators. 

Let $\mathfrak a$ be an associative algebra over $\mathbb C$, equipped
with its commutator Lie bracket, and let $(\mathcal R,\mathcal S)$ be a
difference algebra. By a \emph{Lax representation} of
$\mathfrak a$ we mean the data of
\begin{equation*}
L:\mathfrak a\longrightarrow \mathcal R((\mathcal S)),
\qquad
D:\mathfrak a\longrightarrow \operatorname{Der}_{\mathcal S}(\mathcal R),
\qquad
\mathcal P:\mathfrak a\longrightarrow
\mathcal R[\mathcal S,\mathcal S^{-1}],
\end{equation*}
where $L$ is an algebra morphism, $D$ is a Lie algebra morphism, and
$\mathcal P$ is a $\mathbb{C}$-linear map such that for all $x,y \in \mathfrak{a}$,
\begin{equation}
D_x(L_y)
=
[\mathcal P_x,L_y]+L_{[x,y]}.
\label{eq:lax-representation}
\end{equation}
Here and below we use the notation
\begin{equation*}
L_x:=L(x),
\qquad
D_x:=D(x),
\qquad
\mathcal P_x:=\mathcal P(x).
\end{equation*}
The coefficient
algebra $\mathcal R$ may in general be noncommutative.
We denote by $\operatorname{Der}_{\mathcal S}(\mathcal R)$ the Lie algebra of
derivations of $\mathcal R$ commuting with the shift automorphism
$\mathcal S$, while $\mathcal R((\mathcal S))$ denotes the algebra of
pseudodifference operators with only finitely many negative powers
of $\mathcal S$. Every $D\in\operatorname{Der}_{\mathcal S}(\mathcal R)$ is extended
coefficient-wise to $\mathcal R((\mathcal S))$.
The auxiliary operators $\mathcal P_x$ determine the
curvature map
\begin{equation*}
\mathcal C:
\mathfrak a\times\mathfrak a
\longrightarrow
\mathcal R[\mathcal S,\mathcal S^{-1}],
\end{equation*}
defined by
\begin{equation}
\mathcal C(x,y)
:=
D_x(\mathcal P_y)
-D_y(\mathcal P_x)
-[\mathcal P_x,\mathcal P_y]
-\mathcal P_{[x,y]}.
\label{eq:lax-curvature}
\end{equation}
We call the Lax representation \emph{zero-curvature} if $\mathcal C \equiv 0$, equivalently
\begin{equation}
D_x(\mathcal P_y)-D_y(\mathcal P_x)
=
[\mathcal P_x,\mathcal P_y]
+\mathcal P_{[x,y]},
\qquad
x,y\in\mathfrak a.
\label{eq:zero-curvature}
\end{equation}
In particular, for commuting $x$, $y$, this
reduces to the usual zero-curvature equation
\begin{equation*}
D_x(\mathcal P_y)-D_y(\mathcal P_x)
=
[\mathcal P_x,\mathcal P_y].
\end{equation*}

\begin{remark}
The preceding notion extends the usual Lax--Sato formulation from a distinguished family of commuting operators to an entire associative algebra.  To recover the usual Lax--Sato situation,
suppose that the coefficient algebra decomposes as a direct sum
of $\mathcal S$-stable subalgebras
\begin{equation*}
\mathcal R
=
\mathcal R_1\oplus\cdots\oplus\mathcal R_r,
\qquad
\mathcal R_i\mathcal R_j=0
\quad (i\neq j),
\label{eq:R-decomposition}
\end{equation*}
and choose commuting elements
$x_1,\ldots,x_r\in\mathfrak a$ such that
\begin{equation*}
L_{x_i}\in\mathcal R_i((\mathcal S)).
\end{equation*}
Since $L$ is an algebra morphism,
\begin{equation*}
L_{x_i^n}=L_{x_i}^n,
\end{equation*}
and the restriction of the representation to the commuting
subalgebras generated by the $x_i$ gives the usual Lax--Sato
hierarchy on
\begin{equation*}
(L_{x_1},\ldots,L_{x_r}),
\qquad
\frac{\partial}{\partial t_n^i}
:=
D_{x_i^n},
\qquad
n\geq1,\quad i=1,\ldots,r.
\end{equation*}
When no such decomposition of the target algebra is available, a
formulation in terms of several individual Lax operators generally
requires additional compatibility relations between them.  The single map $L$ encodes these compatibility
relations simultaneously.
\end{remark}
\subsection{The Pfaff--Toda splitting}
\label{sec:multicomponent-hierarchy}

In this Section we construct the Lie algebra splitting that underlies the Matrix Pfaff--Toda dressing representation, where $\mathfrak g_+$ is defined as the fixed-point subalgebra of a symplectic involution, and $\mathfrak g_-$ is given by a prescribed triangular shape at the boundary degrees. 

Let $(\mathcal{V}, \mathcal{S})$ be a commutative difference algebra, and $N \in \mathbb{Z}_+$. Consider the associative algebra
\begin{equation*}
\mathfrak g = M_{2N}(\mathcal{V}((\mathcal{S}))).
\end{equation*}
The adjoint operation $*$ maps $\mathfrak g$ to $\tilde{ \mathfrak g}=M_{2N}(\mathcal{V}((\mathcal{S}^{-1})))$ and is defined by 
\begin{equation*}
(a \mathcal{S}^k)^*=\mathcal{S}^{-k}
a^\top, \qquad a \in M_{2N}(\mathcal{V}).
\end{equation*}
On the Laurent subalgebra $\mathfrak{g} \cap \tilde{\mathfrak g}=M_{2N}(\mathcal{V}[\mathcal{S},\mathcal{S}^{-1}])$, we consider the map 
\begin{equation}
 \theta (X)=J_NX^*J_N, \qquad J_N=
\begin{pmatrix}
0& \mathcal S^{-1}I_N\\
-\mathcal SI_N&0
\end{pmatrix},
\label{eq:JN}
\end{equation}
which is an involutive Lie algebra automorphism, since $J_N^2=-I_{2N}$ and $J_N^*=-J_N$. We denote its 
fixed-point Lie subalgebra by  \begin{equation} 
\mathfrak g_+ := \{ M \in M_{2N}(\mathcal{V}[\mathcal{S},\mathcal{S}^{-1}]) \mid M^*=J_NMJ_N \}. 
\label{adjointconstraint}
\end{equation}
The splitting below is modelled on the standard Iwasawa-type splitting of
loop algebras (see, for instance, \cite{PS86}). When $N=1$, the symmetric zero-mode decomposition
\begin{equation*}
\mathfrak{gl}_2=\mathfrak{sp}_2\oplus \mathbb C I_2
\end{equation*}
selects the scalar direction $\mathbb C I_2$ as a complement to the
fixed-point algebra. In order to recover Takasaki's Pfaff--Toda
projection \cite{Taka}, however, we replace this scalar direction by the asymmetric
choice $\mathbb C E_{22}$. 

For general $N$, we make the same choice on
each diagonal symplectic $2$-plane, together with the corresponding lower triangular ordering between the $N$ blocks. For any associative $\mathbb C$-algebra  $\mathcal{B}$, let $\mathfrak n_-\subset M_N(\mathcal B)$ be the space of strictly
lower triangular matrices, and $\mathfrak b_-\subset M_N(\mathcal B)$
the space of lower triangular matrices. For $k\in M_N(\mathcal B)$, we denote by $k_{\mathrm u}$ its upper triangular part including the diagonal, and by $k_{\mathrm{su}}$ its strictly upper triangular part. This asymmetric choice leads to the following Pfaff--Toda-adapted subspace of $\mathfrak g$:
\begin{equation}\label{negativesplitting}
\mathfrak{g}_{-}=
\left\{
\begin{pmatrix}
a+\mathcal O(\mathcal S)&b\mathcal S^{-1}+\mathcal O(1)\\
c\mathcal S+\mathcal O(\mathcal S^2)&d+\mathcal O(\mathcal S)
\end{pmatrix}
\;\middle|\;
a,b,c\in\mathfrak n_-,
\quad d\in\mathfrak b_-
\right\}.
\end{equation}

\begin{lemma}\label{lem:g_splitting}
    The subspaces $\mathfrak{g}_+$ and $\mathfrak{g}_{-}$ are Lie subalgebras of $\mathfrak{g}$. 
    Moreover, the projection onto $\mathfrak g_+$ along $\mathfrak g_-$ is given by:
    \begin{equation}
\Pi_+
\begin{pmatrix}
A&B\\[.5mm] 
C&D
\end{pmatrix}
=
\begin{pmatrix}
A_{<0}-(\mathcal S^{-1}D^*\mathcal S)_{>0}
+p(A,D) & B_{<-1} + (\mathcal S^{-1}B^*\mathcal S^{-1})_{\geq0} + q(B)\,\mathcal S^{-1} \\[1mm] 
C_{\leq0} + (\mathcal SC^*\mathcal S)_{>1} +\mathcal S\, r(C)
& D_{<0}-(\mathcal SA^*\mathcal S^{-1})_{>0}-
\mathcal S\,p(A,D)^\top\mathcal S^{-1}
\end{pmatrix},
\label{eq:Pi-plus}
\end{equation}
where 
\begin{equation}
\begin{aligned} 
\label{eq:pqr_lemma}
p(A,D)&:= (A_0)_{\textup{u}} - \big( ((\mathcal{S}^{-1}D\mathcal S)_0)_{\textup{su}} \big)^{\!\top}, \\[1mm]
q(B)&:= ((B \mathcal S)_0)_{\textup{u}} + \big( ((B \mathcal S)_0)_{\textup{su}} \big)^{\!\top},  \\[1mm]
r(C)&:=  ((\mathcal S^{-1} C)_0)_{\textup{u}} + \big( ((\mathcal S^{-1} C)_0)_{\textup{su}} \big)^{\!\top}.   
\end{aligned}
\end{equation}
 In particular, we have the Lie algebra decomposition 
    \begin{equation*} \mathfrak g= \mathfrak{g}_+ \oplus \mathfrak{g}_-, \qquad
    \textup{ im } \Pi_+= \mathfrak g_+, \qquad \textup{ ker } \Pi_+= \mathfrak g_-.
    \end{equation*}
\end{lemma}

\begin{proof}
Since $\theta$
is an involutive Lie algebra automorphism, its fixed-point space
$\mathfrak g_+=\operatorname{Fix}(\theta)$ is a Lie subalgebra.
We next show that $\mathfrak g_-$ is a Lie subalgebra. In fact, it is
an associative subalgebra. Let
\begin{equation*}
X=
\begin{pmatrix}
a+\mathcal O(\mathcal S) & b\mathcal S^{-1}+\mathcal O(1)\\
c\mathcal S+\mathcal O(\mathcal S^2) & d+\mathcal O(\mathcal S)
\end{pmatrix},
\qquad
Y=
\begin{pmatrix}
\widetilde a+\mathcal O(\mathcal S) & \widetilde b\mathcal S^{-1}+\mathcal O(1)\\
\widetilde c \mathcal S+\mathcal O(\mathcal S^2) & \widetilde d+\mathcal O(\mathcal S)
\end{pmatrix}
\end{equation*}
belong to $\mathfrak g_-$. The coefficients of $XY$ have the form
\begin{eqnarray*}
\begin{aligned}
(XY)_{11}
&=
a\widetilde a+
b \mathcal S^{-1}(\widetilde c)+\mathcal O(\mathcal S),& \qquad 
(XY)_{12}
&=
\bigl(a\widetilde b+
b\mathcal S^{-1}(\widetilde d)\bigr)\mathcal S^{-1}+\mathcal O(1),\\[1mm]
(XY)_{21}
&=
\bigl(c\mathcal S(\widetilde a)+d\widetilde c\bigr)\mathcal S+\mathcal O(\mathcal S^2),& \qquad 
(XY)_{22}
&=
c\mathcal S(\widetilde b)+d\widetilde d+\mathcal O(\mathcal S).
\end{aligned}
\end{eqnarray*}
Since the shift preserves triangularity, and
\begin{equation*}
\mathfrak n_-\mathfrak n_-\subset\mathfrak n_-,
\qquad
\mathfrak n_-\mathfrak b_-\subset\mathfrak n_-,
\qquad
\mathfrak b_-\mathfrak n_-\subset\mathfrak n_-,
\qquad
\mathfrak b_-\mathfrak b_-\subset\mathfrak b_-,
\end{equation*}
all four coefficients have the required 
form. Hence $XY\in\mathfrak g_-$, so $\mathfrak g_-$ is an associative,
and therefore a Lie subalgebra of $\mathfrak g$.

It remains to identify $\Pi_+$ with the projection onto
$\mathfrak g_+$ along $\mathfrak g_-$.  We first determine its kernel.  Take $X \in \textup{ ker } \Pi_+$. Looking at the diagonal coefficients of $\Pi_+(X)$, we see that both $A$ and $D$ must be in $M_N(\mathcal{V}[[\mathcal{S}]])$. At degree zero in $\mathcal{S}$, the diagonal component of $\Pi_+(X)=0$ reads 
\begin{equation*}
(A_0)_{\textup{u}}=(\mathcal{S}^{-1}(D_0)_{\textup{su}})^\top.
\end{equation*}
Since the LHS is upper triangular and the RHS is strictly lower triangular, we get $(A_0)_{\textup{u}}=(D_0)_{\textup{su}}=0$, hence $(A_0, D_0) \in \mathfrak n_- \times \mathfrak b_-$.
As for the $(1,2)$-entry of $\Pi_+(X)=0$,  the equation
\begin{equation*}
B_{<-1}
+
(\mathcal S^{-1}B^*\mathcal S^{-1})_{\geq0}
+
((B_{-1})_{\textup{u}}+((B_{-1})_{\textup{su}})^\top)\mathcal S^{-1} =0
\end{equation*} 
holds if and only if $B=B_{-1}\mathcal{S}^{-1}+\mathcal O(1)$ with 
$B_{-1} \in \mathfrak{n}_-$. Similarly, the $(2,1)$-entry of $\Pi_+(X)$ is zero if and only if $C=C_1 \mathcal{S} +\mathcal O(\mathcal{S}^2)$ with $C_1 \in \mathfrak{n}_-$. Hence we have shown that 
\begin{equation*}
\textup{ ker } \Pi_+ =\mathfrak g_-.
\end{equation*}
A similar computation shows that $X-\Pi_+(X)\in\mathfrak g_-=\ker\Pi_+$ for any $X \in \mathfrak g$. Moreover, taking into account the symplectic involution defining the fixed-point space we have  
\begin{equation*}
X=\begin{pmatrix}
    E & F \\
    G & H
\end{pmatrix}
\in \mathfrak g_+ \iff 
\begin{pmatrix}
    \,E^* & G^* \\
    F^* & H^* 
\end{pmatrix}
= 
\begin{pmatrix}
    -\mathcal{S}^{-1}H\mathcal{S} & \mathcal{S}^{-1}G \mathcal{S}^{-1}\\
    \mathcal{S} F \mathcal{S} & -\mathcal{S} E \mathcal{S}^{-1}
\end{pmatrix}
\iff X=\Pi_+(X).
\end{equation*}
Therefore $\textup{ im } \Pi_+=\mathfrak g_+$ and $\Pi_+^2=\Pi_+$, which concludes the proof.
\end{proof}
Thus the symbols $+$ and $-$ label the two Lie-algebra summands rather than
the signs of the Laurent degrees: $\mathfrak g_+$ has finite Laurent support,
whereas $\mathfrak g_-$ allows an infinite positive Laurent tail. Away from the boundary degrees, the
projection $\Pi_+$ is determined by the adjoint constraint~\eqref{adjointconstraint}. In particular, the terms $p,q,r$ in~\eqref{eq:Pi-plus} enforce the prescribed triangularity
at degree $0$ in the diagonal blocks, degree $-1$ in the upper right block,
and degree $1$ in the lower-left block.
We finally introduce 
\begin{equation}\label{eq:proj_Pi_minus}
\Pi_-:=1-\Pi_+.    
\end{equation}
\begin{remark}\label{rem:derivation_PfaffToda}
Let $\partial$ be an evolutionary derivation of $\mathcal V$, extended
coefficient-wise to $\mathfrak g$. Since $\partial$ commutes with $\mathcal S$
and acts entrywise on matrix coefficients, it commutes with the formal adjoint,
the Laurent truncations, and the triangular projections appearing
in~\eqref{eq:Pi-plus}. Hence
\begin{equation*}
\partial\Pi_\pm=\Pi_\pm\partial.
\end{equation*}
This allows us to differentiate the projected components of
dressed operators in the zero-curvature calculation below.
\end{remark}

\subsection{Dressing and the zero-curvature representation}
\label{sec:dressing_zero_curvature}

We now pass from the splitting to the associated dressing space, introducing a universal dressing operator $U_N$ whose infinitesimal variations are exactly $\mathfrak g_-$. For each bare matrix difference operator~$x$, the dressing equation defines a derivation $D_x$, producing a zero-curvature Lax representation of the full bare algebra. We retain Takasaki’s notation for the four blocks of $U_N$ \cite{Taka}, anticipating the $N=1$ comparison in Section 4.

Let $\mathcal B_N$ be the difference polynomial algebra freely generated by the coefficients
of the dressing operator
\begin{equation}\label{eq:dressing-U}
U_N=
\begin{pmatrix}
W_1 & \widebar V_1\\
W_2 & \widebar V_2
\end{pmatrix}
=
\begin{pmatrix}
I_N+a_0+\mathcal O(\mathcal S)
&
b_{-1}\mathcal S^{-1}+\mathcal O(1)
\\[1mm]
c_1\mathcal S+\mathcal O(\mathcal S^2)
&
\Delta+\mathcal O(\mathcal S)
\end{pmatrix},
\end{equation}
where $a_0,b_{-1},c_1\in\mathfrak n_-$, $\Delta\in\mathfrak b_-$. We define $\mathcal V_N$ as the localization of the difference algebra
$\mathcal B_N$ at $\det\Delta$ and all of its $\mathcal S$-shifts, so that
$\mathcal S^k(\Delta)$ is invertible for every $k\in\mathbb Z$. This is the
only localization required in the inversion argument below. Set 
\begin{equation*}
\mathfrak g_N:=M_{2N}(\mathcal V_N((\mathcal S))).
\end{equation*}
\begin{lemma}\label{lem:UN_invertible}
    $U_N$ is invertible in $\mathfrak g_N$.
\end{lemma}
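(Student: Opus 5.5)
The plan is to factor out the degree-zero part of $U_N$ by conjugating with a suitable power of $\mathcal S$, so that what remains is an $\mathcal S$-adic power series whose constant term is invertible over $\mathcal V_N$. Such a series is then inverted termwise by the usual recursion.

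\textbf{Step 1: remove the negative degrees.} The only negative power appearing in $U_N$ is $b_{-1}\mathcal S^{-1}$ in the $(1,2)$ block. Conjugation moves it into degree zero. Set
\begin{equation*}
R:=\begin{pmatrix} I_N&0\\0&\mathcal S I_N\end{pmatrix},
\qquad
\widetilde U:=U_N R
=\begin{pmatrix} W_1 & \widebar V_1\mathcal S\\ W_2 & \widebar V_2\mathcal S\end{pmatrix}.
\end{equation*}
Then $\widebar V_1\mathcal S=b_{-1}+\mathcal O(\mathcal S)$, $\widebar V_2\mathcal S=\Delta\mathcal S+\mathcal O(\mathcal S^2)$ and $W_2=\mathcal O(\mathcal S)$. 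This shape is inconvenient, because the lower-right constant term vanishes. It is therefore better to multiply on the left as well and take $\widehat U:=R^{-1}U_N R$. Its blocks are
\begin{equation*}
\widehat U=\begin{pmatrix} W_1 & \widebar V_1\mathcal S\\ \mathcal S^{-1}W_2 & \mathcal S^{-1}\widebar V_2\mathcal S\end{pmatrix}
=\begin{pmatrix} I_N+a_0 & b_{-1}\\ \mathcal S^{-1}(c_1) & \mathcal S^{-1}(\Delta)\end{pmatrix}+\mathcal O(\mathcal S).
\end{equation*}
Every entry of $\widehat U$ now lies in $\mathcal V_N[[\mathcal S]]$, so $\widehat U\in M_{2N}(\mathcal V_N[[\mathcal S]])$. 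Since $R$ is invertible in $\mathfrak g_N$, $U_N$ is invertible if and only if $\widehat U$ is.

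\textbf{Step 2: invert the constant term.} The degree-zero coefficient $\widehat U_0$ is a $2N\times2N$ block matrix. Its diagonal blocks are $I_N+a_0$, which is unipotent lower triangular, and $\mathcal S^{-1}(\Delta)$, which is lower triangular with invertible determinant in $\mathcal V_N$. Its off-diagonal blocks are $b_{-1}$ and $\mathcal S^{-1}(c_1)$, both strictly lower triangular. To see that $\widehat U_0$ is invertible, I will permute the basis to the interleaved order $(e_1,e_{N+1},e_2,e_{N+2},\dots)$. After this permutation, $\widehat U_0$ becomes block lower triangular with $2\times 2$ diagonal blocks
\begin{equation*}
\begin{pmatrix}1&0\\0&\mathcal S^{-1}(\Delta)_{\alpha\alpha}\end{pmatrix},
\end{equation*}
because all the off-diagonal $N\times N$ blocks are strictly lower triangular. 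It follows that $\det\widehat U_0=\mathcal S^{-1}(\det\Delta)$. This element is invertible in $\mathcal V_N$, since $\mathcal V_N$ was defined as the localization at $\det\Delta$ and all of its shifts. Hence $\widehat U_0\in GL_{2N}(\mathcal V_N)$, with inverse given by the adjugate formula.

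\textbf{Step 3: invert the power series.} Write $\widehat U=\widehat U_0(I+Z)$ with $Z:=\widehat U_0^{-1}(\widehat U-\widehat U_0)\in \mathcal S M_{2N}(\mathcal V_N[[\mathcal S]])$. The Neumann series $\sum_{k\ge0}(-Z)^k$ converges $\mathcal S$-adically, because $Z^k=\mathcal O(\mathcal S^k)$; this is unaffected by the twisted product rule~\eqref{eq:associative_shift}, since shifts do not lower degree. The series is a two-sided inverse of $I+Z$. Therefore
\begin{equation*}
\widehat U^{-1}=(I+Z)^{-1}\widehat U_0^{-1},
\qquad
U_N^{-1}=R\,\widehat U^{-1}R^{-1}\in\mathfrak g_N .
\end{equation*}

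The only step that needs care is Step 2: the conjugation by $R$ must be arranged so that the constant term is invertible, and one must then check, using the strict triangularity of $a_0,b_{-1},c_1$, that its determinant is exactly a shift of $\det\Delta$. This is precisely what makes the chosen localization sufficient.
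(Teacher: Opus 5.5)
Your proposal is correct and follows the paper's proof essentially step for step: the same conjugation by $\mathrm{diag}(I_N,\mathcal S I_N)$, the same degree-zero matrix with blocks $I_N+a_0$, $b_{-1}$, $\mathcal S^{-1}(c_1)$, $\mathcal S^{-1}(\Delta)$, and the same $\mathcal S$-adic geometric series. The only difference is cosmetic. You obtain $\det\widehat U_0=\mathcal S^{-1}(\det\Delta)$ by interleaving the basis to get a block lower triangular matrix, whereas the paper uses the Schur complement $\mathcal S^{-1}(\Delta)-\mathcal S^{-1}(c_1)(I_N+a_0)^{-1}b_{-1}$; both arguments rest on the same strict lower triangularity of $a_0,b_{-1},c_1$.
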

\begin{proof}
Set
\begin{equation*}
Z=
\begin{pmatrix}
I_N&0\\
0&\mathcal S I_N
\end{pmatrix}.
\end{equation*}
By~\eqref{eq:dressing-U}, we have
\begin{equation*}
Z^{-1}U_NZ=Y+\mathcal O(\mathcal S),
\qquad
Y=
\begin{pmatrix}
I_N+a_0 & b_{-1}\\
\mathcal S^{-1}(c_1) & \mathcal S^{-1}(\Delta)
\end{pmatrix}.
\end{equation*}
Since $a_0\in\mathfrak n_-$, the matrix $I_N+a_0$ is unit lower
triangular and hence invertible. Its Schur complement in $Y$ is
\begin{equation*}
K=
\mathcal S^{-1}(\Delta)
-\mathcal S^{-1}(c_1)(I_N+a_0)^{-1}b_{-1}.
\end{equation*}
The second term is strictly lower triangular. Hence $K$ is lower triangular
with the same diagonal as $\mathcal S^{-1}(\Delta)$, and therefore
\begin{equation*}
\det K=\mathcal S^{-1}(\det\Delta).
\end{equation*}
By the localization defining $\mathcal V_N$, this determinant is invertible.
Thus $K$ is invertible, and so is $Y$.
We may therefore write
\begin{equation*}
Z^{-1}U_NZ
=
Y(I_{2N}+R),
\qquad
R\in
\mathcal S M_{2N}(\mathcal V_N[[\mathcal S]]).
\end{equation*}
The operator $I_{2N}+R$ is invertible by the $\mathcal S$-adic geometric
series. Hence $Z^{-1}U_NZ$, and therefore $U_N$, is invertible in
$\mathfrak g_N$.
\end{proof}
Applying the construction of Section~\ref{sec:multicomponent-hierarchy}, with
$\mathcal V=\mathcal V_N$ gives the Lie algebra splitting
\begin{equation*}
\mathfrak g_N=(\mathfrak g_N)_+\oplus(\mathfrak g_N)_-
\end{equation*}
with projections $\Pi_\pm$.
Since
\begin{equation}
U_N-I_{2N}\in(\mathfrak g_N)_-,
\label{trick}
\end{equation}
and the localization introduces no additional relations among the independent
coefficients of $U_N$, the admissible infinitesimal variations retain exactly
the form prescribed by $(\mathfrak g_N)_-$. Hence the tangent space to the
dressing space at $U_N$ is identified with $(\mathfrak g_N)_-$.
We define the associative algebra of bare matrix Laurent difference operators as
\begin{equation*}
\mathfrak a_N= M_{2N}(\mathbb{C}[\mathcal{S}, \mathcal{S}^{-1}])
\end{equation*}
and the algebra morphism $L\colon \mathfrak{a}_N \rightarrow \mathfrak g_N $
\begin{equation*}
L_x= U_N\, x \,U_N^{-1}.
\end{equation*}
For every $x \in \mathfrak a_N$, let $L_x=P_x^+ + P_x^-$ be the decomposition of $L_x$ in $(\mathfrak{g}_N)_+ \oplus (\mathfrak{g}_N)_-$.

\begin{lemma}
 For every $x\in\mathfrak a_N$,
\begin{equation*}
P_x^-U_N\in(\mathfrak g_N)_-.
\end{equation*}
Hence $P_x^-U_N$ defines a tangent vector to the dressing space.
\end{lemma}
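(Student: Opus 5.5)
The claim follows from two facts already established: $(\mathfrak g_N)_-$ is an associative subalgebra of $\mathfrak g_N$, which was shown in the proof of Lemma~\ref{lem:g_splitting}, and $U_N-I_{2N}\in(\mathfrak g_N)_-$ by~\eqref{trick}. By definition of the splitting, $P_x^-=\Pi_-(L_x)$ lies in $(\mathfrak g_N)_-$. Write
\begin{equation*}
P_x^-U_N=P_x^-+P_x^-(U_N-I_{2N}).
\end{equation*}
The first summand lies in $(\mathfrak g_N)_-$ by definition. The second is a product of two elements of $(\mathfrak g_N)_-$, so it lies in $(\mathfrak g_N)_-$ by associative closure. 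Since $(\mathfrak g_N)_-$ is a vector space, $P_x^-U_N\in(\mathfrak g_N)_-$.

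The only point needing care is that the associative closure in Lemma~\ref{lem:g_splitting} holds over $\mathcal V_N$ and not only over a polynomial coefficient algebra. The proof there uses only three facts. First, the shift $\mathcal S$ preserves $\mathfrak n_-$ and $\mathfrak b_-$, because it acts entrywise. Second, products of (strictly) lower triangular matrices are (strictly) lower triangular. Third, the product in $\mathfrak g_N$ is well defined in the $\mathcal S$-adic completion. All three hold for any commutative difference algebra, in particular for the localization $\mathcal V_N$. I would also recheck the block-degree bookkeeping in that proof, since the $(1,2)$ and $(2,1)$ blocks carry the shifted degrees $-1$ and $+1$.

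For the second sentence of the statement, I would identify the tangent space to the dressing space at $U_N$ with $(\mathfrak g_N)_-$, as explained after~\eqref{trick}. Prescribing $D_x(U_N)=-P_x^-U_N$ then assigns to each free coefficient of $U_N$ an element of $\mathcal V_N$. This respects the constraints in~\eqref{eq:dressing-U}. The lower triangularity of $a_0,b_{-1},c_1,\Delta$ and the normalization $I_N$ in the $(1,1)$ block at degree $0$ are all exactly the defining conditions of $(\mathfrak g_N)_-$.

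I expect no real obstacle. The only subtle point is the degree-$0$ term of the $(2,2)$ block. There $\Delta$ is only lower triangular, not unit, so the variation must lie in $\mathfrak b_-$, which is allowed for $d$ in~\eqref{negativesplitting}. The $(1,1)$ block, by contrast, needs $\mathfrak n_-$ to keep the unit diagonal, and that is exactly the condition on $a$. This matching is what makes the tangent vector well defined.
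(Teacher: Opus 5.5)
Your proposal is correct and follows the paper's proof exactly: the decomposition $P_x^-U_N=P_x^-+P_x^-(U_N-I_{2N})$, combined with the associative closure of $(\mathfrak g_N)_-$ and~\eqref{trick}, is precisely the paper's argument. Your additional remarks, on the closure holding over the localization $\mathcal V_N$ and on matching the triangularity constraints with the tangent space, are consistent with the text surrounding the lemma but not needed for the proof.
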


\begin{proof}
As noted in the proof of Lemma~\ref{lem:g_splitting}, $(\mathfrak{g}_N)_-$ is an associative subalgebra of $\mathfrak{g}_N$. By~\eqref{trick}, we have 
\begin{equation*}
P_x^-U_N=P_x^-+P_x^-(U_N-I_{2N}) \in (\mathfrak{g}_N)_-.
\end{equation*}
\end{proof}
\begin{definition}\label{def:dressing_derivative}
For every bare operator $x\in\mathfrak a_N$, we define
$D_x\in\operatorname{Der}_{\mathcal S}(\mathcal V_N)$ by
\begin{equation*}
D_x(U_N):=-P_x^-U_N=P_x^+U_N-U_Nx.
\end{equation*}
Indeed, since the independent coefficients of $U_N$ freely generate
$\mathcal B_N$, this prescription determines a unique
$\mathcal S$-commuting derivation
\begin{equation*}
D_x:\mathcal B_N\longrightarrow\mathcal V_N.
\end{equation*}
By the universal property of localization, it extends uniquely to
$\mathcal V_N$.
\end{definition}
The sign convention is chosen to agree with the Pfaff--Toda time evolutions used below \cite{Taka}; accordingly, the Lax representation in the sense of Section~\ref{sec:lax_representations} will involve $-D$.
\begin{proposition}\label{prop:dressing_derivations_zero}
For all $x,y\in\mathfrak a_N$, the dressing derivations satisfy
\begin{equation*}
    \begin{aligned}
D_x(L_y)&=[P_x^+,L_y]-L_{[x,y]}, \\
[D_x,D_y]&=-D_{[x,y]}, \\
D_x(P_y^+)-D_y(P_x^+)&=
[P_x^+,P_y^+]-P_{[x,y]}^+.
\end{aligned} 
\end{equation*}
Equivalently, the maps $L$, $-D$ and $-P^+$ form a zero-curvature
Lax representation of $\mathfrak a_N$.
\end{proposition}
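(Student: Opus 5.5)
The argument follows the standard dressing/Adler--Kostant--Symes scheme, with $\Pi_\pm$ commuting with derivations (Remark~\ref{rem:derivation_PfaffToda}) and $(\mathfrak g_N)_\pm$ being Lie subalgebras (Lemma~\ref{lem:g_splitting}). Throughout, bare operators $x\in\mathfrak a_N$ have constant coefficients, so $D_x(y)=0$ for every $y\in\mathfrak a_N$.

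\emph{First identity.} From $D_x(U_N)=-P_x^-U_N$ and $D_x(U_N^{-1})=-U_N^{-1}D_x(U_N)U_N^{-1}=U_N^{-1}P_x^-$ we get
\begin{equation*}
D_x(L_y)=-P_x^-L_y+L_yP_x^-=-[P_x^-,L_y].
\end{equation*}
Since $L_x=P_x^++P_x^-$, this equals $[P_x^+,L_y]-[L_x,L_y]$. Because $L$ is an algebra morphism, $[L_x,L_y]=L_{[x,y]}$, which gives the claim.

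\emph{Commutator of derivations.} Both sides are $\mathcal S$-commuting derivations of $\mathcal V_N$. Since the coefficients of $U_N$ generate $\mathcal V_N$ as a localized difference algebra, it suffices to compare them on $U_N$. Write $D_y(U_N)=P_y^+U_N-U_Ny$. Then
\begin{equation*}
D_xD_y(U_N)=D_x(P_y^+)U_N+P_y^+(P_x^+U_N-U_Nx)-(P_x^+U_N-U_Nx)y.
\end{equation*}
Antisymmetrizing in $x,y$ gives
\begin{equation*}
[D_x,D_y](U_N)=\bigl(D_x(P_y^+)-D_y(P_x^+)-[P_x^+,P_y^+]\bigr)U_N+U_N[x,y].
\end{equation*}
Comparing with $-D_{[x,y]}(U_N)=-P^+_{[x,y]}U_N+U_N[x,y]$, we see that the second identity is equivalent to the zero-curvature identity. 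So everything reduces to proving the third identity.

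\emph{Zero curvature.} Apply $\Pi_+$ to the first identity, using $\Pi_+D_x=D_x\Pi_+$:
\begin{equation*}
D_x(P_y^+)=\Pi_+[P_x^+,L_y]-P^+_{[x,y]}.
\end{equation*}
We have $[P_x^+,L_y]=[P_x^+,P_y^+]+[P_x^+,P_y^-]$, and $[P_x^+,P_y^+]\in(\mathfrak g_N)_+$ because $(\mathfrak g_N)_+$ is a subalgebra. Symmetrically,
\begin{equation*}
D_y(P_x^+)=\Pi_+[P_y^+,P_x^+]+\Pi_+[P_y^+,P_x^-]-P^+_{[y,x]}.
\end{equation*}
Subtracting, the difference is
\begin{equation*}
2[P_x^+,P_y^+]+\Pi_+\bigl([P_x^+,P_y^-]+[P_x^-,P_y^+]\bigr)-2P^+_{[x,y]}.
\end{equation*}
To evaluate the middle term, expand
\begin{equation*}
[L_x,L_y]=[P_x^+,P_y^+]+[P_x^+,P_y^-]+[P_x^-,P_y^+]+[P_x^-,P_y^-],
\end{equation*}
where $[P_x^-,P_y^-]\in(\mathfrak g_N)_-$ is killed by $\Pi_+$. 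Applying $\Pi_+$ gives
\begin{equation*}
\Pi_+\bigl([P_x^+,P_y^-]+[P_x^-,P_y^+]\bigr)=P^+_{[x,y]}-[P_x^+,P_y^+].
\end{equation*}
Substituting, the difference equals $[P_x^+,P_y^+]-P^+_{[x,y]}$, as required.

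\emph{Reformulation and main obstacle.} The sign conventions then show that $L$, $-D$, $-P^+$ satisfy \eqref{eq:lax-representation} and \eqref{eq:zero-curvature}. The main point to check carefully is the legitimacy of these manipulations in the one-sided completion. Products such as $U_NxU_N^{-1}$ and $[P_x^+,L_y]$ must lie in $\mathfrak g_N$; this holds because $\mathfrak g_N$ is an algebra and $x$ and $P^+$ are Laurent polynomials. Also, $P_x^+\in(\mathfrak g_N)_+$ has finite support, so $D_x(P^+_y)$ makes sense coefficientwise. Finally, the reduction to generators in the second identity requires that the derivations commute with $\mathcal S$ and extend uniquely through the localization, which is guaranteed by Definition~\ref{def:dressing_derivative}.
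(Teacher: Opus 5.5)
Your proof is correct. The first identity and the zero-curvature identity use the same ingredients as the paper: $D_x\Pi_\pm=\Pi_\pm D_x$, the subalgebra property of $(\mathfrak g_N)_\pm$, and $[L_x,L_y]=L_{[x,y]}$. The only difference there is presentational: the paper rewrites $D_x(P_y^+)=-\Pi_+([P_x^-,L_y])$ directly as $D_y(P_x^+)+[P_x^+,P_y^+]-P^+_{[x,y]}$, whereas you compute both terms and subtract.

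The genuine difference is in how you obtain the Lie relation. The paper proves $[D_x,D_y]=-D_{[x,y]}$ independently, working with the form $D_x(U_N)=-P_x^-U_N$. It uses $D_x(P_y^-)=-\Pi_-([P_x^-,L_y])$ and reduces $\Pi_-\bigl([P_x^-,L_y]-[P_y^-,L_x]\bigr)-[P_x^-,P_y^-]$ to $P^-_{[x,y]}$. You instead use the form $D_x(U_N)=P_x^+U_N-U_Nx$. Under antisymmetrization the cross terms cancel, giving
\[
\bigl([D_x,D_y]+D_{[x,y]}\bigr)(U_N)=\mathcal C(x,y)\,U_N,
\qquad
\mathcal C(x,y)=D_x(P_y^+)-D_y(P_x^+)-[P_x^+,P_y^+]+P^+_{[x,y]}.
\]
Here $\mathcal C$ is exactly the curvature \eqref{eq:lax-curvature} of the triple $(L,-D,-P^+)$. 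Combined with the invertibility of $U_N$ and the generator argument, this shows the second and third identities are equivalent, so only one computation is needed.

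What your route buys is this structural statement: for this dressing construction, the curvature is precisely the obstruction to $-D$ being a Lie algebra morphism, and you save a calculation. The paper's route instead keeps the two statements logically independent. It exhibits the Lie relation directly as closure, under commutators, of the tangent vectors $-P_x^-U_N$ to the dressing space.
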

\begin{proof}
We verify successively the Lax relation, the Lie algebra property of
$-D$, and the zero-curvature identity.
For any $x, y \in \mathfrak a_N$, we have 
\begin{align*}
D_x(L_y)
&=
D_x(U_N)yU_N^{-1}
-
U_NyU_N^{-1}D_x(U_N)U_N^{-1}\\
&= [D_x(U_N)U_N^{-1}, U_N yU_N^{-1} ] \\
&= [-P_x^-, L_y]\\
&= [P_x^+, L_y]-L_{[x,y]}.
\end{align*}
To prove the Lie algebra relation, we compute the  commutator
of two dressing derivations.
By Remark~\ref{rem:derivation_PfaffToda},
\begin{equation*}
D_x(P_y^-)
=
D_x\bigl(\Pi_-(L_y)\bigr)
=
\Pi_-\bigl(D_x(L_y)\bigr)
=
-\Pi_-\bigl([P_x^-,L_y]\bigr).
\end{equation*}
Hence
\begin{align*}
D_xD_y(U_N)
&=
-D_x(P_y^-U_N)
\\
&=
\Pi_-\bigl([P_x^-,L_y]\bigr)U_N
+
P_y^-P_x^-U_N.
\end{align*}
Interchanging $x$ with $y$ and subtracting, we find
\begin{align*}
[D_x,D_y](U_N)
={}&
\Bigl(
\Pi_-\bigl(
[P_x^-,L_y]-[P_y^-,L_x]
\bigr)
-[P_x^-,P_y^-]
\Bigr)U_N.
\end{align*}
Since $(\mathfrak g_N)_-$ is a Lie subalgebra,
we get
\begin{equation*}
\begin{aligned}
\Pi_-\bigl([P_x^-,L_y]-[P_y^-,L_x]\bigr)
   -[P_x^-,P_y^-]
&=
\Pi_-\bigl(
[P_x^-,L_y]-[P_y^-,L_x]-[P_x^-,P_y^-]
\bigr)
\\
&=
\Pi_-\bigl([L_x,L_y]-[P_x^+,P_y^+]\bigr)
\\
&=
\Pi_-([L_x,L_y])
\\
&=
\Pi_-(L_{[x,y]})
\\
&=
P^-_{[x,y]}.
\end{aligned}
\end{equation*}
Hence, by Definition~\ref{def:dressing_derivative},
\begin{equation*}
[D_x,D_y](U_N)
=
-D_{[x,y]}(U_N).
\end{equation*}
Since the coefficients of $U_N$ freely generate $\mathcal B_N$, the equality
holds on all of $\mathcal B_N$. Moreover, since both derivations extend uniquely to the
localization $\mathcal V_N$, it follows that
\begin{equation*}
[D_x,D_y]=-D_{[x,y]}.
\end{equation*}
Finally, for all $x, y \in \mathfrak a_N$, we have 
\begin{equation*}
\begin{aligned}
D_x(P_y^+)&=-\Pi_+([P_x^-,L_y])\\
&= -\Pi_+( [P_y^-,P_x^+]-[P_x^+,P_y^+]+P_{[x,y]}^+)
\\
&= D_y(P_x^+)+[P_x^+,P_y^+]-P_{[x,y]}^+.
\end{aligned}
\end{equation*}
Therefore the curvature of the Lax representation $(L,-D,-P^+)$ vanishes.
\end{proof}
In particular, commuting bare operators generate commuting evolutionary flows. We refer to the resulting full matrix dressing system as the \textit{Matrix Pfaff--Toda hierarchy}: its complete family of dressing directions is non-abelian, whereas every commuting subalgebra of bare operators determines a commuting subhierarchy.

\subsection{Savchenko--Zabrodin realization of the diagonal sector}
\label{sec:SZ-realization}

We compare the Matrix Pfaff--Toda hierarchy with the Savchenko--Zabrodin multicomponent construction, outlining the normalization of its raw dressing pair and stating the realization theorem, whose proof is deferred to Appendix~\ref{sec:SZ-identification}.

Let $g$ be a Clifford-group element satisfying the fermionic bilinear
identity of \cite{SavchenkoZabrodin}. Its associated tau-function is
\begin{equation*}
    \tau(\boldsymbol n,\widebar{\boldsymbol n},
         \boldsymbol t,\widebar{\boldsymbol t})
    =
    \bra{ \boldsymbol n}
    \,\textup{e}^{J(\boldsymbol t)}
    g\,
    \,\textup{e}^{-\widebar J(\widebar{\boldsymbol t})}
    \ket{-\widebar{\boldsymbol n}}.
\end{equation*}
Following \cite{SavchenkoZabrodin}, we use discrete
variables $\boldsymbol s, \boldsymbol r \in \mathbb Z^N$, set $\boldsymbol n=\boldsymbol s+\boldsymbol r$ and 
    $\widebar{\boldsymbol n}=\boldsymbol r-\boldsymbol s$,
and introduce the matrix tau-function
\begin{equation*}
    \tau_{\alpha\beta}
    (\boldsymbol s,\boldsymbol r,
     \boldsymbol t,\widebar{\boldsymbol t})
    :=
    \tau(
        \boldsymbol n+e_\alpha-e_\beta,
        \widebar{\boldsymbol n},
        \boldsymbol t,\widebar{\boldsymbol t}),
    \qquad
    \alpha, \beta =1,\ldots,N, \qquad
    \tau_0:=\tau_{\alpha\alpha}.
\end{equation*}
The shift automorphism acts on the first argument by
\begin{equation*}
    (\mathcal Sf)(\boldsymbol s,\boldsymbol r,
         \boldsymbol t,\widebar{\boldsymbol t})
    =
    f(\boldsymbol s-\mathbf 1,\boldsymbol r,
      \boldsymbol t,\widebar{\boldsymbol t}),
    \qquad
    \mathbf 1=(1,\ldots,1).
\end{equation*}
Appendix~\ref{sec:SZ-identification} associates with every admissible $g$ a coefficient
difference--differential algebra $\mathcal V_g$, obtained from the
tau-functions and their discrete translates by localizing at $\tau_0$
and all of its $\mathcal S$-translates.
The fermionic bilinear identity produces a pair of raw matrix
pseudodifference dressing operators~(Lemma~\ref{lem:XNbarXN-def})
\begin{equation*}
    X_N\in M_{2N}\bigl(\mathcal V_g[[\mathcal S]]\bigr),
    \qquad
    \widebar X_N\in M_{2N}\bigl(\mathcal V_g[[\mathcal S^{-1}]]\bigr).
\end{equation*}
These operators do not in general have the canonical normalization of
the dressing operator $U_N$ in~\eqref{eq:dressing-U}. Instead, the bilinear
identity implies (Proposition~\ref{prop:raw-pfaff-relation}) the algebraic relation
\begin{equation*}
    X_NJ_N\widebar X_N^*
    =
    \widebar X_NJ_NX_N^*
    =:K_N.
\end{equation*}
This skew operator-valued
form $K_N$ need not coincide with the fixed form $J_N$ entering
the Pfaff--Toda splitting.
In Appendix~\ref{sec:SZ-identification} we introduce a regular factorization condition (Definition~\ref{def:regular-factorization-locus}) on the leading coefficients of these raw dressing operators. On the corresponding locus, Proposition~\ref{prop:GammaN} gives a unique matrix difference operator $\Gamma_N\in
    M_{2N}\bigl(\mathcal V_g[\mathcal S,\mathcal S^{-1}]\bigr)$
such that
\begin{equation*}
    \Gamma_NK_N\Gamma_N^*=J_N,
\end{equation*}
 where $ \Gamma_NX_N$ has the form~\eqref{eq:dressing-U}.
Defining $Y_N:= \Gamma_N X_N$ and 
    $\widebar Y_N:=\Gamma_N\widebar X_N,$
we obtain
\begin{equation*}
    Y_NJ_N\widebar Y_N^*=J_N,
    \qquad
    \widebar Y_N=-J_N(Y_N^*)^{-1}J_N.
\end{equation*}
In particular, $Y_N$ may be regarded as a specialization of the
universal dressing operator $U_N$ of the Matrix Pfaff--Toda hierarchy.
We consider the $2N$ commuting families of diagonal bare operators
\begin{equation*}
    T_{\alpha,n}
    :=
    \mathcal S^{-n}
    \begin{pmatrix}
        E_{\alpha\alpha} & 0\\
        0 & 0
    \end{pmatrix},
    \qquad
    \widebar T_{\alpha,n}
    :=
    -\mathcal S^{-n}
    \begin{pmatrix}
        0 & 0\\
        0 & E_{\alpha\alpha}
    \end{pmatrix},
    \qquad
    n\geq 1, \qquad  \alpha=1,\ldots,N,
\end{equation*}
generating a commuting sector of the Matrix Pfaff--Toda hierarchy constructed in Section~\ref{sec:dressing_zero_curvature}. We now show that the Savchenko--Zabrodin tau-functions realize precisely this commuting diagonal sector.

\begin{theorem}[Savchenko--Zabrodin realization]
\label{thm:SZ-realization}
Let $g$ be a Clifford-group element satisfying the Savchenko--Zabrodin
fermionic bilinear identity and belonging to the regular factorization
locus defined in Appendix~\ref{sec:SZ-identification}. Then the normalized dressing
operator $Y_N$ determines an $\mathcal S$-difference algebra homomorphism
\begin{equation*}
    \operatorname{ev}_g:
    \mathcal V_N\longrightarrow\mathcal V_g,
    \qquad
    U_N\longmapsto Y_N,
\end{equation*}
such that, for every $\alpha=1,\ldots,N$ and $n\geq1$,
\begin{equation*}
    \operatorname{ev}_g\circ D_{T_{\alpha,n}}
    =
    \frac{\partial}{\partial t_{\alpha,n}}
    \circ\operatorname{ev}_g,
    \qquad
    \operatorname{ev}_g\circ D_{\widebar T_{\alpha,n}}
    =
    \frac{\partial}{\partial\widebar t_{\alpha,n}}
    \circ\operatorname{ev}_g.
\end{equation*}
Therefore, every tau-function arising from a Savchenko--Zabrodin
Clifford-group element in the regular factorization locus gives a
realization of the commuting diagonal sector of the Matrix Pfaff--Toda
hierarchy.
\end{theorem}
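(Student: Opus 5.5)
The plan splits into two parts: first construct $\operatorname{ev}_g$, then check that it intertwines the flows on the generators of $\mathcal V_N$.

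\textbf{Constructing $\operatorname{ev}_g$.} On the regular factorization locus, Proposition~\ref{prop:GammaN} puts $Y_N=\Gamma_NX_N$ in the shape~\eqref{eq:dressing-U}. The corresponding blocks $a_0,b_{-1},c_1$ are strictly lower triangular and $\Delta$ is lower triangular. Since $\mathcal B_N$ is the free difference polynomial algebra on the independent coefficients of $U_N$, sending each coefficient to the matching coefficient of $Y_N$ gives a unique $\mathcal S$-equivariant algebra map $\mathcal B_N\to\mathcal V_g$. To extend it to the localization $\mathcal V_N$, we need $\det\Delta$ and all its shifts to map to units of $\mathcal V_g$. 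I would read this off from the leading-coefficient formulas for $X_N$ in Lemma~\ref{lem:XNbarXN-def}. These should express the diagonal of the degree-zero $(2,2)$ block as ratios of $\tau_0$ and its translates, which are invertible by construction of $\mathcal V_g$. The normalization $\Gamma_N$ modifies these entries by invertible factors, and that invertibility is exactly what the regularity condition guarantees.

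\textbf{Reducing to the dressing equations.} Both $\operatorname{ev}_g\circ D_{T_{\alpha,n}}$ and $\partial_{t_{\alpha,n}}\circ\operatorname{ev}_g$ are $\mathcal S$-commuting derivations $\mathcal V_N\to\mathcal V_g$ along $\operatorname{ev}_g$. The partial derivatives commute with the shift because $\mathcal S$ acts only on the discrete variables. Hence it suffices to compare them on the coefficients of $U_N$, that is, to prove
\[
\partial_{t_{\alpha,n}}Y_N=-\Pi_-\bigl(Y_NT_{\alpha,n}Y_N^{-1}\bigr)Y_N,
\qquad
\partial_{\widebar t_{\alpha,n}}Y_N=-\Pi_-\bigl(Y_N\widebar T_{\alpha,n}Y_N^{-1}\bigr)Y_N .
\]
Here I use that $\operatorname{ev}_g$ commutes with the projections $\Pi_\pm$ and with inversion, since these are built coefficientwise.

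\textbf{Deriving the dressing equations from the fermionic side.} From the fermionic bilinear identity I would obtain the raw Sato-type equations. Differentiating the wave functions in $t_{\alpha,n}$ should give
\[
\partial_{t_{\alpha,n}}X_N=B\,X_N-X_NT_{\alpha,n},
\qquad
\partial_{t_{\alpha,n}}\widebar X_N=B\,\widebar X_N,
\]
for a single operator $B$ with finite Laurent support. The standard argument is to insert the fermionic current into the bilinear identity, so that the same operator $B$ acts on both wave functions. Next, set $B_Y:=\Gamma_NB\Gamma_N^{-1}+\partial_{t_{\alpha,n}}(\Gamma_N)\Gamma_N^{-1}$. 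Then
\[
\partial Y_N=B_YY_N-Y_NT_{\alpha,n},
\qquad
\partial\widebar Y_N=B_Y\widebar Y_N .
\]
It remains to show that $B_Y\in\mathfrak g_+$. Then, since $\partial Y_N\,Y_N^{-1}=B_Y-L_{T_{\alpha,n}}$ lies in $\mathfrak g_-$ (the tangent space at $Y_N$), uniqueness of the splitting of Lemma~\ref{lem:g_splitting} forces $B_Y=\Pi_+(L_{T_{\alpha,n}})$, which is the dressing equation.

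\textbf{Main obstacle: proving $B_Y\in\mathfrak g_+$.} Two facts are needed. The first is that $B_Y$ is a Laurent polynomial. This holds because $B_Y=\partial Y_N\,Y_N^{-1}+L_T$ lies in $\mathcal V_g((\mathcal S))$, while $B_Y=\partial\widebar Y_N\,\widebar Y_N^{-1}$ lies in $\mathcal V_g((\mathcal S^{-1}))$; the intersection of the two is the Laurent polynomials. The second is the symplectic condition $B_Y^*=J_NB_YJ_N$. For this I would differentiate $Y_NJ_N\widebar Y_N^*=J_N$, which gives
\[
(B_YY_N-Y_NT)J_N\widebar Y_N^*+Y_NJ_N\widebar Y_N^*B_Y^*=0 .
\]
The term $-Y_NTJ_N\widebar Y_N^*$ is the delicate one. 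I expect to handle it by using the Pfaff relation in its form $X_NJ_N\widebar X_N^*=\widebar X_NJ_NX_N^*$, together with the $t$-dependence of $\widebar X_N$, which has no $T$ correction. If that fails, I would argue degree by degree. Positive-degree parts would be controlled by the $\mathcal S^{-1}$-side and nonpositive parts by the $\mathcal S$-side, with the boundary terms $p,q,r$ of~\eqref{eq:Pi-plus} fixed by the triangular shape of $Y_N$. The barred flows are treated symmetrically, with the roles of $X_N$ and $\widebar X_N$ exchanged. The signs in $\widebar T_{\alpha,n}$ have to be matched to the sign of $\widebar J$ in the tau-function.
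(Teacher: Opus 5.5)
Your overall plan is the paper's: construct $\operatorname{ev}_g$, reduce to $\partial_xY_N=-\Pi_-(Y_NxY_N^{-1})Y_N$, show that a common auxiliary operator is a Laurent polynomial fixed by $\theta$, and conclude by uniqueness of the splitting. The step you flag as the main obstacle fails as you set it up, however, because your raw equation for $\widebar X_N$ is wrong.

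In Pfaff--Toda each frame mixes both kinds of bare factors: $\Xi=\operatorname{diag}(\chi,\widebar\chi^\lor)$ and $\Xi^\lor=\operatorname{diag}(\widebar\chi,\chi^\lor)$. The second block column of $\widebar\Phi_{\mathrm{raw}}$ consists of the unbarred $\psi^*$-functions $\Psi_3,\Psi_4$, and their factor $\chi^\lor_\beta=z^{-n_\beta}e^{-\xi(\boldsymbol t_\beta,z)}$ depends on $\boldsymbol t$. Hence $\partial_{t_{\alpha,n}}\Xi^\lor=\theta(T_{\alpha,n})\Xi^\lor$ with $\theta(T_{\alpha,n})=-\mathcal S^{n}\operatorname{diag}(0,E_{\alpha\alpha})\neq0$. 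The correct equation is therefore $\partial_{t_{\alpha,n}}\widebar X_N=B\widebar X_N-\widebar X_N\theta(T_{\alpha,n})$; for $N=1$ this is the term $\widebar U\mathcal S^nE_{22}$ in $D_{T_n}\widebar U$.

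With your version the argument cannot close. Differentiating $Y_NJ_N\widebar Y_N^*=J_N$ using $\partial Y_N=B_YY_N-Y_NT$ and $\partial\widebar Y_N=B_Y\widebar Y_N$ gives $B_Y-\theta(B_Y)=Y_NTY_N^{-1}$. This fails in general, since the left side is a Laurent polynomial and the right side is an infinite $\mathcal S$-series. No degree-by-degree adjustment of the boundary terms $p,q,r$ can repair an identity that is off by all of $L_T$.

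With the correct bare term, the extra contribution is $-Y_NJ_N\theta(T)^*\widebar Y_N^*=+Y_NTJ_N\widebar Y_N^*$, because $J_N\theta(T)^*=-TJ_N$. Your delicate term then cancels identically, leaving $B_YJ_N+J_NB_Y^*=0$, i.e.\ $B_Y=\theta(B_Y)$. This is exactly the paper's mechanism: it puts $\widebar Y_N\theta(x)\widebar Y_N^{-1}$ into $\widebar B_x$ and derives $\widebar B_x=\theta(B_x)$ from $\widebar Y_N=-J_N(Y_N^*)^{-1}J_N$. The same correction is needed for the barred flows, where $X_N$ acquires $-X_N\widebar T_{\alpha,n}$ through $\widebar\chi^\lor$. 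So simply exchanging the roles of $X_N$ and $\widebar X_N$, as in $2$D Toda, does not work.

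Two further points need fixing.

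First, showing that a single operator governs both raw frames is a real input, and ``inserting the current'' is too vague to supply it. The paper differentiates the matrix residue identity in the unprimed times and sets the primed charges to $\boldsymbol s-m\boldsymbol 1$. It then uses the residue--coefficient correspondence to obtain $B^{\mathrm{raw}}_xK_N=\widebar B^{\mathrm{raw}}_xK_N$. Cancelling $K_N$ requires $K_N$ to be invertible, which comes from $R\,\Omega R^\top=J_0$ in the regularity definition. Once this is in place, your finite-support argument goes through, since $\widebar Y_N\theta(T)\widebar Y_N^{-1}$ is still bounded above.

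Second, in constructing $\operatorname{ev}_g$, for $N\geq2$ the diagonal of $\widebar V_{2,0}$ is not a ratio of $\mathcal S$-translates of $\tau_0$. Its numerator is $\tau_{\alpha\alpha}(\boldsymbol s-\boldsymbol e_\alpha,\ldots)$, which is shifted in one component only, so it is not inverted in $\mathcal V_g$. You do not need it: $Y_N\in G_-$ already gives $\Delta\in B_-^\times$, so $\det\Delta$ and all its shifts are units of $\mathcal V_g$.
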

The proof is given in Appendix~\ref{sec:SZ-identification} as Theorem~\ref{thm:fermionic_realisation_PfaffToda}. We briefly describe its mechanism. For a diagonal bare direction
$x\in\{T_{\alpha,n},\widebar T_{\alpha,n}\}$, we write
\begin{equation*}
\partial_{T_{\alpha,n}}:=\frac{\partial}{\partial t_{\alpha,n}},
\qquad
\partial_{\widebar T_{\alpha,n}}
:=
\frac{\partial}{\partial\widebar t_{\alpha,n}}.
\end{equation*}
Differentiating the fermionic bilinear identity produces two raw
auxiliary operators
\begin{equation*}
    B_x^{\rm raw}
    =
    (\partial_xX_N)X_N^{-1}
    +X_NxX_N^{-1},
\qquad
    \widebar B_x^{\rm raw}
    =
    (\partial_x\widebar X_N)\widebar X_N^{-1}
    +\widebar X_N\theta(x)\widebar X_N^{-1}.
\end{equation*}
The key step is Lemma~\ref{lem:Braw_barBraw}, which gives
$
    B_x^{\rm raw}=\widebar B_x^{\rm raw}.
$
After normalization by $\Gamma_N$, the corresponding auxiliary operators
remain equal:
$
    B_x=\widebar B_x.
$
Moreover, Proposition~\ref{prop:Bx_evolution} shows that the common auxiliary
operator has finite Laurent support and satisfies
\begin{equation*}
    B_x=\theta(B_x),
\end{equation*}
therefore
$B_x\in\mathfrak g_+$. 
Since $Y_N$ has the canonical dressing form~\eqref{eq:dressing-U}, its logarithmic derivative belongs to
$\mathfrak g_-$.
By uniqueness of the splitting
$
    \mathfrak g=\mathfrak g_+\oplus\mathfrak g_-,
$
we conclude that
\begin{equation*}
    B_x
    =
    \Pi_+\bigl(Y_NxY_N^{-1}\bigr),
    \qquad
    \partial_xY_N
    =
    -\Pi_-\bigl(Y_NxY_N^{-1}\bigr)Y_N.
\end{equation*}
This is precisely the specialization of the universal dressing equation
of Definition~\ref{def:dressing_derivative}. 

The regular factorization locus is nonempty. For $N=1$, regularity is automatic by Lemma~\ref{lem:regularization-N1}. For $N\geq2$,
Lemma~\ref{lem:gB_cliff} provides the family
\begin{equation*}
    g_B
    :=
    \exp\left(
        \frac12
        \sum_{\alpha,\beta=1}^N
        B_{\alpha\beta}\,
        \psi^{(\alpha)}_0\psi^{(\beta)}_0
    \right),
    \qquad
    B\in\mathfrak{so}_N(\mathbb C),
\end{equation*}
giving a $\binom{N}{2}$-parameter family of genuinely coupled
multicomponent examples. Hence the realization theorem applies beyond the decoupled sector.

\section{\texorpdfstring{$\boldsymbol {N= 1}$: from Pfaff--Toda to Adler--Pfaff}{PT}}
\label{sec:scalar-extension}

In this Section we specialize the Matrix Pfaff--Toda hierarchy to
$N=1$. Its diagonal bare algebra recovers Takasaki's continuous
Pfaff--Toda hierarchy, while the off-diagonal directions combine
with it, through the powers of a single bare operator, to recover
the Adler--Pfaff hierarchy.

We write $(U, \mathcal{V}, \mathfrak a, \mathfrak g, J)$ for $(U_1, \mathcal{V}_1, \mathfrak a_1, \mathfrak g_1, J_1)$.
When $N=1$ the triangular boundary terms disappear and the dressing
operator has the form
\begin{equation}\label{eq:scalar-U}
U=\begin{pmatrix}
W_1 & \widebar V_1\\
W_2 & \widebar V_2
\end{pmatrix}=
\begin{pmatrix}
1+\mathcal O(\mathcal S) & \mathcal O(1)\\
\mathcal O(\mathcal S^2) & \Delta+\mathcal O(\mathcal S)
\end{pmatrix} \in \mathfrak g := M_2(\mathcal{V}((\mathcal{S}))).
\end{equation}
The difference algebra $\mathcal{V}$ is obtained by localizing the difference polynomial algebra freely generated by the coefficients of $U$ at $\Delta$ and all its shifts.
Moreover, 
\begin{equation}
J=\begin{pmatrix}
0&\mathcal S^{-1}\\
-\mathcal S&0
\end{pmatrix}.
\label{eq:scalar-bar-U}
\end{equation}
The algebra of bare operators is $\mathfrak a= M_2(\mathbb{C}[\mathcal{S}, \mathcal{S}^{-1}])$
and the specialization of the Pfaff--Toda splitting of Section~\ref{sec:multicomponent-hierarchy} is $\mathfrak g=\mathfrak g_+\oplus\mathfrak g_-$, where
\begin{equation*}
\mathfrak g_+
=
\left\{
M\in M_2\bigl(\mathcal V[\mathcal S,\mathcal S^{-1}]\bigr)
\;\middle|\;
M^*=JMJ
\right\},
\qquad
\mathfrak g_-=
\begin{pmatrix}
\mathcal O(\mathcal S)&\mathcal O(1)\\
\mathcal O(\mathcal S^2)&\mathcal O(1)
\end{pmatrix},
\end{equation*}
since $a=b=c=0$ in~\eqref{negativesplitting} for scalars. 
The projection onto $\mathfrak g_+$ along $\mathfrak g_-$ is
\begin{equation}
\Pi_+
\begin{pmatrix}
A&B\\ C&D
\end{pmatrix}
=
\begin{pmatrix}
A_{\leq0}-(\mathcal S^{-1}D^*\mathcal S)_{>0}
&
B_{<0}+(\mathcal S^{-1}B^*\mathcal S^{-1})_{\geq0}
\\[1mm]
C_{\leq0}+(\mathcal S C^*\mathcal S)_{>0}
&
D_{<0}-(\mathcal S A^*\mathcal S^{-1})_{\geq0}
\end{pmatrix}.
\label{explicitPiplus}
\end{equation}
For $x\in\mathfrak a$, we use the notation
\begin{equation*}
L_x=UxU^{-1},
\qquad
P_x^+=\Pi_+(L_x),
\qquad
P_x^-=\Pi_-(L_x).
\end{equation*}
The evolutionary derivation of $\mathcal{V}$ associated with $x$ is
\begin{equation}
D_x(U)
=
-P_x^-U
=
P_x^+U-Ux .
\end{equation}
By Proposition~\ref{prop:dressing_derivations_zero},
\begin{equation}
D_x(L_y)=[P_x^+,L_y]-L_{[x,y]},
\qquad
[D_x,D_y]=-D_{[x,y]},
\qquad x,y\in\mathfrak a .
\end{equation}

\subsection{The continuous Pfaff--Toda flows}
\label{subsec:takasaki-flows}
We first restrict the dressing representation of $M_2(\mathbb{C}[\mathcal S, \mathcal S^{-1}])$ to the two
commuting diagonal families of bare operators and identify the resulting
Sato equations with the continuous Pfaff--Toda hierarchy of Takasaki \cite{Taka}.
For $n\geq1$, we introduce 
\begin{equation}
T_n:=\mathcal S^{-n}E_{11},
\qquad
\widebar T_n:=-\mathcal S^{-n}E_{22},
\label{TnbarTn}
\end{equation}
where $E_{ij}$ denotes the standard matrix unit.
Since the operators $T_n$ and $\widebar T_n$ commute pairwise, so do their dressing derivations.
By Definition~\ref{def:dressing_derivative}, 
\begin{equation} \label{ptflows}
D_{T_n}U=P_nU-UT_n, 
\qquad D_{\widebar T_n}U=\widebar P_nU-U \widebar T_n,
\end{equation}
where \begin{equation}
P_n:=\Pi_+(UT_nU^{-1}),
\qquad
\widebar P_n:=\Pi_+(U\widebar T_nU^{-1}).
\label{eq:Pn-Pbarn}
\end{equation}
Takasaki's dressing representation of Pfaff--Toda uses, in addition to
the $\mathcal S$-adic dressing operator $U$, an opposite
$\mathcal S^{-1}$-adic dressing operator. In our construction the latter
is not independent, but is determined by $U$ through the Pfaff involution.
We define
\begin{equation}\label{eq:scalar-barU}
\widebar U:=-J(U^*)^{-1}J 
=
\begin{pmatrix}
\widebar W_1&V_1\\
\widebar W_2&V_2
\end{pmatrix} 
\, \in \, M_2(\mathcal V((\mathcal S^{-1}))).
\end{equation}
We express the auxiliary operators $P_n$ and $\widebar P_n$ explicitly in terms of the  pseudodifference operators $V_i,W_i, \widebar V_i, \widebar W_i$. 
\begin{lemma}\label{lem:P_barP}
    For every $n \geq 1$, we have 
    \begin{equation}
        P_n=
        \begin{pmatrix}
            A_n & B_n \\
            C_n & D_n 
        \end{pmatrix}, 
        \qquad 
        \widebar P_n=
        \begin{pmatrix}
            \widebar A_n & \widebar B_n\\
            \widebar C_n & \widebar D_n
        \end{pmatrix}
    \end{equation}
    where 
\begin{eqnarray*}
        \begin{aligned}
            \,\,A_n&\!=\! \bigl(W_1\mathcal S^{-n-1}V_2^*\mathcal S\bigr)_{\leq0}\!
+\!\bigl(V_1\mathcal S^{n+1}W_2^*\mathcal S\bigr)_{>0},&\!\!
B_n&\!=\! -\bigl(W_1\mathcal S^{-n-1}V_1^*\mathcal S^{-1}\bigr)_{<0}\!-\!\bigl(V_1\mathcal S^{n+1}W_1^*\mathcal S^{-1}\bigr)_{\geq0}, \\[1mm]
\,\,C_n&\!=\! \bigl(W_2\mathcal S^{-n-1}V_2^*\mathcal S\bigr)_{\leq0}\!
+\!\bigl(V_2\mathcal S^{n+1}W_2^*\mathcal S\bigr)_{>0},&\!\!
D_n&\!=\!
-\bigl(W_2\mathcal S^{-n-1}V_1^*\mathcal S^{-1}\bigr)_{<0}\!
-\!\bigl(V_2\mathcal S^{n+1}W_1^*\mathcal S^{-1}\bigr)_{\geq0}, 
 \end{aligned}
    \end{eqnarray*}
and 
\begin{eqnarray*}
        \begin{aligned}    
\,\,\widebar A_n&\!=\!\bigl(\widebar W_1\mathcal S^{n-1}\widebar V_2^*\mathcal S\bigr)_{>0}\!
+\!\bigl(\widebar V_1\mathcal S^{-n+1}\widebar W_2^*\mathcal S\bigr)_{\leq0},&\!\!
            \widebar B_n&\!=\!-\bigl(\widebar W_1\mathcal S^{n-1}\widebar V_1^*\mathcal S^{-1}\bigr)_{\geq0}\!
-\!\bigl(\widebar V_1\mathcal S^{-n+1}\widebar W_1^*\mathcal S^{-1}\bigr)_{<0}, \\[1mm]
\,\,\widebar C_n&\!=\!\bigl(\widebar W_2\mathcal S^{n-1}\widebar V_2^*\mathcal S\bigr)_{>0}\!
+\!\bigl(\widebar V_2\mathcal S^{-n+1}\widebar W_2^*\mathcal S\bigr)_{\leq0} ,&\!\!
            \widebar D_n&\!=\!-\bigl(\widebar W_2\mathcal S^{n-1}\widebar V_1^*\mathcal S^{-1}\bigr)_{\geq0}\!
-\!\bigl(\widebar V_2\mathcal S^{-n+1}\widebar W_1^*\mathcal S^{-1}\bigr)_{<0}. 
        \end{aligned}
    \end{eqnarray*}
\end{lemma}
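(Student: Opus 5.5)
The plan is to compute $L_x=UxU^{-1}$ explicitly for $x=T_n$ and $x=\widebar T_n$ in terms of the blocks of $U$ and $\widebar U$, and then apply the explicit projection \eqref{explicitPiplus}. The key observation is that the Pfaff involution expresses $U^{-1}$ through $\widebar U$. From \eqref{eq:scalar-barU} and $J^2=-I_2$, $J^*=-J$, one gets
\begin{equation*}
(U^*)^{-1}=-J\widebar UJ,
\qquad
U^{-1}=\bigl((U^*)^{-1}\bigr)^*=-J\,\widebar U^{*}J .
\end{equation*}
Here $\widebar U^*\in M_2(\mathcal V((\mathcal S)))$, so this product lives in the same completion as $U$.

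\textbf{Step 1 (a closed form for $U^{-1}$).} Write
\begin{equation*}
\widebar U^*=\begin{pmatrix}\widebar W_1^*&\widebar W_2^*\\ V_1^*&V_2^*\end{pmatrix}.
\end{equation*}
A direct $2\times2$ computation using $JXJ$ for $X=\begin{pmatrix}a&b\\c&d\end{pmatrix}$ then gives
\begin{equation*}
U^{-1}=\begin{pmatrix}\mathcal S^{-1}V_2^*\mathcal S&-\mathcal S^{-1}V_1^*\mathcal S^{-1}\\ -\mathcal S\widebar W_2^*\mathcal S&\mathcal S\widebar W_1^*\mathcal S^{-1}\end{pmatrix}.
\end{equation*}

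\textbf{Step 2 (the dressed operators).} Since $T_n=\mathcal S^{-n}E_{11}$, only the first column $(W_1,W_2)^\top$ of $U$ survives in $UT_n$, and only the first row of $U^{-1}$ enters. Hence
\begin{equation*}
L_{T_n}=\begin{pmatrix}W_1\mathcal S^{-n-1}V_2^*\mathcal S&-W_1\mathcal S^{-n-1}V_1^*\mathcal S^{-1}\\ W_2\mathcal S^{-n-1}V_2^*\mathcal S&-W_2\mathcal S^{-n-1}V_1^*\mathcal S^{-1}\end{pmatrix}.
\end{equation*}
Similarly, $\widebar T_n=-\mathcal S^{-n}E_{22}$ selects the second column $(\widebar V_1,\widebar V_2)^\top$ of $U$ and the second row of $U^{-1}$. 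This produces entries of the form $\widebar V_i\mathcal S^{-n+1}\widebar W_j^*\mathcal S^{\pm1}$ with the signs displayed in the lemma. For instance, the $(1,1)$ entry is $\widebar V_1\mathcal S^{-n+1}\widebar W_2^*\mathcal S$.

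\textbf{Step 3 (projection).} Insert these entries into \eqref{explicitPiplus}. The truncated pieces $A_{\le0}$, $B_{<0}$, $C_{\le0}$, $D_{<0}$ give the first terms of $A_n,\dots,D_n$ directly. The adjoint pieces are computed with $(XY)^*=Y^*X^*$ and $(\mathcal S^k)^*=\mathcal S^{-k}$. For example, with $D=-W_2\mathcal S^{-n-1}V_1^*\mathcal S^{-1}$,
\begin{equation*}
-(\mathcal S^{-1}D^*\mathcal S)_{>0}=\bigl(V_1\mathcal S^{n+1}W_2^*\mathcal S\bigr)_{>0},
\end{equation*}
and analogously for the other three entries. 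This yields exactly the stated formulas for $P_n$.

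\textbf{Step 4 (the barred operators).} For $\widebar P_n$, the entries of $L_{\widebar T_n}$ computed in Step 2 supply the terms $(\widebar V_i\mathcal S^{-n+1}\widebar W_j^*\mathcal S^{\pm1})$ under the truncations. The adjoint pieces, such as $-(\mathcal S^{-1}D^*\mathcal S)_{>0}$, produce the remaining terms $(\widebar W_i\mathcal S^{n-1}\widebar V_j^*\mathcal S^{\pm1})$ with complementary truncations, again by $(XY)^*=Y^*X^*$.

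The main obstacle is bookkeeping rather than anything conceptual. One must track the signs coming from $J$, from $\widebar T_n=-\mathcal S^{-n}E_{22}$, and from the minus signs in \eqref{explicitPiplus}. One must also confirm that each truncation boundary ($<0$ versus $\le0$, $>0$ versus $\ge0$) matches \eqref{explicitPiplus} for $N=1$. I would check a single entry of each bar-type carefully and obtain the others by the evident symmetry $1\leftrightarrow2$ together with conjugation by $\mathcal S^{\pm1}$.
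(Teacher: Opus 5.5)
Your proposal is correct and follows the paper's proof essentially step for step. The paper likewise writes $U^{-1}=-J\widebar U^{*}J$, obtains the same explicit $2\times2$ form of $U^{-1}$, computes $L_{T_n}$ and $L_{\widebar T_n}$ entrywise, and reads off $P_n$ and $\widebar P_n$ from \eqref{explicitPiplus}. Your sample adjoint computation, e.g.\ $-(\mathcal S^{-1}D^*\mathcal S)_{>0}=(V_1\mathcal S^{n+1}W_2^*\mathcal S)_{>0}$, checks out. The remaining seven entries, including the barred ones, come out with exactly the stated signs and truncation boundaries.
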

\begin{proof}
    Using~\eqref{explicitPiplus}, it suffices to compute both $L_{T_n}$ and $L_{\widebar T_n}$ in terms of the operators $V_i,W_i, \widebar V_i, \widebar W_i$. By definition of $\widebar U$ we have $U^{-1}=-J\widebar U^* J$, which gives  
    \begin{equation*}
U^{-1}
=
\begin{pmatrix}
\mathcal S^{-1}V_2^*\mathcal S
&
-\mathcal S^{-1}V_1^*\mathcal S^{-1}
\\[1mm]
-\mathcal S\widebar W_2^*\mathcal S
&
\mathcal S\widebar W_1^*\mathcal S^{-1}
\end{pmatrix}.
\end{equation*}
Hence the Lax operator $L_{T_n}=UT_nU^{-1}$ is explicitly written as
\begin{equation*}
\begin{aligned}
L_{T_n}&= 
\begin{pmatrix}
    W_1 & \widebar V_1 \\
    W_2 &  \widebar V_2 
\end{pmatrix}
\begin{pmatrix}
 \mathcal{S}^{-n} & 0 \\
0  & 0   
\end{pmatrix}
\begin{pmatrix}
\mathcal S^{-1}V_2^*\mathcal S
&
-\mathcal S^{-1}V_1^*\mathcal S^{-1}
\\[1mm]
-\mathcal S\widebar W_2^*\mathcal S
&
\mathcal S\widebar W_1^*\mathcal S^{-1}
\end{pmatrix}
\\
&= 
\begin{pmatrix}
 W_1 \mathcal{S}^{-n} \mathcal S^{-1}V_2^* \mathcal S
&
-W_1 \mathcal{S}^{-n} \mathcal S^{-1}V_1^*\mathcal S^{-1}
\\[1mm]
W_2 \mathcal{S}^{-n}\mathcal S^{-1}V_2^*\mathcal S
&
-W_2 \mathcal{S}^{-n}\mathcal S^{-1}V_1^*\mathcal S^{-1} 
\end{pmatrix}.
\end{aligned}
\end{equation*}
Similarly, we get 
\begin{equation*}
L_{\widebar T_n}=
\begin{pmatrix}
\widebar V_1 \mathcal{S}^{-n} \mathcal S\widebar W_2^* \mathcal S 
&
-\widebar V_1 \mathcal{S}^{-n} \mathcal S\widebar W_1^*\mathcal  S^{-1}
\\[1mm]
\widebar V_2 \mathcal{S}^{-n}\mathcal S\widebar W_2^*\mathcal S
&
-\widebar V_2 \mathcal{S}^{-n}\mathcal S\widebar W_1^*\mathcal S^{-1}
\end{pmatrix}.
\end{equation*}
We conclude the proof after applying $\Pi_+$, given that
\begin{equation*}
P_n= \Pi_+(L_{T_n}), \qquad \widebar P_n= \Pi_+(L_{\widebar T_n}).
\end{equation*}
\end{proof}
Since $P_n,\widebar P_n\in\mathfrak g_+$, they satisfy
\begin{equation*}
P_n^*=JP_nJ,
\qquad
\widebar P_n^*=J\widebar P_nJ,
\end{equation*}
and differentiating $\widebar U=-J(U^*)^{-1}J$ therefore gives
\begin{equation}
D_{T_n}\widebar U=P_n\widebar U+\widebar U \mathcal S^n E_{22},
\qquad
D_{\widebar T_n}\widebar U= \widebar P_n\widebar U-\widebar U \mathcal{S}^n E_{11}.
\label{ptbarflows}
\end{equation}
\begin{proposition}
\label{prop:takasaki_comparison}
Under the identification $\mathcal S=\,\textup{e}^{-\partial_s}$, the dressing data
$(U,\widebar U)$ and the dressing derivations $(D_{T_n},D_{\widebar T_n})$
defined above coincide with the continuous Pfaff--Toda hierarchy of
Takasaki~\cite{Taka}. 
\end{proposition}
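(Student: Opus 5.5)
The plan is to compare the paper's data directly with Takasaki's dressing formulation of the continuous Pfaff--Toda hierarchy, entry by entry.

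\emph{Step 1: the dressing pair.} Takasaki's description \cite{Taka} consists of:
\begin{itemize}
\item a pair $(W,\widebar W)$ of $2\times2$ dressing operators, expanded in $\mathrm{e}^{\partial_s}$ and $\mathrm{e}^{-\partial_s}$ respectively, with prescribed leading shapes;
\item the Pfaff-type constraint relating them, of the form $W J \widebar W^{*}=J$ up to his conventions;
\item Sato equations in which $\partial_{t_n}$ and $\partial_{\bar t_n}$ act by the projections of $W\,\mathrm{diag}(\Lambda^n,0)\,W^{-1}$ and of its barred analogue.
\end{itemize}
Under $\mathcal S=\mathrm{e}^{-\partial_s}$, the $\mathcal S$-adic completion becomes Takasaki's completion in $\mathrm{e}^{\partial_s}$, and the formal adjoint $*$ becomes his adjoint $(f\mathrm{e}^{k\partial_s})^*=\mathrm{e}^{-k\partial_s}f$. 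I would first check that the shape \eqref{eq:scalar-U} of $U$ coincides with the leading-order normalization of Takasaki's $\mathcal S$-adic dressing operator. This includes the $\mathcal O(\mathcal S^2)$ lower-left entry, the unit diagonal leading term in $W_1$, and the free invertible $\Delta$ in $\widebar V_2$. Then I would show that $\widebar U:=-J(U^*)^{-1}J$ of \eqref{eq:scalar-barU} is equivalent to his constraint $UJ\widebar U^*=J$, using $J^*=-J$ and $J^2=-1$. The matrix $J$ with $\mathcal S^{\mp1}$ off-diagonal entries should match his $J$ after the substitution. If his conventions differ by a fixed diagonal gauge or a sign, I would absorb it by a constant conjugation and record it.

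\emph{Step 2: the bare operators.} $T_n=\mathcal S^{-n}E_{11}=\mathrm{e}^{n\partial_s}E_{11}$ should be Takasaki's right multiplier for the $t_n$-flow. The companion relation \eqref{ptbarflows}, $D_{\widebar T_n}\widebar U=\widebar P_n\widebar U-\widebar U\mathcal S^nE_{11}$, should be the $\bar t_n$ Sato equation for his opposite dressing operator. Together with \eqref{ptflows}, this gives both halves of his Sato system: each flow acts on $U$ and on $\widebar U$ by the same left factor $P_n$ or $\widebar P_n$. This step follows formally from the involution $\theta$ and from $P_n,\widebar P_n\in\mathfrak g_+$.

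\emph{Step 3: the auxiliary operators (main obstacle).} It remains to show that $P_n$ and $\widebar P_n$ coincide with Takasaki's $B_n$ and $\widebar B_n$. He defines these by explicit truncations of products $W_i\mathrm{e}^{(n+1)\partial_s}V_j^*\cdots$ in a $2\times2$ operator form. Lemma~\ref{lem:P_barP} gives exactly such formulas, with $V_i,\widebar W_i$ the entries of $\widebar U$. The comparison is therefore term by term:
\begin{itemize}
\item match each truncation index ($\le0$, $<0$, $>0$, $\ge0$) after reversing the grading under $\mathcal S\mapsto\mathrm{e}^{-\partial_s}$;
\item match the shift conjugations $\mathcal S^{\pm1}(\cdot)\mathcal S^{\pm1}$ coming from $J$;
\item match the signs.
\end{itemize}
The delicate point is the degree-zero and boundary terms, where our asymmetric choice of $E_{22}$ rather than $I_2$ in the complement of $\mathfrak g_+$ must reproduce his convention. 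If the truncations disagree at a boundary degree, I would not fight the formulas directly. Instead I would use uniqueness of the splitting: show that Takasaki's $B_n$ lies in $\mathfrak g_+$, by verifying $B_n^*=JB_nJ$ via his constraint, and that $L_{T_n}-B_n\in\mathfrak g_-$. The latter follows by rewriting his $B_n$ as $W\,\mathrm{diag}(\Lambda^n,0)\,W^{-1}$ minus a term of the required lower shape. Lemma~\ref{lem:g_splitting} then forces $B_n=\Pi_+(L_{T_n})=P_n$, and similarly for $\widebar P_n$, which completes the identification.
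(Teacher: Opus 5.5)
Your proposal is correct and follows the paper's own route. The paper's proof is the same term-by-term dictionary with Takasaki: the relation $\widebar U=-J(U^*)^{-1}J$ against his $U^*=J_s\widebar U^{-1}J_s^{-1}$, the Sato equations \eqref{ptflows}--\eqref{ptbarflows}, and Lemma~\ref{lem:P_barP} against his explicit formulas for $P_n,\widebar P_n$; your uniqueness-of-splitting argument is only an optional safeguard. One slip to fix in Step~1: under $\mathcal S=\mathrm{e}^{-\partial_s}$ the $\mathcal S$-adic operator $U$ is a series in $\mathrm{e}^{-\partial_s}$ and $\widebar U$ is a series in $\mathrm{e}^{\partial_s}$, not the other way round.
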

\begin{proof}
Under $\mathcal S=\,\textup{e}^{-\partial_s}$, the dressing relations above and
evolution equations coincide term by term with Takasaki's formulas.
The correspondence is summarized in the following table. Notice that, because $\mathcal S=\,\textup{e}^{-\partial_s}$, the positive and negative
degree truncations in Takasaki's $\,\textup{e}^{\partial_s}$ convention are reversed
in our $\mathcal S$-convention.

\begin{center}\small 
\renewcommand{\arraystretch}{1.3}
\begin{tabular}{@{}lll@{}}
\toprule
\textbf{Data}
& \textbf{Present paper}
& \textbf{Takasaki \cite{Taka}} \\
\midrule

\textup{Shift operator}
& $\mathcal S$
& $\,\textup{e}^{-\partial_s}$ \\[1mm]
\hline 
$U,\widebar U$ and algebraic relation
& $\widebar{U}=-J(U^*)^{-1}J$
& $U^*= J_s\,\widebar{U}^{-1}J_s^{-1} $ \\[1mm]
$U\!=\!\!\begin{pmatrix}
W_1 & \widebar V_1\\
W_2 & \widebar V_2
\end{pmatrix}\!\!,~ \widebar U\!=\!\!\begin{pmatrix}
\widebar W_1&V_1\\
\widebar W_2&V_2
\end{pmatrix} $ & $J=\!\begin{pmatrix}
0&\mathcal S^{-1}\\
-\mathcal S&0
\end{pmatrix}$  & $J_s=\!\begin{pmatrix}
0&\,\textup{e}^{\partial_s}\\
-\,\textup{e}^{-\partial_s}&0
\end{pmatrix}$ \\[6mm]
\hline 
$t_n$-flows of $U,\widebar U$
& $D_{T_n}U=P_nU-UT_n$
& $\partial_{t_n} U = P_n U-(W_1 E_{11}+W_2E_{21})\,\textup{e}^{n\partial_s}$ \\[1mm]
& $D_{T_n}\widebar U=P_n\widebar U+\widebar U \mathcal S^n E_{22}$ 
& $\partial_{t_n} \widebar{U} = P_n \widebar{U} +(V_1 E_{21}+V_2E_{22})\,\textup{e}^{-n\partial_s}$ \\[1mm]

\hline 
$\widebar t_n$-flows of $U,\widebar U$
& $D_{\widebar T_n}U=\widebar P_nU-U \widebar T_n$
& $\partial_{\widebar{t}_n} U = \widebar{P}_n U +(\widebar{V}_1 E_{21}+\widebar{V}_2E_{22})\,\textup{e}^{n\partial_s}$ \\[1mm]
& $D_{\widebar T_n}\widebar U= \widebar P_n\widebar U-\widebar U \mathcal{S}^n E_{11}$
& $\partial_{\widebar{t}_n} \widebar{U} = \widebar{P}_n \widebar{U}-(\widebar{W}_1 E_{11}+\widebar{W}_2E_{21})\,\textup{e}^{-n\partial_s}$ \\[1mm]
\hline 
$P_n,\widebar{P}_n$ and their coefficients & Lemma~\ref{lem:P_barP} & \cite[Sec.\ 3.3, Theorem 2, Theorem 3]{Taka} \\
$\displaystyle
P_n\!=\!\!\begin{pmatrix}
A_n&B_n\\ C_n&D_n
\end{pmatrix}\!\!,~ \widebar P_n\!=\!\!\begin{pmatrix}
\widebar A_n&\widebar B_n\\
\widebar C_n&\widebar D_n
\end{pmatrix}$
& 
& \\[6mm]
\bottomrule
\end{tabular}
\end{center}
\end{proof}
\begin{remark}[The second discrete Pfaff--Toda direction] 
Takasaki's formulation of the Pfaff--Toda hierarchy contains a
second discrete variable $r$ \cite{Taka}. This direction can be recovered rationally from the dressing operator $U$, without introducing a second shift in the definition of the difference algebra~$\mathcal V$. We write:
\begin{equation}
    \widebar V_1=\eta+\eta_1 \mathcal S+\mathcal O(\mathcal S^2),
    \qquad
    \widebar V_2=\Delta+\delta_1 \mathcal S+\mathcal O(\mathcal S^2), \qquad \Gamma=
    \begin{pmatrix}
        \mathcal S^{-1}&0\\
        0&\mathcal S
    \end{pmatrix}.
\end{equation}
On the open set where $\eta$ is invertible, we define
\begin{equation}\label{eq:beta_gamma_a_b}
    \beta=\mathcal S^{-1}(\eta\Delta^{-1}),
    \qquad
    \gamma=\Delta\eta^{-1}, \qquad  a=
    \bigl(
        \beta \mathcal S^{-1}(\delta_1)-\mathcal S^{-1}(\eta_1)
    \bigr)\eta^{-1},
    \qquad
    b=\beta\gamma,
\end{equation}
and introduce the finite-band matrix difference operator
\begin{equation*}
    K(U)=
    \begin{pmatrix}
        \mathcal S^{-1}+a+b \mathcal S & -\beta \mathcal S^{-1}\\
        \gamma \mathcal S & 0
    \end{pmatrix}.
\end{equation*}
We may then define a rational transformation of the dressing variables by
\begin{equation}\label{eq:rational_transf_U}
    \mathcal T(U):=K(U)U\,\Gamma^{-1}.
\end{equation}
Since $\mathcal V$ is freely generated as an $\mathcal S$-difference algebra by the coefficients of $U$, the definition~\eqref{eq:rational_transf_U} determines an $\mathcal S$-difference homomorphism $\mathcal T:\mathcal V\to \operatorname{Frac}(\mathcal V)$, with $\mathcal T \mathcal S=\mathcal S \mathcal T$. Indeed, the upper right entry of~\eqref{eq:rational_transf_U} is
\begin{equation}\label{eq:T_V1}
    \mathcal T(\widebar V_1)=\Bigl(
        (\mathcal S^{-1}+a+b \, \mathcal S)\widebar V_1
        -\beta \, \mathcal S^{-1}\widebar V_2 \Bigr)\mathcal S^{-1}.
\end{equation}
The definitions of $\beta$ and $a$ in~\eqref{eq:beta_gamma_a_b} are precisely the conditions that the coefficients of $\mathcal S^{-1}$ and~$\mathcal S^0$ in the expression in parentheses in~\eqref{eq:T_V1}
vanish. Hence $T(\widebar V_1)=\mathcal O(1)$, while the other entries immediately give
\begin{equation*}
    \mathcal T(W_1)=1+\mathcal O(\mathcal S), \qquad 
    \mathcal T(W_2)=\mathcal O(\mathcal S^2), \qquad 
    \mathcal T(\widebar V_2)=\gamma \,\mathcal S\widebar V_1\mathcal S^{-1}=\mathcal T(\Delta)+\mathcal O(\mathcal S),  
\end{equation*}
with $\mathcal T(\Delta)=\gamma \, \mathcal S(\eta)=\Delta\,\frac{\mathcal S(\eta)}{\eta}$. Thus, $\mathcal T(U)$ has again the normalized dressing form~\eqref{eq:scalar-U}.
Moreover, the definitions of $\beta,\gamma$ and $b$ in~\eqref{eq:beta_gamma_a_b} imply
\begin{equation}\label{eq:KU_symplectic}
    K(U)J K(U)^*=J.
\end{equation}
Taking into account~\eqref{eq:scalar-barU}, $\Gamma^*=\Gamma^{-1}$ and $\Gamma^{-1}J=J\Gamma$, this gives $\mathcal T(\widebar U)=K(U)\,\widebar U\,\Gamma$.
Therefore,
\begin{equation}\label{eq:TU_Gamma}
    \mathcal T(U)\Gamma=K(U)U, \qquad \mathcal T(\widebar U)\Gamma^{-1}=K(U)\widebar U.
\end{equation}
With the convention $\mathcal S=\,\textup{e}^{-\partial_s}$, the equations in~\eqref{eq:TU_Gamma} are exactly Takasaki's discrete $r\mapsto r+1$ dressing equations~\cite[(3.8)]{Taka}, and the operator $K(U)$ has the finite-band form displayed in~\cite[(3.9)]{Taka}.

For this reason, the second discrete Pfaff--Toda direction need not be included as an independent shift in the original definition of $\mathcal V$. Indeed, it is realized as a rational transformation of the dressing variables determined by $U$~\eqref{eq:scalar-U} itself.
\end{remark}

\begin{remark}
One may ask whether the continuous Pfaff--Toda hierarchy admits a
formulation purely in terms of a pair of Lax operators, as in $2$D
Toda.  Such a pair can indeed be constructed from the opposite
dressing frames $U$ and $\widebar U$.  Unlike in $2$D Toda, however, the
two operators are not independent, because their dressing frames
satisfy
\begin{equation*}
\widebar U=-J(U^*)^{-1}J.
\end{equation*}
Any Lax-pair formulation must therefore retain an additional
compatibility relation encoding this common Pfaff dressing datum.
For this reason we work directly at the dressing level.
\end{remark}

\subsection{The odd extension of Pfaff--Toda}
\label{sec:odd_extension_PT}
The diagonal Pfaff--Toda flows do not exhaust the $N=1$ bare operator algebra. Here we introduce the additional off-diagonal directions $Q^+$ and $Q^-$, and we state the commutation relations between the corresponding dressing derivations and the diagonal ones. These relations follow directly from the Lie algebra structure of the bare operators and from the zero-curvature representation of Section~\ref{sec:dressing_zero_curvature}.

For $m\geq 0$, we define
\begin{equation*}
Q_m^+ := \mathcal S^{-m-1}E_{21},
\qquad
Q_m^- := \mathcal S^{-m}E_{12}.
\end{equation*}
The corresponding dressing derivations $D_{Q_m^+}$ and $D_{Q_m^-}$,
together with the diagonal Pfaff--Toda directions
$D_{T_n}$ and $D_{\widebar T_n}$, form a non-abelian algebra of evolutionary
derivations.

\begin{proposition}
For $m,n\geq 0$ and $k\geq 1$, we have 
\begin{eqnarray*}
    \begin{aligned}
[D_{T_k},D_{Q_n^+}]&=D_{Q_{k+n}^+},
&\qquad [D_{\widebar T_k},D_{Q_n^+}]&=D_{Q_{k+n}^+}, \\
[D_{T_k},D_{Q_n^-}]&=-D_{Q_{k+n}^-},
&\qquad [D_{\widebar T_k},D_{Q_n^-}]&=-D_{Q_{k+n}^-}, 
\end{aligned}
\end{eqnarray*}
and
\begin{eqnarray*}
    \begin{aligned}
[D_{Q_m^+},D_{Q_n^+}]&=[D_{Q_m^-},D_{Q_n^-}]=0, &\qquad
[D_{Q_m^+},D_{Q_n^-}]&=D_{T_{m+n+1}}+D_{\widebar T_{m+n+1}}.
\end{aligned}
\end{eqnarray*}
\end{proposition}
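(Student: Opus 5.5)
The plan is to derive all six relations from the Lie algebra identity $[D_x,D_y]=-D_{[x,y]}$ of Proposition~\ref{prop:dressing_derivations_zero}, together with the $\mathbb C$-linearity of $x\mapsto D_x$. Linearity holds because $x\mapsto L_x=UxU^{-1}$ is linear and $\Pi_-$ is linear. Then $D_x(U)=-\Pi_-(L_x)U$ depends linearly on $x$, and two derivations of $\mathcal B_1$ that agree on the free generators agree after localization. This gives $D_{\lambda x+\mu y}=\lambda D_x+\mu D_y$ and in particular $D_{-x}=-D_x$. Consequently, the whole statement reduces to computing commutators of the bare operators in $\mathfrak a=M_2(\mathbb C[\mathcal S,\mathcal S^{-1}])$.

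The bare operators have constant coefficients, so the powers of $\mathcal S$ are central in $\mathfrak a$. Hence $[\mathcal S^{p}E_{ij},\mathcal S^{q}E_{kl}]=\mathcal S^{p+q}[E_{ij},E_{kl}]$, and only matrix-unit commutators are needed. Using $[E_{11},E_{21}]=-E_{21}$ and $[E_{22},E_{21}]=E_{21}$, I get
\begin{equation*}
[T_k,Q_n^+]=-Q^+_{k+n},\qquad [\widebar T_k,Q_n^+]=-Q^+_{k+n}.
\end{equation*}
Using $[E_{11},E_{12}]=E_{12}$ and $[E_{22},E_{12}]=-E_{12}$, I get
\begin{equation*}
[T_k,Q_n^-]=Q^-_{k+n},\qquad [\widebar T_k,Q_n^-]=Q^-_{k+n}.
\end{equation*}
In each case the minus sign carried by $\widebar T_k=-\mathcal S^{-k}E_{22}$ must be tracked. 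Applying $[D_x,D_y]=-D_{[x,y]}$ then gives the first four identities.

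For the off-diagonal relations, $E_{21}^2=E_{12}^2=0$ shows that $[Q_m^\pm,Q_n^\pm]=0$, so the corresponding derivations commute. For the mixed case,
\begin{equation*}
[Q_m^+,Q_n^-]=\mathcal S^{-m-n-1}(E_{22}-E_{11})=-\widebar T_{m+n+1}-T_{m+n+1},
\end{equation*}
and hence $[D_{Q_m^+},D_{Q_n^-}]=D_{T_{m+n+1}}+D_{\widebar T_{m+n+1}}$. The indices stay in range: $k+n\ge 1$ and $m+n+1\ge1$, so every operator produced is again one of the named families.

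There is no real obstacle here. The only points needing care are the sign conventions, namely the minus sign in the morphism $-D$ and in $\widebar T_n$, and stating explicitly the linearity of $D$ in the bare operator, since Proposition~\ref{prop:dressing_derivations_zero} records only the bracket relation.
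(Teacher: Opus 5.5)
Your proposal is correct and matches the paper's proof. You compute the same bare-operator commutators, $[T_k,Q_n^\pm]=[\widebar T_k,Q_n^\pm]=\mp Q_{k+n}^\pm$, $[Q_m^\pm,Q_n^\pm]=0$ and $[Q_m^+,Q_n^-]=-T_{m+n+1}-\widebar T_{m+n+1}$, and then apply $[D_x,D_y]=-D_{[x,y]}$ from Proposition~\ref{prop:dressing_derivations_zero}; your explicit check that $x\mapsto D_x$ is linear only spells out what the paper leaves implicit when it calls $-D$ a Lie algebra morphism.
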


\begin{proof}
The bare operators satisfy the relations
\begin{eqnarray*}
\begin{aligned}
[T_k,Q_n^+]&=[\widebar T_k,Q_n^+]=-Q_{k+n}^+,
&\qquad [T_k,Q_n^-]&=[\widebar T_k,Q_n^-]=Q_{k+n}^-,\\[1mm]
[Q_m^+,Q_n^+]&=[Q_m^-,Q_n^-]=0, &\qquad 
[Q_m^+,Q_n^-]&=-T_{m+n+1}-\widebar T_{m+n+1}.
\end{aligned}
\end{eqnarray*}
The proposition follows from these relations by Proposition~\ref{prop:dressing_derivations_zero}.
\end{proof}

\subsection{The Adler--Pfaff hierarchy inside the odd extension}
\label{sec:odd_extension}
We now recombine the diagonal and off-diagonal directions into the powers of a single bare operator~$\Lambda$. To compare the resulting flows with the Adler--Pfaff hierarchy of Section~\ref{sec:adler-pfaff}, we apply a fixed shift gauge~$Q$ and a diagonal degree-zero gauge~$H$. The main result is that the symmetrically gauged operator $\mathcal L$ satisfies exactly the Adler--Pfaff Lax equations on the extended phase space of Section~\ref{subsec:extended-pfaff-space}.

We introduce the operator
\begin{equation}\label{fundope}
\Lambda=\begin{pmatrix}
0&1\\
\mathcal S^{-1}&0
\end{pmatrix},
\qquad
\Lambda^2=\mathcal S^{-1}I_2.
\end{equation}
The even powers of $\Lambda$ belong entirely to the diagonal
Pfaff--Toda sector, whereas the odd powers involve precisely the two
additional off-diagonal directions of Section~\ref{sec:odd_extension_PT}:
\begin{equation}
\Lambda^{2n}=T_n-\widebar T_n,
\qquad
\Lambda^{2n+1}=Q_n^++Q_n^-.
\label{lamrel}
\end{equation}
At the level of evolutionary derivations,~\eqref{lamrel} becomes
\begin{equation}
D_{\Lambda^{2n}}
=
D_{T_n}-D_{\widebar T_n},
~~ n\geq1,
\qquad
D_{\Lambda^{2n+1}}
=
D_{Q_n^+}+D_{Q_n^-},
~~ n\geq0.
\label{eq:pfaff-odd-flows}
\end{equation}
Since the powers of $\Lambda$ commute and $-D$ is a Lie algebra
morphism, the derivations $D_{\Lambda^k}$ pairwise commute. By applying Proposition~\ref{prop:dressing_derivations_zero} to the dressed operator $\Lambda$, 
\begin{equation}\label{eq:ungauged-pfaff-dressing}
    L_{\Lambda}=U\Lambda\, U^{-1} 
\end{equation}
the flows with respect to the dressing derivations are determined by
\begin{equation}\label{eq:ungauged-pfaff-flow}
D_{\Lambda^k}L_{\Lambda}=\bigl[\Pi_+(L_{\Lambda}^k),L_{\Lambda}\bigr].
\end{equation}
This is not yet the symmetric presentation used for the
Adler--Pfaff Lax operator in Section~\ref{sec:adler-pfaff}: there are two
discrepancies. First, the
Pfaff--Toda splitting of Section~\ref{sec:multicomponent-hierarchy} is defined using the
shift-dependent symplectic matrix $J$, whereas the Pfaff splitting of Section~\ref{sec:adler-pfaff} is written with the constant symplectic matrix $J_0$.
Second, even after removing these shifts, the degree-zero part of the corresponding auxiliary operator differs from the symmetric Adler--Pfaff normalization.  The first discrepancy is removed by a fixed shift gauge $Q$, while the second is corrected by a diagonal degree-zero gauge $H$. 
We define
\begin{equation}\label{eq:Vtilde}
\widetilde{\mathcal V}:=\mathcal V\bigl[\mathcal S^j(\rho^{\pm1})\mid j\in\mathbb Z\bigr],\qquad\rho^2=\mathcal S^{-1}(\Delta). 
\end{equation}
For all $x \in \mathfrak a$, we extend $D_x$ uniquely as an evolutionary derivation of $\tilde{\mathcal V}$ by
\begin{equation*}
D_x(\rho)= \frac{\mathcal S^{-1}(D_x\Delta)}{2\rho},
\end{equation*}
and we further extend it coefficient-wise to $M_2(\widetilde{\mathcal V}((\mathcal S )))$. By uniqueness of this extension, the Lie relations among the
derivations $D_x$ are preserved on $\widetilde{\mathcal V}$.
We introduce
\begin{equation}
Q=\begin{pmatrix}
1&0\\
0&\mathcal S^{-1}
\end{pmatrix},
\qquad
H=\begin{pmatrix}
\rho&0\\
0&\rho^{-1}
\end{pmatrix},
\qquad
G:=HQ=\begin{pmatrix}
\rho&0\\
0&\rho^{-1}\mathcal S^{-1}
\end{pmatrix}.
\label{QHG}
\end{equation}
Let $\mathcal{L}$ be the symmetrically gauged operator
\begin{equation}\label{eq:symmetric-pfaff-lax}
\mathcal L:=GL_{\Lambda}G^{-1}.
\end{equation}
By the asymptotic form of $U$ in~\eqref{eq:scalar-U}, $\mathcal{L}$ has the form
\begin{equation*}
\mathcal L=
\begin{pmatrix}
\mathcal O(\mathcal S^{-1}) & \mathcal O(1)\\
\mathcal O(\mathcal S^{-2}) & \mathcal O(\mathcal S^{-1})
\end{pmatrix},
\end{equation*}
where the coefficient of $\mathcal S^{-2}$ in the $(2,1)$-entry is invertible. Thus $\mathcal L$ has precisely the support of the extended Adler--Pfaff phase space $\mathcal M_{\rm ext}$ introduced in Section~\ref{subsec:extended-pfaff-space}.
Differentiating~\eqref{eq:symmetric-pfaff-lax} along $D_{\Lambda^k}$ gives
\begin{equation}
D_{\Lambda^k}\mathcal L
=
[B_k,\mathcal L],
\label{eq:gauged-pfaff-lax}
\end{equation}
where
\begin{equation}
B_k=(D_{\Lambda^k}G)G^{-1}+G\,\Pi_+(L_{\Lambda}^k)G^{-1}.
\label{eq:gauged-pfaff-auxiliary}
\end{equation}
The symmetric gauge is chosen so that $B_k$ coincides with the Adler--Pfaff auxiliary operator.
\begin{lemma}
For every $k\geq1$,
\begin{equation}
B_k
=
P_+^{\mathrm{Pf}}(\mathcal L^k),
\label{eq:pfaff-gauge-identity}
\end{equation}
where $P_\pm^{\rm Pf}$ denote the coefficient-wise extensions of the Pfaff projections of Section~\ref{sec:adler-pfaff} to $M_2\bigl(\widetilde{\mathcal V}((\mathcal S))\bigr)$.
\end{lemma}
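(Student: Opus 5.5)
The plan is to use the uniqueness of the decomposition $\mathfrak g=\mathfrak g^{\rm Pf}_+\oplus\mathfrak g^{\rm Pf}_-$ from Section~\ref{sec:adler-pfaff}, extended coefficient-wise to $\widetilde{\mathcal V}$. Since $G$ is invertible and $L_\Lambda^k=L_{\Lambda^k}$, we have $\mathcal L^k=GL_\Lambda^kG^{-1}$. Hence
\begin{equation*}
\mathcal L^k-B_k=G\,\Pi_-(L_\Lambda^k)\,G^{-1}-(D_{\Lambda^k}G)G^{-1}.
\end{equation*}
It therefore suffices to prove two facts: (i) $B_k\in\mathfrak g^{\rm Pf}_+$, and (ii) $G\,\Pi_-(L_\Lambda^k)G^{-1}-(D_{\Lambda^k}G)G^{-1}\in\mathfrak g^{\rm Pf}_-$. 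Together they give \eqref{eq:pfaff-gauge-identity}.

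\textbf{Step (i): symmetry.} First compute $G J G^*=J_0$. Since $Q^*=\mathrm{diag}(1,\mathcal S)$, a direct multiplication gives $QJQ^*=J_0$. Since $H$ is diagonal of degree zero with determinant $1$, we have $H^*=H$ and $HJ_0H=J_0$. From $GJG^*=J_0$ one obtains $(G^*)^{-1}J=J_0G$ and $JG^{-1}=G^*J_0$. Then for $M$ with $M^*=JMJ$,
\begin{equation*}
(GMG^{-1})^*=(G^*)^{-1}JMJG^*=J_0\,(GMG^{-1})\,J_0 .
\end{equation*}
Also $G\,\mathcal V[\mathcal S,\mathcal S^{-1}]\,G^{-1}$ stays Laurent. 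So conjugation by $G$ maps $\mathfrak g_+$ into $\mathfrak g^{\rm Pf}_+$, and $G\,\Pi_+(L_\Lambda^k)G^{-1}\in\mathfrak g^{\rm Pf}_+$.

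Next, $Q$ is constant, so $(D_{\Lambda^k}G)G^{-1}=(D_{\Lambda^k}H)H^{-1}=\mathrm{diag}(r,-r)$ with $r=D_{\Lambda^k}\rho/\rho$. A one-line check shows that $J_0\,\mathrm{diag}(r,-r)\,J_0=\mathrm{diag}(r,-r)$. Hence this term also lies in $\mathfrak g^{\rm Pf}_+$, which proves (i).

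\textbf{Step (ii): support.} By Lemma~\ref{lem:g_splitting} with $N=1$, $X:=\Pi_-(L_\Lambda^k)$ has shape $\bigl(\begin{smallmatrix}\mathcal O(\mathcal S)&\mathcal O(1)\\ \mathcal O(\mathcal S^2)&\mathcal O(1)\end{smallmatrix}\bigr)$. Conjugating by $Q$ multiplies the $(1,2)$ entry on the right by $\mathcal S$ and the $(2,1)$ entry on the left by $\mathcal S^{-1}$, and sends $X_{22}\mapsto\mathcal S^{-1}X_{22}\mathcal S$. Conjugation by $H$ preserves degrees. Writing $d_0$ for the degree-zero coefficient of $X_{22}$, this gives
\begin{equation*}
GXG^{-1}=\mathrm{diag}\bigl(0,\mathcal S^{-1}(d_0)\bigr)+\mathcal S M_2\bigl(\widetilde{\mathcal V}[[\mathcal S]]\bigr).
\end{equation*}
Subtracting $\mathrm{diag}(r,-r)$, the degree-zero part is $\mathrm{diag}\bigl(-r,\ \mathcal S^{-1}(d_0)+r\bigr)$. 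This is scalar exactly when
\begin{equation*}
r=-\tfrac12\,\mathcal S^{-1}(d_0).
\end{equation*}

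\textbf{Main step: verifying this identity for $r$.} This is where the specific choice $\rho^2=\mathcal S^{-1}(\Delta)$ enters. I take the degree-zero coefficient of the $(2,2)$ entry in $D_{\Lambda^k}U=-XU$, using the shape \eqref{eq:scalar-U}. The term $X_{21}\widebar V_1$ is $\mathcal O(\mathcal S^2)\cdot\mathcal O(1)$, so it does not contribute. The term $X_{22}\widebar V_2$ contributes $d_0\Delta$ at degree zero. Hence
\begin{equation*}
D_{\Lambda^k}\Delta=-d_0\Delta .
\end{equation*}
From the defining extension $D(\rho)=\mathcal S^{-1}(D\Delta)/(2\rho)$,
\begin{equation*}
r=\frac{\mathcal S^{-1}(D_{\Lambda^k}\Delta)}{2\,\mathcal S^{-1}(\Delta)}=-\tfrac12\,\mathcal S^{-1}(d_0),
\end{equation*}
which is exactly the required identity.

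The only delicate point I anticipate is bookkeeping of the shift conventions, namely $\mathcal S^{-1}a\mathcal S=\mathcal S^{-1}(a)$ and the orientation of $Q$. I would double-check these against $QJQ^*=J_0$. With (i) and (ii) established, uniqueness of the Pfaff splitting gives $B_k=P^{\rm Pf}_+(\mathcal L^k)$, and consequently $L^k-B_k=P^{\rm Pf}_-(\mathcal L^k)$ as well.
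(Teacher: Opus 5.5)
Your argument is correct, but you assemble it differently from the paper.

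\textbf{The paper's route.} It transports the Pfaff--Toda projection by $Q$ (via $QJQ^{-1}=J_0$) to the explicit projection \eqref{eq:q-transported-projection}, and checks that conjugation by $H$ commutes with it. This gives $B_k=(D_{\Lambda^k}H)H^{-1}+\widehat\Pi_+(\mathcal L^k)$. It then computes $D_{\Lambda^k}\rho/\rho=-\tfrac12\operatorname{tr}\bigl((\mathcal L^k)_0\bigr)$ from the degree-zero $(2,2)$-coefficient of the dressing equation for $\widehat U=QUQ^{-1}$. Finally it matches the result entry by entry with \eqref{plussplittngpfaffpseudo}.

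\textbf{Your route.} You never write down a projection. You characterize $P^{\rm Pf}_+(\mathcal L^k)$ by its image and kernel, and use $GJG^*=J_0$ to see that $B_k$ is a $J_0$-symmetric Laurent operator. Membership of the remainder in $\mathfrak g^{\rm Pf}_-$ then reduces to the single scalar identity $r=-\tfrac12\mathcal S^{-1}(d_0)$. You verify this from $D_{\Lambda^k}\Delta=-d_0\Delta$. It agrees with the paper's formula, since $\mathcal S^{-1}(d_0)=\operatorname{tr}\bigl((\widehat L^k)_0\bigr)$.

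\textbf{What each buys.} Your approach shows directly that the normalization $\rho^2=\mathcal S^{-1}(\Delta)$ is exactly what makes the degree-zero part of the remainder scalar. It also avoids the paper's entrywise bookkeeping. For instance, the $(2,1)$-entry of \eqref{eq:q-transported-projection} uses $(\cdot)_{<0}+((\cdot)^*)_{\geq0}$, while \eqref{plussplittngpfaffpseudo} uses $(\cdot)_{\leq0}+((\cdot)^*)_{>0}$. These agree only because degree-zero coefficients are self-adjoint when $N=1$. The paper's route, in return, gives an explicit formula for $B_k$ in terms of the blocks of $\mathcal L^k$.

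\textbf{A cosmetic slip.} In Step (i), the identity you need for the right-hand factor is $JG^*=G^{-1}J_0$, not $JG^{-1}=G^*J_0$. It follows immediately from $GJG^*=J_0$.
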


\begin{proof}
We treat the two factors $Q$ and $H$ in~\eqref{QHG} separately. First, we conjugate $L_\Lambda$ and transport the Pfaff--Toda splitting by~$Q$. We introduce $\widehat L:=QL_{\Lambda}Q^{-1}$
and define the transported projection as
\begin{equation*}
\widehat\Pi_+(Y):=Q\,\Pi_+(Q^{-1}YQ)Q^{-1}.
\end{equation*}
The kernel of $\widehat\Pi_+$ is
\begin{equation}\label{eq:ker_hatPi}
{\rm ker}\, \widehat\Pi_+ =Q\,g_-Q^{-1}=\begin{pmatrix}
\mathcal O(\mathcal S)&\mathcal O(\mathcal S)\\
\mathcal O(\mathcal S)&\mathcal O(1)
\end{pmatrix},
\end{equation}
whereas the image of $\widehat\Pi_+$ is
\begin{equation}\label{eq:im_hatPi}
{\rm im}\, \widehat\Pi_+=Q\,g_+Q^{-1}=\left\{
Y\in M_2(\widetilde{\mathcal V}[\mathcal S,\mathcal S^{-1}])\ \middle|\
Y^*=J_0YJ_0
\right\}.
\end{equation}
Here we work over the extension $\widetilde{\mathcal V}$ defined in~\eqref{eq:Vtilde} and use
$QJQ^{-1}=J_0$. Explicitly, the action of~$\widehat{\Pi}_+$ on an element of
$M_2(\widetilde{\mathcal V}((\mathcal S)))$ is
\begin{equation}\label{eq:q-transported-projection}
\widehat\Pi_+
\begin{pmatrix}
A&B\\
C&D
\end{pmatrix}=\begin{pmatrix}
A_{\leq0}-(D^*)_{>0}&B_{\leq0}+(B^*)_{>0}
\\[1mm]
C_{<0}+(C^*)_{\geq0}&D_{<0}-(A^*)_{\geq0}
\end{pmatrix}.
\end{equation}
The second gauge $H$ has degree zero and satisfies
\begin{equation*}
H^*J_0H=J_0.
\end{equation*}
Thus, the conjugation by $H$ preserves both the image and the kernel of $\widehat\Pi_+$, and therefore commutes with the projection.
Since $Q$ is independent of the hierarchy variables and $\mathcal L=H\widehat L H^{-1}$, equation~\eqref{eq:gauged-pfaff-auxiliary} becomes
\begin{equation}\label{eq:Bk-after-QH}
B_k=(D_{\Lambda^k}H)H^{-1}+\widehat\Pi_+(\mathcal L^k).
\end{equation}
It remains to determine the logarithmic derivative of $H$.
Let $\widehat U:=Q U Q^{-1}$ be the conjugation of $U$ in~\eqref{eq:scalar-U} by $Q$ in~\eqref{QHG}, then
\begin{equation*}
\widehat U=
\begin{pmatrix}
1&0\\
0&\rho^2
\end{pmatrix}
+\mathcal O(\mathcal S).
\end{equation*}
On the other hand, since
\begin{equation*}
D_{\Lambda^k} U
=-\Pi_-(L^k)U
=
\bigl(\Pi_+(L^k)-L^k\bigr)U,
\end{equation*}
and $D_{\Lambda^k}(Q)=0$, for $\widehat U$ we obtain
\begin{equation*}
D_{\Lambda^k}\widehat U
=
\bigl(
\widehat\Pi_+(\widehat L^k)-\widehat L^k
\bigr)\widehat U.
\end{equation*}
The operator in parentheses belongs to $Q\,g_-Q^{-1}$, whose only
possible degree-zero entry is the $(2,2)$-entry.  Taking the leading
coefficient of that entry and using
\eqref{eq:q-transported-projection} gives
\begin{equation*}
D_{\Lambda^k}(\rho^2)
=
-\operatorname{tr}\bigl((\widehat L^k)_0\bigr)\rho^2.
\end{equation*}
Since $H$ has degree zero we can write $(\mathcal L^k)_0=H(\widehat L^k)_0H^{-1}$,
and therefore $\text{tr}\bigl((\widehat L^k)_0\bigr)=\text{tr}\bigl((\mathcal L^k)_0\bigr)$.
Hence
\begin{equation}
\frac{D_{\Lambda^k}\rho}{\rho}
=
-\frac12
\operatorname{tr}\bigl((\mathcal L^k)_0\bigr),
\label{eq:rho-pfaff-evolution}
\end{equation}
and
\begin{equation*}
(D_{\Lambda^k} H)H^{-1}
=
\frac12
\operatorname{tr}\bigl((\mathcal L^k)_0\bigr)
\begin{pmatrix}
-1&0\\
0&1
\end{pmatrix}.
\end{equation*}
Substituting this into~\eqref{eq:Bk-after-QH}, and writing
\begin{equation*}
\mathcal L^k=
\begin{pmatrix}
E&F\\
G&K
\end{pmatrix},
\end{equation*}
we get
\begin{equation*}
B_k=
\begin{pmatrix}
E_{<0}
+\frac12(E-K)_0
-(K^*)_{>0}
&
F_{\leq0}+(F^*)_{>0}
\\[1mm]
G_{<0}+(G^*)_{\geq0}
&
K_{<0}
+\frac12(K-E)_0
-(\,E^*)_{>0}
\end{pmatrix}.
\end{equation*}
Comparing this with the definition~\eqref{plussplittngpfaffpseudo} of the Pfaff projection,
we obtain
\begin{equation*}
B_k=P_+^{\rm Pf}(\mathcal L^k),
\end{equation*}
which proves the lemma.
\end{proof}

\begin{proposition}\label{prop:Pfaff_flows_symm_gauged}
The symmetrically gauged operator $\mathcal L$ satisfies the
Adler--Pfaff Lax equations
\begin{equation}\label{eq:Lax_eq_symm_gauged}
D_{\Lambda^k}\mathcal L=\bigl[P_+^{\rm Pf}(\mathcal L^k),\mathcal L\bigr]=\bigl[-P_-^{\rm Pf}(\mathcal L^k),\mathcal L\bigr],
\qquad k\geq1.
\end{equation}
In particular, the commuting subhierarchy generated by the powers
of $\Lambda$ in the odd Pfaff--Toda extension is carried by the
symmetric gauge to the Adler--Pfaff Lax equations on the extended
phase space of Section~\ref{subsec:extended-pfaff-space}.
\end{proposition}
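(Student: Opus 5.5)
The plan is to combine the gauged Lax equation \eqref{eq:gauged-pfaff-lax} with the preceding lemma, and then convert the $P_+^{\rm Pf}$ form into the $P_-^{\rm Pf}$ form using that $\mathcal L$ commutes with its own powers.

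First I will check that \eqref{eq:gauged-pfaff-lax} is the correct starting point. Apply Proposition~\ref{prop:dressing_derivations_zero} with $x=\Lambda^k$ and $y=\Lambda$. Since $[\Lambda^k,\Lambda]=0$ and $L$ is an algebra morphism, so that $L_{\Lambda^k}=L_\Lambda^k$, this gives \eqref{eq:ungauged-pfaff-flow}:
\begin{equation*}
D_{\Lambda^k}L_\Lambda=[\Pi_+(L_\Lambda^k),L_\Lambda].
\end{equation*}
The gauge $G=HQ$ has coefficients in $\widetilde{\mathcal V}$, and $D_{\Lambda^k}$ has been extended to $\widetilde{\mathcal V}$ via $D_x(\rho)$. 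The Leibniz rule applied to $\mathcal L=GL_\Lambda G^{-1}$, using $D(G^{-1})=-G^{-1}(DG)G^{-1}$, then gives
\begin{equation*}
D_{\Lambda^k}\mathcal L=[B_k,\mathcal L],
\end{equation*}
with $B_k$ as in \eqref{eq:gauged-pfaff-auxiliary}.

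The preceding lemma identifies $B_k=P_+^{\rm Pf}(\mathcal L^k)$, which yields the first equality in \eqref{eq:Lax_eq_symm_gauged}. For the second equality, note that $\mathcal L^k=GL_\Lambda^kG^{-1}=GL_{\Lambda^k}G^{-1}$, so $[\mathcal L^k,\mathcal L]=0$. Writing $P_+^{\rm Pf}=1-P_-^{\rm Pf}$, we get
\begin{equation*}
[P_+^{\rm Pf}(\mathcal L^k),\mathcal L]=-[P_-^{\rm Pf}(\mathcal L^k),\mathcal L].
\end{equation*}
The projections here are the coefficient-wise extensions to $\widetilde{\mathcal V}$, for which the splitting \eqref{decomp} still holds verbatim, since its proof only uses coefficient-wise operations.

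For the final assertion, I will recall that $\mathcal L$ has the support of $\mathcal M_{\rm ext}$, with invertible $\mathcal S^{-2}$-coefficient in the $(2,1)$-entry, as noted after \eqref{eq:symmetric-pfaff-lax}. The right-hand side $-[P_-^{\rm Pf}(\mathcal L^k),\mathcal L]$ stays in this support by Lemma~\ref{lem:left_right_invariance}. Hence the equations are exactly the extended hierarchy \eqref{extendedhierarchy} evaluated at $\mathcal L$. Formally, define the difference homomorphism $\mathcal V_{\rm ext}\to\widetilde{\mathcal V}$ sending $\mathcal L_{\rm ext}\mapsto\mathcal L$. The equations show it intertwines $\partial_{t_k}$ with $D_{\Lambda^k}$, because both derivations commute with $\mathcal S$ and are determined by their values on the free generators.

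I expect the main obstacle to be conceptual bookkeeping rather than computation. One must make sure that $D_{\Lambda^k}$ on $\widetilde{\mathcal V}$ commutes with $\mathcal S$ and with the Pfaff projections, which follows from Remark~\ref{rem:derivation_PfaffToda} and the definition of $D_x(\rho)$. The other point to check is that $G$ is invertible in the relevant completion: $Q$ is an invertible Laurent monomial matrix and $\rho$ is a unit, so this holds. With these points settled, the rest is immediate from the lemma.
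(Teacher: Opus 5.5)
Your proposal is correct and follows the paper's proof. The first equality comes from the gauged Lax equation \eqref{eq:gauged-pfaff-lax} combined with the preceding lemma's identity $B_k=P_+^{\rm Pf}(\mathcal L^k)$. The second comes from $\mathcal L^k=P_+^{\rm Pf}(\mathcal L^k)+P_-^{\rm Pf}(\mathcal L^k)$ and $[\mathcal L^k,\mathcal L]=0$.

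Your additional checks are sound and only make explicit what the paper leaves implicit. These are the rederivation of \eqref{eq:gauged-pfaff-lax} by the Leibniz rule, and the formalization of the ``in particular'' clause through the difference homomorphism $\mathcal V_{\rm ext}\to\widetilde{\mathcal V}$, $\mathcal L_{\rm ext}\mapsto\mathcal L$, which intertwines $\partial_{t_k}$ with $D_{\Lambda^k}$.
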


\begin{proof}
The first equality follows from
\eqref{eq:gauged-pfaff-lax} and
\eqref{eq:pfaff-gauge-identity}. Since
\begin{equation*}
\mathcal L^k
=
P_+^{\rm Pf}(\mathcal L^k)
+
P_-^{\rm Pf}(\mathcal L^k)
\end{equation*}
and $[\mathcal L^k,\mathcal L]=0$, we also have
\begin{equation*}
\bigl[P_+^{\rm Pf}(\mathcal L^k),\mathcal L\bigr]
=
\bigl[-P_-^{\rm Pf}(\mathcal L^k),\mathcal L\bigr].
\end{equation*}
\end{proof}

\section{Toda reductions}
\label{sec:toda-reductions}

We study suitable reductions of the Matrix Pfaff--Toda hierarchy obtained by restricting both the dressing operator and the bare algebra. We first impose a block-diagonal condition for arbitrary $N$ and show that the universal zero-curvature representation descends to this reduction.
Then, further restricting to the diagonal Laurent subalgebra and changing the dressing presentation yields the multicomponent $2D$ Toda hierarchy.

For $N=1$, we place the resulting scalar Toda hierarchy in the same symmetric gauge as the Pfaff hierarchy of Section~\ref{sec:odd_extension}. The anti-diagonal Toda flows then give the even Pfaff hierarchy, while the Krichever--Zabrodin dressing constraint as in~\cite{KriZab} yields C--Toda.

\subsection{Block-diagonal dressing reduction}
Recall that the universal dressing operator of the
Matrix Pfaff--Toda hierarchy has the form~\eqref{eq:dressing-U}
\begin{equation*}
U_N=
\begin{pmatrix}
W_1 & \widebar V_1\\
W_2 & \widebar V_2
\end{pmatrix}
=
\begin{pmatrix}
I_N+a_0+\mathcal O(\mathcal S) & b_{-1}\mathcal S^{-1}+\mathcal O(1)\\
c_1 \mathcal S+\mathcal O(\mathcal S^2) & \Delta+\mathcal O(\mathcal S)
\end{pmatrix},
\end{equation*}
where $a_0,b_{-1},c_1\in\mathfrak n_-,\Delta\in \mathfrak b_-$
and $\Delta$ is invertible over $\mathcal V_N$. We introduce the diagonal and anti-diagonal versions of $U_N$ as 
\begin{equation}\label{eq:UdiagUadiag}
U_N^{\mathrm{diag}}=\begin{pmatrix}
W_1&0\\
0&\widebar V_2
\end{pmatrix},
\qquad
U_N^{\mathrm{adiag}}=
\begin{pmatrix}
0&\widebar V_1\\
W_2&0
\end{pmatrix}.
\end{equation}
Let $\mathcal I_{\mathrm{adiag}}\subset\mathcal V_N$ be the difference ideal
generated by the coefficients of $W_2$ and $\widebar V_1$.  Since the
coefficients of $U_N$ are free before localization, and the localization
involves only $\Delta$ and its shifts, the quotient is identified
with the difference algebra generated by the two diagonal blocks:
\begin{equation}\label{eq:VN_diag}
\mathcal V_N/\mathcal I_{\mathrm{adiag}}\simeq \mathcal V_N^{\mathrm{diag}}.
\end{equation}
We denote the corresponding specialization by
\begin{equation}\label{eq:block-specialization}
\pi:\mathcal V_N\longrightarrow\mathcal V_N^{\mathrm{diag}},
\qquad
W_2,\widebar V_1\longmapsto0,
\end{equation}
and extend $\pi$ coefficient-wise to matrix pseudodifference operators.
We restrict the bare algebra to the block-diagonal associative
subalgebra
\begin{equation*}
\mathfrak a_N^{\mathrm{diag}}
=
\left\{
\begin{pmatrix}
H&0\\
0&K
\end{pmatrix}
\ \middle|\
H,K\in M_N(\mathbb C[\mathcal S,\mathcal S^{-1}])
\right\}
\subset\mathfrak a_N.
\label{eq:block-bare-algebra}
\end{equation*}
For $x\in\mathfrak a_N^{\mathrm{diag}}$, recall that
\begin{equation*}
D_x(U_N)=-P_x^-U_N,
\qquad
P_x^-=\Pi_-(L_x),
\qquad
L_x=U_NxU_N^{-1}.
\end{equation*}
The following lemma shows that the block-diagonal specialization $U_N^{\rm adiag}=0$ is
dynamically consistent.

\begin{lemma}\label{lem:diagonal}
For every $x\in\mathfrak a_N^{\mathrm{diag}}$, the ideal
$\mathcal I_{\mathrm{adiag}}$ is $D_x$-invariant.  Moreover,
\begin{equation}
\pi\bigl(D_x(U_N^{\mathrm{diag}})\bigr)
=
-\Pi_-\bigl(L_x^{\mathrm{diag}}\bigr)
U_N^{\mathrm{diag}},
\label{eq:diag-reduced-dressing}
\end{equation}
where
\begin{equation*}
L_x^{\mathrm{diag}}
:=
U_N^{\mathrm{diag}}x
\bigl(U_N^{\mathrm{diag}}\bigr)^{-1}.
\end{equation*}
\end{lemma}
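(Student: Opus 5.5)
Since $D_x$ is defined on generators by $D_x(U_N)=-P_x^-U_N$, the plan is to show that the off-diagonal blocks of the right-hand side vanish modulo $\mathcal I_{\mathrm{adiag}}$, i.e.\ that $\pi(P_x^-U_N)$ is block diagonal. Because $\pi$ is a homomorphism of difference algebras that commutes with $\mathcal S$, it commutes with products of pseudodifference operators, with inverses (invertibility of $U_N$ comes from Lemma~\ref{lem:UN_invertible}), with the formal adjoint, with Laurent truncations and with triangular parts. From the explicit formula~\eqref{eq:Pi-plus} it therefore commutes with $\Pi_\pm$. Hence
\begin{equation*}
\pi(P_x^-U_N)=\Pi_-\bigl(\pi(U_N)\,x\,\pi(U_N)^{-1}\bigr)\pi(U_N)=\Pi_-\bigl(\pi(L_x^{\mathrm{diag}})\bigr)\pi(U_N^{\mathrm{diag}}),
\end{equation*}
using $\pi(U_N)=\pi(U_N^{\mathrm{diag}})$. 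Here I need that $U_N^{\mathrm{diag}}$ is invertible. Its blocks $W_1=I_N+a_0+\mathcal O(\mathcal S)$ and $\widebar V_2=\Delta+\mathcal O(\mathcal S)$ are invertible by the $\mathcal S$-adic geometric series, since $I_N+a_0$ is unit lower triangular and $\Delta$ is invertible in $\mathcal V_N$.

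The key step is that $\Pi_\pm$ preserve block-diagonal operators. Let $X=\mathrm{diag}(A,D)$ with $B=C=0$. In~\eqref{eq:Pi-plus}, the off-diagonal entries of $\Pi_+(X)$ are built only from $B$ and $C$, through $B_{<-1}$, $(\mathcal S^{-1}B^*\mathcal S^{-1})_{\ge0}$, $q(B)$, and the analogous terms in $C$ with $r(C)$. These are all linear in $B$ and $C$ respectively, so they vanish. Therefore $\Pi_+(X)$ is block diagonal, and so is $\Pi_-(X)=X-\Pi_+(X)$. Since $x$ is block diagonal, $L_x^{\mathrm{diag}}$ is block diagonal, and so $\Pi_-(L_x^{\mathrm{diag}})U_N^{\mathrm{diag}}$ is block diagonal.

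It follows that $\pi(D_x(W_2))=0$ and $\pi(D_x(\widebar V_1))=0$. In other words, $D_x$ maps the generators of $\mathcal I_{\mathrm{adiag}}$, namely the coefficients of $W_2$ and $\widebar V_1$, into $\mathcal I_{\mathrm{adiag}}$. Since $D_x$ commutes with $\mathcal S$, it also maps their shifts into the ideal, and the Leibniz rule then gives $D_x(\mathcal I_{\mathrm{adiag}})\subset\mathcal I_{\mathrm{adiag}}$. Reading off the diagonal blocks of the displayed identity gives~\eqref{eq:diag-reduced-dressing}, where $U_N^{\mathrm{diag}}$ and $L_x^{\mathrm{diag}}$ are understood through their $\pi$-images.

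The main point needing care is the claim that $\pi$ commutes with $\Pi_-$ and with inversion. This holds because both $\Pi_-$ and the inverse are given by coefficient-wise polynomial formulas, with shifts and with inverses of $\Delta$ and its shifts, in the coefficients.
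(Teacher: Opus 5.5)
Your proposal is correct and follows essentially the same route as the paper: apply $\pi$ to the dressing equation, use that $\pi$ commutes with inversion and with $\Pi_\pm$, note that $-\Pi_-(L_x^{\mathrm{diag}})U_N^{\mathrm{diag}}$ is block diagonal, and conclude that $\pi(D_x(W_2))=\pi(D_x(\widebar V_1))=0$, while the diagonal blocks give~\eqref{eq:diag-reduced-dressing}. You also make explicit two points the paper only asserts, and both are handled correctly: that $\Pi_\pm$ preserve block-diagonal operators (the off-diagonal entries in~\eqref{eq:Pi-plus} depend only on $B$ and $C$), and the shift-plus-Leibniz step that extends invariance from the generators to the whole difference ideal.
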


\begin{proof}
Since $x$ is block-diagonal,
\begin{equation*}
\pi(L_x)=\pi(U_N)x\,\pi(U_N^{-1})
=
U_N^{\mathrm{diag}}x
\bigl(U_N^{\mathrm{diag}}\bigr)^{-1}
=
L_x^{\mathrm{diag}}.
\end{equation*}
Here we used that $\pi$ is an algebra homomorphism and $U_N$ is
invertible, so that
\begin{equation*}
\pi(U_N^{-1})=\pi(U_N)^{-1}.
\end{equation*}
Moreover, $\pi$ commutes with the projections $\Pi_\pm$.  Indeed, their
definition involves shifts, Laurent coefficient projections, the
formal adjoint, transpose, and triangular projections of the matrix
coefficients, and all these operations commute with the specialization $\pi$.
Therefore
\begin{equation*}
\pi(P_x^-)
=
\pi\bigl(\Pi_-(L_x)\bigr)
=
\Pi_-\bigl(\pi(L_x)\bigr)
=
\Pi_-\bigl(L_x^{\mathrm{diag}}\bigr).
\end{equation*}
Applying $\pi$ to the dressing equation gives
\begin{equation*}
\pi\bigl(D_x(U_N)\bigr)
=
-\Pi_-\bigl(L_x^{\mathrm{diag}}\bigr)
U_N^{\mathrm{diag}}.
\end{equation*}
The right-hand side is block-diagonal, since both
$L_x^{\mathrm{diag}}$ and its projections are block-diagonal.  On the
other hand,
\begin{equation*}
\pi\bigl(D_x(U_N)\bigr)
=
\pi\bigl(D_x(U_N^{\mathrm{diag}})\bigr)
+
\pi\bigl(D_x(U_N^{\mathrm{adiag}})\bigr),
\end{equation*}
where the first term is block-diagonal and the second block-antidiagonal.
Comparing the two parts gives
$\pi\bigl(D_x(U_N^{\mathrm{adiag}})\bigr)=0$ and
\eqref{eq:diag-reduced-dressing}. Since $\mathcal I_{\mathrm{adiag}}$ is generated as a difference ideal by the
coefficients of $W_2$ and $\widebar V_1$, we conclude that
\begin{equation*}
D_x(\mathcal I_{\mathrm{adiag}})\subset \mathcal I_{\mathrm{adiag}}.
\end{equation*}
\end{proof}
Therefore, for all $x\in\mathfrak a_N^{\mathrm{diag}}$ the evolutionary derivation $D_x$ descends
to the quotient $\mathcal V_N^{\mathrm{diag}}$~\eqref{eq:VN_diag} as
\begin{equation*}
D_x^{\mathrm{diag}}\bigl(\pi(f)\bigr)
:=
\pi\bigl(D_x(f)\bigr),
\qquad
f\in\mathcal V_N.
\label{eq:descended-derivation}
\end{equation*}
We henceforth omit the superscript and denote the descended
derivation again by $D_x$.
The universal Lax representation descends at the same time.  For
$x\in\mathfrak a_N^{\mathrm{diag}}$, we introduce
\begin{equation*}
L_x^{\mathrm{diag}}
=
U_N^{\mathrm{diag}}x
\bigl(U_N^{\mathrm{diag}}\bigr)^{-1},
\qquad
P_x^{\mathrm{diag}}
=
-\Pi_+\bigl(L_x^{\mathrm{diag}}\bigr).
\end{equation*}

\begin{proposition}
\label{prop:diagonal_dressing_derivative}
The maps
\begin{equation*}
x\longmapsto L_x^{\rm diag}, \qquad x\longmapsto -D_x, \qquad x\longmapsto P_x^{\rm diag}
\end{equation*}
define a zero-curvature Lax representation of
$\mathfrak a_N^{\rm diag}$ on $\mathcal V_N^{\rm diag}$.  In particular, for all
$x,y\in\mathfrak a_N^{\mathrm{diag}}$,
\begin{equation}
D_x\bigl(L_y^{\mathrm{diag}}\bigr)
=
\bigl[
\Pi_+(L_x^{\mathrm{diag}}),
L_y^{\mathrm{diag}}
\bigr]
-
L_{[x,y]}^{\mathrm{diag}},
\label{eq:diag-lax-representation}
\end{equation}
and
\begin{equation}
[D_x,D_y]=-D_{[x,y]}.
\label{eq:diag-derivation-bracket}
\end{equation}
Moreover,
\begin{equation}
D_x(P_y^{\mathrm{diag}})
-
D_y(P_x^{\mathrm{diag}})
+
[P_x^{\mathrm{diag}},P_y^{\mathrm{diag}}]
+
P_{[x,y]}^{\mathrm{diag}}=0.
\label{eq:diag-zero-curvature}
\end{equation}
\end{proposition}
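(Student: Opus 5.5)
The plan is to obtain all three identities by pushing the universal identities of Proposition~\ref{prop:dressing_derivations_zero} through the specialization $\pi$ of~\eqref{eq:block-specialization}, rather than recomputing anything on $\mathcal V_N^{\mathrm{diag}}$. Three facts about $\pi$ will be needed, all already established in or immediately after the proof of Lemma~\ref{lem:diagonal}.
\begin{enumerate}
\item[(i)] Extended coefficient-wise, $\pi$ is an $\mathcal S$-equivariant algebra homomorphism $M_{2N}(\mathcal V_N((\mathcal S)))\to M_{2N}(\mathcal V_N^{\mathrm{diag}}((\mathcal S)))$.
\item[(ii)] $\pi$ commutes with $\Pi_\pm$. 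This holds because $\Pi_+$ in~\eqref{eq:Pi-plus} is assembled from shifts, Laurent truncations, the formal adjoint, transposes and the triangular parts entering $p,q,r$, all of which act on coefficients compatibly with a ring specialization.
\item[(iii)] $\pi\circ D_x=D_x\circ\pi$ for $x\in\mathfrak a_N^{\mathrm{diag}}$, which is the definition of the descended derivation. This makes sense because $\mathcal I_{\mathrm{adiag}}$ is $D_x$-stable by Lemma~\ref{lem:diagonal}.
\end{enumerate}
Combining (i) with the invertibility of $U_N$ from Lemma~\ref{lem:UN_invertible} gives $\pi(L_x)=L_x^{\mathrm{diag}}$, and then (ii) gives $\pi(P_x^\pm)=\Pi_\pm(L_x^{\mathrm{diag}})$.

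First, I would apply $\pi$ to the Lax relation $D_x(L_y)=[P_x^+,L_y]-L_{[x,y]}$ for $x,y\in\mathfrak a_N^{\mathrm{diag}}$. Since $\mathfrak a_N^{\mathrm{diag}}$ is an associative subalgebra, $[x,y]\in\mathfrak a_N^{\mathrm{diag}}$, so $\pi(L_{[x,y]})=L^{\mathrm{diag}}_{[x,y]}$. Using (iii) on the left-hand side then yields~\eqref{eq:diag-lax-representation} directly.

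Second, for the bracket relation I would take any $f\in\mathcal V_N$ and compute
\[
[D_x,D_y]\pi(f)=\pi\bigl([D_x,D_y]f\bigr)=-\pi\bigl(D_{[x,y]}f\bigr)=-D_{[x,y]}\pi(f).
\]
Since $\pi$ is surjective, this gives~\eqref{eq:diag-derivation-bracket}; here the descended $D_{[x,y]}$ exists precisely because $[x,y]$ is block-diagonal.

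Third, I would apply $\pi$ to the universal zero-curvature identity
\[
D_x(P_y^+)-D_y(P_x^+)=[P_x^+,P_y^+]-P_{[x,y]}^+ .
\]
By (ii) and (iii) this becomes the same identity with $P^+$ replaced by $\Pi_+(L^{\mathrm{diag}})$. Substituting $P^{\mathrm{diag}}=-\Pi_+(L^{\mathrm{diag}})$ then flips the signs as needed: the left-hand side becomes $-D_x(P_y^{\mathrm{diag}})+D_y(P_x^{\mathrm{diag}})$, and the right-hand side becomes $[P^{\mathrm{diag}}_x,P^{\mathrm{diag}}_y]+P^{\mathrm{diag}}_{[x,y]}$. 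Rearranging gives exactly~\eqref{eq:diag-zero-curvature}. Together with the Lax relation and the bracket relation, this shows that $(L^{\mathrm{diag}},-D,P^{\mathrm{diag}})$ is a zero-curvature Lax representation in the sense of~\eqref{eq:lax-representation}--\eqref{eq:zero-curvature}. The sign conventions match those of Proposition~\ref{prop:dressing_derivations_zero}.

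The main obstacle I expect is justifying (ii), namely that $\pi$ commutes with $\Pi_+$. In particular, the boundary terms $p(A,D),q(B),r(C)$ mix the $\mathcal S$-conjugated blocks $\mathcal S^{-1}D\mathcal S$, $B\mathcal S$ and $\mathcal S^{-1}C$ with transposes of triangular parts. I would check this entry by entry using two observations. First, each of these operations is $\mathcal V_N$-coefficientwise, built from linear maps on $M_N$ together with the automorphism $\mathcal S$. Second, $\pi$ is an $\mathcal S$-equivariant ring map sending generators to $0$ or to themselves. A secondary point is that $\pi(U_N^{-1})=(U_N^{\mathrm{diag}})^{-1}$ is well defined. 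This holds because the localization defining $\mathcal V_N$ involves only $\Delta$, which survives in $\mathcal V_N^{\mathrm{diag}}$, so the inverse constructed in Lemma~\ref{lem:UN_invertible} maps to an inverse.
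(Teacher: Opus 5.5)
Your proposal is correct and takes the same route as the paper. The paper also gets all three relations by applying $\pi$ to the universal Lax, Lie-algebra and zero-curvature identities of Proposition~\ref{prop:dressing_derivations_zero}, using the $D_x$-invariance of $\mathcal I_{\mathrm{adiag}}$ from Lemma~\ref{lem:diagonal} and the fact that $\pi$ commutes with $\Pi_\pm$. You additionally spell out the surjectivity argument for $[D_x,D_y]=-D_{[x,y]}$ on the quotient and the sign bookkeeping coming from $P^{\mathrm{diag}}=-\Pi_+(L^{\mathrm{diag}})$, both of which the paper leaves implicit.
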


\begin{proof}
By Proposition~\ref{prop:dressing_derivations_zero}, the maps
$(L,-D,-P^+)$ form a zero-curvature Lax representation of
$\mathfrak a_N$.
By the preceding lemma, the ideal $\mathcal I_{\mathrm{adiag}}$ is preserved by all $D_x$, $x\in\mathfrak a_N^{\mathrm{diag}}$, while $\pi$ commutes with~$\Pi_\pm$.  Applying $\pi$ to the universal Lax, Lie-algebra, and
the zero-curvature identities therefore gives the relations~\eqref{eq:diag-lax-representation}-~\eqref{eq:diag-zero-curvature}.
\end{proof}

To obtain a commuting hierarchy, we select the following family of bare operators
\begin{equation*}
\mathfrak d_N
=
\{
\operatorname{diag}(c_1,\ldots,c_N)
\mid
c_1,\ldots,c_N\in\mathbb C
\}
\subset M_N(\mathbb C),
\end{equation*}
and define
\begin{equation}\label{eq:cartan-bare-algebra}
\mathfrak h_N=\left\{
\begin{pmatrix}
H&0\\
0&K
\end{pmatrix}
\ \middle|\
H,K\in \mathcal{S}^{-1} \mathfrak{d}_N[\mathcal S^{-1}]
\right\}
\subset
\mathfrak a_N^{\mathrm{diag}}.
\end{equation}
Notice that the word ``diagonal'' now refers to the entries inside each
$N\times N$ block, in contrast with the block-diagonal reduction above.
The Lie algebra $\mathfrak h_N$ is abelian, and hence
\begin{equation*}
[D_x,D_y]=0,
\qquad
x,y\in\mathfrak h_N.
\end{equation*}

\subsection{The multicomponent 2D Toda hierarchy}
\label{subsec:multicomponent-Toda}
Restricting the block-diagonal bare algebra
$\mathfrak a_N^{\rm diag}$ to the abelian Laurent subalgebra
$\mathfrak h_N$ yields $2N$ families of pairwise commuting flows
on the two diagonal dressing blocks. These flows are closely related
to the multicomponent 2D Toda hierarchy~\cite{UT84,MMAF09,TZ25}, but this identification is not yet expressed in the standard
dressing presentation. Indeed, both dressing blocks are expanded in the same
$\mathcal S$-adic direction, while the Pfaff--Toda projection has a
nontrivial boundary contribution at degree zero. We therefore twist the
second dressing block and normalize the first one to obtain the
standard multicomponent 2D Toda dressing representation. 

We introduce the map $\iota\colon  M_N(\mathcal{V}^{\rm diag}_N((\mathcal{S}))) \rightarrow M_N(\mathcal{V}^{\rm diag}_N((\mathcal{S}^{-1}))) $
\begin{equation}\label{eq:iota}
\iota(D) :=-\mathcal S^{-1}D^* \mathcal S. 
\end{equation}
Since the formal adjoint is anti-multiplicative, $-\iota$ is an algebra anti-homomorphism and 
$\iota$ is a Lie algebra homomorphism.
Denote by $w_0$ the leading term of the matrix $W_1$. By~\eqref{eq:dressing-U}, $w_0-I_N \in \mathfrak{n}_{-}$, so $w_0$ is unit lower triangular, hence invertible. We define the pair of matrix pseudodifference operators 
\begin{equation}\label{eq:Toda-dressing}
W:=w_0^{-1}W_1,
\qquad
\widebar W:=-w_0^{-1}\,\iota(\widebar V_2^{-1}).
\end{equation}
They have the opposite asymptotic forms 
\begin{equation}\label{eq:WbarW_asymptotics}
W=I_N+\mathcal O(\mathcal S) \in M_N(\mathcal{V}_N^{\rm diag}((\mathcal{S}))),
\qquad
\widebar W=\widebar w_0+\mathcal O(\mathcal S^{-1}) \in M_N(\mathcal{V}_N^{\rm diag}((\mathcal{S}^{-1}))),
\end{equation}
where $\widebar w_0$ is invertible.
As in Section~\ref{sec:SZ-realization}, consider the bare matrices
\begin{equation}
T_{\alpha,n}=\mathcal S^{-n}
\begin{pmatrix}
E_{\alpha \alpha}&0\\
0&0
\end{pmatrix},
\qquad
\widebar T_{\alpha,n}
=
-\mathcal S^{-n}
\begin{pmatrix}
0&0\\
0&E_{\alpha \alpha}
\end{pmatrix},
\qquad
n\geq1, \qquad \alpha=1,\ldots,N.
\label{eq:Toda-bare-directions}
\end{equation}
By Proposition~\ref{prop:diagonal_dressing_derivative}, the corresponding evolutionary derivations
commute:
\begin{equation*}
[D_x,D_y]=0,
\qquad
x,y\in
\{T_{\alpha,n},\widebar T_{\alpha,n}\}.
\end{equation*}
\begin{lemma}
\label{lem:Toda-Sato}
For every $\alpha=1,\ldots,N$ and $n\geq1$, the operators $W$ and
$\widebar W$ in~\eqref{eq:Toda-dressing} satisfy
\begin{align}
D_{T_{\alpha,n}}W
&=
B_{\alpha,n}W-W\mathcal S^{-n}E_{\alpha\alpha},
&
D_{T_{\alpha,n}}\widebar W
&=
B_{\alpha,n}\widebar W,
\label{eq:Toda-Sato-t}\\[1mm]
D_{\widebar T_{\alpha,n}}W
&=
\widebar B_{\alpha,n}W,
&
D_{\widebar T_{\alpha,n}}\widebar W
&=
\widebar B_{\alpha,n}\widebar W-\widebar W \mathcal S^nE_{\alpha\alpha},
\label{eq:Toda-Sato-bart}
\end{align}
where  
\begin{equation*} B_{\alpha,n}:=(W\mathcal{S}^{-n}E_{\alpha\alpha}W^{-1})_{\leq0},
\qquad
\widebar B_{\alpha,n}:=(\widebar W E_{\alpha\alpha}\mathcal{S}^n \widebar W^{-1})_{>0}. 
\end{equation*}
\end{lemma}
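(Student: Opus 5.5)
The plan is to specialize the reduced dressing equation \eqref{eq:diag-reduced-dressing} of Lemma~\ref{lem:diagonal} to the block-diagonal operator $U_N^{\rm diag}=\operatorname{diag}(W_1,\widebar V_2)$. For $x=\operatorname{diag}(H,K)\in\mathfrak h_N$ the operator $L_x^{\rm diag}=\operatorname{diag}(A,D)$ is block-diagonal, with
\[
A=W_1HW_1^{-1},\qquad D=\widebar V_2K\widebar V_2^{-1}.
\]
By \eqref{eq:Pi-plus}, its projection $\Pi_+(L_x^{\rm diag})$ is again block-diagonal, with blocks
\[
P_{11}=A_{<0}-(\mathcal S^{-1}D^*\mathcal S)_{>0}+p(A,D),\qquad P_{22}=D_{<0}-(\mathcal SA^*\mathcal S^{-1})_{>0}-\mathcal S\,p(A,D)^\top\mathcal S^{-1}.
\]
The dressing equation therefore splits into two block equations:
\[
D_xW_1=P_{11}W_1-W_1H,\qquad D_x\widebar V_2=P_{22}\widebar V_2-\widebar V_2K.
\]
I treat the case $x=T_{\alpha,n}$ in detail; there $H=\mathcal S^{-n}E_{\alpha\alpha}$ and $K=0$, so $D=0$ and $p(A,D)=(A_0)_{\rm u}$. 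The case $\widebar T_{\alpha,n}$ is the mirror image, with the roles of the two blocks exchanged via $\iota$.

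\textbf{Flow of $W$.} I first extract the evolution of the leading term $w_0$ from the degree-zero part of the $W_1$ equation. By \eqref{eq:Pi-plus}, the degree-zero part of $A-P_{11}$ is $(A_0)_{\rm sl}:=A_0-(A_0)_{\rm u}$, the strictly lower part of $A_0$. Hence
\[
D_xw_0=-(A_0)_{\rm sl}\,w_0 .
\]
Next I set $A':=W\mathcal S^{-n}E_{\alpha\alpha}W^{-1}=w_0^{-1}Aw_0$. Conjugation by the degree-zero matrix $w_0$ preserves Laurent degrees, so $w_0^{-1}X_{\le0}w_0=(w_0^{-1}Xw_0)_{\le0}$. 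Differentiating $W=w_0^{-1}W_1$ then gives
\[
D_xW=\bigl(w_0^{-1}(P_{11}+(A_0)_{\rm sl})w_0\bigr)W-W\mathcal S^{-n}E_{\alpha\alpha}
=(w_0^{-1}A_{\le0}w_0)W-W\mathcal S^{-n}E_{\alpha\alpha}
=B_{\alpha,n}W-W\mathcal S^{-n}E_{\alpha\alpha}.
\]

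\textbf{Flow of $\widebar W$.} Here I use the identity $\iota(X)\iota(Y)=-\iota(XY)$. Together with $D_x\widebar V_2^{-1}=-\widebar V_2^{-1}P_{22}$ (valid since $K=0$), it yields
\[
D_x\,\iota(\widebar V_2^{-1})=\iota(P_{22})\,\iota(\widebar V_2^{-1}).
\]
The key computation is $\iota(P_{22})=P_{11}$, checked term by term. First, $\iota\bigl(-(\mathcal SA^*\mathcal S^{-1})_{>0}\bigr)=\mathcal S^{-1}\bigl((\mathcal SA^*\mathcal S^{-1})_{>0}\bigr)^*\mathcal S$, which equals $A_{<0}$ because conjugation by $\mathcal S$ fixes degrees and the adjoint reverses them. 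Second, $\iota(-\mathcal Sp^\top\mathcal S^{-1})=p$. Combining $\widebar W=-w_0^{-1}\iota(\widebar V_2^{-1})$ with the formula for $D_xw_0$ above, I then obtain
\[
D_x\widebar W=w_0^{-1}\bigl(P_{11}+(A_0)_{\rm sl}\bigr)w_0\,\widebar W=B_{\alpha,n}\widebar W .
\]

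\textbf{Expected obstacle.} The main risk is the bookkeeping of degrees and shifts in two places. The first is the identity $\iota(P_{22})=P_{11}$, in particular the boundary term $p$ and the requirement that truncations $>0$ are transported exactly onto $<0$. The second is the $\widebar T_{\alpha,n}$ case. There $A=0$ and $p=-(((\mathcal S^{-1}D\mathcal S)_0)_{\rm su})^\top$, and one must check that after twisting by $\iota$ and conjugating by $w_0$ the auxiliary operator becomes $(\widebar WE_{\alpha\alpha}\mathcal S^n\widebar W^{-1})_{>0}$. The mechanism should be that $\iota(D)=\widebar V$-conjugate data, and that $\iota(\widebar V_2\mathcal S^{-n}E_{\alpha\alpha}\widebar V_2^{-1})$ equals $-w_0\,\widebar WE_{\alpha\alpha}\mathcal S^n\widebar W^{-1}w_0^{-1}$ up to sign. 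Moreover, the $w_0$-drift should cancel against the degree-zero triangular term exactly as in the $T$ case, so that $D_{\widebar T_{\alpha,n}}W=\widebar B_{\alpha,n}W$ holds with no bare term. This is the cancellation I would verify first, by explicit computation at degree zero.
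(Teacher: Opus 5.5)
Your argument is correct, but it takes a more computational route than the paper's.

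\textbf{Your route.} You read off $P_{11},P_{22}$ from \eqref{eq:Pi-plus}. You then extract $D_{T_{\alpha,n}}w_0=-(A_0)_{\rm sl}\,w_0$ from the degree-zero part of the $W_1$-equation, and observe that this drift completes $P_{11}=A_{<0}+(A_0)_{\rm u}$ to $A_{\le0}$. Finally you check $\iota(P_{22})=P_{11}$ by hand.

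\textbf{The paper's route.} The paper gets $\iota(F_x)=B_x$ for free from $\Pi_+(L_x^{\rm diag})\in\mathfrak g_+$; your term-by-term check is just the block-diagonal form of $\theta(\Pi_+X)=\Pi_+X$. From this it concludes that $W$ and $\widebar W$ evolve with the same operator $\widehat B_x=w_0^{-1}B_xw_0-w_0^{-1}D_x(w_0)$. It then identifies $\widehat B_x$ purely by supports:
\begin{itemize}
\item $D_x(W)W^{-1}=\mathcal O(\mathcal S)$, because $W=I_N+\mathcal O(\mathcal S)$;
\item $D_x(\widebar W)\widebar W^{-1}$ has no positive powers;
\item $\widehat B_x$ is non-positive for $T_{\alpha,n}$, and $\widehat B_x=D_x(W)W^{-1}$ is strictly positive for $\widebar T_{\alpha,n}$.
\end{itemize}
This never uses $p(A,D)$ or the explicit $w_0$-flow. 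So the boundary bookkeeping you flagged as the main risk disappears, and both cases are treated uniformly. Your route, in exchange, exhibits the evolution of $w_0$ explicitly and shows concretely why the normalization $W=w_0^{-1}W_1$ is needed.

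\textbf{The $\widebar T_{\alpha,n}$ case.} Your sketch closes exactly as you anticipate. Set $\widebar A:=\iota(\widebar V_2K\widebar V_2^{-1})$. Then
\begin{itemize}
\item $p=((\widebar A)_0)_{\rm sl}$, so $P_{11}=(\widebar A)_{>0}+((\widebar A)_0)_{\rm sl}$;
\item $D_xw_0=((\widebar A)_0)_{\rm sl}\,w_0$, which cancels the degree-zero part of $P_{11}$;
\item $\iota(\widebar V_2)=\iota(\widebar V_2^{-1})^{-1}$, hence $w_0^{-1}\widebar A\,w_0=\widebar W\mathcal S^nE_{\alpha\alpha}\widebar W^{-1}$ exactly, with no sign ambiguity.
\end{itemize}
Together these give both $\widebar t$-flows.

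\textbf{One slip.} Since $-\iota$ is anti-multiplicative, the correct rule is $\iota(X)\iota(Y)=-\iota(YX)$, not $-\iota(XY)$. The formula you actually use, $D_x\iota(\widebar V_2^{-1})=\iota(P_{22})\,\iota(\widebar V_2^{-1})$, is the one that follows from the correct rule, so nothing downstream is affected.
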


\begin{proof}
For all $x=\textup{diag}(H,K) \in \mathfrak a_N^{\rm diag}$, the block-diagonal matrix $\Pi_+(L_x^{\rm diag})$ is by definition in $\mathfrak{g}_N^+,$ hence it can be written in the form 
\begin{equation*}
\Pi_+(L_x^{\rm diag})= (B_x,F_x) , \qquad B_x=\iota (F_x).
\end{equation*}
Introducing the elements 
\begin{equation*}
    A_x:=W_1HW_1^{-1},  \qquad
        \widebar{A}_x := \iota(\widebar{V}_2K\widebar{V}_2^{-1}),
\end{equation*}
we can rewrite $B_x$ in its decomposition, where $(\,\cdot\,)_\textup{u}$ denotes the upper triangular part (including the diagonal), and $(\,\cdot\,)_\textup{sl}$ the strictly lower triangular part:
\begin{equation}\label{eq:Bx_2dToda}
        B_x := (A_x)_{<0}+((A_x)_0)_{\textup{u}}+
((\widebar{A}_x)_0)_{\textup{sl}}+ (\widebar{A}_x)_{>0}.  
\end{equation}
Applying Definition~\ref{def:dressing_derivative} to the reduced dressing equations gives
\begin{equation*}
        D_x W_1 = B_xW_1-W_1H, \qquad 
        D_x \widebar{V}_2 =F_x\widebar{V}_2-\widebar{V}_2K.
\end{equation*}
Since $-\iota$ in~\eqref{eq:iota} is anti-multiplicative and $D_x\circ\iota=\iota\circ D_x$, we deduce that 
\begin{equation*}
        D_xW_1 = B_xW_1-W_1H, \qquad 
        D_x \,\iota (\widebar{V}_2^{-1}) =B_x\, \iota (\widebar{V}_2^{-1})-\iota (\widebar{V}_2^{-1}) \, \iota(K).
\end{equation*}
Rewriting in terms of $W$ and $\widebar{W}$ in~\eqref{eq:Toda-dressing} yields
\begin{align}
     \label{keyeqW}
        D_xW &= \widehat{B}_xW-WH \\
    \label{keyeqWbar}
        D_x \widebar{W} &=\widehat{B}_x\widebar{W}-\widebar{W} \iota(K),
\end{align}
where $\widehat{B}_x:=w_0^{-1}B_xw_0-w_0^{-1}D_x(w_0)$. We conclude the proof by considering separately the cases $K=0$ and $H=0$.
\\[2mm]
\underline{Case $1$: $x= T_{\alpha,n}$, $H=\mathcal{S}^{-n} E_{\alpha \alpha}$, $K=0$.}
\\
We have $\widebar{A}_x=0$ in~\eqref{eq:Bx_2dToda}, hence $B_x$, and a fortiori $\widehat{B}_x$, contains only non-positive powers of the shift~$\mathcal{S}$. Given the asymptotics of $W$ in~\eqref{eq:WbarW_asymptotics}, $D_x(W)W^{-1}$ is strictly positive. Multiplying~\eqref{keyeqW} on the right by $W^{-1}$, we obtain
\begin{equation*}
\widehat{B}_{T_{\alpha,n}}=(W\mathcal{S}^{-n}E_{\alpha \alpha}W^{-1})_{\leq 0}=B_{\alpha,n}.
\end{equation*}
\\[2mm]
\underline{Case $2$: $x=\widebar{T}_{\alpha,n}$, $H=0$, $K=-\mathcal{S}^{-n}E_{\alpha \alpha}$.}
\\
Here $\iota(K)=\mathcal{S}^nE_{\alpha \alpha}$. Multiplying~\eqref{keyeqW} on the right by $W^{-1}$ shows that $\widehat{B}_x$ is purely positive. Since $\widebar{W}$ is a Laurent series in $\mathcal{S}^{-1}$, the product $D_x(\widebar{W})\widebar{W}^{-1}$ contains no positive power of the shift. Hence multiplying~\eqref{keyeqWbar} on the right by $\widebar{W}^{-1}$ gives
\begin{equation*}
\widehat{B}_{\widebar{T}_{\alpha,n}}=(\widebar{W}\mathcal{S}^{n}E_{\alpha \alpha}\widebar{W}^{-1})_{>0}=\widebar B_{\alpha,n}.
\end{equation*}
\end{proof}
We introduce the following pair of matrix pseudodifference operators 
\begin{equation}
    L_{\rm Toda}=W\mathcal{S}^{-1}W^{-1} \in M_N(\mathcal{V}_N^{\rm diag}((\mathcal{S}))), \qquad  \widebar L_{\rm Toda}=\widebar W \mathcal{S} \, \widebar W^{-1} \in M_N(\mathcal{V}_N^{\rm diag}((\mathcal{S}^{-1}))),
    \label{Todapair}
\end{equation}
and define the projectors 
\begin{equation}
    C_{\alpha}=WE_{\alpha\alpha}W^{-1}, \qquad  \widebar C_{\alpha}=\widebar W E_{\alpha\alpha} \widebar W^{-1}, \qquad \alpha=1,\dots,N.
\end{equation}
\begin{proposition}
\label{prop:multicomponent-Toda}
The data defined above form the continuous $N$-component $2$D Toda
hierarchy.  More precisely:
\begin{enumerate}
\item
The Lax operators and dressed projectors have the asymptotic forms
\begin{align*}
L_{\rm Toda}&= \mathcal S^{-1}+u_0+u_1 \mathcal S+\cdots,& 
C_{\alpha}&=
E_{\alpha\alpha}+c_{\alpha,1} \mathcal S+\cdots,
\\[1mm]
\widebar L_{\rm Toda}&=
\widebar u_{-1} \mathcal S+\widebar u_0+\widebar u_1 \mathcal S^{-1}+\cdots, & 
\widebar C_{\alpha}&=\widebar c_{\alpha,0}+\widebar c_{\alpha,1} \mathcal S^{-1}+\cdots,
\end{align*}
where $\widebar u_{-1}$ is invertible.  
\item
For every $n \geq 1$, $\alpha=1,\dots,N$ and $A\in
\{L_{\rm Toda},\widebar L_{\rm Toda},C_1,\ldots,C_N,\widebar C_1,\ldots,\widebar C_N\}$,
one has
\begin{equation}
D_{T_{\alpha,n}}A=\bigl[(L_{\rm Toda}^nC_{\alpha})_{\leq 0}\,,A\bigr],
\qquad
D_{\widebar T_{\alpha,n}}A=\bigl[(\widebar{L}_{\rm Toda}^n \widebar C_{\alpha})_{>0}\,,A\bigr].
\label{eq:multicomponent-Toda-Lax}
\end{equation}

\item
The flows pairwise commute.
\end{enumerate}
\end{proposition}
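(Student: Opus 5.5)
Item 1 follows from the asymptotics \eqref{eq:WbarW_asymptotics}. Since $W=I_N+\mathcal O(\mathcal S)$ is invertible in $M_N(\mathcal V_N^{\rm diag}((\mathcal S)))$, both $L_{\rm Toda}=W\mathcal S^{-1}W^{-1}$ and $C_\alpha=WE_{\alpha\alpha}W^{-1}$ are $\mathcal S$-adic series. Their leading terms are $\mathcal S^{-1}$ and $E_{\alpha\alpha}$ respectively.

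For the barred side, write $\widebar W=\widebar w_0+\mathcal O(\mathcal S^{-1})$ with $\widebar w_0$ invertible. Then $\widebar L_{\rm Toda}=\widebar w_0\,\mathcal S(\widebar w_0^{-1})\,\mathcal S+\mathcal O(1)$, so $\widebar u_{-1}=\widebar w_0\,\mathcal S(\widebar w_0^{-1})$ is invertible. Likewise $\widebar C_\alpha$ starts at degree zero with $\widebar c_{\alpha,0}=\widebar w_0E_{\alpha\alpha}\widebar w_0^{-1}$.

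Before item 2, I would check invertibility of $\widebar w_0$. From $\widebar W=-w_0^{-1}\iota(\widebar V_2^{-1})$, the leading coefficient of $\iota(\widebar V_2^{-1})$ is $-\mathcal S^{-1}\bigl((\Delta^{-1})^\top\bigr)$, which is invertible by the localization defining $\mathcal V_N$.

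For item 2, I would start from the Sato equations of Lemma~\ref{lem:Toda-Sato}. The key algebraic observation is that $\mathcal S^{-n}E_{\alpha\alpha}$ commutes with $\mathcal S^{-1}$ and with every $E_{\beta\beta}$, because the bare matrices are constant and diagonal.

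Consider first a Sato equation of the form $D W=BW-Wx$ with $x$ commuting with a constant $y$. Then $D(WyW^{-1})=[B,WyW^{-1}]$, by the Leibniz rule and $D(W^{-1})=-W^{-1}(DW)W^{-1}$; the $Wx$ terms cancel because $[x,y]=0$. Applied to $y\in\{\mathcal S^{-1},E_{\beta\beta}\}$, this gives the $D_{T_{\alpha,n}}$ flows of $L_{\rm Toda}$ and $C_\beta$.

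Consider next a Sato equation of the form $D\widebar W=B\widebar W$. Conjugation gives $D(\widebar W y\widebar W^{-1})=[B,\widebar W y\widebar W^{-1}]$ immediately. This gives the $D_{T_{\alpha,n}}$ flows of $\widebar L_{\rm Toda}$ and $\widebar C_\beta$. The $\widebar T$ flows are handled symmetrically, using $[\mathcal S^nE_{\alpha\alpha},\mathcal S]=0$.

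It then remains to identify the auxiliary operators. Since $L_{\rm Toda}^n=W\mathcal S^{-n}W^{-1}$, we have $W\mathcal S^{-n}E_{\alpha\alpha}W^{-1}=L_{\rm Toda}^nC_\alpha$, hence $B_{\alpha,n}=(L_{\rm Toda}^nC_\alpha)_{\leq0}$. In the same way, $\widebar W E_{\alpha\alpha}\mathcal S^n\widebar W^{-1}=\widebar L_{\rm Toda}^n\widebar C_\alpha$, hence $\widebar B_{\alpha,n}=(\widebar L_{\rm Toda}^n\widebar C_\alpha)_{>0}$.

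Item 3 is already available from the paper. The bare operators $T_{\alpha,n},\widebar T_{\alpha,n}$ lie in the abelian algebra $\mathfrak h_N$, so \eqref{eq:diag-derivation-bracket} gives $[D_x,D_y]=-D_{[x,y]}=0$.

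The one point I would check carefully is that the descended derivations act on $w_0$ and the other coefficients in a way that is consistent with Lemma~\ref{lem:Toda-Sato}. That lemma is stated for the $D_x$ on the quotient $\mathcal V_N^{\rm diag}$, so all of the above is computed there and nothing further is needed. The only genuinely nontrivial input, the form of $\widehat B_x$, has already been handled in that lemma.
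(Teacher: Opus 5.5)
Your proposal is correct and follows the paper's own proof. Item 1 comes from the asymptotics of $W,\widebar W$. Item 2 comes from the Sato equations of Lemma~\ref{lem:Toda-Sato} together with the identifications $B_{\alpha,n}=(L_{\rm Toda}^nC_\alpha)_{\leq0}$ and $\widebar B_{\alpha,n}=(\widebar L_{\rm Toda}^n\widebar C_\alpha)_{>0}$. Item 3 comes from the abelianness of $\mathfrak h_N$.

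You also make explicit some details the paper leaves implicit, and they all check out:
\begin{itemize}
\item the leading coefficient $\widebar u_{-1}=\widebar w_0\,\mathcal S(\widebar w_0^{-1})$;
\item the leading coefficient $-\mathcal S^{-1}\bigl((\Delta^{-1})^\top\bigr)$ of $\iota(\widebar V_2^{-1})$;
\item the cancellation of the $Wx$ terms via $[x,y]=0$ for constant diagonal bare operators.
\end{itemize}
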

\begin{proof}
The first assertion follows immediately from the asymptotics~\eqref{eq:WbarW_asymptotics}
\begin{equation*}
W=I_N+\mathcal O( \mathcal S),
\qquad
\widebar W=\widebar w_0+\mathcal O(\mathcal S^{-1}),
\qquad
\widebar w_0\in GL_N(\mathcal{V}_N^{\rm diag}).
\end{equation*}
The second assertion follows from Lemma~\ref{lem:Toda-Sato} by differentiating the definitions of
$L_{\mathrm{Toda}}$, $\widebar L_{\mathrm{Toda}}$, $C_{\alpha}$, and $\widebar C_{\alpha}$. Indeed, from the definition of the pair $(L_{\rm Toda} , \widebar L_{\rm Toda})$ in~\eqref{Todapair}, for any  $n \ge 1$ and $\alpha=1,\dots,N$ we have
\begin{equation*}
B_{\alpha,n}=(L_{\rm Toda}^nC_{\alpha})_{\leq 0}, \qquad \widebar B_{\alpha,n}= (\widebar L_{\rm Toda}^n \widebar C_{\alpha})_{>0}.
\end{equation*}
Lastly, the third assertion follows from the fact that $[D_x,D_y]=0$ for any pair $(x,y) \in \mathfrak h_N \times \mathfrak h_N$, with $\mathfrak h_N$ in~\eqref{eq:cartan-bare-algebra}. 
\end{proof}

\subsection{\texorpdfstring{Even Pfaff as an anti-diagonal scalar 2D Toda subhierarchy}{2dToda}}

Specializing to $N=1$, we place the scalar Toda and Pfaff hierarchies in the same symmetric gauge. In this gauge, the even powers of the fundamental Pfaff bare operator correspond to anti-diagonal combinations of the two Toda families. Below we show that these flows coincide with the even Pfaff hierarchy.

For $N=1$ one has $\mathfrak n_-=0$, hence the asymptotic for the pair $(W,\widebar W)$ in~\eqref{eq:Toda-dressing} is 
\begin{equation}\label{eq:WbarW_n1_definition}
W=W_1=1+\mathcal O(\mathcal S),\qquad \widebar W=-\iota(\widebar V_2^{-1})=\mathcal S^{-1}(\widebar V_2^{-1})^*\mathcal S.
\end{equation}
From now on, we write $T_n:=T_{1,n}$ and $\widebar T_n:=\widebar T_{1,n}$ for all $n \ge  1$. By Proposition~\ref{prop:multicomponent-Toda}, the corresponding flows form the scalar $2D$ Toda
hierarchy, with Lax operators
\begin{equation*}
L_{\rm Toda}=W\mathcal S^{-1}W^{-1},
\qquad
\widebar L_{\rm Toda}=\widebar W\mathcal S\, \widebar W^{-1}, 
\end{equation*}
and by the definition of $\widebar W$ in~\eqref{eq:Toda-dressing}, we obtain the asymptotics  
\begin{equation*}\widebar W
=\widebar w_0+\mathcal O(\mathcal S^{-1}),
\qquad \widebar w_0=\bigl(\mathcal S^{-1}(\Delta)\bigr)^{-1}.\end{equation*}
Then, following the symmetric normalization of Krichever--Zabrodin~\cite{KriZab}, we pass to the difference algebra extension
\begin{equation}\label{eq:Vdiag_tilde}
\widetilde{\mathcal V}^{\rm diag}_1:=\mathcal V^{\rm diag}_1\bigl[\mathcal S^j(\rho^{\pm1})\mid j\in\mathbb Z\bigr], \qquad
\rho^2=\mathcal S^{-1}(\Delta)=\widebar w_0^{-1}.
\end{equation}
For $x\in\{T_n,\widebar T_n\}$, we extend $D_x$ uniquely to
$\widetilde{\mathcal V}^{\rm diag}_1$ by setting
\begin{equation*}
D_x(\rho)
=
\frac{\mathcal S^{-1}(D_x\Delta)}{2\rho},
\end{equation*}
and requiring it to commute with $\mathcal S$. By uniqueness of this extension, the commutation relations among
these derivations are preserved on
$\widetilde{\mathcal V}_1^{\rm diag}$. Hence, we can define the symmetrically normalized dressing operators
\begin{equation}\label{eq:WbarWevenPfaff}
W^{\rm s}:=\rho W, \qquad \widebar W^{\,\rm s}:=\rho\widebar W,
\end{equation}
 and the corresponding symmetrically gauged Toda Lax operators 
\begin{equation}\label{eq:LbarL}
L^{\rm s}:=W^{\rm s}\mathcal S^{-1}(W^{\rm s})^{-1}=\rho L_{\rm Toda}\rho^{-1},
\qquad
\widebar L^{\,\rm s}:=\widebar W^{\,\rm s}\mathcal S(\widebar W^{\,\rm s})^{-1}=
\rho\widebar L_{\rm Toda}\rho^{-1}.
\end{equation}
We compare this normalization with the symmetric gauge of the Pfaff hierarchy introduced in Section~\ref{sec:odd_extension}. Recall that the symmetric gauge $G$ factors as the product of the shift gauge $Q$ and the degree-zero gauge $H$ in~\eqref{QHG} as:
\begin{equation*}
Q=
\begin{pmatrix}
1&0\\
0&\mathcal S^{-1}
\end{pmatrix},
\qquad
H=
\begin{pmatrix}
\rho&0\\
0&\rho^{-1}
\end{pmatrix},
\qquad
G:=HQ,
\end{equation*}
and let $\mathcal{L}=GL_\Lambda G^{-1}$ be the symmetrically gauged Pfaff operator in~\eqref{eq:symmetric-pfaff-lax}. Also, we note that the projection $\pi\colon \mathcal{V}_N \to \mathcal{V}_N^{\textup{diag}}$ in~\eqref{eq:block-specialization} preserves the surviving leading coefficient $\Delta$ in $\widebar{V}_2$, i.e.\ $\pi(\Delta)=\Delta$. The specialization $\pi$ extends to its square-root extension by $\pi(\rho)=\rho$, and therefore $\pi(G)=G$. 
We define $M:=\pi( \mathcal L)$, and by the form of $L_\Lambda$ in~\eqref{eq:ungauged-pfaff-dressing} we can write
\begin{equation}\label{eq:M_evenPf}
M=GU^{\rm diag}\Lambda(U^{\rm diag})^{-1}G^{-1},
\end{equation}
where $\Lambda$ is the fundamental bare operator associated with Pfaff~\eqref{fundope}. 
\begin{lemma}\label{lem:M_Msquare}
The operator $M$ in~\eqref{eq:M_evenPf} and its square are given in terms of $W^{\rm s},\widebar W^{\,\rm s},L^{\rm s},\widebar{L}^{\,\rm s}$ as 
\begin{align}
\label{eq:M_inL}
M&=\begin{pmatrix}
0&W^{\rm s}\mathcal S\bigl(\widebar W^{\,\rm s}\bigr)^{\!*}
\\[1mm]
\bigl(\bigl(\widebar W^{\,\rm s}\bigr)^{\!*}\bigr)^{-1}
\mathcal S^{-2}(W^{\rm s})^{-1}&0
\end{pmatrix}=:\begin{pmatrix}
0&L_1\\
L_2&0
\end{pmatrix},\\[3mm]
\label{eq:M2_inL}
M^2&=\begin{pmatrix}
L^{\rm s}&0\\
0&\bigl(\widebar L^{\,\rm s}\bigr)^{\!*}
\end{pmatrix}=\begin{pmatrix}
L_1L_2&0\\
0&L_2L_1
\end{pmatrix}.
\end{align} 
\end{lemma}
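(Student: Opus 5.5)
The plan is a direct computation from the block form of $U^{\rm diag}$ and $\Lambda$, expressed through the Toda dressing operators. For $N=1$ we have $U^{\rm diag}=\operatorname{diag}(W_1,\widebar V_2)=\operatorname{diag}(W,\widebar V_2)$, since $w_0=1$ and so $W=W_1$ by~\eqref{eq:WbarW_n1_definition}. Conjugating the anti-diagonal operator $\Lambda$ by a block-diagonal operator keeps it anti-diagonal:
\begin{equation*}
U^{\rm diag}\Lambda(U^{\rm diag})^{-1}=\begin{pmatrix}0&W\widebar V_2^{-1}\\ \widebar V_2\mathcal S^{-1}W^{-1}&0\end{pmatrix}.
\end{equation*}
Next I conjugate by $G=\operatorname{diag}(\rho,\rho^{-1}\mathcal S^{-1})$, whose inverse is $G^{-1}=\operatorname{diag}(\rho^{-1},\mathcal S\rho)$. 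This gives
\begin{equation*}
M_{12}=\rho\,W\widebar V_2^{-1}\mathcal S\rho,
\qquad
M_{21}=\rho^{-1}\mathcal S^{-1}\widebar V_2\mathcal S^{-1}W^{-1}\rho^{-1}.
\end{equation*}

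To rewrite these entries in terms of $W^{\rm s},\widebar W^{\,\rm s}$, I take the adjoint of the definition $\widebar W=\mathcal S^{-1}(\widebar V_2^{-1})^*\mathcal S$. The formal adjoint is an involutive anti-automorphism on the relevant operators and satisfies $\mathcal S^*=\mathcal S^{-1}$. Hence
\begin{equation*}
\widebar W^*=\mathcal S^{-1}\widebar V_2^{-1}\mathcal S,
\qquad
(\widebar W^{\,\rm s})^*=\widebar W^*\rho=\mathcal S^{-1}\widebar V_2^{-1}\mathcal S\rho,
\end{equation*}
using that the scalar $\rho$ is self-adjoint. Substituting, $W^{\rm s}\mathcal S(\widebar W^{\,\rm s})^*=\rho W\widebar V_2^{-1}\mathcal S\rho=M_{12}$. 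Similarly, $((\widebar W^{\,\rm s})^*)^{-1}=\rho^{-1}\mathcal S^{-1}\widebar V_2\mathcal S$, and multiplying on the right by $\mathcal S^{-2}(W^{\rm s})^{-1}=\mathcal S^{-2}W^{-1}\rho^{-1}$ reproduces $M_{21}$. This proves~\eqref{eq:M_inL}.

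For~\eqref{eq:M2_inL}, the square of an anti-diagonal block matrix is $\operatorname{diag}(L_1L_2,L_2L_1)$. In the first block the factors $(\widebar W^{\,\rm s})^*$ cancel, giving $L_1L_2=W^{\rm s}\mathcal S^{-1}(W^{\rm s})^{-1}=L^{\rm s}$. In the second block the factors $W^{\rm s}$ cancel, giving
\begin{equation*}
L_2L_1=((\widebar W^{\,\rm s})^*)^{-1}\mathcal S^{-1}(\widebar W^{\,\rm s})^*
=\bigl(\widebar W^{\,\rm s}\mathcal S(\widebar W^{\,\rm s})^{-1}\bigr)^*
=(\widebar L^{\,\rm s})^*,
\end{equation*}
again using anti-multiplicativity of $*$, the relation $(X^{-1})^*=(X^*)^{-1}$, and $\mathcal S^*=\mathcal S^{-1}$.

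The only delicate point is the bookkeeping of the adjoint across the two completions. $\widebar W$ lives in the $\mathcal S^{-1}$-adic completion, so its adjoint lives in the $\mathcal S$-adic one, and all products must be formed in a common algebra. I would check that each product above involves only $W^{\pm1}$, $\widebar V_2^{\pm1}$, shifts and $\rho^{\pm1}$, all of which lie in $M_1(\widetilde{\mathcal V}^{\rm diag}_1((\mathcal S)))$. The identities are then honest equalities there. Everything else is routine cancellation.
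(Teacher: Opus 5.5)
Your proposal is correct and is essentially the paper's own direct computation. The paper writes $GU^{\rm diag}=\operatorname{diag}\bigl(W^{\rm s},((\widebar W^{\,\rm s})^*)^{-1}\bigr)Q$ and conjugates $Q\Lambda Q^{-1}=\begin{pmatrix}0&\mathcal S\\ \mathcal S^{-2}&0\end{pmatrix}$, whereas you first compute the entries of $U^{\rm diag}\Lambda(U^{\rm diag})^{-1}$ and then conjugate by $G$; this is only a reordering of the same steps, and your squaring argument via $\mathcal S^*=\mathcal S^{-1}$ and $(X^{-1})^*=(X^*)^{-1}$ matches the paper's.
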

\begin{proof}
From the definition of $\widebar{W}^{\rm s}$ in~\eqref{eq:WbarWevenPfaff} and taking into account~\eqref{eq:WbarW_n1_definition}, the inverse of its adjoint is
\begin{equation*}
\bigl((\widebar W^{\,\rm s})^*\bigr)^{-1}=\rho^{-1}(\widebar W^*)^{-1}=\rho^{-1}\mathcal S^{-1}\widebar V_2\mathcal S.
\end{equation*}
Therefore
\begin{equation*}
\begin{pmatrix}
W^{\rm s}&0\\
0&\bigl((\widebar W^{\,\rm s})^*\bigr)^{-1}
\end{pmatrix}
Q
=
\begin{pmatrix}
\rho W&0\\
0&\rho^{-1}\mathcal S^{-1}\widebar V_2
\end{pmatrix}
=
HQU^{\rm diag}
=
GU^{\rm diag}.
\end{equation*}
The conjugation of $\Lambda$ in~\eqref{fundope} by $Q$ yields
\begin{equation*}
Q\Lambda Q^{-1}=\begin{pmatrix}
0&\mathcal S\\
\mathcal S^{-2}&0
\end{pmatrix},
\end{equation*}
such that we obtain
\begin{equation*}
M=\begin{pmatrix}
W^{\rm s}&0\\
0&((\widebar W^{\,\rm s})^*)^{-1}
\end{pmatrix}
Q\Lambda Q^{-1}
\begin{pmatrix}
(W^{\rm s})^{-1}&0\\
0&(\widebar W^{\,\rm s})^*
\end{pmatrix},
\end{equation*}
which gives the first identity. Squaring it yields
\begin{equation*}
M^2=\begin{pmatrix}
W^{\rm s}\mathcal S^{-1}(W^{\rm s})^{-1}&0\\
0&
((\widebar W^{\,\rm s})^*)^{-1}
\mathcal S^{-1}(\widebar W^{\,\rm s})^*
\end{pmatrix}
=
\begin{pmatrix}
L^{\rm s}&0\\
0&(\widebar L^{\,\rm s})^*
\end{pmatrix},
\end{equation*}
using the definition of the pair $L^{\rm s},\widebar L^{\rm s}$ in~\eqref{eq:LbarL} and the adjoint.
\end{proof}
The square of the reduced Pfaff operator simultaneously encodes $L^{\rm s}$
 and $(\widebar L^{\rm s})^*$, hence
the two symmetrically gauged scalar Toda Lax operators. 

\begin{proposition}\label{prop:antidiagM_evenflows}
For every $n\geq1$, the anti-diagonal Toda derivation $D_n^{\rm ad}:=D_{T_n}-D_{\widebar T_n}$
acts on $M$ as the $2n$-th Pfaff flow:
\begin{equation}\label{eq:antidiag_even_flows}
D_n^{\rm ad}M=\bigl[P_+^{\rm Pf}(M^{2n}),M\bigr]=\bigl[-P_-^{\rm Pf}(M^{2n}),M\bigr].
\end{equation}
Thus the anti-diagonal subhierarchy of symmetrically gauged scalar
$2$D Toda coincides, under the realization by $M$, with the even
Pfaff hierarchy.
\end{proposition}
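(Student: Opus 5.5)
Since $\Lambda^{2n}=T_n-\widebar T_n$, by \eqref{eq:pfaff-odd-flows} we have $D_n^{\rm ad}=D_{\Lambda^{2n}}$, and the plan is to obtain the statement from Proposition~\ref{prop:Pfaff_flows_symm_gauged} by applying the block-diagonal specialization $\pi$. The difficulty is that Proposition~\ref{prop:Pfaff_flows_symm_gauged} is stated on $\widetilde{\mathcal V}$, while here we work on the quotient $\widetilde{\mathcal V}_1^{\rm diag}$. The bare operators $T_n,\widebar T_n$, and hence $\Lambda^{2n}=\mathcal S^{-n}I_2$, lie in $\mathfrak a_1^{\rm diag}$. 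So Lemma~\ref{lem:diagonal} applies: the ideal $\mathcal I_{\rm adiag}$ is $D_{\Lambda^{2n}}$-invariant, and $\pi\circ D_{\Lambda^{2n}}=D_n^{\rm ad}\circ\pi$, where $D_n^{\rm ad}$ is the descended derivation $D_{T_n}-D_{\widebar T_n}$ on $\mathcal V_1^{\rm diag}$. The first step is to extend this intertwining to the square-root extensions. We set $\pi(\rho)=\rho$ and note that both extensions of $D_x$ are defined by the same formula $D_x(\rho)=\mathcal S^{-1}(D_x\Delta)/(2\rho)$. Since $\pi(\Delta)=\Delta$, the two sides agree on $\rho$, and by uniqueness of the extension the intertwining holds on all of $\widetilde{\mathcal V}$.

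Next we apply $\pi$ to \eqref{eq:Lax_eq_symm_gauged} with $k=2n$. On the left, $\pi(D_{\Lambda^{2n}}\mathcal L)=D_n^{\rm ad}\pi(\mathcal L)=D_n^{\rm ad}M$. On the right, $\pi$ is an algebra homomorphism, so $\pi(\mathcal L^{2n})=M^{2n}$. We also need $\pi\circ P_+^{\rm Pf}=P_+^{\rm Pf}\circ\pi$. This holds because the explicit formula \eqref{plussplittngpfaffpseudo} uses only Laurent truncations, adjoints, traces and degree-zero parts, all applied coefficient-wise, exactly as in the argument for $\Pi_\pm$ in Lemma~\ref{lem:diagonal}. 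Together these give $D_n^{\rm ad}M=[P_+^{\rm Pf}(M^{2n}),M]$. The second equality then follows, as in Proposition~\ref{prop:Pfaff_flows_symm_gauged}, from $[M^{2n},M]=0$.

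For the interpretation as even Pfaff we use Lemma~\ref{lem:M_Msquare}, which shows that $M$ is block anti-diagonal. Hence $M^{2n}$ is block diagonal. From \eqref{plussplittngpfaffpseudo}, $P_+^{\rm Pf}$ maps block-diagonal operators to block-diagonal operators, since its off-diagonal entries depend only on $B$ and $C$. The commutator is therefore anti-diagonal, so the flow preserves the anti-diagonal shape $\mathcal L^{\rm ev}_{\rm Pf}$. This is precisely the restriction of the even Adler--Pfaff flows described in the remark on even Pfaff in pseudodifference form.

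The main obstacle is the support. $M$ lies in the extended space $\mathcal M_{\rm ext}$ rather than $\mathcal M_{\rm Pf}$: its $(2,1)$-entry $L_2$ starts at $\mathcal S^{-2}$. So we must interpret ``even Pfaff'' on the anti-diagonal part of the extended phase space, invoking Proposition~\ref{prop:ext-invariance} in place of Theorem~\ref{thm:pseudodiff_AdlerPfaff}, and check that the anti-diagonal locus of $\mathcal M_{\rm ext}$ is invariant under the even flows. That check is the same parity argument as in Remark~\ref{rem:even-pfaff}.
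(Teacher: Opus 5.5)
Your proposal is correct and follows essentially the same route as the paper. You identify $D_n^{\rm ad}$ with $D_{\Lambda^{2n}}$ and push Proposition~\ref{prop:Pfaff_flows_symm_gauged} (with $k=2n$) through $\pi$ via Lemma~\ref{lem:diagonal}, $\pi(\rho)=\rho$ and $\pi\circ P_+^{\rm Pf}=P_+^{\rm Pf}\circ\pi$, then use $[M^{2n},M]=0$; your extra checks (the uniqueness argument on the square-root extension, preservation of the anti-diagonal shape, and the observation that $M$ lies in the anti-diagonal part of $\mathcal M_{\rm ext}$ rather than $\mathcal M_{\rm Pf}$) only make explicit what the paper leaves implicit.
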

\begin{proof}
By~\eqref{lamrel}, the dressing derivative is $D_{\Lambda^{2n}}=D_{T_n}-D_{\widebar T_n}$.
Since $\Lambda^{2n}$ is block diagonal, Lemma~\ref{lem:diagonal} and the extension $\pi(\rho)=\rho$ give
\begin{equation*}
D_n^{\rm ad}\pi(f)=\pi\bigl(D_{\Lambda^{2n}}f\bigr)    
\end{equation*}
on the square-root extension $\widetilde{\mathcal V}^{\rm diag}_1$ in~\eqref{eq:Vdiag_tilde}. In particular,
\begin{equation*}
D_n^{\rm ad}M
=
\pi\bigl(D_{\Lambda^{2n}}\mathcal L\bigr).
\end{equation*}
Applying Proposition~\ref{prop:Pfaff_flows_symm_gauged} to the even powers gives
\begin{equation*}
D_{\Lambda^{2n}}L=\bigl[P_+^{\rm Pf}(L^{2n}),L\bigr].
\end{equation*}
Since the Pfaff projection $P_+^{\rm Pf}$ in~\eqref{plussplittngpfaffpseudo} is defined coefficient-wise in terms of Laurent projections, the formal adjoint, and the degree-zero matrix decomposition, it commutes with the block-diagonal projection $\pi$. Hence
\begin{equation*}
D_n^{\rm ad}M=\bigl[P_+^{\rm Pf}(M^{2n}),M\bigr].
\end{equation*}
Finally, since $[M^{2n},M]=0$ and
\begin{equation*}
M^{2n}=P_+^{\rm Pf}(M^{2n})+P_-^{\rm Pf}(M^{2n}),
\end{equation*}
we also obtain
\begin{equation*}
D_n^{\rm ad}M=\bigl[-P_-^{\rm Pf}(M^{2n}),M\bigr].
\end{equation*}
\end{proof}

\subsection{The C--Toda reduction}
We restrict the scalar Toda dressing space of the preceding subsection to the C--Toda locus of Krichever--Zabrodin \cite{KriZab},
\begin{equation}
    \widebar W^{\rm s}(W^{\rm s})^*=1.
    \label{eq:C-Toda-dressing}
\end{equation}
On this locus, using Lemma~\ref{lem:M_Msquare}, we obtain
\begin{equation*}
    L_1=W^{\rm s}\mathcal S(W^{\rm s})^{-1}=(L^{\rm s})^{-1},
    \qquad L_2=W^{\rm s}\mathcal S^{-2}(W^{\rm s})^{-1}=(L^{\rm s})^2,
\end{equation*}
and in particular,
\begin{equation}
    L_1^2L_2=1.
    \label{Ctodarel}
\end{equation}
Krichever and Zabrodin prove in \cite{KriZab} that the dressing constraint~\eqref{eq:C-Toda-dressing} is preserved by the anti-diagonal Toda flows. By Proposition~\ref{prop:antidiagM_evenflows}, these flows coincide under the Pfaff realization with the even Pfaff hierarchy.
Thus~\eqref{Ctodarel} is preserved by the derivations $D_n^{\rm ad}$ in~\eqref{eq:antidiag_even_flows}. Setting
\begin{equation*}
    L^C:=L_1^{-1}=L^{\rm s},
\end{equation*}
the operators $M$ in~\eqref{eq:M_inL} and $M^2$ in~\eqref{eq:M2_inL} become 
\begin{equation*}
    M=\begin{pmatrix}
    0&(L^C)^{-1}\\
    (L^C)^2&0
    \end{pmatrix},
    \qquad
    M^2=L^C I_2.
    \label{eq:C-Toda-Pfaff}
\end{equation*}
Consequently, $M^{2n}=(L^C)^nI_2$,
and the Pfaff projection reduces to
\begin{equation}
    P_+^{\rm Pf}(M^{2n})=B_n^C I_2,
    \qquad
    B_n^C:=\big((L^C)^n\big)_{<0}-\big(((L^C)^n)^*\big)_{>0}.
    \label{eqq}
\end{equation}
Substituting \eqref{eqq} into the $(1,2)$-entry of the even Pfaff Lax equation~\eqref{eq:antidiag_even_flows} gives
\begin{equation*}
    D_n^{\rm ad}(L^C)^{-1}
    =
    [B_n^C,(L^C)^{-1}],
\end{equation*}
or equivalently
\begin{equation*}
    D_n^{\rm ad}L^C
    =
    \left[
    \big((L^C)^n\big)_{<0}
    -
    \big(((L^C)^n)^*\big)_{>0},
    L^C
    \right].
    \label{eq:C-Toda-Lax}
\end{equation*}
This is the C--Toda Lax hierarchy \cite{KriZab} in our shift and time conventions.

\section{Conclusion}

In this paper we have developed a matrix pseudodifference operator
framework which places the Adler--Pfaff and Pfaff--Toda hierarchies,
together with several Toda-type reductions, in a common dressing
picture. 
We realize the classical Adler--Pfaff hierarchy as a local
$2\times2$ matrix pseudodifference Lax hierarchy, and
introduce the Matrix Pfaff--Toda hierarchy by dressing the full matrix Laurent algebra $M_{2N}(\mathbb{C}[\mathcal{S}, \mathcal{S}^{-1}])$. For $N=1$, its diagonal subalgebra gives Takasaki's continuous Pfaff--Toda hierarchy, while the off-diagonal directions supply the odd flows and, together with the diagonal directions, reconstruct the Adler--Pfaff hierarchy. For general $N$,
the Savchenko--Zabrodin tau-functions on the regular factorization locus realize the commuting diagonal sector of the matrix hierarchy.
The same construction also contains the Toda reductions considered in Section~\ref{sec:toda-reductions}. Block-diagonal dressing yields the multicomponent $2D$
Toda hierarchy: for $N=1$, its anti-diagonal subhierarchy coincides
with the even Pfaff hierarchy after symmetric gauging, while the
Krichever--Zabrodin dressing constraint gives the C--Toda reduction.

A main feature of this framework is that the Pfaff and Toda pictures are no longer treated as separate operator formalisms. The matrix pseudodifference dressing representation makes their relation visible at the level of the bare algebra: the continuous Pfaff–Toda hierarchy appears as a commuting diagonal sector, while in the one-component case the additional matrix directions reconstruct the full Adler–Pfaff hierarchy. At the same time, suitable restrictions of the dressing operator and of the bare algebra produce the Toda reductions within the same formalism. Thus the Pfaff, Pfaff–Toda and Toda hierarchies considered here arise as different sectors and reductions of a single matrix dressing construction.

The regular factorization locus arising in the comparison with the Savchenko--Zabrodin hierarchy deserves further study. We have shown that it contains nontrivial families of Clifford-group elements, but the question whether it is generic in the admissible class is still open. 

The Hamiltonian geometry underlying the Matrix Pfaff--Toda
hierarchy remains to be understood. The operator formulation developed here may provide a useful starting point for this question.
It also makes finite-band and
finite-dimensional reductions particularly accessible. It would
be interesting to determine whether distinguished reductions of
the Pfaff and Pfaff--Toda Lax operators admit an interpretation in
terms of difference Drinfeld--Sokolov reduction and the deformed
$W$-algebras of Frenkel--Reshetikhin, Semenov-Tian-Shansky and
Sevostyanov
\cite{FRSTS98,STSSev98}.

\appendix
\section{Proof of the Savchenko--Zabrodin realization theorem}
\label{sec:SZ-identification}
This appendix proves the Savchenko--Zabrodin realization theorem
stated in Section~\ref{sec:SZ-realization}. Starting from the fermionic bilinear identity
of \cite{SavchenkoZabrodin}, we construct a pair of raw matrix dressing operators, normalize them on the regular factorization locus, and derive the diagonal Matrix Pfaff--Toda Sato equations. We also establish non-emptiness of the admissible and regular loci.

Let $g$ be a Clifford-group element satisfying the fermionic bilinear
identity \cite[Eq.~(3.3)]{SavchenkoZabrodin},
\begin{equation}
\begin{aligned}
&\sum_{\gamma=1}^{N}\operatorname{res}_{z}\frac{dz}{z}\,
\Bigl(
\psi^{(\gamma)}(z)g\otimes\psi^{*(\gamma)}(z)g
+\psi^{*(\gamma)}(z)g\otimes\psi^{(\gamma)}(z)g
\Bigr)=
\\
&
\sum_{\gamma=1}^{N}\operatorname{res}_{z}\frac{dz}{z}\,
\Bigl(
g\psi^{(\gamma)}(z)\otimes g\psi^{*(\gamma)}(z)
+g\psi^{*(\gamma)}(z)\otimes g\psi^{(\gamma)}(z)
\Bigr).
\end{aligned}
\label{fermionic-bilinear}
\end{equation}

Here $\psi^{(\alpha)}(z)$ and $\psi^{*(\alpha)}(z)$ are the standard charged fermion fields of the $\alpha$-th component, $|n\rangle$ denotes the charged vacuum of charge $n\in\mathbb Z^N$, and $g$ acts on the multicomponent fermionic Fock space.

\subsection{Fermionic conventions and coefficient algebra}

We establish the notation, sign conventions, and bosonization rules underlying the Savchenko--Zabrodin formalism, and introduce the coefficient algebra $\mathcal V_g$ obtained by localizing the tau-function algebra at the shifts of $\tau_0$.

Let $\boldsymbol {s}=(s_1,\ldots,s_N)$, $\boldsymbol {r}=(r_1,\ldots,r_N)\in\mathbb Z^N$,
and set
\begin{equation*}
\boldsymbol {n}=\boldsymbol {s}+\boldsymbol {r},
\qquad
\widebar{\boldsymbol {n}}=\boldsymbol {r}-\boldsymbol {s}.
\label{nr-definition}
\end{equation*}
For $\alpha=1,\ldots,N$, we denote by $\boldsymbol {e}_\alpha$ the $\alpha$-th standard basis vector of $\mathbb Z^N$ and write
$
\boldsymbol {1}=(1,\ldots,1).
$
Write the time variables as
\begin{equation*}
\boldsymbol {t}=(\boldsymbol {t}_1,\ldots,\boldsymbol {t}_N),
\qquad
\boldsymbol {t}_\alpha=(t_{\alpha,1},t_{\alpha,2},\ldots),
\qquad
\widebar{\boldsymbol {t}}
=
(\widebar{\boldsymbol {t}}_1,\ldots,\widebar{\boldsymbol {t}}_N),
\qquad
\widebar{\boldsymbol {t}}_\alpha
=
(\widebar t_{\alpha,1},\widebar t_{\alpha,2},\ldots).
\end{equation*}

\subsubsection*{Fermionic signs}

We introduce the sign convention used by Savchenko--Zabrodin~\cite[Eq.~(3.6)]{SavchenkoZabrodin}. For $m\in\mathbb Z^N$ and $\alpha,\beta=1,\ldots,N$, we define 
\begin{equation*}
\epsilon_\alpha(\boldsymbol {m})=
(-1)^{m_{\alpha+1}+\cdots+m_N}.
\end{equation*}
We further introduce a doubled index sign function as 
\begin{equation}\label{epsilon-alphabeta}
\epsilon_{\alpha\beta}
=
\begin{cases}
1,& \alpha\leq\beta,\\
-1,& \alpha>\beta,
\end{cases}
\qquad
\epsilon_{\alpha\beta}(\boldsymbol {m})=
\epsilon_{\alpha\beta}\,
\epsilon_\alpha(\boldsymbol {m})
\epsilon_\beta(\boldsymbol {m}),
\end{equation}
and equivalently, we can write
\begin{equation}\label{epsilon-alphabeta-explicit}
\epsilon_{\alpha\beta}(\boldsymbol {m})=
\begin{cases}
(-1)^{m_{\alpha+1}+\cdots+m_\beta},
& \alpha<\beta,\\
1,& \alpha=\beta,\\
-(-1)^{m_{\beta+1}+\cdots+m_\alpha},
& \alpha>\beta.
\end{cases}
\end{equation}
This is the sign factor appearing in
\cite[Eq.~(3.15)]{SavchenkoZabrodin}. Since
$\boldsymbol {n}=\boldsymbol {s}+\boldsymbol {r}$, we shall repeatedly use
\begin{equation}\label{epsilon-identities}
\epsilon_{\alpha\beta}\,
\epsilon_{\alpha\beta}(\boldsymbol {s})
\epsilon_{\alpha\beta}(\boldsymbol {r})=
\epsilon_{\alpha\beta}(\boldsymbol {n}),
\qquad
\epsilon_{\alpha\beta}(\boldsymbol {s})
\epsilon_{\alpha\beta}(\boldsymbol {r})=
\epsilon_\alpha(\boldsymbol {n})
\epsilon_\beta(\boldsymbol {n}).
\end{equation}

\subsubsection*{Fermionic currents}

If $J_k^{(\alpha)}$ denotes the fermionic current mode of the
$\alpha$-th component, we introduce
\begin{equation*}
J(\boldsymbol {t})=\sum_{\alpha=1}^N\sum_{k\geq1}
t_{\alpha,k}J_k^{(\alpha)},
\qquad
\widebar J(\widebar{\boldsymbol {t}})=
\sum_{\alpha=1}^N\sum_{k\geq1}
\widebar t_{\alpha,k}J_{-k}^{(\alpha)},
\label{current-operators}
\end{equation*}
interpreted as the current operators associated with the times $\boldsymbol  t,\widebar{\boldsymbol  t}$ introduced in
\cite[§2]{SavchenkoZabrodin}. We also write, following
\cite[Eq.~(2.4)]{SavchenkoZabrodin},
\begin{equation*}
\xi(\boldsymbol {t}_\alpha,z)
=
\sum_{k\geq1}t_{\alpha,k}\,z^k.
\label{xi-definition}
\end{equation*}
We define the Miwa shifts by
\begin{equation*}
\big([z^{-1}]_\beta\big)_{\alpha,k}=
\delta_{\alpha\beta}\frac{z^{-k}}{k},
\qquad
\big([z]_\beta\big)_{\alpha,k}=
\delta_{\alpha\beta}\frac{z^k}{k}, \qquad \alpha, \beta =1, \ldots, N, \, \, \,  k \geq 1.
\label{miwa-shifts}
\end{equation*}

The current--fermion commutation relations are given in
\cite[Eq.~(2.3)]{SavchenkoZabrodin}. In the form that will be used below,
they read
\begin{equation}
\begin{aligned}
\,\textup{e}^{J(\boldsymbol {t})}\psi^{(\beta)}(z)&=
\,\textup{e}^{\xi(\boldsymbol {t}_\beta,z)}
\psi^{(\beta)}(z)\,\textup{e}^{J(\boldsymbol {t})},&
\,\textup{e}^{J(\boldsymbol {t})}\psi^{*(\beta)}(z)&=
\,\textup{e}^{-\xi(\boldsymbol {t}_\beta,z)}
\psi^{*(\beta)}(z)\,\textup{e}^{J(\boldsymbol {t})},
\\
\psi^{(\beta)}(z)\,\textup{e}^{-\widebar J(\widebar{\boldsymbol {t}})}&=
\,\textup{e}^{\xi(\widebar{\boldsymbol {t}}_\beta,z^{-1})}
\,\textup{e}^{-\widebar J(\widebar{\boldsymbol {t}})}
\psi^{(\beta)}(z),&
\psi^{*(\beta)}(z)\,\textup{e}^{-\widebar J(\widebar{\boldsymbol {t}})}&=
\,\textup{e}^{-\xi(\widebar{\boldsymbol {t}}_\beta,z^{-1})}
\,\textup{e}^{-\widebar J(\widebar{\boldsymbol {t}})}
\psi^{*(\beta)}(z).
\end{aligned}
\label{fermion-current}
\end{equation}
The second line is the equivalent form of the barred relations in
\cite[Eq.~(2.3)]{SavchenkoZabrodin}, adapted to commuting the fermions
through $\,\textup{e}^{-\widebar J(\widebar{\boldsymbol  t})}$.

\subsubsection*{Bosonization rules}
We shall use the standard multicomponent bosonization rules. In the
notation above, \cite[Eq.~(3.5)]{SavchenkoZabrodin} of Savchenko--Zabrodin
reads, for $\boldsymbol  m\in\mathbb Z^N$,
\begin{subequations}\label{bosonization}
\begin{align}
\bra{\boldsymbol  m}\psi^{(\beta)}(z)&=
\epsilon_\beta(\boldsymbol  m)
\, z^{m_\beta-1}
\bra{\boldsymbol  m-\boldsymbol  e_\beta}
\textup{e}^{-J([z^{-1}]_\beta)},
\\[1mm]
\bra{\boldsymbol  m}\psi^{*(\beta)}(z)&=\epsilon_\beta(\boldsymbol  m)
\, z^{-m_\beta}
\bra{\boldsymbol  m+\boldsymbol  e_\beta}
\textup{e}^{J([z^{-1}]_\beta)},\\[1mm]
\psi^{(\beta)}(z)\ket{\boldsymbol  m}&=
\epsilon_\beta(\boldsymbol  m)
\,z^{m_\beta}
\,\textup{e}^{\widebar J([z]_\beta)}
\ket{\boldsymbol  m+\boldsymbol  e_\beta},
\\[1mm]
\psi^{*(\beta)}(z)\ket{\boldsymbol  m}&=
\epsilon_\beta(\boldsymbol  m)
\, z^{-m_\beta+1}
\,\textup{e}^{-\widebar J([z]_\beta)}
\ket{\boldsymbol  m-\boldsymbol  e_\beta}.
\end{align}
\end{subequations}
We will repeatedly use the component-charge selection rule
\begin{equation}
\bra{\boldsymbol  a}H\ket{\boldsymbol  b}=0
\qquad\text{if $H$ has no component of charge
$\boldsymbol  a-\boldsymbol  b$,}
\label{chargeselection}
\end{equation}
as well as the charge-independent vacuum expectation
\begin{equation}
\bra{\boldsymbol  m}
\,\textup{e}^{J(\boldsymbol  t)}\,\textup{e}^{-\widebar J(\widebar{\boldsymbol  t})}
\ket{ \boldsymbol  m }=\exp\!\left(
-\sum_{\alpha=1}^N\sum_{k\ge1}
k\,t_{\alpha,k}\widebar t_{\alpha,k}
\right).
\label{charge independence}
\end{equation}
\subsubsection*{Matrix tau-function}

The tau-function of the multicomponent Pfaff--Toda hierarchy is defined
by Savchenko--Zabrodin \cite[Eq.~(3.8)]{SavchenkoZabrodin} as
\begin{equation}
\tau(\boldsymbol  n,\widebar{\boldsymbol  n},\boldsymbol  t,\widebar{\boldsymbol  t})
=
\bra{\boldsymbol  n}
\,\textup{e}^{J(\boldsymbol  t)} g \,\textup{e}^{-\widebar J(\widebar{\boldsymbol  t})}
\ket{-\widebar{\boldsymbol  n}}.
\label{tau-function}
\end{equation}
Passing to the independent variables $\boldsymbol  s,\boldsymbol  r$, they
introduce in \cite[Eq.~(3.13)]{SavchenkoZabrodin} the array
\begin{equation}
\tau_{\alpha\beta}
(\boldsymbol  s,\boldsymbol  r,\boldsymbol  t,\widebar{\boldsymbol  t})
:=
\tau(
\boldsymbol  n+\boldsymbol  e_\alpha-\boldsymbol  e_\beta,
\widebar{\boldsymbol  n},
\boldsymbol  t,\widebar{\boldsymbol  t}),
\label{matrix-tau}
\end{equation}
which may be regarded as an $N\times N$ matrix tau-function. Its
diagonal entries have the common value
\begin{equation}\label{tau-zero}
\tau_0:=\tau_{\alpha\alpha}=
\tau(\boldsymbol  n,\widebar{\boldsymbol  n},\boldsymbol  t,\widebar{\boldsymbol  t}).
\end{equation}
All normalized expressions below involve ratios of tau-functions
together with the fermionic sign factors introduced in~\eqref{epsilon-alphabeta}--\eqref{epsilon-alphabeta-explicit}. For a
Clifford-group element $g$, let $\mathcal A_g$ be the commutative
 algebra generated by all discrete translates
\begin{equation*}
\tau_{\alpha\beta}
(\boldsymbol  s+\boldsymbol  a,\boldsymbol  r+\boldsymbol  b, \boldsymbol  t,\widebar{\boldsymbol  t}),
\qquad
\epsilon_\alpha (\boldsymbol  s+\boldsymbol  r+\boldsymbol  a+\boldsymbol  b),
\end{equation*}
for $\boldsymbol  a,\boldsymbol  b\in\mathbb Z^N$ and
$\alpha,\beta=1,\ldots,N$, together with their continuous derivatives,
where the tau-functions are those associated with $g$ through
\eqref{tau-function}--\eqref{matrix-tau}.
We equip $\mathcal A_g$ with the automorphism $\mathcal S\colon \mathcal{A}_g \to \mathcal{A}_g$
\begin{equation}
(\mathcal S f)(\boldsymbol  s,\boldsymbol  r,\boldsymbol  t,\widebar{\boldsymbol  t})
=
f(\boldsymbol  s-\boldsymbol  1,\boldsymbol  r,\boldsymbol  t,\widebar{\boldsymbol  t}).
\label{shiftdef}
\end{equation}
\subsubsection*{The algebra $\mathcal V_g$ for admissible $g$}
Let $\mathcal M_g\subset\mathcal A_g$ be the multiplicative subset
generated by the iterated action of the automorphism $\mathcal{S}$ on the zeroth tau-function~\eqref{tau-zero} as $\left\{\mathcal S^k(\tau_0)\mid k\in\mathbb Z
\right\}$. We call $g$ \emph{admissible} if
$0\notin\mathcal M_g$.
For an admissible Clifford-group element $g$, we define its coefficient
difference--differential algebra by
\begin{equation}\label{eq:diff_diff_algebra_Clifford}
\mathcal V_g:=\mathcal M_g^{-1}\mathcal A_g.
\end{equation}
When the generators of $\mathcal A_g$ are regarded as functions of the
hierarchy variables, one may think of this localization as restricting
to the locus on which all shifts $\mathcal S^k(\tau_0)$ are non-zero.
Since $\mathcal S(\mathcal M_g)=\mathcal M_g$, the shift
automorphism extends uniquely to $\mathcal V_g$, and the continuous derivations likewise extend uniquely to $\mathcal V_g$ and commute
with $\mathcal S$. 
\begin{remark}
\label{rem:admissible-cliff-group}
The admissible locus is non-empty and contains nontrivial families of Clifford-group elements. Indeed, let
\begin{equation}\label{eq:admissible_g_form}
g_A:=\exp\!\left(
\frac12
\sum_{\alpha,\beta=1}^N
\sum_{i,j\in\mathbb Z}
A_{ij}^{\alpha\beta}
\psi_i^{(\alpha)}\psi_j^{(\beta)}
\right),
\end{equation}
where only finitely many coefficients $A_{ij}^{\alpha\beta}$ are
non-zero and $A_{ij}^{\alpha\beta}=-A_{ji}^{\beta\alpha}$.
These elements form a particular class of the Clifford-group elements
of \cite[Eq.~(3.1)]{SavchenkoZabrodin}.

At $\boldsymbol  r=\boldsymbol  t=\widebar{\boldsymbol  t}=\boldsymbol 0$, each monomial in
the $p$-th nonconstant term of the exponential changes the total
charge by $2p$. Hence
\begin{equation*}
\tau_0(\boldsymbol  s,\boldsymbol 0,\boldsymbol 0,\boldsymbol 0)
=
\langle\boldsymbol  s|g_A|\boldsymbol  s\rangle
=1.
\end{equation*}
Since
\begin{equation*}
\mathcal S^k(\tau_0)
(\boldsymbol  s,\boldsymbol 0,\boldsymbol 0,\boldsymbol 0)
=
\tau_0(\boldsymbol  s-k\boldsymbol 1,\boldsymbol 0,\boldsymbol 0,\boldsymbol 0)
=1,
\qquad k\in\mathbb Z,
\end{equation*}
every element
\begin{equation*}
u=
\prod_{j=1}^{\ell}\mathcal S^{k_j}(\tau_0)
\in\mathcal M_{g_A}
\end{equation*}
satisfies $u(\boldsymbol  s,\boldsymbol 0,\boldsymbol 0,\boldsymbol 0)=1$.
Hence $u\neq0$ for every $u\in\mathcal M_{g_A}$, and therefore
$0\notin\mathcal M_{g_A}$.
Thus every $g_A$ of the form~\eqref{eq:admissible_g_form} is admissible. In particular, the
admissible locus contains families of arbitrarily large finite dimension.
\end{remark}

\noindent
\emph{From now on, we assume that the Clifford-group element $g$ is admissible}. In particular, $\tau_0$ in~\eqref{tau-zero} and all of its $\mathcal S$-translates are invertible in $\mathcal V_g$~\eqref{eq:diff_diff_algebra_Clifford}.

\subsection{Matrix wave functions and matrix residue identity}
\label{subsec:matrix-wave-functions}
We define eight $N \times N$  matrix wave functions from the fermionic bilinear identity and assemble them into $2N \times 2N$ raw wave matrices, obtaining the matrix residue identity that encodes the bilinear constraint.

We apply suitable tensor products of charged bra states to the left of the fermionic bilinear identity~\eqref{fermionic-bilinear}, and the corresponding unshifted charged ket states to its right. 

\begin{proposition}[Matrix form of the fermionic bilinear identity]
\label{prop:matrix-bilinear-identity}
For $\alpha,\beta=1,\ldots,N$, define the four unbarred wave
matrices by
\begin{subequations}\label{eq:bosonized_unbarred}
\begin{align}
(\Psi_1)_{\alpha\beta}(z)
&:=\epsilon_\alpha(\boldsymbol  n)
\frac{
\bra{\boldsymbol  n+\boldsymbol  e_\alpha}
\,\textup{e}^{J(\boldsymbol  t)}
\psi^{(\beta)}(z)g\,\textup{e}^{-\widebar J(\widebar{\boldsymbol  t})}
\ket{-\widebar{\boldsymbol  n}}
}{\tau_0},
\label{eq:Psi1-def}
\\[1mm]
(\Psi_2)_{\alpha\beta}(z)
&:=\epsilon_\alpha(\boldsymbol  n)
\frac{
\bra{\boldsymbol  n-\boldsymbol  e_\alpha}
\,\textup{e}^{J(\boldsymbol  t)}
\psi^{(\beta)}(z)g\,\textup{e}^{-\widebar J(\widebar{\boldsymbol  t})}
\ket{-\widebar{\boldsymbol  n}}
}{\tau_0},
\label{eq:Psi2-def}
\\[1mm]
(\Psi_3)_{\alpha\beta}(z)
&:=z^{-1}\epsilon_\alpha(\boldsymbol  n)
\frac{
\bra{\boldsymbol  n+\boldsymbol  e_\alpha}
\,\textup{e}^{J(\boldsymbol  t)}
\psi^{*(\beta)}(z)g\,\textup{e}^{-\widebar J(\widebar{\boldsymbol  t})}
\ket{-\widebar{\boldsymbol  n}}
}{\tau_0},
\label{eq:Psi1star-def}
\\[1mm]
(\Psi_4)_{\alpha\beta}(z)
&:=z^{-1}\epsilon_\alpha(\boldsymbol  n)
\frac{
\bra{\boldsymbol  n-\boldsymbol  e_\alpha}
\,\textup{e}^{J(\boldsymbol  t)}
\psi^{*(\beta)}(z)g\,\textup{e}^{-\widebar J(\widebar{\boldsymbol  t})}
\ket{-\widebar{\boldsymbol  n}}
}{\tau_0},
\label{eq:Psi2star-def}
\end{align}
\end{subequations} 
and the four barred wave matrices by
\begin{subequations}\label{eq:bosonized_barred}
\begin{align}
(\widebar\Psi_1)_{\alpha\beta}(z)
&:=
\epsilon_\alpha(\boldsymbol  n)
\,\textup{e}^{\xi(\widebar{\boldsymbol  t}_\beta,z^{-1})}
\frac{\bra{\boldsymbol  n+\boldsymbol  e_\alpha}\,\textup{e}^{J(\boldsymbol  t)}g
\,\textup{e}^{-\widebar J(\widebar{\boldsymbol  t})}\psi^{(\beta)}(z)\ket{-\widebar{\boldsymbol  n}}}{\tau_0},
\label{eq:barPsi1-def}
\\
(\widebar\Psi_2)_{\alpha\beta}(z)
&:=
\epsilon_\alpha(\boldsymbol  n)
\,\textup{e}^{\xi(\widebar{\boldsymbol  t}_\beta,z^{-1})}
\frac{
\bra{\boldsymbol  n-\boldsymbol  e_\alpha}
\,\textup{e}^{J(\boldsymbol  t)}
g
\,\textup{e}^{-\widebar J(\widebar{\boldsymbol  t})}
\psi^{(\beta)}(z)
\ket{-\widebar{\boldsymbol  n}}}{\tau_0},
\label{eq:barPsi2-def}
\\
(\widebar\Psi_3)_{\alpha\beta}(z)
&:=
z^{-1}\epsilon_\alpha(\boldsymbol  n)
\,\textup{e}^{-\xi(\widebar{\boldsymbol  t}_\beta,z^{-1})}
\frac{
\bra{\boldsymbol  n+\boldsymbol  e_\alpha}
\,\textup{e}^{J(\boldsymbol  t)}
g
\,\textup{e}^{-\widebar J(\widebar{\boldsymbol  t})}
\psi^{*(\beta)}(z)
\ket{-\widebar{\boldsymbol  n}}}{\tau_0},
\label{eq:barPsi1star-def}
\\
(\widebar\Psi_4)_{\alpha\beta}(z)
&:=
z^{-1}\epsilon_\alpha(\boldsymbol  n)
\,\textup{e}^{-\xi(\widebar{\boldsymbol  t}_\beta,z^{-1})}
\frac{
\bra{\boldsymbol  n-\boldsymbol  e_\alpha}
\,\textup{e}^{J(\boldsymbol  t)}
g
\,\textup{e}^{-\widebar J(\widebar{\boldsymbol  t})}
\psi^{*(\beta)}(z)
\ket{-\widebar{\boldsymbol  n}}}{\tau_0}.
\label{eq:barPsi2star-def}
\end{align}
\end{subequations}
Introduce the $2N\times2N$  matrices
\begin{equation*}
\Phi_{\mathrm{raw}}
:=
\begin{pmatrix}
\Psi_1&\widebar\Psi_3\\
\Psi_2&\widebar\Psi_4
\end{pmatrix},
\qquad
\widebar\Phi_{\mathrm{raw}}
:=
\begin{pmatrix}
\widebar\Psi_1&\Psi_3\\
\widebar\Psi_2&\Psi_4
\end{pmatrix},
\qquad
J_0:=
\begin{pmatrix}
0&I_N\\
-I_N&0
\end{pmatrix},
\label{eq:raw-wave-matrices}
\end{equation*}
 and let $(\boldsymbol  s',\boldsymbol  r',\boldsymbol  t',\widebar{\boldsymbol  t}')$ denote an independent copy of the variables. Then we have
\begin{equation}
\operatorname{res}_{z}
\Phi_{\mathrm{raw}}(\boldsymbol  s,\boldsymbol  r,\boldsymbol  t,\widebar{\boldsymbol  t})\,
J_0\,
(\widebar\Phi_{\mathrm{raw}}(\boldsymbol  s',\boldsymbol  r',\boldsymbol  t',\widebar{\boldsymbol  t}'))^\top\,dz
=
\operatorname{res}_{z}
\widebar\Phi_{\mathrm{raw}}(\boldsymbol  s,\boldsymbol  r,\boldsymbol  t,\widebar{\boldsymbol  t})\,
J_0\,
(\Phi_{\mathrm{raw}}(\boldsymbol  s',\boldsymbol  r',\boldsymbol  t',\widebar{\boldsymbol  t}'))^\top\,dz.
\label{eq:raw-matrix-residue}
\end{equation}
\end{proposition}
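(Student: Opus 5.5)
The plan is to sandwich the fermionic bilinear identity~\eqref{fermionic-bilinear} between suitable bra and ket states, and to read off each $N\times N$ block of~\eqref{eq:raw-matrix-residue} as one such matrix element.

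For the left tensor factor we take $\bra{\boldsymbol n\pm\boldsymbol e_\alpha}\textup{e}^{J(\boldsymbol t)}$ and apply $\textup{e}^{-\widebar J(\widebar{\boldsymbol t})}\ket{-\widebar{\boldsymbol n}}$ on its right. For the right factor we take $\bra{\boldsymbol n'\pm\boldsymbol e_\beta}\textup{e}^{J(\boldsymbol t')}$ and $\textup{e}^{-\widebar J(\widebar{\boldsymbol t}')}\ket{-\widebar{\boldsymbol n}'}$. We then multiply by the signs $\epsilon_\alpha(\boldsymbol n)\epsilon_\beta(\boldsymbol n')$ and divide by $\tau_0\tau_0'$. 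There are four choices of the pair of charge shifts $(\pm\boldsymbol e_\alpha,\pm\boldsymbol e_\beta)$, and these correspond exactly to the four blocks $(1,1),(1,2),(2,1),(2,2)$ of the $2N\times 2N$ identity.

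On the left-hand side of~\eqref{fermionic-bilinear}, the fermions stand to the left of $g$. Each term is therefore a product of entries of $\Psi_1,\Psi_2$ (from $\psi$) with entries of $\Psi_3,\Psi_4$ (from $\psi^*$). The factor $z^{-1}$ built into $\Psi_3,\Psi_4$ absorbs the measure $dz/z$, and the sum over $\gamma$ becomes the matrix product.

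On the right-hand side, the fermions stand to the right of $g$. We move them through $\textup{e}^{-\widebar J(\widebar{\boldsymbol t})}$ using the second line of~\eqref{fermion-current}. This produces the factors $\textup{e}^{\pm\xi(\widebar{\boldsymbol t}_\gamma,z^{-1})}$ that were inserted in the definitions of $\widebar\Psi_i$. The primed and unprimed barred exponentials carry different times, so they do not cancel. Instead they are precisely the prefactors appearing in $\widebar\Psi_1,\dots,\widebar\Psi_4$ evaluated at the respective variables, so each side becomes a product of barred wave matrices.

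It then remains to assemble the blocks. The $\psi\otimes\psi^*$ term and the $\psi^*\otimes\psi$ term enter with the relative sign and block placement encoded by $J_0$. Schematically, the $(i,j)$ block of $\Phi J_0\widebar\Phi'^\top$ is $\Psi_\cdot\,\Psi_\cdot'^{\top}-\widebar\Psi_\cdot\,\widebar\Psi_\cdot'^{\top}$ with the appropriate indices, and the mixing of barred and unbarred matrices inside $\Phi_{\rm raw}$ and $\widebar\Phi_{\rm raw}$ must be matched against the two sides of the fermionic identity. To do this, I would move the barred terms across. One expects a block of the left-hand side of~\eqref{eq:raw-matrix-residue} to read $\operatorname{res}(\Psi_1\Psi_3'^\top-\widebar\Psi_3\widebar\Psi_1'^\top)$, matching the fermionic identity rearranged as (unbarred $\psi\psi^*$) $-$ (barred $\psi^*\psi$) $=$ (barred $\psi\psi^*$) $-$ (unbarred $\psi^*\psi$). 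I would check this block by block.

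The main obstacle is the sign bookkeeping. Two things must be controlled at once: the minus sign in $J_0$, and the sign that arises because the fermion in the second tensor factor anticommutes with odd operators once the tensor product is graded. I would handle the grading by noting that the charged vacua involved have definite parity, so the graded tensor product produces one global sign per term. This sign can be combined with $\epsilon_\alpha(\boldsymbol n)$ through the identities~\eqref{epsilon-identities}, and I would then verify that it agrees with the $\pm$ entries of $J_0$.
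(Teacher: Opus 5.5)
Your proposal is correct and follows the paper's proof. You sandwich the bilinear identity between $\bra{\boldsymbol n+\sigma\boldsymbol e_\alpha}\textup{e}^{J(\boldsymbol t)}\otimes\bra{\boldsymbol n'+\sigma'\boldsymbol e_\beta}\textup{e}^{J(\boldsymbol t')}$ and the $\widebar J$-dressed kets. Block by block, the two sides are then $\Phi_{\rm raw}E_+(\widebar\Phi_{\rm raw}')^\top+\widebar\Phi_{\rm raw}E_-(\Phi_{\rm raw}')^\top$ and $\widebar\Phi_{\rm raw}E_+(\Phi_{\rm raw}')^\top+\Phi_{\rm raw}E_-(\widebar\Phi_{\rm raw}')^\top$, and one rearranges using $J_0=E_+-E_-$.

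The grading concern in your last paragraph is unnecessary. The paper uses the ordinary tensor product, and even a graded one would only contribute a sign common to all four terms of a given block, so it cancels. The minus signs of $J_0$ come solely from moving terms across the equation, as you already observed in your rearrangement step, not from any fermionic sign.
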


\begin{proof}
Throughout the proof, a prime denotes evaluation at the independent variables $(\boldsymbol  s',\boldsymbol  r',\boldsymbol  t',\widebar{\boldsymbol  t}')$. We introduce the notation $
\boldsymbol  n'=\boldsymbol  s'+\boldsymbol  r'$ and $
\widebar{\boldsymbol  n}'=\boldsymbol  r'-\boldsymbol  s'.
$ The matrix residue identity~\eqref{eq:raw-matrix-residue} follows from applying~\eqref{fermionic-bilinear} to  
\begin{equation}\label{eq:left_mult}
    \bra{\boldsymbol l} := \bra{\boldsymbol  n+\sigma\boldsymbol  e_\alpha}
\,\textup{e}^{J(\boldsymbol  t)}
\otimes
\bra{\boldsymbol  n'+\sigma'\boldsymbol  e_\beta}
\,\textup{e}^{J(\boldsymbol  t')}
\end{equation}
on the left and
\begin{equation}\label{eq:right_mult}
\ket{ \boldsymbol r}:=\textup{e}^{-\widebar J(\widebar{\boldsymbol  t})}
\ket{-\widebar{\boldsymbol  n}}
\otimes
\,\textup{e}^{-\widebar J(\widebar{\boldsymbol  t}')}
\ket{-\widebar{\boldsymbol  n}'}
\end{equation}
on the right. Here, $\alpha,\beta\in\{1,\ldots,N\}$ and
$\sigma,\sigma'\in\{+1,-1\}$. More precisely, the correspondence between the $2 \times 2$ blocks of~\eqref{eq:raw-matrix-residue} and the pair of signs $(\sigma,\sigma')$ is given by the table 
\begin{equation*}
\begin{array}{c|c}
(\sigma,\sigma') & \text{block} \\
\hline  
(+1,+1) & (1,1)\\
(+1,-1) & (1,2)\\
(-1,+1) & (2,1)\\
(-1,-1) & (2,2).
\end{array}
\end{equation*}
We treat the $(1,1)$-entry in detail, the others are obtained through identical computations. For $(\sigma,\sigma')=(1,1)$ and with $\bra{\boldsymbol{l}} $ and $\ket{\boldsymbol{r}}$ in~\eqref{eq:left_mult} and~\eqref{eq:right_mult}, we define $Z_1:= \bra{\boldsymbol{l}} \text{LHS~\eqref{fermionic-bilinear}} \ket{\boldsymbol{r}} $, then 
\begin{equation*}
\begin{aligned}
Z_1
&=
\sum_{\gamma=1}^N\operatorname{res}_z\frac{dz}{z}
\Big(
\bra{ \boldsymbol  n+\boldsymbol  e_\alpha}\,\textup{e}^{J(\boldsymbol  t)}
\psi^{(\gamma)}(z)g \,\textup{e}^{-\widebar J(\widebar{\boldsymbol  t})}
\ket{-\widebar{\boldsymbol  n}}
 \times
\bra{ \boldsymbol  n'+\boldsymbol  e_\beta}\,\textup{e}^{J(\boldsymbol  t')}
\psi^{*(\gamma)}(z)g \,\textup{e}^{-\widebar J(\widebar{\boldsymbol  t}')}
\ket{-\widebar{\boldsymbol  n}'} \Big)
\\
&+ \sum_{\gamma=1}^N\operatorname{res}_z\frac{dz}{z}
\Big(
\bra{ \boldsymbol  n+\boldsymbol  e_\alpha}\,\textup{e}^{J(\boldsymbol  t)}
\psi^{*(\gamma)}(z)g \,\textup{e}^{-\widebar J(\widebar{\boldsymbol  t})}
\ket{-\widebar{\boldsymbol  n} }
\times
\bra{ \boldsymbol  n'+\boldsymbol  e_\beta}\,\textup{e}^{J(\boldsymbol  t')}
\psi^{(\gamma)}(z)g \,\textup{e}^{-\widebar J(\widebar{\boldsymbol  t}')}
\ket{-\widebar{\boldsymbol  n}' }
\Big)
\\[1mm]
&=
\frac{\tau_0\tau_0'}
{\epsilon_\alpha(\boldsymbol  n)\epsilon_\beta(\boldsymbol  n')}
\operatorname{res}_z
\sum_{\gamma=1}^N
\Big(
(\Psi_1)_{\alpha\gamma}(\Psi_3')_{\beta\gamma}
+
(\Psi_3)_{\alpha\gamma}(\Psi_1')_{\beta\gamma}
\Big)\,dz
\\
&=
\frac{\tau_0\tau_0'}
{\epsilon_\alpha(\boldsymbol  n)\epsilon_\beta(\boldsymbol  n')}
\operatorname{res}_z
\Big(
\Psi_1(\Psi_3')^\top
+
\Psi_3(\Psi_1')^\top
\Big)_{\alpha\beta}\,dz
\\
&=
\frac{\tau_0\tau_0'}
{\epsilon_\alpha(\boldsymbol  n)\epsilon_\beta(\boldsymbol  n')}
\operatorname{res}_z
\Big(
\Phi_{\rm raw}E_+(\widebar\Phi_{\rm raw}')^\top+
\widebar\Phi_{\rm raw}E_-(\Phi_{\rm raw}')^\top
\Big)_{11,\alpha\beta}\,dz,
\end{aligned}
\end{equation*}
where
\begin{equation}\label{eq:EplusEminus}
E_+:=
\begin{pmatrix}
0&I_N\\
0&0
\end{pmatrix},
\qquad
E_-:=\begin{pmatrix}
0&0\\
I_N&0
\end{pmatrix}.
\end{equation}
Second, we introduce $Z_2:= \bra{\boldsymbol{l}} \text{RHS~\eqref{fermionic-bilinear}} \ket{\boldsymbol{r}} $, with $\bra{\boldsymbol{l}} $ and $\ket{\boldsymbol{r}}$ in~\eqref{eq:left_mult} and~\eqref{eq:right_mult}, and using the fermion-current relations~\eqref{fermion-current}, we have
\begin{equation*}
\begin{aligned}
Z_2
&=
\sum_{\gamma=1}^N
\operatorname{res}_z\frac{dz}{z}
\Big(
\bra{ \boldsymbol  n+\boldsymbol  e_\alpha}\,\textup{e}^{J(\boldsymbol  t)}
g\psi^{(\gamma)}(z)\,\textup{e}^{-\widebar J(\widebar{\boldsymbol  t})}
\ket{-\widebar{\boldsymbol  n} }
\times
\bra{ \boldsymbol  n'+\boldsymbol  e_\beta}\,\textup{e}^{J(\boldsymbol  t')}
g\psi^{*(\gamma)}(z)\,\textup{e}^{-\widebar J(\widebar{\boldsymbol  t}')}
\ket{-\widebar{\boldsymbol  n}' }
\Big)
\\
&\quad+
\sum_{\gamma=1}^N
\operatorname{res}_z\frac{dz}{z}
\Big(
\bra{ \boldsymbol  n+\boldsymbol  e_\alpha}\,\textup{e}^{J(\boldsymbol  t)}
g\psi^{*(\gamma)}(z)\,\textup{e}^{-\widebar J(\widebar{\boldsymbol  t})}
\ket{-\widebar{\boldsymbol  n} }
\times
\bra{ \boldsymbol  n'+\boldsymbol  e_\beta}\,\textup{e}^{J(\boldsymbol  t')}
g\psi^{(\gamma)}(z)\,\textup{e}^{-\widebar J(\widebar{\boldsymbol  t}')}
\ket{-\widebar{\boldsymbol  n}'}
\Big)
\\
&=
\sum_{\gamma=1}^N
\operatorname{res}_z\frac{dz}{z}
\Big(
\,\textup{e}^{\xi(\widebar{\boldsymbol  t}_\gamma,z^{-1})-\xi(\widebar{\boldsymbol  t}'_\gamma,z^{-1})}
\bra{ \boldsymbol  n+\boldsymbol  e_\alpha}\,\textup{e}^{J(\boldsymbol  t)}
g \,\textup{e}^{-\widebar J(\widebar{\boldsymbol  t})}\psi^{(\gamma)}(z)
\ket{-\widebar{\boldsymbol  n}} \times
\\
&\hspace{55ex}\times
\bra{ \boldsymbol  n'+\boldsymbol  e_\beta}\,\textup{e}^{J(\boldsymbol  t')}
g \,\textup{e}^{-\widebar J(\widebar{\boldsymbol  t}')}\psi^{*(\gamma)}(z)
\ket{-\widebar{\boldsymbol  n}' }
\Big)
\\
&\quad+
\sum_{\gamma=1}^N
\operatorname{res}_z\frac{dz}{z}
\Big(
\,\textup{e}^{-\xi(\widebar{\boldsymbol  t}_\gamma,z^{-1})+\xi(\widebar{\boldsymbol  t}'_\gamma,z^{-1})}
\bra{ \boldsymbol  n+\boldsymbol  e_\alpha}\,\textup{e}^{J(\boldsymbol  t)}
g \,\textup{e}^{-\widebar J(\widebar{\boldsymbol  t})}\psi^{*(\gamma)}(z)
\ket{-\widebar{\boldsymbol  n} }
\times
\\
&\hspace{55ex}\times\bra{ \boldsymbol  n'+\boldsymbol  e_\beta}\,\textup{e}^{J(\boldsymbol  t')}
g \,\textup{e}^{-\widebar J(\widebar{\boldsymbol  t}')}\psi^{(\gamma)}(z)
\ket{-\widebar{\boldsymbol  n}' }
\Big)
\\
&=
\frac{\tau_0\tau_0'}
{\epsilon_\alpha(\boldsymbol  n)\epsilon_\beta(\boldsymbol  n')}
\operatorname{res}_z
\sum_{\gamma=1}^N
\Big(
(\widebar\Psi_1)_{\alpha\gamma}(\widebar\Psi'_3)_{\beta\gamma}
+
(\widebar\Psi_3)_{\alpha\gamma}(\widebar\Psi'_1)_{\beta\gamma}
\Big)\,dz
\\
&=
\frac{\tau_0\tau_0'}
{\epsilon_\alpha(\boldsymbol  n)\epsilon_\beta(\boldsymbol  n')}
\operatorname{res}_z
\Big(
\widebar\Psi_1(\widebar\Psi'_3)^\top
+
\widebar\Psi_3(\widebar\Psi'_1)^\top
\Big)_{\alpha\beta}\,dz
\\
&=
\frac{\tau_0\tau_0'}
{\epsilon_\alpha(\boldsymbol  n)\epsilon_\beta(\boldsymbol  n')}
\operatorname{res}_z
\Big(
\widebar\Phi_{\rm raw}E_+(\Phi'_{\rm raw})^\top
+
\Phi_{\rm raw}E_-(\widebar\Phi'_{\rm raw})^\top
\Big)_{11,\alpha\beta}\,dz .
\end{aligned}
\end{equation*}
Since $Z_1=Z_2$ for all $\alpha,\beta=1,\ldots,N$, and given that
$\tau_0$ and $\tau_0'$ are invertible in the corresponding unprimed and primed copies of $\mathcal V_g$, cancellation of the
common prefactor gives
\begin{equation*}
\operatorname{res}_z
\Big(
\Phi_{\rm raw}E_+(\widebar\Phi_{\rm raw}')^\top
+
\widebar\Phi_{\rm raw}E_-(\Phi_{\rm raw}')^\top
\Big)_{11}\,dz = \operatorname{res}_z
\Big(
\widebar\Phi_{\rm raw}E_+(\Phi'_{\rm raw})^\top
+
\Phi_{\rm raw}E_-(\widebar\Phi'_{\rm raw})^\top
\Big)_{11}\,dz .
\end{equation*}
Using the relation $J_0=E_+-E_-$ with~\eqref{eq:EplusEminus}, we conclude that the matrix residue identity holds on the $(1,1)$ block. 
\end{proof}

 To translate the residue identity~\eqref{eq:raw-matrix-residue} into
pseudodifference-operator form, we first separate the universal
dependence of the wave functions on the charge variables and
continuous times. The following bosonization formulas make this
separation explicit.

\begin{lemma}[Bosonized matrix wave functions]
\label{lem:bosonized-wave-functions}
The eight matrix wave functions introduced above are given by
\begin{subequations}\label{eq:bosonized_wavefunction}
\begin{align}
(\Psi_1)_{\alpha\beta}(z)
&=
\epsilon_{\alpha\beta}(\boldsymbol  n)\,
z^{n_\beta+\delta_{\alpha\beta}-1}
\,\textup{e}^{\xi(\boldsymbol  t_\beta,z)}
\frac{
\tau_{\alpha\beta}
(\boldsymbol  s,\boldsymbol  r,
 \boldsymbol  t-[z^{-1}]_\beta,\widebar{\boldsymbol  t})
}{
\tau_0
},
\label{eq:Psi1-bos}
\\
(\Psi_2)_{\alpha\beta}(z)
&=
\epsilon_{\alpha\beta}(\boldsymbol  n)\,
z^{n_\beta-\delta_{\alpha\beta}-1}
\,\textup{e}^{\xi(\boldsymbol  t_\beta,z)}
\frac{
\tau_{\alpha\beta}
(\boldsymbol  s-\boldsymbol  e_\alpha,
 \boldsymbol  r-\boldsymbol  e_\alpha,
 \boldsymbol  t-[z^{-1}]_\beta,\widebar{\boldsymbol  t})
}{
\tau_0
},
\label{eq:Psi2-bos}
\\
(\Psi_3)_{\alpha\beta}(z)
&=
\epsilon_{\alpha\beta}(\boldsymbol  n)\,
z^{-n_\beta-\delta_{\alpha\beta}-1}
\,\textup{e}^{-\xi(\boldsymbol  t_\beta,z)}
\frac{
\tau_{\beta\alpha}
(\boldsymbol  s+\boldsymbol  e_\alpha,
 \boldsymbol  r+\boldsymbol  e_\alpha,
 \boldsymbol  t+[z^{-1}]_\beta,\widebar{\boldsymbol  t})
}{
\tau_0
},
\label{eq:Psi1star-bos}
\\
(\Psi_4)_{\alpha\beta}(z)
&=
\epsilon_{\alpha\beta}(\boldsymbol  n)\,
z^{-n_\beta+\delta_{\alpha\beta}-1}
\,\textup{e}^{-\xi(\boldsymbol  t_\beta,z)}
\frac{
\tau_{\beta\alpha}
(\boldsymbol  s,\boldsymbol  r,
 \boldsymbol  t+[z^{-1}]_\beta,\widebar{\boldsymbol  t})
}{
\tau_0
},
\label{eq:Psi2star-bos}
\end{align}
\end{subequations}
and
\begin{subequations}\label{eq:bosonized_wavefunction_barred}
\begin{align}
(\widebar\Psi_1)_{\alpha\beta}(z)
&=
\epsilon_\alpha(\boldsymbol  n)\epsilon_\beta(\boldsymbol  n)\,
z^{-\widebar n_\beta}
\,\textup{e}^{\xi(\widebar{\boldsymbol  t}_\beta,z^{-1})}
\frac{
\tau_{\alpha\beta}
(\boldsymbol  s+\boldsymbol  e_\beta,
 \boldsymbol  r,\boldsymbol  t,
 \widebar{\boldsymbol  t}-[z]_\beta)
}{
\tau_0
},
\label{eq:barPsi1-bos}
\\
(\widebar\Psi_2)_{\alpha\beta}(z)
&=
\epsilon_\alpha(\boldsymbol  n)\epsilon_\beta(\boldsymbol  n)\,
z^{-\widebar n_\beta}
\,\textup{e}^{\xi(\widebar{\boldsymbol  t}_\beta,z^{-1})}
\frac{
\tau_{\beta\alpha}
(\boldsymbol  s,
 \boldsymbol  r-\boldsymbol  e_\beta,
 \boldsymbol  t,
 \widebar{\boldsymbol  t}-[z]_\beta)
}{
\tau_0
},
\label{eq:barPsi2-bos}
\\
(\widebar\Psi_3)_{\alpha\beta}(z)
&=
\epsilon_\alpha(\boldsymbol  n)\epsilon_\beta(\boldsymbol  n)\,
z^{\widebar n_\beta}
\,\textup{e}^{-\xi(\widebar{\boldsymbol  t}_\beta,z^{-1})}
\frac{
\tau_{\alpha\beta}
(\boldsymbol  s,
 \boldsymbol  r+\boldsymbol  e_\beta,
 \boldsymbol  t,
 \widebar{\boldsymbol  t}+[z]_\beta)
}{
\tau_0
},
\label{eq:barPsi1star-bos}
\\
(\widebar\Psi_4)_{\alpha\beta}(z)
&=
\epsilon_\alpha(\boldsymbol  n)\epsilon_\beta(\boldsymbol  n)\,
z^{\widebar n_\beta}
\,\textup{e}^{-\xi(\widebar{\boldsymbol  t}_\beta,z^{-1})}
\frac{
\tau_{\beta\alpha}
(\boldsymbol  s-\boldsymbol  e_\beta,
 \boldsymbol  r,
 \boldsymbol  t,
 \widebar{\boldsymbol  t}+[z]_\beta)
}{
\tau_0
}.
\label{eq:barPsi2star-bos}
\end{align}
\end{subequations}
\end{lemma}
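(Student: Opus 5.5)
The plan is to derive each of the eight formulas by the same three moves: commute fermions through the current exponentials, bosonize the fermion against the adjacent charged vacuum, and rewrite the resulting matrix element as a shifted tau-function.

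For the unbarred matrices $\Psi_1,\ldots,\Psi_4$, I first move $\psi^{(\beta)}(z)$ or $\psi^{*(\beta)}(z)$ to the left of $\textup{e}^{J(\boldsymbol t)}$ using the first line of \eqref{fermion-current}. This produces the factor $\textup{e}^{\pm\xi(\boldsymbol t_\beta,z)}$. Next I apply the bra bosonization rules \eqref{bosonization} to $\bra{\boldsymbol n\pm\boldsymbol e_\alpha}\psi^{(\beta)}(z)$ or $\bra{\boldsymbol n\pm\boldsymbol e_\alpha}\psi^{*(\beta)}(z)$. This gives a power of $z$, a sign $\epsilon_\beta(\boldsymbol n\pm\boldsymbol e_\alpha)$, a shifted bra $\bra{\boldsymbol n\pm\boldsymbol e_\alpha\mp\boldsymbol e_\beta}$, and a factor $\textup{e}^{\mp J([z^{-1}]_\beta)}$. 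That last factor combines with $\textup{e}^{J(\boldsymbol t)}$ into a Miwa shift $\boldsymbol t\mp[z^{-1}]_\beta$.

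The remaining matrix element is $\tau$ evaluated at the shifted charges. I then express it in the $(\boldsymbol s,\boldsymbol r)$ variables through \eqref{matrix-tau}. For $\Psi_1$ the charge is $\boldsymbol n+\boldsymbol e_\alpha-\boldsymbol e_\beta$ with $\widebar{\boldsymbol n}$ unchanged, which is exactly $\tau_{\alpha\beta}(\boldsymbol s,\boldsymbol r)$. For $\Psi_2$, the charge $\boldsymbol n-\boldsymbol e_\alpha-\boldsymbol e_\beta$ should be written as $(\boldsymbol n-2\boldsymbol e_\alpha)+\boldsymbol e_\alpha-\boldsymbol e_\beta$. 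With $\widebar{\boldsymbol n}$ fixed, this corresponds to $\boldsymbol s-\boldsymbol e_\alpha$ and $\boldsymbol r-\boldsymbol e_\alpha$. $\Psi_3$ and $\Psi_4$ are handled the same way, giving $\tau_{\beta\alpha}$.

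The exponent of $z$ comes from $m_\beta$ evaluated at $\boldsymbol m=\boldsymbol n\pm\boldsymbol e_\alpha$. This accounts for the $\pm\delta_{\alpha\beta}$ term, while the $z^{-1}$ prefactors in $\Psi_3$ and $\Psi_4$ produce the shifts $-1$.

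For the barred matrices I use the ket rules on $\psi^{(\beta)}(z)\ket{-\widebar{\boldsymbol n}}$ and $\psi^{*(\beta)}(z)\ket{-\widebar{\boldsymbol n}}$. These yield $z^{\mp\widebar n_\beta}$ together with $\textup{e}^{\pm\widebar J([z]_\beta)}$, which merges into the shift $\widebar{\boldsymbol t}\mp[z]_\beta$. The explicit factors $\textup{e}^{\pm\xi(\widebar{\boldsymbol t}_\beta,z^{-1})}$ in the definitions are kept as they stand. Here the ket charge $-\widebar{\boldsymbol n}\pm\boldsymbol e_\beta$ changes $\widebar{\boldsymbol n}$ rather than $\boldsymbol n$. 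For instance, for $\widebar\Psi_1$ the pair is $(\boldsymbol n+\boldsymbol e_\alpha,\widebar{\boldsymbol n}-\boldsymbol e_\beta)$. Writing $\boldsymbol n+\boldsymbol e_\alpha=(\boldsymbol n+\boldsymbol e_\beta)+\boldsymbol e_\alpha-\boldsymbol e_\beta$ gives $\boldsymbol s+\boldsymbol e_\beta$ with $\boldsymbol r$ fixed. The other three barred cases follow by the same decomposition, and for $\widebar\Psi_2$ and $\widebar\Psi_4$ one uses the $\tau_{\beta\alpha}$ decomposition of $\boldsymbol n-\boldsymbol e_\alpha$.

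The main obstacle will be the signs. In the unbarred case, the definition's factor $\epsilon_\alpha(\boldsymbol n)$ multiplies the bosonization factor $\epsilon_\beta(\boldsymbol n\pm\boldsymbol e_\alpha)$. Since $\epsilon_\beta(\boldsymbol n+\boldsymbol e_\alpha)=\epsilon_\beta(\boldsymbol n)\cdot(-1)$ exactly when $\alpha>\beta$, the product is $\epsilon_{\alpha\beta}\epsilon_\alpha(\boldsymbol n)\epsilon_\beta(\boldsymbol n)=\epsilon_{\alpha\beta}(\boldsymbol n)$ by \eqref{epsilon-alphabeta}, and I will check this case by case ($\alpha<\beta$, $\alpha=\beta$, $\alpha>\beta$). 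In the barred case the bosonization sign is $\epsilon_\beta(-\widebar{\boldsymbol n})$, and I must show that $\epsilon_\beta(-\widebar{\boldsymbol n})=\epsilon_\beta(\boldsymbol n)$. This holds because $-\widebar{\boldsymbol n}\equiv\boldsymbol n \pmod 2$, as $\boldsymbol n+\widebar{\boldsymbol n}=2\boldsymbol r$. I also need to confirm that any additional sign from the ordering implicit in $\tau_{\alpha\beta}$ versus $\tau_{\beta\alpha}$ is already absorbed in the sign convention of \cite[Eq.~(3.15)]{SavchenkoZabrodin}, so I would compare each formula against that reference as a final check.
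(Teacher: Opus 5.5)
Your proposal is correct and follows the paper's proof essentially step for step. In both, the fermion is moved through $\textup{e}^{J(\boldsymbol t)}$ (or, in the barred case, bosonized directly against the ket it already sits next to), the resulting current exponentials merge into a Miwa shift, the shifted $(\boldsymbol s,\boldsymbol r)$ arguments are read off, and the signs reduce via $\epsilon_\beta(\boldsymbol n\pm\boldsymbol e_\alpha)=\epsilon_{\alpha\beta}\epsilon_\beta(\boldsymbol n)$ and $\epsilon_\beta(-\widebar{\boldsymbol n})=\epsilon_\beta(\boldsymbol n)$; your use of $\boldsymbol n+\widebar{\boldsymbol n}=2\boldsymbol r$ in place of the paper's $\boldsymbol n-\widebar{\boldsymbol n}=2\boldsymbol s$ is equivalent. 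Your closing worry about an extra sign hidden in $\tau_{\alpha\beta}$ versus $\tau_{\beta\alpha}$ is unnecessary, since \eqref{matrix-tau} defines $\tau_{\alpha\beta}$ as $\tau$ at shifted charges with no sign factor, so each matrix element is identified directly.
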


\begin{proof}
We prove the first unbarred~\eqref{eq:Psi1-bos} and barred~\eqref{eq:barPsi1-bos} formulas, the remaining six follow in exactly the same way.
By definition of $\Psi_1$ in~\eqref{eq:Psi1-def} and using the current-fermion relations~\eqref{fermion-current}, we obtain
\begin{equation*}
\begin{aligned}
(\Psi_1)_{\alpha\beta}(z)
&=
\frac{\epsilon_\alpha(\boldsymbol  n)}{\tau_0}
\bra{\boldsymbol  n+\boldsymbol  e_\alpha}
\,\textup{e}^{J(\boldsymbol  t)}
\psi^{(\beta)}(z)
g
\,\textup{e}^{-\widebar J(\widebar{\boldsymbol  t})}
\ket{-\widebar{\boldsymbol  n}} \\
&=\frac{
\epsilon_\alpha(\boldsymbol  n)
\,\textup{e}^{\xi(\boldsymbol  t_\beta,z)}
}{\tau_0}
\bra{\boldsymbol  n+\boldsymbol  e_\alpha}
\psi^{(\beta)}(z)
\,\textup{e}^{J(\boldsymbol  t)}
g
\,\textup{e}^{-\widebar J(\widebar{\boldsymbol  t})}
\ket{-\widebar{\boldsymbol  n}}.
\end{aligned}
\end{equation*}
The left-action bosonization formula~\eqref{bosonization} applied to $\boldsymbol  m=\boldsymbol  n+\boldsymbol  e_\alpha$ reads
\begin{equation*}
\bra{ \boldsymbol  n+\boldsymbol  e_\alpha}\psi^{(\beta)}(z)
=
\epsilon_\beta(\boldsymbol  n+\boldsymbol  e_\alpha)\,
z^{n_\beta+\delta_{\alpha \beta}-1}
\bra{ \boldsymbol  n+\boldsymbol  e_\alpha-\boldsymbol  e_\beta}
\,\textup{e}^{-J([z^{-1}]_\beta).}
\end{equation*}
Combining these identities gives
\begin{align*}
(\Psi_1)_{\alpha\beta}(z)
&=
\frac{
\epsilon_\alpha(\boldsymbol  n)
\epsilon_\beta(\boldsymbol  n+\boldsymbol  e_\alpha)
z^{n_\beta+\delta_{\alpha\beta}-1}
\,\textup{e}^{\xi(\boldsymbol  t_\beta,z)}
}{\tau_0}
\times
\bra{
\boldsymbol  n+\boldsymbol  e_\alpha-\boldsymbol  e_\beta
}
\,\textup{e}^{-J([z^{-1}]_\beta)}
\,\textup{e}^{J(\boldsymbol  t)}
g
\,\textup{e}^{-\widebar J(\widebar{\boldsymbol  t})}
\ket{-\widebar{\boldsymbol  n}}\\
&=\frac{
\epsilon_{\alpha \beta}(\boldsymbol  n)
z^{n_\beta+\delta_{\alpha\beta}-1}
\,\textup{e}^{\xi(\boldsymbol  t_\beta,z)}
}{\tau_0}
\times
\bra{
\boldsymbol  n+\boldsymbol  e_\alpha-\boldsymbol  e_\beta
}
\,\textup{e}^{-J([z^{-1}]_\beta)}
\,\textup{e}^{J(\boldsymbol  t)}
g
\,\textup{e}^{-\widebar J(\widebar{\boldsymbol  t})}
\ket{-\widebar{\boldsymbol  n}}\\
&= \frac{
\epsilon_{\alpha \beta}(\boldsymbol  n)
z^{n_\beta+\delta_{\alpha\beta}-1}
\,\textup{e}^{\xi(\boldsymbol  t_\beta,z)}
}{\tau_0}
\times
\bra{
\boldsymbol  n+\boldsymbol  e_\alpha-\boldsymbol  e_\beta
}
\,\textup{e}^{J(\boldsymbol  t-[z^{-1}]_\beta)}
g
\,\textup{e}^{-\widebar J(\widebar{\boldsymbol  t})}
\ket{-\widebar{\boldsymbol  n}} \\
&= \frac{
\epsilon_{\alpha \beta}(\boldsymbol  n)
z^{n_\beta+\delta_{\alpha\beta}-1}
\,\textup{e}^{\xi(\boldsymbol  t_\beta,z)}
}{\tau_0}
\times
\tau_{\alpha\beta}
(\boldsymbol  s,\boldsymbol  r,
 \boldsymbol  t-[z^{-1}]_\beta,\widebar{\boldsymbol  t}).
\end{align*}
We have used above that 
the positive current modes commute and the fermionic sign identities,
\begin{equation*}
\epsilon_\beta(\boldsymbol  n+\boldsymbol  e_\alpha)=\epsilon_{\alpha\beta}\epsilon_\beta(\boldsymbol  n), \qquad \epsilon_\alpha(\boldsymbol  n)\epsilon_\beta(\boldsymbol  n+\boldsymbol  e_\alpha)=\epsilon_{\alpha\beta}(\boldsymbol  n).
\end{equation*}
For the barred formula~\eqref{eq:barPsi1-bos}, we first apply the
right-action bosonization identity~\eqref{bosonization} to 
$\boldsymbol  m=-\widebar{\boldsymbol  n}$: 
\begin{equation*}
\psi^{(\beta)}(z)\ket{-\widebar{\boldsymbol  n}}
=
\epsilon_\beta(-\widebar{\boldsymbol  n})\,
z^{-\widebar n_\beta}
\,\textup{e}^{\widebar J([z]_\beta)}
\ket{-\widebar{\boldsymbol  n}+\boldsymbol  e_\beta},
\end{equation*}
and using the definition of $\widebar \Psi_1$  in~\eqref{eq:barPsi1-def}, we obtain 
\begin{equation*}
\begin{aligned}
(\widebar\Psi_1)_{\alpha\beta}(z)
&=
\frac{\epsilon_\alpha(\boldsymbol  n)\,\textup{e}^{\xi(\widebar{\boldsymbol  t}_\beta,z^{-1})}}{\tau_0}
\bra{ \boldsymbol  n+\boldsymbol  e_\alpha}
\,\textup{e}^{J(\boldsymbol  t)}g \,\textup{e}^{-\widebar J(\widebar{\boldsymbol  t})}
\psi^{(\beta)}(z)
\ket{-\widebar{\boldsymbol  n}}
\\
&=
\frac{\epsilon_\alpha(\boldsymbol  n)\epsilon_\beta(-\widebar{\boldsymbol  n})
z^{-\widebar n_\beta}\,\textup{e}^{\xi(\widebar{\boldsymbol  t}_\beta,z^{-1})}}{\tau_0}
\bra{ \boldsymbol  n+\boldsymbol  e_\alpha}
\,\textup{e}^{J(\boldsymbol  t)}g \,\textup{e}^{-\widebar J(\widebar{\boldsymbol  t})}
\,\textup{e}^{\widebar J([z]_\beta)}
\ket{-\widebar{\boldsymbol  n}+\boldsymbol  e_\beta}
\\
&=
\frac{\epsilon_\alpha(\boldsymbol  n)\epsilon_\beta(\boldsymbol  n)
z^{-\widebar n_\beta}\,\textup{e}^{\xi(\widebar{\boldsymbol  t}_\beta,z^{-1})}}{\tau_0}
\bra{ \boldsymbol  n+\boldsymbol  e_\alpha}
\,\textup{e}^{J(\boldsymbol  t)}g \,\textup{e}^{-\widebar J(\widebar{\boldsymbol  t}-[z]_\beta)}
\ket{-\widebar{\boldsymbol  n}+\boldsymbol  e_\beta}
\\
&=
\epsilon_\alpha(\boldsymbol  n)\epsilon_\beta(\boldsymbol  n)
z^{-\widebar n_\beta}\,\textup{e}^{\xi(\widebar{\boldsymbol  t}_\beta,z^{-1})}
\frac{\tau_{\alpha\beta}(\boldsymbol  s+\boldsymbol  e_\beta,\boldsymbol  r,\boldsymbol  t,\widebar{\boldsymbol  t}-[z]_\beta)}
{\tau_0}.
\end{aligned}
\end{equation*}
Here we have used $\epsilon_\beta(-\widebar {\boldsymbol {n}})=\epsilon_\beta(\widebar{\boldsymbol  n} )=\epsilon_\beta(\boldsymbol  n)$, 
which follows from $ \boldsymbol  n-\widebar{\boldsymbol  n}=2 \boldsymbol  s$, as well as the commutativity of the negative current modes. This proves~\eqref{eq:barPsi1-bos}. The remaining six formulas follow from the corresponding bosonization identities by the same computation.
\end{proof}

\subsection{Raw dressing operators}
\label{sec:SZ-dressing-operators}
We factor the matrix wave functions into bare spectral factors and matrix pseudodifference operators, defining the raw dressing pair $(X_N,\widebar X_N)$ and determining their asymptotic structure.

Lemma~\ref{lem:bosonized-wave-functions} shows what should be regarded as the
\emph{bare} part of the wave functions.  For each
$\beta=1,\ldots,N$, we define
\begin{equation}
\begin{aligned}
\chi_\beta(z)
&:=
z^{n_\beta}
\,\textup{e}^{\xi(\boldsymbol  t_\beta,z)}
=
z^{s_\beta+r_\beta}
\,\textup{e}^{\xi(\boldsymbol  t_\beta,z)},
&
\chi_\beta^\lor(z)
&:=
z^{-n_\beta}
\,\textup{e}^{-\xi(\boldsymbol  t_\beta,z)}
=
z^{-s_\beta-r_\beta}
\,\textup{e}^{-\xi(\boldsymbol  t_\beta,z)},
\\
\widebar\chi_\beta(z)
&:=
z^{-\widebar n_\beta}
\,\textup{e}^{\xi(\widebar{\boldsymbol  t}_\beta,z^{-1})}
=
z^{s_\beta-r_\beta}
\,\textup{e}^{\xi(\widebar{\boldsymbol  t}_\beta,z^{-1})},
&
\widebar\chi_\beta^\lor(z)
&:=
z^{\widebar n_\beta}
\,\textup{e}^{-\xi(\widebar{\boldsymbol  t}_\beta,z^{-1})}
=
z^{-s_\beta+r_\beta}
\,\textup{e}^{-\xi(\widebar{\boldsymbol  t}_\beta,z^{-1})}.
\end{aligned}
\label{defchis}
\end{equation}
We package these \textit{bare wave functions} into four diagonal $N \times N$ matrices 
\begin{equation*}
\chi
=
\operatorname{diag}(\chi_1,\ldots,\chi_N),
\quad
\chi^\lor
=
\operatorname{diag}(\chi_1^\lor,\ldots,\chi_N^\lor),
 \quad
\widebar\chi
=
\operatorname{diag}(\widebar\chi_1,\ldots,\widebar\chi_N),
\quad
\widebar\chi^\lor
=
\operatorname{diag}
(\widebar\chi_1^\lor,\ldots,\widebar\chi_N^\lor),
\end{equation*}
with which we populate the two diagonal $2N \times 2N$ matrices
\begin{equation}\label{bareframes}
\Xi(z):=
\begin{pmatrix}
\chi(z)&0\\
0&\widebar\chi^\lor(z)
\end{pmatrix},
\qquad
\Xi^\lor(z):=
\begin{pmatrix}
\widebar\chi(z)&0\\
0&\chi^\lor(z)
\end{pmatrix}.
\end{equation}
The bosonized formulas \eqref{eq:bosonized_wavefunction} and~\eqref{eq:bosonized_wavefunction_barred} express the wave functions as their bare spectral factors multiplied by formal series at either $z=\infty$ or $z=0$.  We now interpret these formal series as matrix pseudodifference operators.
By definition~\eqref{shiftdef}, the shift $\mathcal S$ acts on the bare wave functions as follows
\begin{equation}
\mathcal S\chi_\beta=z^{-1}\chi_\beta,
\qquad
\mathcal S\chi_\beta^\lor=z\chi_\beta^\lor,
\qquad
\mathcal S\widebar\chi_\beta=z^{-1}\widebar\chi_\beta,
\qquad
\mathcal S\widebar\chi_\beta^\lor=z\widebar\chi_\beta^\lor.
\label{shiftaction}
\end{equation}
Thus powers of the spectral parameter appearing after the bare factors
have been removed can be translated directly into powers of
$\mathcal S$.

\begin{lemma}
\label{lem:XNbarXN-def}
There exists a unique pair of matrix pseudodifference operators
\begin{equation*}
X_N\in M_{2N}\bigl(\mathcal{V}_{g}[[\mathcal S]]\bigr),
\qquad
\widebar X_N\in M_{2N}\bigl(\mathcal{V}_{g}[[\mathcal S^{-1}]]\bigr),
\end{equation*}
such that
\begin{equation*}
\Phi_{\rm raw}(z)=X_N\Xi(z),
\qquad
\widebar\Phi_{\rm raw}(z)=\widebar X_N\Xi^\lor(z).
\end{equation*}
Following Takasaki's notation and writing them as
\begin{equation}\label{eq:XNbarXN}
X_N=
\begin{pmatrix}
W_1 & \widebar V_1\\
W_2 & \widebar V_2
\end{pmatrix},
\qquad
\widebar X_N=
\begin{pmatrix}
\widebar W_1 & V_1\\
\widebar W_2 & V_2
\end{pmatrix},
\end{equation}
their asymptotic forms are
\begin{equation*}
X_N=
\begin{pmatrix}
I_N+\mathcal O(\mathcal S) & \widebar V_{1,0}+\mathcal O(\mathcal S)\\
W_{2,1}\mathcal S+\mathcal O(\mathcal S^2) & \widebar V_{2,0}+\mathcal O(\mathcal S)
\end{pmatrix},
\qquad
\widebar X_N=
\begin{pmatrix}
\widebar W_{1,0}+\mathcal O(\mathcal S^{-1})
&
V_{1,-1}\mathcal S^{-1}+\mathcal O(\mathcal S^{-2})
\\
\widebar W_{2,0}+\mathcal O(\mathcal S^{-1})
&
I_N+\mathcal O(\mathcal S^{-1})
\end{pmatrix}.
\end{equation*}
Moreover,
\begin{equation*}
(W_{2,1})_{\alpha\alpha}
=
(V_{1,-1})_{\alpha\alpha}
=
0,
\qquad
\alpha=1,\ldots,N.
\end{equation*}
\end{lemma}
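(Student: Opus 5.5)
The plan is to read off $X_N$ and $\widebar X_N$ entry by entry from the bosonized formulas of Lemma~\ref{lem:bosonized-wave-functions}. Each formula is a bare factor from~\eqref{defchis}, times an explicit power of $z$, times a Miwa-shifted tau ratio. The action~\eqref{shiftaction} of $\mathcal S$ on the bare factors then turns each power of $z$ into a power of $\mathcal S$.

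\textbf{Convention and uniqueness.} For a coefficient $a\in\mathcal V_g$ and $k\in\mathbb Z$ we have $(a\mathcal S^k)\chi_\beta=a\,z^{-k}\chi_\beta$ and $(a\mathcal S^k)\widebar\chi_\beta=a\,z^{-k}\widebar\chi_\beta$. Likewise $(a\mathcal S^k)\chi^\lor_\beta=a\,z^{k}\chi^\lor_\beta$ and $(a\mathcal S^k)\widebar\chi^\lor_\beta=a\,z^{k}\widebar\chi^\lor_\beta$. Hence a one-sided series $\sum_k a_k\mathcal S^k$ applied to a bare factor is the same bare factor times the formal series $\sum_k a_k z^{\mp k}$. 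Since the monomials $z^{\mp k}$ are distinct, the coefficients $a_k$ are read off uniquely. This gives uniqueness of $X_N$ and $\widebar X_N$ once existence is shown, and it reduces existence to expanding each block in the appropriate direction.

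\textbf{Existence.} I would treat the eight blocks separately.

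For the columns of $X_N$, the blocks are $\Psi_1,\Psi_2$ against $\chi$ and $\widebar\Psi_3,\widebar\Psi_4$ against $\widebar\chi^\lor$.
\begin{itemize}
\item For $\Psi_1$ and $\Psi_2$, Taylor expand $\tau_{\alpha\beta}(\cdot,\boldsymbol t-[z^{-1}]_\beta,\cdot)$ in $z^{-1}$. The coefficients are Schur-polynomial expressions in the $t_{\beta,k}$-derivatives of discrete translates of $\tau_{\alpha\beta}$, and these lie in $\mathcal A_g$.
\item Dividing by $\tau_0$ and multiplying by the sign factor $\epsilon_{\alpha\beta}(\boldsymbol n)\in\mathcal A_g$ gives a series in $z^{-1}$ with coefficients in $\mathcal V_g$. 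Together with the prefactor $z^{\pm\delta_{\alpha\beta}-1}$, this yields a series in nonnegative powers of $\mathcal S$.
\item For $\widebar\Psi_3$ and $\widebar\Psi_4$, expand $\widebar{\boldsymbol t}+[z]_\beta$ in nonnegative powers of $z$. Against $\widebar\chi^\lor$ this again gives nonnegative powers of $\mathcal S$. So $X_N\in M_{2N}(\mathcal V_g[[\mathcal S]])$.
\end{itemize}
Symmetrically, for $\widebar X_N$ the blocks are $\widebar\Psi_1,\widebar\Psi_2$ against $\widebar\chi$, expanded in $z$, and $\Psi_3,\Psi_4$ against $\chi^\lor$, expanded in $z^{-1}$. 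Both produce only nonpositive powers of $\mathcal S$.

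\textbf{Leading terms.} These come from the explicit $z$-prefactors, with the Miwa shift set to zero.
\begin{itemize}
\item In $\Psi_1$, the prefactor $z^{\delta_{\alpha\beta}-1}$ gives degree $0$ on the diagonal, with coefficient $\tau_{\alpha\alpha}/\tau_0=1$, and degree $\mathcal S^1$ off the diagonal. Hence $W_1=I_N+\mathcal O(\mathcal S)$.
\item In $\Psi_2$, the prefactor $z^{-\delta_{\alpha\beta}-1}$ gives $\mathcal S^2$ on the diagonal and $\mathcal S$ off it. Hence $W_2=W_{2,1}\mathcal S+\mathcal O(\mathcal S^2)$ with $(W_{2,1})_{\alpha\alpha}=0$.
\item In $\Psi_3$, the prefactor $z^{-\delta_{\alpha\beta}-1}$ against $\chi^\lor$ gives $\mathcal S^{-2}$ on the diagonal and $\mathcal S^{-1}$ off it. Hence $V_1=V_{1,-1}\mathcal S^{-1}+\mathcal O(\mathcal S^{-2})$ with $(V_{1,-1})_{\alpha\alpha}=0$.
\item In $\Psi_4$, the prefactor $z^{\delta_{\alpha\beta}-1}$ gives diagonal leading coefficient $\tau_{\alpha\alpha}(\boldsymbol s,\boldsymbol r,\boldsymbol t,\widebar{\boldsymbol t})/\tau_0=1$. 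Hence $V_2=I_N+\mathcal O(\mathcal S^{-1})$.
\item The barred blocks carry no extra $z$-power, so $\widebar V_1,\widebar V_2,\widebar W_1,\widebar W_2$ start at degree $0$.
\end{itemize}

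\textbf{Main obstacle.} I expect the difficulty to be bookkeeping rather than anything conceptual. The coefficients must sit to the left of $\mathcal S^k$ and be evaluated at the unshifted variables, so that $(a\mathcal S^k)\chi$ really equals $a\,z^{-k}\chi$. One must also check that every discrete translate that occurs, such as $\tau_{\alpha\beta}(\boldsymbol s-\boldsymbol e_\alpha,\boldsymbol r-\boldsymbol e_\alpha,\ldots)$, lies in $\mathcal A_g$. It does, since $\mathcal A_g$ contains all translates in $\boldsymbol s,\boldsymbol r$, not only $\mathcal S$-translates. Finally, the diagonal normalizations to $1$ rely on $\tau_{\alpha\alpha}=\tau_0$ from~\eqref{tau-zero}.
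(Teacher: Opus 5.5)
Your proposal is correct and follows essentially the same route as the paper. You factor the bare function out of each column of the bosonized wave functions, expand the Miwa-shifted tau ratios in $z^{-1}$ or $z$, translate powers of $z$ into powers of $\mathcal S$ via the shift action on the bare factors, and read off the leading terms and vanishing diagonals from the prefactors $z^{\pm\delta_{\alpha\beta}-1}$ together with $\tau_{\alpha\alpha}=\tau_0$; your check that translates and sign factors lie in $\mathcal A_g$ is what the paper compresses into a single remark about $\mathcal V_g$.
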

\begin{proof}
Note that, by the definition of $\mathcal V_g$  in~\eqref{eq:diff_diff_algebra_Clifford}, all coefficients
occurring in the wave functions of Lemma~\ref{lem:bosonized-wave-functions}, including the
fermionic sign factors and the coefficients of the tau-function
quotients, belong to $\mathcal V_g$. We consider first $\Psi_1$ and $\Psi_2$  in~\eqref{eq:bosonized_wavefunction}. By Lemma~\ref{lem:bosonized-wave-functions}, after factoring
$\chi_\beta$ from the $\beta$-th column, the remaining factors are formal
power series in $z^{-1}$. Indeed,
\begin{equation*}
\tau\bigl(\boldsymbol  t\pm[z^{-1}]_\beta\bigr)
=
\exp\left(
\pm\sum_{k\geq1}\frac{z^{-k}}{k}
\frac{\partial}{\partial t_{\beta,k}}
\right)\tau(\boldsymbol  t).
\end{equation*}
Since $
\mathcal S^k\chi_\beta=z^{-k}\chi_\beta,
$
each formal series
$
\sum_{k\geq0}A_kz^{-k}
$
determines uniquely a matrix pseudodifference operator
$
\sum_{k\geq0}A_k\mathcal S^k
$
having the same action on $\chi_\beta$. This determines uniquely the
blocks $W_1$ and $W_2$ of $X_N$.
The leading powers in the bosonized formulas determine their
asymptotic forms. After division by $\chi_\beta$, the diagonal entries of $\Psi_1$ are
$1+\mathcal O(z^{-1})$, while the off-diagonal entries are $\mathcal O(z^{-1})$.
For $\Psi_2$, the diagonal entries are $\mathcal O(z^{-2})$, while the off-diagonal
entries are $\mathcal O(z^{-1})$. Hence
\begin{equation*}
W_1=I_N+\mathcal O(\mathcal S),
\qquad
W_2=W_{2,1}\mathcal S+\mathcal O(\mathcal S^2),
\qquad
(W_{2,1})_{\alpha\alpha}=0.
\end{equation*}

The wave functions $\Psi_3$ and $\Psi_4$  in~\eqref{eq:bosonized_wavefunction} are treated in the same way.
After factoring $\chi^\lor_\beta$, their remaining factors are formal
power series in $z^{-1}$, while
\begin{equation*}
\mathcal S^{-k}\chi^\lor_\beta=z^{-k}\chi^\lor_\beta.
\end{equation*}
Therefore they determine uniquely the blocks $V_1$ and $V_2$. The
leading powers in~\eqref{eq:Psi1star-bos}-\eqref{eq:Psi2star-bos}, together with the expressions of $\chi_\beta, \chi_\beta^\lor, \widebar \chi_\beta, \widebar \chi_\beta^\lor$ in~\eqref{defchis}, give
\begin{equation*}
V_1=V_{1,-1}\mathcal S^{-1}+\mathcal O(\mathcal S^{-2}),
\qquad
V_2=I_N+\mathcal O(\mathcal S^{-1}),
\qquad
(V_{1,-1})_{\alpha\alpha}=0.
\end{equation*}

For the barred wave functions in~\eqref{eq:bosonized_wavefunction_barred}, the relevant expansions are instead
taken at $z=0$. After factoring $\widebar\chi_\beta$ from
$\widebar\Psi_1,\widebar\Psi_2$ and $\widebar\chi^\lor_\beta$ from
$\widebar\Psi_3,\widebar\Psi_4$, the remaining tau-quotients are formal power
series in $z$. Since
\begin{equation*}
\mathcal S^{-k}\widebar\chi_\beta=z^k\widebar\chi_\beta,
\qquad
\mathcal S^k\widebar\chi^\lor_\beta=z^k\widebar\chi^\lor_\beta,
\end{equation*}
these expansions determine uniquely the remaining blocks
$\widebar W_1,\widebar W_2,\widebar V_1,\widebar V_2$. Their constant terms give
\begin{equation*}
\widebar W_i=\widebar W_{i,0}+\mathcal O(\mathcal S^{-1}),
\qquad i=1,2,
\end{equation*}
and
\begin{equation*}
\widebar V_1=\widebar V_{1,0}+\mathcal O(\mathcal S),
\qquad
\widebar V_2=\widebar V_{2,0}+\mathcal O(\mathcal S).
\end{equation*}

Thus the corresponding block matrices $X_N$ and $\widebar X_N$ are uniquely
determined and have the asserted asymptotic forms.
\end{proof}

When $N=1$, the lemma forces
$
W_{2,1}=V_{1,-1}=0.
$
Consequently,
$
W_2=\mathcal O(\mathcal S^2),
V_1=\mathcal O(\mathcal S^{-2}),
$
as in Takasaki's one-component Pfaff--Toda dressing formalism \cite{Taka}.

\subsection{Algebraic consequences of the matrix residue identity}
\label{sec:bilinear-algebraic-relations}
We translate the matrix residue identity into an algebraic relation between the raw dressing operators, showing that $X_NJ_N\widebar X_N^*=\widebar X_NJ_NX_N^*=:K_N$, and compute $K_N$ explicitly.

The matrix residue identity~\eqref{eq:raw-matrix-residue} is formulated in the spectral
variable $z$, whereas the dressing data of Section~\ref{sec:SZ-dressing-operators} are pseudodifference operators in the shift
$\mathcal S$. The link between the two is provided by the
following residue--coefficient correspondence.

\begin{lemma}[Matrix residue--coefficient correspondence]
\label{lem:matrix-residue}
Let $m,\ell\in\mathbb Z$, and suppress the common dependence on
$\boldsymbol  r,\boldsymbol  t,\widebar{\boldsymbol  t}$, and vary only the discrete variable $\boldsymbol s$. For a matrix pseudodifference operator~$A$, we denote by $(A)_m$ the coefficient of $\mathcal S^m$.
\begin{enumerate}
    \item 
Suppose that, for some $p,q\in\mathbb Z$,
\begin{equation*}
P=\sum_{i\ge p}P_i\mathcal S^i,
\qquad
Q=\sum_{j\le q}Q_j\mathcal S^j,
\qquad
P_i,Q_j\in M_N(\mathcal V_g).
\end{equation*}
Then
\begin{align}
\label{eq:res-chi-chivee}
\operatorname{res}_z
z^\ell
(P\chi)(\boldsymbol  s)
(Q\chi^\lor)(\boldsymbol  s-m\boldsymbol  1)^\top
\frac{dz}{z}&=
\bigl(P\mathcal S^{-\ell}Q^*\bigr)_m(\boldsymbol  s), \\
\label{eq:res-barchivee-barchi}
\operatorname{res}_z
z^\ell
(P\widebar\chi^\lor)(\boldsymbol  s)
(Q\widebar\chi)(\boldsymbol  s-m\boldsymbol  1)^\top
\frac{dz}{z}
&=
\bigl(P\mathcal S^\ell Q^*\bigr)_m(\boldsymbol  s).
\end{align}
\item
Suppose that, for some $p,q\in\mathbb Z$,
\begin{equation*}
P=\sum_{i\le p}P_i\mathcal S^i,
\qquad
Q=\sum_{j\ge q}Q_j\mathcal S^j,
\qquad
P_i,Q_j\in M_N(\mathcal V_g).
\end{equation*}
Then
\begin{align}
\label{eq:res-chivee-chi}
\operatorname{res}_z
z^\ell
(P\chi^\lor)(\boldsymbol  s)
(Q\chi)(\boldsymbol  s-m\boldsymbol  1)^\top
\frac{dz}{z}
&=
\bigl(P\mathcal S^\ell Q^*\bigr)_m(\boldsymbol  s),
\\
\label{eq:res-barchi-barchivee}
\operatorname{res}_z
z^\ell
(P\widebar\chi)(\boldsymbol  s)
(Q\widebar\chi^\lor)(\boldsymbol  s-m\boldsymbol  1)^\top
\frac{dz}{z}
&=
\bigl(P\mathcal S^{-\ell}Q^*\bigr)_m(\boldsymbol  s).
\end{align}
\end{enumerate}
\end{lemma}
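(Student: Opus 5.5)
The plan is to reduce each identity to monomials by bilinearity and then compare a single residue with a single coefficient. I will carry out \eqref{eq:res-chi-chivee} in full; the other three identities follow by the same computation, with the signs of the exponents adjusted.

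First I settle convergence for part~1. Take $P=\sum_{i\ge p}P_i\mathcal S^i$ and $Q=\sum_{j\le q}Q_j\mathcal S^j$. By \eqref{shiftaction},
\begin{equation*}
\mathcal S^i\chi=z^{-i}\chi,
\qquad
\mathcal S^{j}\chi^\lor=z^{j}\chi^\lor .
\end{equation*}
Hence $P\chi$ equals $\chi$ times a formal series in $z^{-1}$ bounded above by $z^{-p}$. Likewise $Q\chi^\lor$ equals $\chi^\lor$ times a formal series in $z^{-1}$ bounded above by $z^{q}$. The key cancellation is that the bare factors collapse. Since the shift changes only $\boldsymbol s$, and $\chi,\chi^\lor$ are diagonal,
\begin{equation*}
\chi(\boldsymbol s)\,\chi^\lor(\boldsymbol s-m\boldsymbol 1)=z^{n_\beta}z^{-(n_\beta-m)}\,\textup{e}^{\xi-\xi}=z^{m}I_N .
\end{equation*}
This holds entrywise, because $\chi^\lor_\beta$ is exactly the inverse of $\chi_\beta$ up to the shift of $n_\beta$. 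Therefore the integrand is a Laurent series in $z^{-1}$ with matrix coefficients, and the residue is well defined.

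Next I compute the monomial contributions. For $P=P_i\mathcal S^i$ and $Q=Q_j\mathcal S^j$, the left-hand side of \eqref{eq:res-chi-chivee} is
\begin{equation*}
\operatorname{res}_z z^{\ell-i+m+j}\,P_i(\boldsymbol s)\,Q_j(\boldsymbol s-m\boldsymbol 1)^\top\,\frac{dz}{z}.
\end{equation*}
This is nonzero only when $i=m+j+\ell$. Summing over $j$, and using $i\ge p$ and $j\le q$, leaves a finite sum
\begin{equation*}
\sum_{j=p-m-\ell}^{q}P_{m+j+\ell}(\boldsymbol s)\,Q_j(\boldsymbol s-m\boldsymbol 1)^\top .
\end{equation*}
On the right-hand side, $Q^*=\sum_j\mathcal S^{-j}Q_j^\top$. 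By the product rule \eqref{eq:associative_shift},
\begin{equation*}
P_i\mathcal S^{i}\mathcal S^{-\ell}\mathcal S^{-j}Q_j^\top=P_i\,\mathcal S^{i-\ell-j}(Q_j^\top)\,\mathcal S^{i-\ell-j}.
\end{equation*}
The $\mathcal S^m$-coefficient again forces $i=m+\ell+j$, and it equals $P_i(\boldsymbol s)\,Q_j^\top(\boldsymbol s-m\boldsymbol 1)$, since $(\mathcal S^mf)(\boldsymbol s)=f(\boldsymbol s-m\boldsymbol 1)$ by \eqref{shiftdef}. The two sides agree term by term, and the ranges of summation coincide.

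The remaining identities follow the same pattern. For \eqref{eq:res-barchivee-barchi} I use
\begin{equation*}
\mathcal S^i\widebar\chi^\lor=z^{i}\widebar\chi^\lor,
\qquad
\widebar\chi^\lor(\boldsymbol s)\,\widebar\chi(\boldsymbol s-m\boldsymbol 1)=z^{-m}I_N .
\end{equation*}
The latter holds because $\widebar n_\beta=r_\beta-s_\beta$ increases by $m$ under $\boldsymbol s\mapsto\boldsymbol s-m\boldsymbol 1$. The series are now in $z$ near $0$, and the residue condition becomes $i+m+j=\ell$ after accounting for the factor $z^{-j}$ coming from $Q_j\mathcal S^j\widebar\chi$. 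This produces $\mathcal S^{\ell}$ in place of $\mathcal S^{-\ell}$. Part~2 is the same computation with the roles of $\chi$ and $\chi^\lor$ interchanged, so the truncation conditions are reversed, namely $P$ bounded above and $Q$ bounded below.

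The main obstacle is purely bookkeeping. I must verify carefully that in each of the four cases the stated one-sided support of $P$ and $Q$ makes the product a one-sided series, at $z=\infty$ in cases \eqref{eq:res-chi-chivee} and \eqref{eq:res-chivee-chi} and at $z=0$ in the other two, so that only finitely many terms contribute. I must also track the sign of the exponent of $z$ produced by $\mathcal S$ acting on each bare function, since this decides whether $\mathcal S^{\ell}$ or $\mathcal S^{-\ell}$ appears.
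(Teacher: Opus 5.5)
This is essentially the paper's proof: you reduce to monomials, collapse the bare factors via $\chi(\boldsymbol s)\chi^\lor(\boldsymbol s-m\boldsymbol 1)=z^mI_N$ and its analogues, and match the residue's Kronecker delta with the $\mathcal S^m$-coefficient of $P\mathcal S^{\mp\ell}Q^*$. Your detailed treatment of \eqref{eq:res-chi-chivee}, including the finiteness of the sum, is correct.

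The only slip is in the sketched case \eqref{eq:res-barchivee-barchi}. With your own factors $z^{i}$, $z^{-j}$, $z^{-m}$, $z^{\ell}$, the integrand is $z^{\ell+i-j-m}P_i(\boldsymbol s)Q_j^\top(\boldsymbol s-m\boldsymbol 1)\,\frac{dz}{z}$. So the residue condition is $i+\ell=m+j$, not $i+m+j=\ell$. This is exactly the condition $i+\ell-j=m$ that selects the $\mathcal S^m$-coefficient of $P\mathcal S^{\ell}Q^*$.
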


\begin{proof}
It suffices to prove~\eqref{eq:res-chi-chivee} for monomials
\begin{equation*}
P=A \mathcal{S}^a, \qquad Q=B \mathcal{S}^b,\qquad
A,B \in M_N(\mathcal V_g),
\end{equation*}
then the general case follows by bilinearity. The support assumptions ensure that, for each fixed $m$, only finitely many pairs of monomials contribute.
Using~\eqref{defchis} and~\eqref{shiftaction}, in particular 
\begin{equation*}
\chi(\boldsymbol  s)\chi^\lor(\boldsymbol  s-m\boldsymbol  1)=z^m I_N,
\end{equation*}
we get
\begin{equation*}
    \begin{aligned}
     \text{LHS of~\eqref{eq:res-chi-chivee}} &= \operatorname{res}_z
z^\ell
(A \mathcal{S}^a\chi)(\boldsymbol  s)
(B \mathcal{S}^b\chi^\lor)(\boldsymbol  s-m\boldsymbol  1)^\top
\frac{dz}{z}
  \\
  &= \operatorname{res}_z
z^{\ell-a+b}
A(\boldsymbol  s)\chi(\boldsymbol  s)
\chi^\lor(\boldsymbol  s-m\boldsymbol  1)B^\top(\boldsymbol  s-m\boldsymbol  1)
\frac{dz}{z}\\
 &= \operatorname{res}_z
z^{\ell-a+b+m-1}
A(\boldsymbol  s)B^\top(\boldsymbol  s-m\boldsymbol  1)
dz\\[1mm]
&= \delta_{\ell+b+m,a}
A(\boldsymbol  s)B^\top(\boldsymbol  s-m\boldsymbol  1) = \delta_{\ell+b+m,a}
(A \mathcal{S}^{m}(B^\top))(\boldsymbol  s) \\[1mm]
&= \delta_{\ell+b+m,a}
(A \mathcal{S}^{a-\ell-b}(B^\top))(\boldsymbol  s)
=
(P \mathcal{S}^{-\ell}Q^*)_m(\boldsymbol  s)= \text{RHS of~\eqref{eq:res-chi-chivee}}.
    \end{aligned}
\end{equation*}
The remaining identities~\eqref{eq:res-barchivee-barchi}--\eqref{eq:res-barchi-barchivee} follow by the same
monomial computation, using the corresponding shift relations in
\eqref{shiftaction}.
\end{proof}
We apply Lemma~\ref{lem:matrix-residue} to the matrix residue identity~\eqref{eq:raw-matrix-residue}.

\begin{proposition}
\label{prop:raw-pfaff-relation}
The pair of raw dressing operators $(X_N,\widebar X_N)$ in~\eqref{eq:XNbarXN} satisfies the relation
\begin{equation}
X_NJ_N\widebar X_N^*
=
\widebar X_NJ_NX_N^*,
\label{eq:raw-pfaff-relation}
\end{equation}
where $J_N$ is the shifted symplectic matrix~\eqref{eq:JN}.

\end{proposition}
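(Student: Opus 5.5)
The plan is to substitute the factorizations $\Phi_{\rm raw}=X_N\Xi$ and $\widebar\Phi_{\rm raw}=\widebar X_N\Xi^\lor$ of Lemma~\ref{lem:XNbarXN-def} into the matrix residue identity~\eqref{eq:raw-matrix-residue}. We specialize the primed variables to $(\boldsymbol s-m\boldsymbol 1,\boldsymbol r,\boldsymbol t,\widebar{\boldsymbol t})$ and convert each residue into a coefficient of $\mathcal S^m$ via Lemma~\ref{lem:matrix-residue}. Since~\eqref{eq:raw-matrix-residue} holds for all $m\in\mathbb Z$, matching the coefficient of every $\mathcal S^m$ gives an operator identity.

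First, expand the left-hand side blockwise. With $J_0=E_+-E_-$, we have
\[
\Phi_{\rm raw}J_0\widebar\Phi_{\rm raw}'^{\top}=\bigl(X_N\Xi\bigr)E_+\bigl(\widebar X_N'\Xi^{\lor\prime}\bigr)^\top-\bigl(X_N\Xi\bigr)E_-\bigl(\widebar X_N'\Xi^{\lor\prime}\bigr)^\top .
\]
The $E_+$ term pairs the first block column of $X_N$, which carries $\chi$, with the second block column of $\widebar X_N$, which carries $\chi^\lor$. This is an $\mathcal S$-adic operator against an $\mathcal S^{-1}$-adic operator, so~\eqref{eq:res-chi-chivee} applies. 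The $E_-$ term pairs the second column of $X_N$ (carrying $\widebar\chi^\lor$, $\mathcal S$-adic) with the first column of $\widebar X_N$ (carrying $\widebar\chi$, $\mathcal S^{-1}$-adic), so~\eqref{eq:res-barchivee-barchi} applies. The factor $dz$ instead of $dz/z$ corresponds to $\ell=1$.

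Each block of the left-hand side then becomes the $m$-th coefficient of a sum of products. In the $(1,1)$ block, for instance, we obtain $\bigl(W_1\mathcal S^{-1}V_1^*-\widebar V_1\mathcal S\,\widebar W_1^*\bigr)_m$. Assembling all four blocks should give exactly $\bigl(X_NJ_N\widebar X_N^*\bigr)_m$ with $J_N=\bigl(\begin{smallmatrix}0&\mathcal S^{-1}\\-\mathcal S&0\end{smallmatrix}\bigr)$. The $\mathcal S^{\mp1}$ entries of $J_N$ are precisely the $z^{\pm1}$ produced by $\ell=1$ in the two residue formulas, with opposite signs of $\ell$. The right-hand side is treated the same way, now using the part~2 formulas~\eqref{eq:res-chivee-chi} and~\eqref{eq:res-barchi-barchivee}, since there the $\mathcal S^{-1}$-adic operator stands on the left. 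This yields $\bigl(\widebar X_NJ_N X_N^*\bigr)_m$. Equating coefficients for every $m$ proves~\eqref{eq:raw-pfaff-relation}.

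The main obstacle is bookkeeping rather than anything conceptual. One must check that the signs and $\mathcal S^{\pm1}$ placements coming from $\ell=\pm1$ in the four residue formulas reproduce exactly the entries of $J_N$, and not, say, $J_N^{-1}$ or a transposed form. One must also check that the support hypotheses of Lemma~\ref{lem:matrix-residue} hold for every block pairing. This is where the asymptotics of Lemma~\ref{lem:XNbarXN-def} are used: they guarantee that $X_N$ is bounded below in $\mathcal S$ and $\widebar X_N$ bounded above, so each product $X_NJ_N\widebar X_N^*$ and $\widebar X_NJ_NX_N^*$ is a well-defined element of $M_{2N}(\mathcal V_g((\mathcal S)))$ or of its opposite completion.

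Finally, the two sides live a priori in opposite completions. Their equality coefficientwise therefore also shows that the common value $K_N$ has finite Laurent support.
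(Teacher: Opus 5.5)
Your proposal is correct and follows the paper's proof. Both specialize the primed variables to $(\boldsymbol s-m\boldsymbol 1,\boldsymbol r,\boldsymbol t,\widebar{\boldsymbol t})$, substitute $\Phi_{\rm raw}=X_N\Xi$ and $\widebar\Phi_{\rm raw}=\widebar X_N\Xi^\lor$, and convert each block residue into an $\mathcal S^m$-coefficient via Lemma~\ref{lem:matrix-residue} with $\ell=1$; the paper writes out the $(1,2)$ block where you write out the $(1,1)$ block, and your closing remark on the finite Laurent support of $K_N$ is exactly what the paper derives next in Lemma~\ref{lem:KN-def}.
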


\begin{proof}
We specialize the matrix residue identity~\eqref{eq:raw-matrix-residue} at $(\boldsymbol  s',\boldsymbol  r', \boldsymbol  t',\widebar{\boldsymbol  t}')=(\boldsymbol  s-m\boldsymbol  1, \boldsymbol  r, \boldsymbol  t, \widebar{\boldsymbol  t})$, $m \in \mathbb Z$. Keeping the convention of Lemma~\ref{lem:matrix-residue}, we get
\begin{equation*}
\operatorname{res}_{z}
\Phi_{\mathrm{raw}}(\boldsymbol  s)\,
J_0\,
(\widebar\Phi_{\mathrm{raw}}(\boldsymbol  s-m\boldsymbol  1))^\top\,dz
=
\operatorname{res}_{z}
\widebar\Phi_{\mathrm{raw}}(\boldsymbol  s)\,
J_0\,
(\Phi_{\mathrm{raw}}(\boldsymbol  s-m\boldsymbol  1))^\top\,dz.
\label{eq:raw-matrix-residue-specialized}
\end{equation*}
By Lemma~\ref{lem:XNbarXN-def}, we can rewrite it in terms of the raw dressing operators
\label{help}
\begin{equation}
\operatorname*{res}_{z}
(X_N \Xi(z))(\boldsymbol  s)\,
J_0\,
(\widebar X_N \Xi^\lor(z))^\top(\boldsymbol  s-m\boldsymbol  1)\,dz
=
\operatorname*{res}_{z}
(\widebar X_N \Xi^\lor(z))(\boldsymbol  s)
J_0\,
( X_N \Xi(z))^\top(\boldsymbol  s-m\boldsymbol  1)\,dz.
\label{eq:raw-matrix-residue-specialized1}
\end{equation}
We consider the $(1,2)$-block of the LHS in~\eqref{eq:raw-matrix-residue-specialized1}, i.e.\ 
\begin{equation*}
\operatorname{res}_z 
\left( ( W_1 \chi)(\boldsymbol  s)(V_2 \chi^\lor)^\top(\boldsymbol  s-m \boldsymbol  1)-(\widebar V_1 \widebar \chi^\lor)(\boldsymbol  s)(\widebar W_2 \widebar \chi)^{\top}(\boldsymbol  s-m \boldsymbol  1)
\right).
\end{equation*}
Since $dz=z\,dz/z$, the first and second identities of
Lemma~\ref{lem:matrix-residue} with $\ell=1$ give
\begin{equation*}
\text{LHS of~\eqref{eq:raw-matrix-residue-specialized1}}_{12}=(W_1 \mathcal{S}^{-1}V_2^*-\widebar V_1 \mathcal{S}\widebar W_2^*)_m.
\end{equation*}
Similarly, we compute 
\begin{equation*}
\begin{aligned}
    \text{RHS of~\eqref{eq:raw-matrix-residue-specialized1}}_{12}&= \operatorname{res}_z 
\left( ( \widebar W_1 \widebar \chi)(\boldsymbol  s)(\widebar V_2 \widebar \chi^\lor)^\top(\boldsymbol  s-m \boldsymbol  1)-(V_1 \chi^\lor)(\boldsymbol  s)( W_2 \chi)^\top(\boldsymbol  s-m \boldsymbol  1)
\right) \\
&=(\widebar W_1 \mathcal{S}^{-1}\widebar V_2^*- V_1 \mathcal{S} W_2^*)_m.
\end{aligned}
\end{equation*}
Hence we have shown that 
\begin{equation*}
W_1 \mathcal{S}^{-1}V_2^*-\widebar V_1 \mathcal{S}\widebar W_2^*=\widebar W_1 \mathcal{S}^{-1}\widebar V_2^*- V_1 \mathcal{S} W_2^*.
\end{equation*}
This is exactly the $(1,2)$-block of~\eqref{eq:raw-pfaff-relation}. The three other blocks are proved in the same way.
\end{proof}
We denote the common value in Proposition~\ref{prop:raw-pfaff-relation} by
\begin{equation}
K_N:=X_NJ_N\widebar X_N^*
      =\widebar X_NJ_NX_N^*,
\label{eq:KN}
\end{equation}
and since \(J_N^*=-J_N\), taking formal adjoints gives
\begin{equation*}
K_N^*=-K_N.
\end{equation*}
Thus the fermionic bilinear identity determines a skew operator-valued form relating the two raw dressing operators. However, this skew operator is not yet $J_N$.

\begin{lemma}
\label{lem:KN-def}
The matrices $W_{2,1}$ and $V_{1,-1}$ are skew-symmetric. Moreover,
\begin{equation}
K_N=
\begin{pmatrix}
-V_{1,-1} & \mathcal S^{-1}I_N\\[2mm]
-\mathcal SI_N & W_{2,1}
\end{pmatrix}.
\label{computationKN}
\end{equation}
\end{lemma}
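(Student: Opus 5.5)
The plan is to compute $K_N$ from the first of its two expressions, $K_N=X_NJ_N\widebar X_N^*$, using only support considerations. I then play this against the second expression $\widebar X_NJ_NX_N^*$, whose support lies in the opposite direction. Since $X_N\in M_{2N}(\mathcal V_g[[\mathcal S]])$ and $\widebar X_N^*\in M_{2N}(\mathcal V_g[[\mathcal S]])$ (the adjoint exchanges the two completions), the first expression is an $\mathcal S$-adic series, bounded below in each block. The second expression is an $\mathcal S^{-1}$-adic series. Since the two are equal by Proposition~\ref{prop:raw-pfaff-relation}, each block of $K_N$ is a Laurent polynomial whose degrees lie in the intersection of the two supports. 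So the plan is to bound the supports blockwise from both sides, and then read off the few surviving coefficients from the leading terms recorded in Lemma~\ref{lem:XNbarXN-def}.

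Concretely, write
\begin{equation*}
X_NJ_N\widebar X_N^*=
\begin{pmatrix}
-\widebar V_1\mathcal S\widebar W_1^*+W_1\mathcal S^{-1}V_1^* & -\widebar V_1\mathcal S\widebar W_2^*+W_1\mathcal S^{-1}V_2^*\\
-\widebar V_2\mathcal S\widebar W_1^*+W_2\mathcal S^{-1}V_1^* & -\widebar V_2\mathcal S\widebar W_2^*+W_2\mathcal S^{-1}V_2^*
\end{pmatrix}.
\end{equation*}
Using $W_1=I+\mathcal O(\mathcal S)$, $W_2=W_{2,1}\mathcal S+\dots$, $\widebar V_i=\mathcal O(1)$ together with the adjoints $V_1^*=\mathcal S V_{1,-1}^\top+\dots$, $V_2^*=I+\mathcal O(\mathcal S)$ and $\widebar W_i^*=\mathcal O(1)$, I get lower bounds on the blocks: $(1,1)=\mathcal O(1)$, $(1,2)=\mathcal S^{-1}+\mathcal O(1)$, $(2,1)=\mathcal O(\mathcal S)$ and $(2,2)=\mathcal O(1)$. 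Doing the same for $\widebar X_NJ_NX_N^*$ gives upper bounds. Its $(1,1)$ block is $-V_1\mathcal S W_1^*+\widebar W_1\mathcal S^{-1}\widebar V_1^*$, which is $\mathcal O(1)$ in the $\mathcal S^{-1}$-adic sense. Its $(2,1)$ block has leading term $-V_2\mathcal S W_1^*=-\mathcal S+\mathcal O(1)$. Its $(1,2)$ block is $\mathcal O(\mathcal S^{-1})$, and its $(2,2)$ block has top coefficient coming from $-V_2\mathcal S W_2^*$.

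Intersecting the two bounds in each block gives the following:
\begin{itemize}
\item the $(1,2)$ block is exactly $\mathcal S^{-1}$;
\item the $(2,1)$ block is exactly $-\mathcal S$;
\item the $(1,1)$ and $(2,2)$ blocks are pure degree-zero matrices.
\end{itemize}
For the $(1,1)$ block I take the degree-zero coefficient from the second expression, $-V_1\mathcal S W_1^*$ with $V_1=V_{1,-1}\mathcal S^{-1}+\dots$, which gives $-V_{1,-1}$. The other contribution, from $\widebar W_1\mathcal S^{-1}\widebar V_1^*$, has degree at most $-1$ and so does not reach degree zero. For the $(2,2)$ block I take the degree-zero coefficient from the first expression, $W_2\mathcal S^{-1}V_2^*$, which gives $W_{2,1}$. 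The term $\widebar V_2\mathcal S\widebar W_2^*$ starts at degree $1$ and so again contributes nothing.

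Skew-symmetry of $W_{2,1}$ and $V_{1,-1}$ then follows from $K_N^*=-K_N$. Taking the adjoint of the block matrix, the diagonal degree-zero blocks satisfy $(-V_{1,-1})^\top=V_{1,-1}$ and $W_{2,1}^\top=-W_{2,1}$, while the off-diagonal blocks are consistent automatically.

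The main obstacle I expect is bookkeeping the boundary degrees correctly. In particular, I have to check which block of which expression actually controls each surviving coefficient, and that the adjoint shifts $(A\mathcal S^k)^*=\mathcal S^{-k}A^\top$ do not introduce stray shifts of the coefficients. Such shifts would otherwise produce $\mathcal S^{\pm1}(V_{1,-1})$ instead of $V_{1,-1}$ in the answer.
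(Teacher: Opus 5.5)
Your proposal is correct and follows the paper's proof: you bound each block of $K_N$ from below using $X_NJ_N\widebar X_N^*$ and from above using $\widebar X_NJ_NX_N^*$, read off the surviving boundary coefficients from Lemma~\ref{lem:XNbarXN-def}, and deduce skew-symmetry from $K_N^*=-K_N$. The only cosmetic difference is that you read off the $(2,1)$-block $-\mathcal S I_N$ directly from the second expression, whereas the paper obtains it from $K_N^*=-K_N$.
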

\begin{proof}
We use the two expressions in~\eqref{eq:KN} to bound the Laurent support of each
block of $K_N$ from both sides.
From $K_N=X_NJ_N\widebar X_N^*$ and the asymptotic forms of Lemma~\ref{lem:XNbarXN-def}, the
$(1,1)$- and $(2,2)$-blocks are supported in non-negative powers of
$\mathcal S$, while the $(1,2)$-block contains only powers
$\mathcal S^k$, $k\geq -1$. Their boundary coefficients are
\begin{equation*}
(K_N)_{11,0}=V_{1,-1}^{\top},
\qquad
(K_N)_{12,-1}=I_N,
\qquad
(K_N)_{22,0}=W_{2,1}.
\end{equation*}
On the other hand, using
$K_N=\widebar X_NJ_NX_N^*$, the $(1,1)$- and $(2,2)$-blocks are supported in
non-positive powers of $\mathcal S$, while the $(1,2)$-block contains only powers $\mathcal S^k$, $k\leq -1$. Moreover, the degree-zero
coefficient of the $(1,1)$-block is $-V_{1,-1}$. Hence
\begin{equation*}
(K_N)_{11}=-V_{1,-1},
\qquad
(K_N)_{12}=\mathcal S^{-1}I_N,
\qquad
(K_N)_{22}=W_{2,1}.
\end{equation*}
Finally, since $K_N^*=-K_N$, we obtain
\begin{equation*}
(K_N)_{21}=-\mathcal S I_N,
\qquad
V_{1,-1}^{\top}=-V_{1,-1},
\qquad
W_{2,1}^{\top}=-W_{2,1},
\end{equation*}
which proves the claim.
\end{proof}

\subsection{Regular factorization locus}
We isolate the finite-dimensional factorization problem governing the normalization of the raw dressing operators, define the regular factorization locus, and establish its non-emptiness for $N=1$ and for a family of coupled Clifford elements when $N \ge 2$. 
 
Let $B_-^\times \subset \operatorname{GL}_{N}(\mathcal V_g)$ be the subgroup of lower triangular invertible matrices and let $H \subset \operatorname{GL}_{2N}(\mathcal V_g)$ be the subgroup
\begin{equation*}
H:=
\left\{
\begin{pmatrix}
I_N+a&b\\
c&d
\end{pmatrix}
\;\middle|\;
a,b,c\in\mathfrak n_-,
\quad
d\in B_-^\times
\right\}.
\label{eq:H-group}
\end{equation*}
The closure under multiplication is immediate, while the Schur-complement formula shows that the inverse of an element of $H$ has the same triangular form.
We also introduce the symplectic group associated with $J_0$,
\begin{equation*}
\operatorname{Sp}_{2N}
:=\operatorname{Sp}_{2N}(\mathcal V_{g})
=\left\{
F\in GL_{2N}(\mathcal V_{g})
\;\middle|\;
F J_0\,F^\top=J_0
\right\}.
\label{eq:Sp-group}
\end{equation*}
The two groups have trivial intersection. Indeed, if
\begin{equation*}
h=\begin{pmatrix}I_N+a&b\\ c&d\end{pmatrix}\in H\cap \operatorname{Sp}_{2N},
\end{equation*}
then
\begin{equation*}
h^{-1}=-J_0h^\top J_0
=
\begin{pmatrix}
d^\top&-b^\top\\
-c^\top&(I_N+a)^\top
\end{pmatrix}.
\end{equation*}
Since $h^{-1}\in H$, comparison of upper and lower triangular parts forces
$a=b=c=0$ and $d=I_N$. Consequently, whenever a factorization in
$\operatorname{Sp}_{2N}(\mathcal V_g)H$ exists, it is unique.
We define
\begin{equation*}
Q_N:=
\begin{pmatrix}
I_N & 0\\
0 & \mathcal S^{-1}I_N
\end{pmatrix},
\end{equation*}
then $Q_N^*=Q_N^{-1}$, and
$Q_NJ_NQ_N^{-1}=J_0$.
Moreover, by Lemmas~\ref{lem:XNbarXN-def} and~\ref{lem:KN-def},
\begin{equation}
Q_NK_NQ_N^{-1}=\Omega,
\qquad
Q_NX_NQ_N^{-1}=A_0+\mathcal O(\mathcal S),
\label{eq:Omega-A0}
\end{equation}
where
\begin{equation*}
\Omega=
\begin{pmatrix}
-V_{1,-1} & I_N\\
-I_N & \mathcal S^{-1}(W_{2,1})
\end{pmatrix},
\qquad
A_0=
\begin{pmatrix}
I_N & 0\\
\mathcal S^{-1}(W_{2,1}) &
\mathcal S^{-1}(\widebar V_{2,0})
\end{pmatrix}.
\end{equation*}
\begin{definition}
\label{def:regular-factorization-locus}
We say that an admissible Clifford-group element $g$ belongs to the
\emph{regular factorization locus} if there exists a degree-zero matrix $R\in \operatorname{GL}_{2N}(\mathcal V_g)$
such that
\begin{equation*}
R\,\Omega\, R^\top=J_0,
\qquad
R\,A_0\in\operatorname{Sp}_{2N}(\mathcal V_g)H.
\end{equation*}
\end{definition}
In other words, regularity asks for a single degree-zero transformation which
simultaneously normalizes the skew form $\Omega$ and places the leading
coefficient $A_0$ in the factorization domain
$\operatorname{Sp}_{2N}(\mathcal V_g)H$. The regular factorization condition is automatic in the one-component
case, whereas for $N\geq2$, it is satisfied by a family of genuinely coupled Clifford-group elements.

\begin{lemma}
\label{lem:regularization-N1}
    Let $N=1$. Then every admissible Clifford-group element $g$ belongs to the regular factorization locus.
\end{lemma}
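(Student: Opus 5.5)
For $N=1$ all $N\times N$ blocks are scalars, so the plan is a direct $2\times2$ computation with an explicit $R$. First, since $W_{2,1}$ and $V_{1,-1}$ are skew-symmetric by Lemma~\ref{lem:KN-def}, they vanish when $N=1$ (as already noted after Lemma~\ref{lem:XNbarXN-def}). Hence
\begin{equation*}
\Omega=\begin{pmatrix}0&1\\-1&0\end{pmatrix}=J_0,
\qquad
A_0=\begin{pmatrix}1&0\\0&\mathcal S^{-1}(\widebar V_{2,0})\end{pmatrix}.
\end{equation*}
For $N=1$ the group $H$ is the set of matrices $\operatorname{diag}(1,d)$ with $d$ invertible, since $\mathfrak n_-=0$. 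Also $\operatorname{Sp}_2(\mathcal V_g)=\operatorname{SL}_2(\mathcal V_g)$, because $FJ_0F^\top=\det(F)\,J_0$ for $2\times2$ matrices.

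Next, take $R=I_2$. The condition $R\Omega R^\top=J_0$ then holds trivially. For the second condition, $RA_0=A_0=I_2\cdot\operatorname{diag}\bigl(1,\mathcal S^{-1}(\widebar V_{2,0})\bigr)$ lies in $\operatorname{Sp}_2H$ exactly when $\mathcal S^{-1}(\widebar V_{2,0})$ is invertible in $\mathcal V_g$.

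The main obstacle is proving that invertibility. My plan is to read off $\widebar V_{2,0}$ from the bosonized formula~\eqref{eq:barPsi2star-bos}. Its constant term at $z=0$ is $\tau_{11}(s-1,r,t,\widebar t)/\tau_0=\mathcal S(\tau_0)/\tau_0$, up to a sign $\epsilon$ that equals $1$ for $N=1$. Both numerator and denominator are invertible in $\mathcal V_g$ by admissibility, so $\widebar V_{2,0}$ is invertible, and therefore so is its shift $\mathcal S^{-1}(\widebar V_{2,0})=\tau_0/\mathcal S^{-1}(\tau_0)$.

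If the constant-term identification turns out to differ by a shift or a factor of $z$, I would fall back on the Pfaff relation instead. The $(2,2)$-block of $K_N=X_NJ_N\widebar X_N^*$ has $(1,2)$-entry $\mathcal S^{-1}$. Combined with $V_2=1+\mathcal O(\mathcal S^{-1})$, this gives an identity of the form $\widebar V_{2,0}\cdot\mathcal S^{\pm1}(\widebar W_{1,0})=1$, which forces $\widebar V_{2,0}$ to be a unit. Either route gives $g$ in the regular factorization locus.
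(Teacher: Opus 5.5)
Your proposal is correct and is essentially the paper's own proof: $W_{2,1}=V_{1,-1}=0$ gives $\Omega=J_0$. Then $R=I_2$ works, because the constant term of $\widebar\Psi_4(\widebar\chi^\lor)^{-1}$ gives $\widebar V_{2,0}=\mathcal S(\tau_0)/\tau_0$, which is a unit by admissibility, so $A_0\in H$. Your fallback is unnecessary; if you keep it, note that the identity $\widebar V_{2,0}\,\mathcal S(\widebar W_{1,0})=1$ comes from the $(2,1)$-entry $-\mathcal S$ of $X_NJ_N\widebar X_N^*$, since the $(1,2)$-entry there only reproduces the leading term $\mathcal S^{-1}$ of $W_1\mathcal S^{-1}V_2^*$.
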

\begin{proof}
By Lemma~\ref{lem:XNbarXN-def}, $V_{1,-1}=W_{2,1}=0,$
and hence $
\Omega=J_0.
$
Moreover, by~\eqref{eq:barPsi2star-bos}, together with the bare factorization of
Lemma~\ref{lem:XNbarXN-def}, the constant term of
$\widebar\Psi_4(\widebar\chi^\lor)^{-1}$ is
$\widebar V_{2,0}=\mathcal S(\tau_0)/\tau_0$.
Therefore, by definition of the algebra $\mathcal V_g$  in~\eqref{eq:diff_diff_algebra_Clifford},
\begin{equation*}
\mathcal S^{-1}(\widebar V_{2,0})
=
\frac{\tau_0}{\mathcal S^{-1}(\tau_0)}
\in\mathcal V_g^\times.
\end{equation*}
It follows that
\begin{equation*}
A_0
=
\begin{pmatrix}
1&0\\
0&\mathcal S^{-1}(\widebar V_{2,0})
\end{pmatrix}
\in H.
\end{equation*}
Thus $R=I_2$ satisfies both conditions of Definition~~\ref{def:regular-factorization-locus}.
\end{proof}
\begin{lemma}
\label{lem:gB_cliff}
Let $N\geq 2$ and let $B=(B_{\alpha\beta})\in\mathfrak{so}_N(\mathbb C)$. The element
\begin{equation*}
g_B
:=
\exp\!\left(
\frac12\sum_{\alpha,\beta=1}^N
B_{\alpha\beta}\, \psi^{(\alpha)}_0\psi^{(\beta)}_0
\right).
\end{equation*}
belongs to the regular factorization locus.
\end{lemma}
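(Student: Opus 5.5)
The plan is to verify the two conditions of Definition~\ref{def:regular-factorization-locus} by an explicit computation of the boundary data $\Omega$ and $A_0$ for $g_B$. Since the zero modes $\psi^{(\alpha)}_0$ pairwise anticommute and square to zero, the exponent equals $\sum_{\alpha<\beta}B_{\alpha\beta}\psi^{(\alpha)}_0\psi^{(\beta)}_0$. Its summands are commuting nilpotent elements, so
\begin{equation*}
g_B=\prod_{\alpha<\beta}\bigl(1+B_{\alpha\beta}\psi^{(\alpha)}_0\psi^{(\beta)}_0\bigr).
\end{equation*}

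\textbf{Step 1 (admissibility).} The element $g_B$ is of the form~\eqref{eq:admissible_g_form}, with $A^{\alpha\beta}_{00}=B_{\alpha\beta}$ antisymmetric under $(\alpha,0)\leftrightarrow(\beta,0)$ and all other coefficients zero. Hence $g_B$ is admissible by Remark~\ref{rem:admissible-cliff-group}, and $\mathcal V_{g_B}$ is defined.

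\textbf{Step 2 (leading coefficients).} I will evaluate $W_{2,1}$, $V_{1,-1}$ and $\widebar V_{2,0}$ from the bosonized formulas of Lemma~\ref{lem:bosonized-wave-functions}, keeping only the leading terms in $z$. These are ratios of shifted tau-functions $\tau_{\alpha\beta}$ to $\tau_0$, evaluated at unshifted times. By the charge selection rule~\eqref{chargeselection}, each $\psi_0^{(\alpha)}\psi_0^{(\beta)}$ changes the charge by $\boldsymbol e_\alpha+\boldsymbol e_\beta$. Consequently every relevant matrix element is a finite polynomial in the $B_{\alpha\beta}$, multiplied by sign factors $\epsilon_{\alpha\beta}(\boldsymbol n)$ and by indicators of the discrete variables. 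The expected outcome is as follows:
\begin{itemize}
\item $W_{2,1}$ and $V_{1,-1}$ are skew matrices built from $B$, in agreement with Lemma~\ref{lem:KN-def};
\item $\widebar V_{2,0}$ is a matrix whose diagonal consists of ratios of shifts of $\tau_0$, hence units of $\mathcal V_g$;
\item the off-diagonal part of $\widebar V_{2,0}$ is again linear or polynomial in $B$.
\end{itemize}
Only this leading-order bookkeeping is needed; the higher Laurent coefficients play no role.

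\textbf{Step 3 (normalizing $\Omega$).} Write $\Omega=J_0+\operatorname{diag}\bigl(-V_{1,-1},\,\mathcal S^{-1}(W_{2,1})\bigr)$ with both diagonal blocks skew. For a matrix $C$ one has
\begin{equation*}
\begin{pmatrix} I_N&0\\ C&I_N\end{pmatrix}J_0\begin{pmatrix} I_N&C^\top\\ 0&I_N\end{pmatrix}=\begin{pmatrix}0&I_N\\-I_N&C^\top-C\end{pmatrix},
\end{equation*}
and the analogous identity holds with an upper unipotent factor acting on the $(1,1)$ block. A skew matrix $S$ can be written as $S=C^\top-C$ with $C$ strictly lower triangular, namely $C=-S_{\mathrm{sl}}$. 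I will therefore take $R=R_2R_1$, where $R_1$ is block lower unipotent with a strictly lower $(2,1)$ block that kills $\mathcal S^{-1}(W_{2,1})$. The factor $R_2$ is block upper unipotent with a strictly lower (transposed appropriately) $(1,2)$ block that kills the $(1,1)$ block. The order of the two factors may need adjusting if the cross terms do not vanish. This gives $R\,\Omega\,R^\top=J_0$ with $R$ of degree zero and invertible.

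\textbf{Step 4 (factorization, the expected main obstacle).} It remains to show that $R A_0\in\operatorname{Sp}_{2N}(\mathcal V_g)H$. Because $R$ is built from strictly triangular blocks and $A_0$ is block lower triangular with $(1,1)$ block $I_N$, the product $RA_0$ should lie in $H$ up to a single $\operatorname{Sp}_{2N}$ factor absorbing the $(1,2)$ block introduced by $R_2$. I would look for that factor explicitly in the form $\begin{pmatrix}I&Y\\0&I\end{pmatrix}$ with $Y$ symmetric, which is symplectic, or its lower analogue. I would then check that the remaining factor has the $H$-shape: $a,b,c\in\mathfrak n_-$ and $d\in B_-^\times$.

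The crux is that the $(2,2)$ block, essentially $\mathcal S^{-1}(\widebar V_{2,0})$ corrected by $B$-dependent terms, must become lower triangular with invertible diagonal. I expect this to require the explicit structure of $\widebar V_{2,0}$ from Step 2. If a direct choice fails, I will first try the case $N=2$ by hand, where $B$ has a single parameter, and then use the product form of $g_B$ to reduce the general case to successive pairwise factors.
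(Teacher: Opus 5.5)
Your proposal is a plan rather than a proof. Step~2 only states an ``expected outcome'' for $W_{2,1}$, $V_{1,-1}$ and $\widebar V_{2,0}$. Step~4, which you correctly flag as the crux, is left open, with a fallback to doing $N=2$ by hand.

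The idea you are missing is that the charge structure of $g_B$ collapses the boundary data.
\begin{itemize}
\item Every nonzero monomial of $g_B$ contains each $\psi^{(\alpha)}_0$ at most once. So all charge components of $g_B$ lie in $\{0,1\}^N$, and the only one in $2\mathbb Z^N$ is the constant term.
\item Since $\tau_0=\langle\boldsymbol s+\boldsymbol r|\,\textup{e}^{J(\boldsymbol t)}g_B\,\textup{e}^{-\widebar J(\widebar{\boldsymbol t})}|\boldsymbol s-\boldsymbol r\rangle$ needs a component of charge $2\boldsymbol r$, it vanishes for $\boldsymbol r\neq\boldsymbol 0$. Hence in $\mathcal V_{g_B}$ two elements that agree on the sector $\boldsymbol r=\boldsymbol 0$ coincide, because their difference is annihilated by $\tau_0$.
\item On that sector, the numerator of $(W_{2,1})_{\alpha\beta}$ has bra--ket charge difference $-\boldsymbol e_\alpha-\boldsymbol e_\beta$. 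The off-diagonal numerators of $\widebar V_{2,0}$ have charge difference $\boldsymbol e_\beta-\boldsymbol e_\alpha$. Both have a negative entry, so they vanish by \eqref{chargeselection}.
\item The diagonal numerators of $\widebar V_{2,0}$ see only the constant term of $g_B$. They equal $\tau_0$, since both are the same $\boldsymbol s$-independent vacuum expectation.
\end{itemize}
Thus $W_{2,1}=0$, $\widebar V_{2,0}=I_N$ and $A_0=I_{2N}$, and $\Omega$ differs from $J_0$ only in the $(1,1)$ block $-V_{1,-1}$. A single factor $R=\begin{pmatrix}I_N&b\\0&I_N\end{pmatrix}$ with $b\in\mathfrak n_-$ and $b^\top-b=V_{1,-1}$ then gives $R\,\Omega\,R^\top=J_0$ and $RA_0=R\in H$. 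No symplectic factor, no cross terms and no reduction to pairwise factors are needed.

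Some of your Step~2 predictions are also off in a way that matters.
\begin{itemize}
\item Computing pointwise, you would indeed see $B_{\alpha\beta}$ in the numerator of $(W_{2,1})_{\alpha\beta}$. But this happens only at $\boldsymbol r=\boldsymbol e_\alpha+\boldsymbol e_\beta$, where $\tau_0=0$, so in $\mathcal V_{g_B}$ it is zero.
\item More importantly, your claim that the diagonal of $\widebar V_{2,0}$ consists of ratios of shifts of $\tau_0$, ``hence units'', is unjustified for $N\geq2$. The numerator is $\tau_0$ evaluated at $\boldsymbol s-\boldsymbol e_\alpha$, which is a partial translate. $\mathcal V_g$ only inverts the $\mathcal S$-translates, which move all of $\boldsymbol s$ by $\boldsymbol 1$.
\item Invertibility of $A_0$, equivalently of $\widebar V_{2,0}$, is one of the things regularity must deliver, and it is not automatic for $N\geq2$. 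For $g_B$ it follows only from the explicit value $\widebar V_{2,0}=I_N$.
\end{itemize}
A minor point: your Step~3 identity has a sign slip. The $(2,2)$ block of the congruence is $C-C^\top$, not $C^\top-C$.
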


\begin{proof}
This is a particular case of the family considered in Remark~\ref{rem:admissible-cliff-group},
and hence $g_B$ is admissible.
We first note that the localization $\mathcal V_{g_B}$ is
determined entirely by the sector $\boldsymbol r=\boldsymbol 0$. Since the fermionic modes
$\psi^{(\alpha)}_0$ anticommute, each of them occurs at most once in
every non-zero monomial in the expansion of $g_B$. Consequently, every
charge component of $g_B$ has charge
\begin{equation}
\boldsymbol  q=(q_1,\ldots,q_N),
\qquad
q_\alpha\in\{0,1\}.\label{observation}
\end{equation}
In particular, the only charge component of $g_B$ belonging to
$2\mathbb Z^N$ is the charge-zero component.
On the other hand,
\begin{equation*}
\tau_0 (\boldsymbol  s,\boldsymbol  r,\boldsymbol  t,\widebar{\boldsymbol  t})
=
\bra{\boldsymbol  s+\boldsymbol  r}\textup{e}^{J(\boldsymbol  t)}
g_B
\,\textup{e}^{-\widebar J(\widebar{\boldsymbol  t})}
\ket{\boldsymbol  s-\boldsymbol  r}.
\end{equation*}
Since the current operators have charge zero, by
\eqref{chargeselection} this matrix element can be non-zero only if
$g_B$ has a charge component equal to $2\boldsymbol  r$. By~\eqref{observation}, this is possible only for $\boldsymbol r=\boldsymbol 0$. Therefore
\begin{equation*}
\tau_0(\boldsymbol  s,\boldsymbol  r,\boldsymbol  t,\widebar{\boldsymbol  t})=0
\qquad
\text{for }\boldsymbol  r\neq  \boldsymbol 0.
\end{equation*}
Since $\tau_0$ vanishes for $\boldsymbol  r\neq\boldsymbol 0$, the localization locus of $\mathcal V_{g_B}$ is
contained in the sector $\boldsymbol  r=\boldsymbol 0$. Hence two elements
$F,G\in\mathcal A_{g_B}$ which agree on $\boldsymbol  r=\boldsymbol 0$ define the same
element of $\mathcal V_{g_B}$. Indeed, if $F=G$ on $\boldsymbol r=\boldsymbol 0$, then
$\tau_0(F-G)=0$ in $\mathcal A_{g_B}$, and hence
$F=G$ in $\mathcal V_{g_B}$.

Looking at the leading term of~\eqref{eq:Psi2-bos} and removing the bare
factor $\chi_\beta$ by Lemma~\ref{lem:XNbarXN-def}, we have for $\alpha\neq\beta$
\begin{equation}\label{eq:W21_regular-factor}
(W_{2,1})_{\alpha\beta}
=
\epsilon_{\alpha\beta}(\boldsymbol  n)
\frac{
\tau_{\alpha\beta}
(\boldsymbol  s-\boldsymbol  e_\alpha,
 \boldsymbol  r-\boldsymbol  e_\alpha,
 \boldsymbol  t,\widebar{\boldsymbol  t})
}{
\tau_0
},
\end{equation}
while $(W_{2,1})_{\alpha\alpha}=0$ by Lemma~\ref{lem:KN-def}. On the sector
$\boldsymbol r=\boldsymbol 0$, the numerator \eqref{eq:W21_regular-factor} is
\begin{equation*}
\bra{\boldsymbol  s-\boldsymbol  e_\alpha-\boldsymbol  e_\beta}
\,\textup{e}^{J(\boldsymbol  t)}
g_B
\,\textup{e}^{-\widebar J(\widebar{\boldsymbol  t})}
\ket{
\boldsymbol  s
}.
\end{equation*}
Its bra--ket charge difference is
$-\boldsymbol  e_\alpha-\boldsymbol  e_\beta$. All charge components of $g_B$
have non-negative coordinates, so this matrix element vanishes by
\eqref{chargeselection}. By the preceding localization argument,
\begin{equation*}
W_{2,1}=0 \text{   in }\mathcal V_{g_B}.
\end{equation*}
Similarly, the leading term of~\eqref{eq:barPsi2star-bos}, together
with Lemma~\ref{lem:XNbarXN-def}, gives
\begin{equation}\label{eq:V20_regular-fact}
(\widebar V_{2,0})_{\alpha\beta}
=
\epsilon_\alpha(\boldsymbol  n)\epsilon_\beta(\boldsymbol  n)
\frac{
\tau_{\beta\alpha}
(\boldsymbol  s-\boldsymbol  e_\beta,
 \boldsymbol  r,\boldsymbol  t,\widebar{\boldsymbol  t})
}{
\tau_0
}.
\end{equation}
For $\boldsymbol r=\boldsymbol 0$, the numerator in~\eqref{eq:V20_regular-fact}  is
\begin{equation*}
\bra{
\boldsymbol  s-\boldsymbol  e_\alpha}
\,\textup{e}^{J(\boldsymbol  t)}
g_B
\,\textup{e}^{-\widebar J(\widebar{\boldsymbol  t})}
\ket{
\boldsymbol  s-\boldsymbol  e_\beta
}.
\end{equation*}
If $\alpha\neq\beta$, its bra--ket charge difference is
$\boldsymbol  e_\beta-\boldsymbol  e_\alpha$, which has a negative component.
Since every charge component of $g_B$ has non-negative coordinates,
the matrix element vanishes by~\eqref{chargeselection}. If
$\alpha=\beta$, the charge difference is zero, so only the
charge-zero component of $g_B$, namely its constant term, can
contribute. By~\eqref{charge independence}, the resulting vacuum
expectation is independent of $\boldsymbol  s$, and therefore in $\mathcal V_{g_B}$,$(\widebar V_{2,0})_{\alpha\alpha}=1$. Consequently,
\begin{equation*}
\widebar V_{2,0}=I_N,
\qquad
\Omega=
\begin{pmatrix}
-V_{1,-1}&I_N\\
-I_N&0
\end{pmatrix},
\qquad
A_0=I_{2N}.
\end{equation*}
By Lemma~\ref{lem:KN-def}, $V_{1,-1}$ is skew-symmetric, hence there exists a
unique $b\in\mathfrak n_-$ such that
$b^\top-b=V_{1,-1}$.
Introducing
\begin{equation*}
R=
\begin{pmatrix}
I_N&b\\
0&I_N
\end{pmatrix},
\end{equation*}
we have that $R\in H$, and
\begin{equation*}
R\,\Omega\, R^\top
=
\begin{pmatrix}
-V_{1,-1}-b+b^\top &I_N\\
-I_N&0
\end{pmatrix}
=
J_0.
\end{equation*}
Moreover,
\begin{equation*}
R\,A_0=R\in
H\subset
\operatorname{Sp}_{2N}(\mathcal V_{g_B})H.
\end{equation*}
As $B$ varies in $\mathfrak{so}_N(\mathbb C)$, the elements $g_B$ form a $\binom N2$-parameter family in the regular factorization locus containing genuinely coupled Clifford-group elements.
\end{proof}
We do not attempt here to characterize the full regular factorization locus among the Clifford-group tau-functions of \cite{SavchenkoZabrodin}. Its size inside the admissible locus is left for future work.

\subsection{Normalization of the raw dressing operators}
\label{sec:normalization}
Here, we construct a unique finite-Laurent matrix operator $\Gamma_N$ that normalizes the raw dressing pair to satisfy the canonical Pfaff relation $Y_NJ_N\widebar Y_N^*=J_N$ with $Y_N \in G_-$.

Henceforth, we assume that the Clifford-group element $g$ belongs to the regular factorization locus of Definition~\ref{def:regular-factorization-locus}.
Applying the construction of Section~\ref{sec:multicomponent-hierarchy} with coefficient algebra
$\mathcal V_g$, we use the same notation 
$\mathfrak g=\mathfrak g_{+} \oplus \mathfrak g_{-}$ and $\Pi_{\pm},$
for the resulting Pfaff--Toda splitting and projections respectively. We denote by $G_-$ the group of invertible matrix pseudodifference operators of the
form
\begin{equation*}
U=
\begin{pmatrix}
I_N+a+\mathcal O(\mathcal S)
&
b\mathcal S^{-1}+\mathcal O(1)
\\
c\mathcal S+\mathcal O(\mathcal S^2)
&
d+\mathcal O(\mathcal S)
\end{pmatrix},
\qquad
a,b,c\in\mathfrak n_-,
\quad
d\in B_-^\times.
\end{equation*}
The closure under multiplication and inversion follows from the triangular leading term and the $ \mathcal S$-adic inversion argument used in Lemma~\ref{lem:UN_invertible}.
Its Lie algebra is $\mathfrak g_-$. Note that 
\begin{equation}
U\in G_-
\Longleftrightarrow
Q_NUQ_N^{-1}=h+\mathcal O(\mathcal S)
\quad\text{for some }h\in H.
\label{characGmin}
\end{equation}
The raw dressing operator $X_N$ in~\eqref{eq:XNbarXN} does not in general satisfy the triangular conditions defining $G_-$. Moreover, $K_N$ does not coincide with $J_N$ in general.
In order to compute the Sato evolutions and identify the fermionic dressing data with the dressing representation of Section~\ref{sec:dressing_zero_curvature}, we therefore seek a transformation
$
(X_N,\widebar X_N)
\longmapsto
(\Gamma_NX_N,\Gamma_N\widebar X_N)
$
such that
\begin{equation}
\Gamma_NK_N\Gamma_N^*=J_N
,\qquad
\Gamma_NX_N\in G_-.
\label{eq:Gamma-negative}
\end{equation}

\begin{lemma}
\label{lem:XNinvertibility}
The raw dressing operators
$X_N$ and $\widebar X_N$ in~\eqref{eq:XNbarXN} are invertible in their respective one-sided
completions. 
\end{lemma}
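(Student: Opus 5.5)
The plan is to follow the strategy of Lemma~\ref{lem:UN_invertible}: isolate the leading coefficient by a fixed shift conjugation, show that this coefficient is an invertible matrix over $\mathcal V_g$, and then invert the remainder by a geometric series in the appropriate one-sided completion.

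For $X_N$ I conjugate by $Q_N$. By~\eqref{eq:Omega-A0},
\begin{equation*}
Q_NX_NQ_N^{-1}=A_0+\mathcal O(\mathcal S),
\qquad
A_0=\begin{pmatrix} I_N&0\\ \mathcal S^{-1}(W_{2,1})&\mathcal S^{-1}(\widebar V_{2,0})\end{pmatrix}.
\end{equation*}
$A_0$ is block lower triangular, so it is invertible if and only if $\widebar V_{2,0}$ is. I will get this from the relation $K_N=X_NJ_N\widebar X_N^*$ of Proposition~\ref{prop:raw-pfaff-relation}, which is what makes the argument work without any regularity hypothesis. After conjugating by $Q_N$, this relation gives
\begin{equation*}
Q_NK_NQ_N^{-1}=\bigl(Q_NX_NQ_N^{-1}\bigr)\,J_0\,\bigl(Q_N\widebar X_NQ_N^{-1}\bigr)^{*}
\end{equation*}
up to the appropriate $Q_N$ placements; in particular $Q_N^*=Q_N^{-1}$ and $Q_NJ_NQ_N^{-1}=J_0$.

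I would then compare degree-zero coefficients on both sides. The left side has degree-zero coefficient $\Omega$, whose off-diagonal blocks are $\pm I_N$. The product on the right mixes an $\mathcal S$-adic factor with an $\mathcal S^{-1}$-adic factor, so its degree-zero part involves infinitely many terms. This is the obstacle. The way around it is to look only at a single block where the supports are one-sided and only finitely many terms survive.

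The block I would use is the $(1,2)$-block of $K_N$, computed in Lemma~\ref{lem:KN-def} as $(K_N)_{12}=\mathcal S^{-1}I_N$. Using the other expression $K_N=\widebar X_NJ_NX_N^*$, the $(2,1)$-block reads
\begin{equation*}
-\mathcal S I_N=\widebar W_2\mathcal S^{-1}V_2^*-V_2\mathcal S\widebar W_2^*.
\end{equation*}
The sharpest route is the $(2,2)$-block of $X_NJ_N\widebar X_N^*$. Its lowest term is $-\widebar V_{2}\mathcal S V_2^*$ together with $W_2\mathcal S^{-1}V_1^*$, and a support count gives degree $\ge 1$. Matching this with $(K_N)_{22}=W_{2,1}$ at degree $0$, and working out the analogous $(2,1)$-block, should yield an identity of the shape
\begin{equation*}
\widebar V_{2,0}\,\mathcal S(\cdots)=I_N+(\text{nilpotent or skew correction}),
\end{equation*}
so that $\widebar V_{2,0}$ has a one-sided inverse, hence a two-sided one since we are dealing with square matrices over a commutative ring.

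As a fallback, I would use the explicit tau-function formula~\eqref{eq:V20_regular-fact} and the admissibility of $g$, as in Lemma~\ref{lem:regularization-N1}. There $\widebar V_{2,0}$ is a matrix of tau-ratios, and its determinant should be expressible as $\mathcal S^{\pm1}(\tau_0)/\tau_0$ up to sign through the Pfaffian and determinant structure of the bilinear identity.

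Once $\widebar V_{2,0}$ is shown invertible, I write $Q_NX_NQ_N^{-1}=A_0(I_{2N}+R)$ with $R\in\mathcal S M_{2N}(\mathcal V_g[[\mathcal S]])$ and invert by the $\mathcal S$-adic geometric series. The operator $\widebar X_N$ is handled symmetrically. With the matching conjugation, its leading coefficient is block triangular with $I_N$ in the $(2,2)$ block and $\widebar W_{1,0}$ in the $(1,1)$ block, and the invertibility of $\widebar W_{1,0}$ follows from the same coefficient comparison in the $(1,1)$-block of $K_N$. The remainder is then inverted by the $\mathcal S^{-1}$-adic geometric series.

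I expect the main obstacle to be proving that $\widebar V_{2,0}$ and $\widebar W_{1,0}$ are invertible.
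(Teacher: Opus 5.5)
\textbf{The gap} is the step you flag yourself. You try to obtain invertibility of $\widebar V_{2,0}$ (and of $\widebar W_{1,0}$) from the bilinear relation alone, ``without any regularity hypothesis'', and this does not go through.

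There is no convergence obstacle to begin with. $\widebar X_N^*$ is $\mathcal S$-adic, so $X_NJ_N\widebar X_N^*$ is a product in a single completion; this is how Lemma~\ref{lem:KN-def} is proved. The $(2,2)$-block you propose to use is uninformative: it is $W_2\mathcal S^{-1}V_2^*-\widebar V_2\mathcal S\widebar W_2^*$, and its degree-zero coefficient is just $W_{2,1}$.

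The only comparison that involves $\widebar V_{2,0}$ is the $\mathcal S^{1}$-coefficient of the $(2,1)$-block, which gives
\[
\widebar V_{2,0}\,\mathcal S\bigl(\widebar W_{1,0}^\top\bigr)=I_N+W_{2,1}\,\mathcal S\bigl(V_{1,-1}^\top\bigr).
\]
Equivalently, $\Omega=\widebar A_0J_0A_0^\top$, where $\widebar A_0=\begin{pmatrix}\widebar W_{1,0}&V_{1,-1}\\0&I_N\end{pmatrix}$ is the leading coefficient of $Q_N\widebar X_NQ_N^{-1}$.

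The correction term is a product of two skew-symmetric matrices, and such a product is not nilpotent in general. For $N=2$, write $W_{2,1}=w\,(E_{12}-E_{21})$ and $V_{1,-1}=v\,(E_{12}-E_{21})$. Then the correction equals $c\,I_2$ with $c=w\,\mathcal S(v)$, a product of tau-function ratios. Nothing in the construction forces $1+c$ to be a unit in $\mathcal V_g$, where only the $\mathcal S$-translates of $\tau_0$ are inverted. Your determinant fallback is equally unsupported. This is exactly why the paper does not assert the lemma for arbitrary admissible $g$: it is stated under the standing assumption, made at the start of the normalization subsection, that $g$ lies in the regular factorization locus.

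\textbf{The missing ingredient} is to use that hypothesis, after which the argument is short. The paper's proof runs as follows.
\begin{enumerate}
\item $R$ and $RA_0\in\operatorname{Sp}_{2N}H$ are invertible, hence so is $A_0$. Then $Q_NX_NQ_N^{-1}=A_0+\mathcal O(\mathcal S)$ is inverted by the $\mathcal S$-adic geometric series.
\item $R\Omega R^\top=J_0$ makes $\Omega$ invertible, hence also $K_N=Q_N^{-1}\Omega Q_N$.
\item Therefore $\widebar X_N^*=J_N^{-1}X_N^{-1}K_N$ is invertible in the $\mathcal S$-adic completion, and $\widebar X_N$ is invertible in the $\mathcal S^{-1}$-adic one.
\end{enumerate}

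Your identity $\Omega=\widebar A_0J_0A_0^\top$ does give a clean variant once regularity is invoked. It yields $\det\Omega=\det\widebar W_{1,0}\cdot\mathcal S^{-1}(\det\widebar V_{2,0})$, so invertibility of $\Omega$ alone makes both leading coefficients units. Your geometric-series inversion then works symmetrically for $X_N$ and $\widebar X_N$. But some form of the regularity assumption is indispensable; the bilinear identity by itself does not supply it.
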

\begin{proof}
Let $R$ be as in Definition~\ref{def:regular-factorization-locus}. Since $R$ and $RA_0$ are
invertible, the matrix $A_0$ is invertible.
Hence, by~\eqref{eq:Omega-A0}, $Q_NX_NQ_N^{-1}=A_0+\mathcal O(\mathcal S)$
is invertible in
$M_{2N}(\mathcal V_g[[\mathcal S]])$, and therefore $X_N$ is
invertible. Similarly, the relation
$R\,\Omega\, R^\top=J_0$
implies that $\Omega$, and hence $K_N$, is invertible. From the definition~\eqref{eq:KN}, $K_N=X_NJ_N\widebar X_N^*$,
it follows that $\widebar X_N^*$, and therefore $\widebar X_N$, is invertible
in the opposite completion.
\end{proof}
\begin{proposition}
\label{prop:GammaN}
    There exists a unique matrix difference
operator
$
\Gamma_N\in M_{2N}\bigl(V_g[\mathcal S,\mathcal S^{-1}]\bigr)
$ such that~\eqref{eq:Gamma-negative} holds.
\end{proposition}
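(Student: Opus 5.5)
The plan is to conjugate everything by $Q_N$, which turns the problem into a degree-zero one, and to read off existence and uniqueness from the regular factorization condition of Definition~\ref{def:regular-factorization-locus} and from $H\cap\operatorname{Sp}_{2N}=\{1\}$.

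\emph{Reduction.} Since $Q_N^*=Q_N^{-1}$ and $Q_NJ_NQ_N^{-1}=J_0$, it is convenient to write $\widehat\Gamma:=Q_N\Gamma_NQ_N^{-1}$. By \eqref{eq:Omega-A0} and \eqref{characGmin}, the conditions \eqref{eq:Gamma-negative} become
\begin{equation*}
\widehat\Gamma\,\Omega\,\widehat\Gamma^{*}=J_0,
\qquad
\widehat\Gamma\bigl(A_0+\mathcal O(\mathcal S)\bigr)=h+\mathcal O(\mathcal S)\ \text{ for some } h\in H .
\end{equation*}

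\emph{Existence.} I would look for $\widehat\Gamma$ of degree zero. Take $R$ from Definition~\ref{def:regular-factorization-locus} and write $RA_0=Fh$ with $F\in\operatorname{Sp}_{2N}(\mathcal V_g)$ and $h\in H$. Set
\begin{equation*}
\widehat\Gamma:=F^{-1}R,\qquad \Gamma_N:=Q_N^{-1}F^{-1}RQ_N .
\end{equation*}
For a degree-zero matrix the formal adjoint is just the transpose. Hence
\begin{equation*}
\widehat\Gamma\,\Omega\,\widehat\Gamma^{*}=F^{-1}\bigl(R\Omega R^\top\bigr)F^{-\top}=F^{-1}J_0F^{-\top}=J_0 .
\end{equation*}
Also $\widehat\Gamma A_0=h$, and multiplying the $\mathcal O(\mathcal S)$ tail by a degree-zero matrix keeps it $\mathcal O(\mathcal S)$. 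Since $Q_N$ is a Laurent monomial, $\Gamma_N$ has finite Laurent support. This gives existence.

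\emph{Uniqueness.} Suppose $\Gamma_N$ and $\Gamma_N'$ both satisfy \eqref{eq:Gamma-negative}.
\begin{itemize}
\item $\Gamma_N$ is invertible with Laurent inverse. Indeed, $\Gamma_NK_N\Gamma_N^*=J_N$ gives $\Gamma_N^{-1}=K_N\Gamma_N^*J_N^{-1}$. This is Laurent because $K_N$ is Laurent by Lemma~\ref{lem:KN-def}.
\item Set $\Delta:=\Gamma_N'\Gamma_N^{-1}$, a Laurent operator. Then $\Delta J_N\Delta^*=J_N$.
\item By Lemma~\ref{lem:XNinvertibility}, $X_N$ is invertible, so $\Delta=(\Gamma_N'X_N)(\Gamma_NX_N)^{-1}\in G_-$, using that $G_-$ is a group.
\end{itemize}
Put $\widehat\Delta=Q_N\Delta Q_N^{-1}$. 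From $G_-$ we get $\widehat\Delta=h+\mathcal O(\mathcal S)$ with $h\in H$, and $\widehat\Delta^{-1}$ has the same form. On the other hand, the symplectic relation gives $\widehat\Delta^{-1}=-J_0\widehat\Delta^*J_0$. The adjoint of an operator supported in degrees $\ge0$ is supported in degrees $\le0$, so $\widehat\Delta^{-1}$ is supported in degrees $\ge0$ and in degrees $\le0$ at once. Hence $\widehat\Delta^{-1}$, and therefore $\widehat\Delta$, is of degree zero. Then $\widehat\Delta\in H\cap\operatorname{Sp}_{2N}(\mathcal V_g)=\{I_{2N}\}$, so $\Gamma_N'=\Gamma_N$.

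\emph{Main obstacle.} The delicate point is this last support argument. One has to check that $\widehat\Delta$ and $\widehat\Delta^{-1}$ both genuinely lie in the $\mathcal S$-adic completion with lower-bounded support. One also has to make sure that the adjoint identity is applied only to Laurent operators, where $*$ is an honest anti-automorphism; this is exactly why the Laurent property of $\Gamma_N^{-1}$ was established first.
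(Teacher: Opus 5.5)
Your proposal is correct and follows the paper's proof essentially step for step. Existence comes from the degree-zero choice $Q_N^{-1}f^{-1}RQ_N$ built from the factorization $RA_0=fh$; uniqueness comes from showing that $Q_N\Gamma_N'\Gamma_N^{-1}Q_N^{-1}$ has the form $h_0+\mathcal O(\mathcal S)$ with $h_0\in H$ and is $J_0$-symplectic, so it has degree zero and lies in $H\cap\operatorname{Sp}_{2N}=\{I_{2N}\}$. Your extra check that $\Gamma_N^{-1}$ is Laurent is a harmless refinement, which the paper sidesteps by comparing with the constructed $\Gamma_N$, whose inverse is visibly Laurent; note that $K_N\Gamma_N^*J_N^{-1}$ is a priori only a right inverse, and it becomes two-sided because $\Gamma_N=(\Gamma_NX_N)X_N^{-1}$ is invertible in the $\mathcal S$-adic completion.
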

\begin{proof}
Let $R$ be as in Definition~\ref{def:regular-factorization-locus}. Since
$RA_0\in \operatorname{Sp}_{2N}H$,
there exist $f\in \operatorname{Sp}_{2N}$ and $h\in H$ such that $RA_0=f h$.
Set
$
F:=f^{-1}R.
$
By~\eqref{eq:Omega-A0}, we get
\begin{equation*}
F\,\Omega\, F^\top
=
f^{-1}J_0(f^{-1})^\top
=
J_0,
\qquad
F\,A_0=h\in H.
\end{equation*}
Finally, define 
\begin{equation}\label{eq:GammaN}
\Gamma_N:=Q_N^{-1}FQ_N.
\end{equation}
Since $F$ has degree zero and $Q_N$ is a diagonal monomial
difference operator, $\Gamma_N$ \eqref{eq:GammaN} has finite Laurent
support, hence
$\Gamma_N\in
M_{2N}\bigl(\mathcal V_g[\mathcal S,\mathcal S^{-1}]\bigr)$. 
It is clear that
\begin{equation*}
\Gamma_NK_N\Gamma_N^*
=
Q_N^{-1}F\Omega F^\top Q_N
=
Q_N^{-1}J_0Q_N
=
J_N.
\end{equation*}
On the other hand,
\begin{equation*}
Q_N\Gamma_NX_NQ_N^{-1}
=
F\bigl(A_0+\mathcal O(\mathcal S)\bigr)
=
h+\mathcal O(\mathcal S),
\qquad h\in H,
\end{equation*}
which implies
$
\Gamma_NX_N\in G_-
$ as noted in~\eqref{characGmin}. Thus $\Gamma_N$ as in~\eqref{eq:GammaN} satisfies~\eqref{eq:Gamma-negative}.

It remains to prove uniqueness. Suppose that $\Gamma_N'$ is another matrix difference operator satisfying~\eqref{eq:Gamma-negative}, and set $T=\Gamma_N' \Gamma_N^{-1}$. By construction we have $T=(\Gamma_N'X_N)(\Gamma_NX_N)^{-1} \in G_-$ and
\begin{equation*}
TJ_NT^*=J_N.
\end{equation*}
Conjugating $T$ by $Q_N$, and setting $\widehat T:=Q_NTQ_N^{-1}$, we have 
\begin{equation*}
\widehat T=h_0+\mathcal O(\mathcal S),
\qquad h_0\in H,
\end{equation*}
using~\eqref{characGmin} and since $T\in G_-$.  
Moreover, $\widehat TJ_0\widehat T^*=J_0$,
and therefore
\begin{equation}
\widehat T^*=-J_0\widehat T^{-1}J_0.
\label{auxeq}
\end{equation}
The left-hand side of~\eqref{auxeq} is a series in non-positive powers of $\mathcal S$, whereas the right-hand side is a series in non-negative powers, therefore both are of degree zero, i.e.
\begin{equation*}
\widehat T=h_0 \in H\cap Sp_{2N}.
\end{equation*}
Since \(H\cap Sp_{2N}=\{I_{2N}\}\), we conclude that
$ 
\widehat T=T=I_{2N}.
$
\end{proof}

\begin{definition}
\label{def:YN}
Let $\Gamma_N$ be the matrix difference operator of Proposition~\ref{prop:GammaN}. We define the normalized dressing operators by
\begin{equation*}
Y_N:=\Gamma_NX_N,
\qquad
\widebar Y_N:=\Gamma_N\widebar X_N.
\label{normalizeddressing}
\end{equation*}
Then Proposition~\ref{prop:GammaN} directly implies that
\begin{equation}
Y_N\in G_-,
\qquad
Y_NJ_N\widebar Y_N^*=J_N.
\label{YNrelation}
\end{equation}
Equivalently,
\begin{equation*}
\widebar Y_N=-J_N(Y_N^*)^{-1}J_N.
\end{equation*}
Thus, after normalization, the two fermionic dressing operators are related
by the same Pfaff involution underlying the splitting of Section~\ref{sec:multicomponent-hierarchy}.
Moreover, $Y_N \in M_{2N}\bigl(V_g((\mathcal S))\bigr)$ has the canonical asymptotic form
\begin{equation}
Y_N=
\begin{pmatrix}
I_N+a_0+\mathcal O(\mathcal S)
&
b_{-1}\mathcal S^{-1}+\mathcal O(1)
\\[2mm]
c_1\mathcal S+\mathcal O(\mathcal S^2)
&
\Delta+\mathcal O(\mathcal S)
\end{pmatrix},
\label{shapeofYN}
\end{equation}
where $a_0,b_{-1},c_1\in\mathfrak n_-$, $\Delta\in B_-^\times$.
\end{definition}

\subsection{Auxiliary matrix difference operators}
We derive the auxiliary linear equations for the normalized dressing operators and prove that the auxiliary operators $B_x$ belong to the positive Lie subalgebra $\mathfrak g_+$, the key step for obtaining Sato evolutions.

 By Definition~\ref{def:YN}, the normalized dressing operators satisfy $Y_NJ_N\widebar Y_N^*=J_N$, with $Y_N\in G_-$.
We recall the involution $\theta$ of
Section~\ref{sec:multicomponent-hierarchy} which characterizes 
\begin{equation*}
\mathfrak g_+=
\left\{
M\in M_{2N}\bigl(\mathcal V_g[\mathcal S,\mathcal S^{-1}]\bigr)
\mid 
\theta(M)=M
\right\},
\qquad
\theta(M):=J_NM^*J_N .
\end{equation*} 
For $\alpha=1,\ldots,N$ and
$n\geq1$, the commuting bare directions are
\begin{equation*}
T_{\alpha,n}
=
\mathcal S^{-n}
\begin{pmatrix}
E_{\alpha\alpha}&0\\
0&0
\end{pmatrix},
\qquad
\widebar T_{\alpha,n}=-\mathcal S^{-n}
\begin{pmatrix}
0&0\\
0&E_{\alpha\alpha}
\end{pmatrix}.
\end{equation*}
\begin{lemma}
\label{lem:bareframes_evolutions}
For every $\alpha=1,\ldots,N$ and $n\geq1$, the bare wave frames \eqref{bareframes}
satisfy
\begin{equation*}
\partial_{t_{\alpha,n}}\Xi
=
T_{\alpha,n}\Xi,
\qquad
\partial_{t_{\alpha,n}}\Xi^\lor
=
\theta(T_{\alpha,n})\Xi^\lor,
\end{equation*}
\begin{equation*}
\partial_{\widebar t_{\alpha,n}}\Xi
=
\widebar T_{\alpha,n}\Xi,
\qquad
\partial_{\widebar t_{\alpha,n}}\Xi^\lor
=
\theta(\widebar T_{\alpha,n})\Xi^\lor.
\end{equation*}
\end{lemma}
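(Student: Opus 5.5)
The plan is to verify all four identities by direct computation. Both sides are diagonal in the $2N\times 2N$ block structure, so it suffices to compare them entry by entry, using two facts:
\begin{itemize}
\item the explicit form \eqref{defchis} of the bare wave functions;
\item the shift rules \eqref{shiftaction}, which convert a power of $z$ multiplying a bare factor into a power of $\mathcal S$ acting on it.
\end{itemize}
Throughout, $\mathcal S$ acts on $\Xi,\Xi^\lor$ through the discrete variable $\boldsymbol s$, with $z$ held fixed.

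First, the $t_{\alpha,n}$-derivatives. From \eqref{defchis}, $\partial_{t_{\alpha,n}}\chi_\beta=\delta_{\alpha\beta}z^n\chi_\beta$ and $\partial_{t_{\alpha,n}}\chi^\lor_\beta=-\delta_{\alpha\beta}z^n\chi^\lor_\beta$, while $\widebar\chi_\beta$ and $\widebar\chi^\lor_\beta$ do not depend on $\boldsymbol t$. By \eqref{shiftaction}, $z^n\chi_\beta=\mathcal S^{-n}\chi_\beta$ and $z^n\chi^\lor_\beta=\mathcal S^{n}\chi^\lor_\beta$. Hence
\[
\partial_{t_{\alpha,n}}\Xi=\operatorname{diag}\bigl(\mathcal S^{-n}E_{\alpha\alpha},0\bigr)\Xi=T_{\alpha,n}\Xi,
\qquad
\partial_{t_{\alpha,n}}\Xi^\lor=\operatorname{diag}\bigl(0,-\mathcal S^{n}E_{\alpha\alpha}\bigr)\Xi^\lor .
\]
It remains to show that $\theta(T_{\alpha,n})=J_NT_{\alpha,n}^*J_N$ equals $\operatorname{diag}(0,-\mathcal S^nE_{\alpha\alpha})$. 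For a block-diagonal $\operatorname{diag}(X,Y)$, a two-line block multiplication with \eqref{eq:JN} gives
\[
J_N\operatorname{diag}(X,Y)J_N=\operatorname{diag}\bigl(-\mathcal S^{-1}Y\mathcal S,\;-\mathcal S X\mathcal S^{-1}\bigr).
\]
Apply this with $\operatorname{diag}(X,Y)=T_{\alpha,n}^*$, so that $X=E_{\alpha\alpha}\mathcal S^n$ and $Y=0$. The constant matrix $E_{\alpha\alpha}$ commutes with $\mathcal S$, so the result is $\operatorname{diag}(0,-E_{\alpha\alpha}\mathcal S^n)$, as required.

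Second, the $\widebar t_{\alpha,n}$-derivatives. Here $\partial_{\widebar t_{\alpha,n}}\widebar\chi_\beta=\delta_{\alpha\beta}z^{-n}\widebar\chi_\beta=\delta_{\alpha\beta}\mathcal S^{n}\widebar\chi_\beta$ and $\partial_{\widebar t_{\alpha,n}}\widebar\chi^\lor_\beta=-\delta_{\alpha\beta}z^{-n}\widebar\chi^\lor_\beta=-\delta_{\alpha\beta}\mathcal S^{-n}\widebar\chi^\lor_\beta$, while $\chi_\beta$ and $\chi^\lor_\beta$ do not depend on $\widebar{\boldsymbol t}$. This gives directly $\partial_{\widebar t_{\alpha,n}}\Xi=\operatorname{diag}(0,-\mathcal S^{-n}E_{\alpha\alpha})\Xi=\widebar T_{\alpha,n}\Xi$, and $\partial_{\widebar t_{\alpha,n}}\Xi^\lor=\operatorname{diag}(\mathcal S^nE_{\alpha\alpha},0)\Xi^\lor$. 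The same block formula, now with $X=0$ and $Y=-E_{\alpha\alpha}\mathcal S^{n}$, gives $\theta(\widebar T_{\alpha,n})=\operatorname{diag}(\mathcal S^{-1}E_{\alpha\alpha}\mathcal S^{n}\mathcal S,0)=\operatorname{diag}(E_{\alpha\alpha}\mathcal S^n,0)$, which closes the argument.

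There is no real obstacle. The only step needing care is sign and direction bookkeeping in \eqref{shiftaction}: for the unbarred pair a power $z^n$ becomes $\mathcal S^{-n}$ on $\chi$ but $\mathcal S^{n}$ on $\chi^\lor$, and the barred pair behaves in the same way. One must also check that the conjugation by the shifted form $J_N$ produces exactly the shift factors $\mathcal S^{\pm1}$ that cancel. I would first do the $2\times2$ case $N=1$ explicitly to fix these signs, and then note that the $E_{\alpha\alpha}$ factors are carried along passively, since they are constant and diagonal.
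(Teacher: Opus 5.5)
Your proposal is correct and follows the paper's own proof: differentiate the bare wave functions \eqref{defchis} directly and convert powers of $z$ into powers of $\mathcal S$ via \eqref{shiftaction}. Your block computation of $\theta(T_{\alpha,n})=\operatorname{diag}(0,-E_{\alpha\alpha}\mathcal S^n)$ and $\theta(\widebar T_{\alpha,n})=\operatorname{diag}(E_{\alpha\alpha}\mathcal S^n,0)$ spells out the step the paper leaves implicit, and it is correct.
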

\begin{proof}
This follows directly from the definitions of the bare wave functions \eqref{defchis}. For instance,
\begin{equation*}
\partial_{t_{\alpha,n}}\chi_\beta
=
\delta_{\alpha\beta}z^n\chi_\beta
=
\delta_{\alpha\beta}\mathcal S^{-n}\chi_\beta,
\end{equation*}
whereas
\begin{equation*}
\partial_{t_{\alpha,n}}\chi_\beta^\lor
=
-\delta_{\alpha\beta}z^n\chi_\beta^\lor
=
-\delta_{\alpha\beta}\mathcal S^n\chi_\beta^\lor.
\end{equation*}
The identities involving the barred times follow in the same way from the definitions of
$\widebar\chi_\beta$ and $\widebar\chi_\beta^\lor$.
\end{proof}
Motivated by Lemma~\ref{lem:bareframes_evolutions}, for
$x\in\{T_{\alpha,n},\widebar T_{\alpha,n}\}$ we shall use the convention
\begin{equation}\label{eq:dx_convention}
\partial_{T_{\alpha,n}}
:=\partial_{t_{\alpha,n}},
\qquad
\partial_{\widebar T_{\alpha,n}}
:=\partial_{\widebar t_{\alpha,n}}.
\end{equation}
We first work with the raw dressing operators $X_N,\widebar X_N$ of
Section~\ref{sec:SZ-dressing-operators}. By Lemma~\ref{lem:XNinvertibility}, these operators are invertible in their
respective one-sided completions.
For $x\in\{T_{\alpha,n},\widebar T_{\alpha,n}\}$, define the raw auxiliary operators by 
\begin{equation}\label{eq:raw_auxiliary}
B_x^{\rm raw}= (\partial_xX_N)X_N^{-1}+X_NxX_N^{-1}
, 
\qquad
\widebar B_x^{\rm raw}=
(\partial_x\widebar X_N)\widebar X_N^{-1}+ \widebar X_N\theta(x)\widebar X_N^{-1}.
\end{equation}
By Lemmas $A.3$ and $A.11$, we get the auxiliary linear equations
\begin{equation*}
\partial_x\Phi_{\rm raw}
=
 B_x^{\rm raw}\Phi_{\rm raw},
\qquad
\partial_x\widebar\Phi_{\rm raw}
=
\widebar{ B}_x^{\rm raw}\widebar\Phi_{\rm raw}.
\end{equation*}

\begin{lemma}
\label{lem:Braw_barBraw}
For every $x\in\{T_{\alpha,n},\widebar T_{\alpha,n}\}$, one has
\begin{equation*}
 B_x^{\rm raw}=\widebar{B}_x^{\rm raw}.
\end{equation*}
\end{lemma}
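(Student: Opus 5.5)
The plan is to differentiate the matrix residue identity \eqref{eq:raw-matrix-residue} along $\partial_x$ and translate the result, via Lemma~\ref{lem:matrix-residue}, into an operator identity involving $K_N$. Throughout, the primed copy of the variables is specialized at $(\boldsymbol s-m\boldsymbol 1,\boldsymbol r,\boldsymbol t,\widebar{\boldsymbol t})$, exactly as in the proof of Proposition~\ref{prop:raw-pfaff-relation}.

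First, I would record the Sato-type derivatives of the raw dressing operators. Write $\partial_x X_N=(B_x^{\rm raw}-X_NxX_N^{-1})X_N$ and $\partial_x\widebar X_N=(\widebar B_x^{\rm raw}-\widebar X_N\theta(x)\widebar X_N^{-1})\widebar X_N$. Then differentiate the relation $K_N=X_NJ_N\widebar X_N^*=\widebar X_NJ_NX_N^*$ from Proposition~\ref{prop:raw-pfaff-relation}, using that $\partial_x$ commutes with the adjoint.

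Second, the main input is the residue identity itself. Apply it with the unprimed variables at time $\boldsymbol t$ and the primed ones at $\boldsymbol t'$, apply $\partial_x$ in the unprimed variables only, and then set $\boldsymbol t'=\boldsymbol t$. By Lemma~\ref{lem:bareframes_evolutions} we have $\partial_x\Phi_{\rm raw}=B_x^{\rm raw}\Phi_{\rm raw}$ and $\partial_x\widebar\Phi_{\rm raw}=\widebar B_x^{\rm raw}\widebar\Phi_{\rm raw}$. For each $m$, Lemma~\ref{lem:matrix-residue} then converts the differentiated identity into the $\mathcal S^m$-coefficient of
\begin{equation*}
B_x^{\rm raw}X_NJ_N\widebar X_N^* \;=\; \widebar B_x^{\rm raw}\widebar X_NJ_NX_N^*,
\end{equation*}
that is, $B_x^{\rm raw}K_N=\widebar B_x^{\rm raw}K_N$. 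The left multiplication by $B_x^{\rm raw}$ passes through the residue pairing because it acts only on the first factor. Since $K_N$ is invertible on the regular locus (proof of Lemma~\ref{lem:XNinvertibility}), we get $B_x^{\rm raw}=\widebar B_x^{\rm raw}$.

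The main obstacle is justifying that step. The operator $B_x^{\rm raw}$ mixes a term in $\mathcal V_g[[\mathcal S]]$ with the term $X_NxX_N^{-1}$, so it is a priori only in the $\mathcal S$-adic completion, while $\widebar B_x^{\rm raw}$ lives in the $\mathcal S^{-1}$-adic one. The products above may therefore not make sense in a common algebra, and the residue–coefficient correspondence needs one-sided support. I would first prove that $B_x^{\rm raw}$ has finite Laurent support. Its negative part is bounded because $x$ has degree $-n$ and $X_N$ has bounded negative degree (in fact none after $Q_N$-conjugation). Its positive part is killed by the residue identity with $m$ large, via the standard argument: pair $\partial_x\Phi_{\rm raw}-B\Phi_{\rm raw}$ against $\widebar\Phi'_{\rm raw}$, where $B$ is the truncation, and conclude from the vanishing of all coefficients that the remainder is zero. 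The same argument applied to $\widebar B_x^{\rm raw}$ shows it is Laurent as well. Once both are Laurent, the coefficientwise comparison via Lemma~\ref{lem:matrix-residue} is legitimate, and the identity follows.
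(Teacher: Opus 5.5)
Your argument is essentially the paper's proof. You differentiate the residue identity in the unprimed variables, specialize the primed ones to $(\boldsymbol s-m\boldsymbol 1,\boldsymbol r,\boldsymbol t,\widebar{\boldsymbol t})$, read off $B_x^{\rm raw}K_N=\widebar B_x^{\rm raw}K_N$ via Lemma~\ref{lem:matrix-residue}, and cancel the invertible $K_N$.

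The preliminary detour establishing finite Laurent support is unnecessary, and as phrased it already relies on the coefficient comparison it is meant to justify. The paper instead applies the residue lemma to $C_x=B_x^{\rm raw}X_N=\partial_xX_N+X_Nx$, which is manifestly bounded below, and to $\widebar C_x=\partial_x\widebar X_N+\widebar X_N\theta(x)$, which is bounded above. Each side of the resulting identity therefore lives in its own one-sided completion. The coefficientwise equality, together with $K_N^{\pm1}$ being Laurent, then itself shows that both sides are Laurent and equal.
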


\begin{proof}
We differentiate the matrix residue identity~\eqref{eq:raw-matrix-residue} with
respect to the corresponding unprimed time variable, keeping
all primed variables fixed. Using the auxiliary
linear equations \eqref{eq:raw_auxiliary}, we obtain
\begin{equation*}
\operatorname{res}_z
 B_x^{\rm raw}\Phi_{\rm raw}(\boldsymbol s,\boldsymbol r,\boldsymbol t, \widebar{ \boldsymbol t};z)
J_0
\widebar\Phi_{\rm raw}(\boldsymbol s',\boldsymbol r',\boldsymbol t',\widebar{\boldsymbol t}';z)^\top \,dz
\end{equation*}
\begin{equation*}
=
\operatorname{res}_z
\widebar{B}_x^{\rm raw}\widebar\Phi_{\rm raw}(\boldsymbol s,\boldsymbol r,\boldsymbol t,\widebar{\boldsymbol t};z)
J_0
\Phi_{\rm raw}(\boldsymbol s',\boldsymbol r',\boldsymbol t',\widebar{\boldsymbol t}';z)^\top \,dz .
\end{equation*}
We then specialize the primed variables in the differentiated
identity to $(\boldsymbol  s',\boldsymbol  r', \boldsymbol  t',\widebar{\boldsymbol  t}')=(\boldsymbol  s-m\boldsymbol  1, \boldsymbol  r, \boldsymbol  t, \widebar{\boldsymbol  t})$, $m \in \mathbb Z$. Keeping the convention of Lemma~\ref{lem:matrix-residue}, we get
\begin{equation*}
\operatorname{res}_{z}
(B_x^{\rm raw}\Phi_{\mathrm{raw}})(\boldsymbol  s)\,
J_0\,
(\widebar\Phi_{\mathrm{raw}}(\boldsymbol  s-m\boldsymbol  1))^\top\,dz
=
\operatorname{res}_{z}
(\widebar B_x^{\rm raw}\widebar\Phi_{\mathrm{raw}})(\boldsymbol  s)\,
J_0\,
(\Phi_{\mathrm{raw}}(\boldsymbol  s-m\boldsymbol  1))^\top\,dz.
\label{eq:raw-matrix-residue-specialized-2}
\end{equation*}
Now we introduce
\begin{equation*}
C_x:=B_x^{\rm raw}X_N
=
\partial_xX_N+X_Nx,
\qquad
\widebar{C}_x:=\widebar{ B}_x^{\rm raw}\widebar X_N
=
\partial_x\widebar X_N+\widebar X_N\theta(x).
\end{equation*}
Since $X_N\in M_{2N}(\mathcal V_g[[\mathcal S]])$ and $x$ is a Laurent difference operator,
$C_x$ is bounded from below in the $\mathcal S$-degree. Similarly,
$\widebar{C}_x$ is bounded from above. Thus the four residue--coefficient identities of
Lemma~\ref{lem:matrix-residue} apply block-wise. Since $dz=z\,dz/z$, the same coefficient extraction as in the proof
of Proposition~\ref{prop:raw-pfaff-relation} yields, for every $m\in\mathbb Z$,
\begin{equation*}
\bigl( C_xJ_N\widebar X_N^*\bigr)_m
=
\bigl(\widebar{ C}_xJ_NX_N^*\bigr)_m.
\end{equation*}
As $m$ is arbitrary, $
 C_xJ_N\widebar X_N^*
=
\widebar{ C}_xJ_NX_N^*,
$
or equivalently,
\begin{equation*}
 B_x^{\rm raw}X_NJ_N\widebar X_N^*
=
\widebar{ B}_x^{\rm raw}\widebar X_NJ_NX_N^*.
\end{equation*}
By Proposition~\ref{prop:raw-pfaff-relation}, both products multiplying the auxiliary
operators are equal to $K_N$. Hence
\begin{equation*}
B_x^{\rm raw}K_N=\widebar B_x^{\rm raw}K_N.
\end{equation*}
By Lemma~\ref{lem:XNinvertibility}, $K_N$ is invertible, and therefore
$B_x^{\rm raw}=\widebar B_x^{\rm raw}$.
\end{proof}
Transporting this identity through the normalization of Proposition~\ref{prop:GammaN}, we define
\begin{equation}\label{eq:BbarB-def}
B_x:=(\partial_xY_N)Y_N^{-1}+Y_NxY_N^{-1},
\qquad
\widebar B_x:=(\partial_x\widebar Y_N)\widebar Y_N^{-1}+\widebar Y_N\theta(x)\widebar Y_N^{-1}.
\end{equation}
Using $Y_N=\Gamma_NX_N$ and $\widebar Y_N=\Gamma_N\widebar X_N$, we obtain
\begin{equation*}
B_x
=
(\partial_x\Gamma_N)\Gamma_N^{-1}
+
\Gamma_NB_x^{\mathrm{raw}}\Gamma_N^{-1},
\qquad
\widebar B_x
=
(\partial_x\Gamma_N)\Gamma_N^{-1}
+
\Gamma_N\widebar B_x^{\mathrm{raw}}\Gamma_N^{-1}.
\end{equation*}
Hence Lemma~\ref{lem:Braw_barBraw} gives $
B_x=\widebar B_x.
$

\begin{proposition}
\label{prop:Bx_evolution}
For every \(x\in\{T_{\alpha,n},\widebar T_{\alpha,n}\}\), the normalized auxiliary operators satisfy
\begin{equation*}
B_x=\widebar B_x=\theta(B_x) \in \mathfrak{g}_+ \subset M_{2N}\bigl(\mathcal V_g[\mathcal S,\mathcal S^{-1}]\bigr).
\end{equation*}
\end{proposition}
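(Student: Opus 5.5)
The proof has three parts: finite Laurent support of $B_x$, the identity $\theta(B_x)=B_x$, and membership in $\mathfrak g_+$, which then follows from the definition of $\mathfrak g_+$ as the fixed-point set of $\theta$ on $M_{2N}(\mathcal V_g[\mathcal S,\mathcal S^{-1}])$.

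\emph{Finite support.} Since $B_x=\widebar B_x$ has already been established, we compare the two one-sided expressions. By Lemma~\ref{lem:XNinvertibility} and $Y_N=\Gamma_NX_N$ with $\Gamma_N$ Laurent, $Y_N$ and $Y_N^{-1}$ lie in $M_{2N}(\mathcal V_g((\mathcal S)))$. The same holds for $\partial_xY_N$, and $x$ is a Laurent polynomial. Hence
\begin{equation*}
B_x=(\partial_xY_N)Y_N^{-1}+Y_NxY_N^{-1}
\end{equation*}
has Laurent degrees bounded from below. Symmetrically, $\widebar Y_N=\Gamma_N\widebar X_N$ lies in the $\mathcal S^{-1}$-adic completion, and $\theta(x)$ is Laurent, so $\widebar B_x$ has degrees bounded from above. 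An element of both completions with these bounds has only finitely many nonzero coefficients, so $B_x\in M_{2N}(\mathcal V_g[\mathcal S,\mathcal S^{-1}])$. This is the same mechanism as in Lemma~\ref{lem:KN-def}.

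\emph{Involution property.} We use $\widebar Y_N=-J_N(Y_N^*)^{-1}J_N$ from Definition~\ref{def:YN}, equivalently $\widebar Y_N^{-1}=-J_NY_N^*J_N$, together with $J_N^2=-I$ and $J_N^*=-J_N$.

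Differentiating $\widebar Y_N$ gives
\begin{equation*}
\partial_x\widebar Y_N=J_N(Y_N^*)^{-1}(\partial_xY_N^*)(Y_N^*)^{-1}J_N.
\end{equation*}
Multiplying on the right by $\widebar Y_N^{-1}=-J_NY_N^*J_N$ yields
\begin{equation*}
(\partial_x\widebar Y_N)\widebar Y_N^{-1}=J_N(Y_N^*)^{-1}(\partial_xY_N^*)J_N=J_N\bigl(Y_N^{-1}\ldots\bigr)^*J_N,
\end{equation*}
where the sign bookkeeping is the point to check. Since $\partial_x$ commutes with $*$ (Remark~\ref{rem:derivation_PfaffToda}), we have $(\partial_xY_N^*)=(\partial_xY_N)^*$, and so
\begin{equation*}
\bigl((\partial_xY_N)Y_N^{-1}\bigr)^*=(Y_N^{-1})^*(\partial_xY_N)^*.
\end{equation*}
This should give
\begin{equation*}
(\partial_x\widebar Y_N)\widebar Y_N^{-1}=J_N\bigl((\partial_xY_N)Y_N^{-1}\bigr)^*J_N\cdot(\pm1).
\end{equation*}
Similarly,
\begin{equation*}
\widebar Y_N\theta(x)\widebar Y_N^{-1}=J_N(Y_N^*)^{-1}J_N\,J_Nx^*J_N\,J_NY_N^*J_N=J_N(Y_NxY_N^{-1})^*J_N,
\end{equation*}
up to signs from $J_N^2=-I$. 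Collecting terms gives
\begin{equation*}
\widebar B_x=\theta\bigl((\partial_xY_N)Y_N^{-1}\bigr)\cdot(\pm)+\theta(Y_NxY_N^{-1}).
\end{equation*}

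The main obstacle is fixing the sign in the derivative term. The log-derivative term naturally comes out as $-\theta(\cdot)$ under this computation. I expect this to be compensated because $\theta$ has the form $X\mapsto J X^*J$ with $J^2=-1$, and on logarithmic derivatives $(\partial g)g^{-1}$ with $g\mapsto -J(g^*)^{-1}J$ it acts as the Lie algebra automorphism $\theta$. The group-level map $g\mapsto J(g^*)^{-1}J^{-1}$ differentiates to $X\mapsto -J X^*J^{-1}=JX^*J=\theta(X)$. I would verify this carefully, writing $\widebar Y_N=\sigma(Y_N)$ with $\sigma(g)=J_N(g^*)^{-1}J_N^{-1}$, a group automorphism whose differential is $\theta$. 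Then
\begin{equation*}
(\partial_x\sigma(Y))\sigma(Y)^{-1}=\theta\bigl((\partial_xY)Y^{-1}\bigr)\quad\text{and}\quad\sigma(Y)\,\theta(x)\,\sigma(Y)^{-1}=\theta(YxY^{-1}),
\end{equation*}
which avoids ad hoc sign tracking. Here $\theta$ is extended to both completions as the anti-linear-in-degree map exchanging them, and it is multiplicative as $\theta(ab)=\theta(a)\theta(b)$, since $J_N(ab)^*J_N=J_Nb^*a^*J_N$ — careful: this is reversed order, so $\theta$ is an anti-automorphism of the associative product composed with $-$. I would handle this by using $\sigma$ directly at the group level, where the inverse restores the order.

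Finally, combining with $B_x=\widebar B_x$ gives $B_x=\theta(B_x)$. Since $B_x$ is Laurent, it lies in $\operatorname{Fix}(\theta)=\mathfrak g_+$.
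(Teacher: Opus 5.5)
Your proposal follows the paper's proof. Finite Laurent support comes from $B_x=\widebar B_x$ lying in both one-sided completions, and $\theta$-invariance comes from $\widebar Y_N=-J_N(Y_N^*)^{-1}J_N$, applied separately to the logarithmic-derivative term and the conjugation term. Your sign worry is unfounded, and your remark that the log-derivative term "naturally comes out as $-\theta(\cdot)$" is not correct. The line you already derived, $(\partial_x\widebar Y_N)\widebar Y_N^{-1}=J_N(Y_N^*)^{-1}(\partial_xY_N)^*J_N$, equals $J_N\bigl((\partial_xY_N)Y_N^{-1}\bigr)^*J_N=\theta\bigl((\partial_xY_N)Y_N^{-1}\bigr)$ exactly, because $(Y_N^*)^{-1}(\partial_xY_N)^*=\bigl((\partial_xY_N)Y_N^{-1}\bigr)^*$. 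Your $\sigma$-repackaging is valid and only confirms the paper's direct computation.
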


\begin{proof}
We have already shown that \(B_x=\widebar B_x\). By~\eqref{YNrelation},
\begin{equation}
\widebar Y_N=-J_N(Y_N^*)^{-1}J_N,
\label{prop5101}
\end{equation}
which differentiated by $\partial_x$ gives
\begin{equation}
\partial_x\widebar Y_N
=
J_N(Y_N^*)^{-1}(\partial_xY_N)^*(Y_N^*)^{-1}J_N.
\label{prop5102}
\end{equation}
Combining~\eqref{prop5101} and~\eqref{prop5102}, we obtain
\begin{equation*}
(\partial_x\widebar Y_N)\widebar Y_N^{-1}
=
J_N((\partial_x Y_N) Y_N^{-1})^*J_N
=
\theta((\partial_x Y_N)Y_N^{-1}).
\end{equation*}
Similarly,
\begin{equation*}
\widebar Y_N\theta(x)\widebar Y_N^{-1}
=
J_N\bigl(Y_NxY_N^{-1}\bigr)^*J_N
=
\theta(Y_NxY_N^{-1}).
\end{equation*}
It follows that
\begin{equation*}
B_x=\widebar B_x
=
(\partial_x\widebar Y_N)\widebar Y_N^{-1}+\widebar Y_N\theta(x)\widebar Y_N^{-1}
=
\theta((\partial_x Y_N)Y_N^{-1})+\theta(Y_NxY_N^{-1})
=
\theta(B_x).
\end{equation*}
It remains to show that $B_x$ has finite Laurent support. Since $
Y_N\in M_{2N}\bigl(\mathcal V_g((\mathcal S))\bigr),
$
we have
\begin{equation*}
B_x
=
(\partial_xY_N)Y_N^{-1}
+
Y_NxY_N^{-1}
\in
M_{2N}\bigl(\mathcal V_g((\mathcal S))\bigr).
\end{equation*}
On the other hand, $\widebar Y_N\in M_{2N}\bigl(\mathcal V_g((\mathcal S^{-1}))\bigr),
$
and hence
\begin{equation*}
\widebar B_x
\in
M_{2N}\bigl(\mathcal V_g((\mathcal S^{-1}))\bigr),
\end{equation*}
so $\widebar B_x$ is bounded from above. Since $B_x=\widebar B_x$,
\begin{equation*}
B_x
\in
M_{2N}\bigl(\mathcal V_g((\mathcal S))\bigr)
\cap
M_{2N}\bigl(\mathcal V_g((\mathcal S^{-1}))\bigr)
=
M_{2N}\bigl(\mathcal V_g[\mathcal S,\mathcal S^{-1}]\bigr).
\end{equation*}
Hence, $B_x \in \mathfrak{g}_+$.
\end{proof}

\subsection{Proof of the realization theorem}

We complete the realization theorem by identifying the continuous evolutions of $Y_N$ with the commuting diagonal sector of the Matrix Pfaff--Toda hierarchy.

Recall that $\mathcal V_N$ is the localization of the difference
algebra generated by the coefficients of the universal dressing
operator $U_N$ at $\det\Delta$ and all of its $\mathcal S$-shifts.
Since $Y_N$ has the canonical form~\eqref{shapeofYN}, its lower-right
leading coefficient satisfies $\Delta\in B_-^\times$. Hence
\begin{equation*}
\det\Delta\in\mathcal V_g^\times,
\qquad
\mathcal S^k(\det\Delta)\in\mathcal V_g^\times,
\qquad k\in\mathbb Z.
\end{equation*}
Therefore, the evaluation of the universal dressing coefficients at those
of $Y_N$ extends uniquely to an $\mathcal S$-difference algebra
homomorphism
\begin{equation*}
\operatorname{ev}_g:\mathcal V_N\longrightarrow\mathcal V_g,
\qquad
U_N\longmapsto Y_N.
\end{equation*}

\begin{theorem}
\label{thm:fermionic_realisation_PfaffToda}
Let $g$ be a Clifford-group element satisfying the
Savchenko--Zabrodin fermionic bilinear identity~\eqref{fermionic-bilinear} and
belonging to the regular factorization locus of Definition~\ref{def:regular-factorization-locus}.
Then the normalized dressing operator $Y_N$ defines a realization
of the commuting diagonal sector of the Matrix Pfaff--Toda hierarchy. More precisely,
for every $\alpha=1,\ldots,N$ and $n\geq1$,
\begin{equation*}
\operatorname{ev}_g\circ D_{T_{\alpha,n}}
=
\partial_{t_{\alpha,n}}\circ\operatorname{ev}_g,
\qquad
\operatorname{ev}_g\circ D_{\widebar T_{\alpha,n}}
=
\partial_{\widebar t_{\alpha,n}}\circ\operatorname{ev}_g.
\end{equation*}
\end{theorem}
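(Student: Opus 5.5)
The plan is to show that $\partial_x Y_N = -\Pi_-(Y_N x Y_N^{-1})\,Y_N$ for each diagonal direction $x\in\{T_{\alpha,n},\widebar T_{\alpha,n}\}$. By the universal property of $\mathcal V_N$, this identity is exactly what is needed.

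The coefficients of $U_N$ freely generate $\mathcal B_N$, and the localization only inverts $\det\Delta$ and its shifts. Hence two $\mathcal S$-commuting derivations $\mathcal V_N\to\mathcal V_g$ along $\operatorname{ev}_g$ coincide as soon as they agree on the coefficients of $U_N$. Both $\operatorname{ev}_g\circ D_x$ and $\partial_x\circ\operatorname{ev}_g$ are such derivations along $\operatorname{ev}_g$. Here we use that $\partial_x$ commutes with $\mathcal S$ on $\mathcal V_g$, as noted when $\mathcal V_g$ was defined. It therefore suffices to prove
\begin{equation*}
\operatorname{ev}_g\bigl(D_x U_N\bigr)=\partial_x Y_N .
\end{equation*}
Now $\operatorname{ev}_g$ is a difference algebra morphism, and $\Pi_\pm$ is built coefficient-wise from shifts, adjoints, Laurent truncations and triangular parts. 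So $\operatorname{ev}_g$ commutes with $\Pi_\pm$, and we get
\begin{equation*}
\operatorname{ev}_g(D_xU_N)=-\Pi_-\bigl(Y_NxY_N^{-1}\bigr)Y_N .
\end{equation*}
This uses Definition~\ref{def:dressing_derivative}, the invertibility of $Y_N$ from Lemma~\ref{lem:XNinvertibility} together with $\Gamma_N$ invertible, and the fact that $\operatorname{ev}_g$ preserves inverses.

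The core step is to compare $\partial_xY_N$ with this expression through the splitting. By \eqref{eq:BbarB-def},
\begin{equation*}
Y_NxY_N^{-1}=B_x-(\partial_xY_N)Y_N^{-1}.
\end{equation*}
Proposition~\ref{prop:Bx_evolution} gives $B_x\in\mathfrak g_+$.

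It remains to show $(\partial_xY_N)Y_N^{-1}\in\mathfrak g_-$. Since $Y_N\in G_-$ by \eqref{shapeofYN}, differentiating its block shape preserves the orders $\mathcal O(\mathcal S)$, $\mathcal O(1)$ and $\mathcal O(\mathcal S^2)$. The leading coefficients $a_0,b_{-1},c_1$ remain in $\mathfrak n_-$, and $\partial_x\Delta$ stays lower triangular. So $\partial_xY_N$ lies in the additive space $\mathfrak g_-$. Moreover, $Y_N^{-1}\in G_-\subset I+\mathfrak g_-$ up to the $d$-block, and $\mathfrak g_-$ is an associative algebra that is closed under multiplication by $G_-$. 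Hence $(\partial_xY_N)Y_N^{-1}\in\mathfrak g_-$.

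I expect this membership to be the main point needing care. It requires checking that $\mathfrak g_-\cdot G_-\subset\mathfrak g_-$, which follows from the same block product computation as in Lemma~\ref{lem:g_splitting}, with $d\in B_-^\times$ in place of $\mathfrak b_-$.

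By uniqueness in $\mathfrak g=\mathfrak g_+\oplus\mathfrak g_-$ (Lemma~\ref{lem:g_splitting} over $\mathcal V_g$), we obtain $B_x=\Pi_+(Y_NxY_N^{-1})$ and $(\partial_xY_N)Y_N^{-1}=-\Pi_-(Y_NxY_N^{-1})$. Multiplying the second identity on the right by $Y_N$ gives $\partial_xY_N=-\Pi_-(Y_NxY_N^{-1})Y_N$, which is the required identity. Applying this to $x=T_{\alpha,n}$ with $\partial_x=\partial_{t_{\alpha,n}}$, and to $x=\widebar T_{\alpha,n}$ with $\partial_x=\partial_{\widebar t_{\alpha,n}}$, using the convention \eqref{eq:dx_convention}, yields both intertwining relations.
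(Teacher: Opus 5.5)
Your proof is correct and follows the paper's argument. You use $B_x\in\mathfrak g_+$ (Proposition~\ref{prop:Bx_evolution}) and $(\partial_xY_N)Y_N^{-1}\in\mathfrak g_-$ to get $\partial_xY_N=-\Pi_-(Y_NxY_N^{-1})Y_N$ by uniqueness of the splitting, then transport this through $\operatorname{ev}_g$ (which commutes with $\Pi_\pm$) and conclude because the coefficients of $U_N$ generate $\mathcal V_N$. Your explicit check of the $\mathfrak g_-$ membership, which the paper only asserts, is sound; the qualifier ``up to the $d$-block'' is unnecessary, since $d-I_N\in\mathfrak b_-$ gives $G_-\subset I_{2N}+\mathfrak g_-$ exactly.
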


\begin{proof}
Let
$
x\in\{T_{\alpha,n},\widebar T_{\alpha,n}\}.
$
By the convention introduced in~\eqref{eq:dx_convention}, $\partial_x$ denotes
the corresponding continuous derivative.
Since $Y_N\in G_-$, its logarithmic derivative belongs to
$\mathfrak g_-$, $(\partial_xY_N)Y_N^{-1}\in\mathfrak g_-$.
On the other hand, by definition of $B_x$ in~\eqref{eq:BbarB-def},
\begin{equation*}
Y_NxY_N^{-1}
=
B_x-(\partial_xY_N)Y_N^{-1}.
\end{equation*}
By Proposition~\ref{prop:Bx_evolution}, $B_x\in\mathfrak g_+$. Hence, by uniqueness of
the Pfaff--Toda splitting,
\begin{equation*}
B_x
=
\Pi_+\!\left(Y_NxY_N^{-1}\right),
\qquad
-(\partial_xY_N)Y_N^{-1}
=
\Pi_-\!\left(Y_NxY_N^{-1}\right).
\end{equation*}
Therefore
\begin{equation*}
\partial_xY_N
=
-\Pi_-\!\left(Y_NxY_N^{-1}\right)Y_N.
\end{equation*}
On the other hand, the universal dressing derivation of Section~\ref{sec:dressing_zero_curvature}
satisfies
\begin{equation*}
D_x(U_N)
=
-\Pi_-\!\left(U_NxU_N^{-1}\right)U_N.
\end{equation*}
The homomorphism $\operatorname{ev}_g$ commutes with the shift,
formal adjoint, Laurent projections and triangular projections, and
hence with $\Pi_\pm$. Consequently,
\begin{equation*}
\operatorname{ev}_g\!\left(D_x(U_N)\right)
=
-\Pi_-\!\left(Y_NxY_N^{-1}\right)Y_N
=
\partial_xY_N.
\end{equation*}
Since the coefficients of $U_N$ generate $\mathcal V_N$ as a
localized difference algebra, it follows that
$\operatorname{ev}_g\circ D_x=\partial_x\circ\operatorname{ev}_g$.
Taking respectively $x=T_{\alpha,n}$ and
$x=\widebar T_{\alpha,n}$ proves the two stated identities.
\end{proof}

\bibliographystyle{bibstyle}
\bibliography{biblio}

\end{document}